\documentclass[11pt,DIV=12,a4paper]{scrartcl}

\usepackage{amsmath,amsfonts,amsthm}
\usepackage{cite,booktabs,tabularx}
\usepackage{tikz,authblk}
\usepackage{enumitem}

\setkomafont{title}{\normalfont \LARGE \boldmath}
\setkomafont{section}{\normalfont \Large \bfseries \boldmath \boldmath}
\setkomafont{subsection}{\normalfont \large \bfseries \boldmath}
\setkomafont{subsubsection}{\normalfont \bfseries \boldmath}
\setkomafont{paragraph}{\normalfont \bfseries \boldmath}
\setkomafont{descriptionlabel}{\normalfont \itshape}

\newtheorem{claim}{Claim}[section]
\newtheorem{theorem}[claim]{Theorem}
\newtheorem{lemma}[claim]{Lemma}

\newtheorem{observation}[claim]{Observation}

\usetikzlibrary{decorations.pathmorphing}
\usetikzlibrary{decorations.pathreplacing}
\usetikzlibrary{calc}

\newcommand{\edge}[2]{\ensuremath{\{X_{#1}, X_{#2}\}}}
\newcommand{\real}{\ensuremath{\mathbb{R}}}
\newcommand{\nat}{\ensuremath{\mathbb{N}}}
\newcommand{\eps}{\varepsilon}
\newcommand{\maxx}{D_{\max}}
\newcommand{\dmin}{\Delta_{\min}}
\newcommand{\dminl}{\Delta_{\min}^{\operatorname{link}}}

\DeclareMathOperator{\probab}{\mathbb{P}}
\DeclareMathOperator{\expected}{\mathbb{E}}

\tikzstyle{Point}=[circle, draw, fill=black, inner sep=0mm, minimum size=1.5mm]
\tikzstyle{Line}=[line width=0.6pt]
\tikzstyle{Brace}=[Line, decorate, decoration={brace, amplitude=1mm}]

\title{Smoothed Analysis of the 2-Opt Heuristic for the TSP under Gaussian Noise \thanks{This paper is based on results
presented at ISAAC 2013~\cite{MantheyVeenstra:2Opt:2013} and ICALP 2015~\cite{KuennemannManthey:2Opt:2015}.}}
\author[1]{Marvin K\"unnemann}
\author[2]{Bodo Manthey}
\author[2]{Rianne Veenstra}

\affil[1]{Max Planck Institute for Informatics, Saarbr\"ucken, Germany, and Saarbr\"ucken Graduate School of Computer Science, \texttt{marvin@mpi-inf.mpg.de}}
\affil[2]{University of Twente, Department of Applied Mathematics, Enschede, Netherlands, \texttt{b.manthey@utwente.nl}}

\usepackage{nicefrac}
\newcommand{\one}{\mathbf{1}}
\newcommand{\vol}{\mathrm{vol}}
\newcommand{\pos}{\mathrm{pos}}
\DeclareMathOperator{\argmin}{argmin}
\DeclareMathOperator{\argmax}{argmax}
\newcommand{\bTSP}{\mathsf{TSP}_B}
\DeclareMathOperator{\polylog}{polylog}
\newcommand{\distance}{\mathrm{dist}}
\newcommand{\Ttwo}{{\cal T}_2}
\newcommand{\Tone}{{\cal T}_1}
\newcommand{\pad}{\mathrm{pad}}
\newcommand{\lemref}[1]{Lemma~\ref{lem:#1}}
\newcommand{\thmref}[1]{Theorem~\ref{thm:#1}}
\newcommand{\obsref}[1]{Observation~\ref{obs:#1}}
\newcommand{\itemref}[1]{\ref{itm:#1}}
\newcommand{\norm}[1]{\left\Vert #1 \right\Vert}
\newcommand{\TSP}{\mathsf{TSP}}
\newcommand{\TwoOPT}{\mathsf{2OPT}}
\DeclareMathOperator{\Gauss}{\mathcal{N}}
\DeclareMathOperator{\diag}{diag}
\DeclareMathOperator{\pert}{pert}
\newcommand{\orig}{\mathrm{orig}}
\newcommand{\MST}{\mathsf{MST}}
\newcommand{\OPT}{\mathrm{OPT}}

\pagestyle{plain}
\begin{document}

\maketitle

\thispagestyle{plain}
\begin{abstract}
The 2-opt heuristic is a very simple local search heuristic for the
traveling salesperson problem. In practice it usually converges quickly
to solutions within a few percentages of optimality. In contrast to this, its running-time
is exponential and its approximation performance is poor in the worst case.

Englert, R\"oglin, and V\"ocking (\emph{Algorithmica}, 2014)
provided a smoothed analysis in the so-called
one-step model in order to explain the performance of 2-opt on $d$-dimensional
Euclidean instances, both in terms of running-time and in terms of approximation ratio. However, translating their results to the classical
model of smoothed analysis, where points are perturbed by Gaussian distributions with standard deviation $\sigma$, yields only
weak bounds.

We prove bounds that are polynomial in $n$ and $1/\sigma$
for the smoothed running-time with Gaussian perturbations.
In addition, our analysis for Euclidean distances is much simpler than
the existing smoothed analysis.

Furthermore, we prove a smoothed approximation ratio of $O(\log (1/\sigma))$. This bound is almost tight,
as we also provide a lower bound of $\Omega(\frac{\log n}{\log\log n})$ for $\sigma = O(1/\sqrt n)$.
Our main technical novelty here is that, different from existing smoothed analyses, we do not separately analyze objective values of the global and local 
optimum on all inputs (which only allows for a bound of $O(1/\sigma)$), but simultaneously bound them on the same input.
\end{abstract}

%

\section{2-Opt and Smoothed Analysis}
\label{sec:intro}

The traveling salesperson problem (TSP) is one of the classical combinatorial optimization problems.
Euclidean TSP is the following variant: given points $X \subseteq [0,1]^d$, find the shortest Hamiltonian
cycle that visits all points in $X$ (also called a \emph{tour}). Even this restricted variant is NP-hard for $d \geq 2$~\cite{Papadimitriou:EuclideanTSP:1977}.
We consider Euclidean TSP with Manhattan and Euclidean distances as well as squared
Euclidean distances to measure the distances between points.
For the former two, there exist polynomial-time approximation schemes (PTAS)~\cite{Arora:PTASEuclideanTSP:1998,Mitchell:EuclideanPTAS:1999}. The latter,
which has applications in power assignment problems for wireless networks~\cite{FunkeEA:PowerTSP:2011}, admits a
PTAS for $d=2$ and is APX-hard for $d \geq 3$~\cite{NijnattenEA:TSPSquared:2010}.

As it is unlikely that there are efficient algorithms for solving Euclidean TSP optimally,
heuristics have been developed in order to find near-optimal solutions quickly. One very simple and popular heuristic
is 2-opt: starting from an initial tour, we iteratively replace two edges by two other edges to obtain a shorter
tour until we have found a local optimum.
Experiments indicate that 2-opt converges to near-optimal solutions quite
quickly~\cite{JohnsonMcGeoch:CaseTSP:1997,JohnsonMcGeoch:ExperimentalSTSP:2002},
but its worst-case performance is bad: the worst-case running-time is
exponential even for $d=2$~\cite{EnglertEA:2Opt:2014}
and the approximation ratio is
$\Omega(\log n/\log \log n)$ for Euclidean instances~\cite{ChandraEA:OldOpt:1999}.

An alternative to worst-case analysis is average-case analysis,
where the expected performance with respect to some probability distribution is measured.
The average-case running-time for Euclidean and random metric instances and the average-case approximation
ratio for non-metric instances of 2-opt have been analyzed~\cite{Kern:Switching:1989,EngelsManthey:2Opt:2009,ChandraEA:OldOpt:1999,BringmannEA:RSP:2013}.
However, while worst-case analysis is often too pessimistic because it is dominated by artificial instances
that are rarely encountered in practice, average-case analysis is dominated by random instances, which have
often very special properties with high probability that they do not share with typical instances.

In order to overcome the drawbacks of both worst-case and average-case analysis
and to explain the performance
of the simplex method, Spielman and Teng invented smoothed analysis~\cite{SpielmanTeng:SmoothedAnalysisWhy:2004},
a hybrid of worst-case and average-case analysis:
an adversary specifies an instance, and then this instance is slightly randomly perturbed. The smoothed performance
is the expected performance, where the expected value is taken over the random perturbation.
The underlying assumption is that real-world instances
are often subjected to a small amount of random noise. This noise can stem from measurement or rounding errors,
or it might be a realistic assumption that the instances are influenced by unknown circumstances, but we do not have
any reason to believe that these are adversarial.
Smoothed analysis often allows more realistic conclusions about the
performance than worst-case or average-case analysis.
Since its invention, it has been applied successfully
to explain the performance of a variety of
algorithms.
We refer to two surveys for an overview of smoothed analysis in general~\cite{SpielmanTeng:CACM:2009,MantheyRoeglin:SmoothedSurvey:2011} and a more recent survey about
smoothed analysis applied to local
search algorithms~\cite{Manthey:ABWCBook}.

\subsection{Related Results}

\paragraph{Running-time.}
Englert, R\"oglin, and V\"ocking~\cite{EnglertEA:2Opt:2014} provided a smoothed analysis of 2-opt in order to explain its performance.
They used the \emph{one-step model}:
an adversary specifies $n$ probability density functions $f_1, \ldots, f_n: [0,1]^d \to [0,\phi]$. Then the $n$ points $X_1, \ldots, X_n$
are drawn independently according to the densities $f_1, \ldots, f_n$, respectively. Here, $\phi$ is the perturbation parameter.
If $\phi =1$, then the only possibility is the uniform distribution on $[0,1]^d$, and we obtain an
average-case analysis.
The larger $\phi$, the more powerful the adversary.
Englert et al.~\cite{EnglertEA:2Opt:2014} proved that the expected number of iterations of 2-opt is
$O(n^{4}  \phi)$ and
$O(n^{4+\frac 13}  \phi^{\frac 83}  \log(n\phi))$ for Manhattan and
Euclidean distances, respectively.
These bounds can be improved slightly by choosing the initial tour with an insertion heuristic.
However, if we transfer these bounds to the classical model of
points perturbed by Gaussian distributions of standard deviation $\sigma$,
we obtain bounds that are polynomial in $n$ and $1/\sigma^d$~\cite[Section 6]{EnglertEA:2Opt:2014},
since the maximum density of a $d$-dimensional Gaussian with standard deviation
$\sigma$ is $\Theta(\sigma^{-d})$. While this is polynomial for any fixed $d$, it
is unsatisfactory that the degree of the polynomial depends on $d$.

\paragraph{Approximation ratio.}

Much less is known about the smoothed approximation performance of algorithms.
Karger and Onak have shown that multi-dimensional bin packing can be approximated arbitrarily well for smoothed instances~\cite{KargerOnak:SmoothedPacking:2007} and there are frameworks to approximate Euclidean optimization problems such as TSP for smoothed
instances~\cite{BlaeserEA:Partitioning:2013,CurticapeanK15}. However, these approaches mostly consider algorithms tailored to solving smoothed instances. 

With respect to concrete algorithms other than 2-opt, we are only aware of analyses of the jump and lex-jump heuristics for scheduling~\cite{BrunschEA:Scheduling:2014,Etscheid15}.

Englert et al.~\cite{EnglertEA:2Opt:2014} proved a bound of $O(\phi^{1/d})$. Translated to Gaussians,
this yields a bound of $O(1/\sigma)$ if we truncate the Gaussians such that all points
lie in a hypercube of constant side length.
This result, however, does not explain the approximation performance 2-opt, as the
bound is still quite large, even for larger values of $\sigma$ or smaller values of $\phi$.

\subsection{Our Contribution}

In order to improve our understanding of the practical performance of 2-Opt, we provide an improved smoothed analysis
of both its running-time and its approximation ratio. To do this, we use the classical smoothed
analysis model: an adversary chooses $n$ points from the $d$-dimensional unit hypercube $[0,1]^d$, and then these points are independently
randomly perturbed by Gaussian random variables of standard deviation~$\sigma$.

\paragraph{Running-time.}

The bounds that we prove are polynomial in $n$ and $1/\sigma$.
Different to earlier results, the degree of the polynomial is independent of $d$. As distance measures, we
consider Manhattan (Section~\ref{sec:manhattan}), Euclidean
(Section~\ref{sec:euc}), and
squared Euclidean distances (Section~\ref{sec:sed}).

The analysis for Manhattan distances is essentially an adaptation of the existing analysis
by Englert et al.~\cite[Section 4.1]{EnglertEA:2Opt:2014}. Note that our bound does not have any factor that is exponential in $d$.

Our analysis for Euclidean distances is considerably
simpler than the one by Englert et al., which is rather technical and takes more than 25
pages~\cite[Section~4.2 and Appendix~C]{EnglertEA:2Opt:2014}.

The analysis for squared Euclidean distances is, to our knowledge, not preceded by a smoothed analysis in the one-step
model. Because of the nice properties of squared Euclidean distances and Gaussian perturbations,
this smoothed analysis is relatively compact and elegant.

Table~\ref{tab:results} summarizes our bounds for the number of iterations.

\begin{table}[t]
\centering
\begin{tabularx}{\textwidth}{@{}lllX@{}} \toprule
& Manhattan & Euclidean & squared Euclidean \\ \midrule
Englert et al.~\cite{EnglertEA:2Opt:2014} & $ 2^{O(d)} n^4 \phi$ & $O_d\bigl(n^{4 + \frac 13} \phi^{\frac 83} \log(n\phi)\bigr)$ & -- \\ \midrule
general & $O\bigl(\frac{d^2 n^4 \maxx}{\sigma}\bigr)$ 
& $O\bigl(\frac{\sqrt d n^4 \maxx^4}{\sigma^4}\bigr)$& $O\bigl(\frac{\sqrt d n^4\maxx^2}{\sigma^2}\bigr)$  \\[\smallskipamount]
$\sigma = O(1/\sqrt{n \log n})$ &$O\bigl(\frac{d^2 n^4}{\sigma}\bigr)$ 
& $O\bigl(\frac{\sqrt d n^4}{\sigma^4}\bigr)$ & $O\bigl(\frac{\sqrt d n^4}{\sigma^2}\bigr)$ \\[\smallskipamount]
$\sigma = \Omega(1/\sqrt{n \log n})$ &$O\bigl(d^2 n^5 \sqrt{\log n}\bigr)$ 
& $O\bigl(\sqrt d n^6 \log^2 n\bigr)$ & $O\bigl(\sqrt d n^5 \log n\bigr)$ \\
remarks & & only for $d \geq 4$ & only for $d \geq 3$; a weaker bound holds for $d=2$ (Theorem~\ref{thm:squaredsingle}) \\ \bottomrule
\end{tabularx}
\caption{Our bounds compared to the bounds obtained by Englert et al.~\cite{EnglertEA:2Opt:2014} for the one-step model.
The bounds can roughly be transferred to Gaussian noise by replacing $\phi$ with $\sigma^{-d}$.
For convenience, we added our bounds for small and large values of $\sigma$: for $\sigma = O(1/\sqrt{n \log n})$,
we have $\maxx = \Theta(1)$, for larger $\sigma$, we have $\maxx = \Theta(\sigma \sqrt{n \log n})$. The notation $O_d$ means that
terms depending on $d$ are hidden in the $O$. The remarks are only for our bounds.}
\label{tab:results}
\end{table}

\paragraph{Approximation ratio.}

As the earlier smoothed analysis by Englert et al.~\cite{EnglertEA:2Opt:2014}, we provide bounds on the quality of the worst local optimum.
While this measure is rather unrealistic and pessimistic, it decouples the analysis from the seeding of the heuristic. Taking
into account the seeding would probably severely complicate the analysis.

Our bound of $O(\log(1/\sigma))$ improves significantly upon the direct
translation of the bound of Englert et al.~\cite{EnglertEA:2Opt:2014} to Gaussian perturbations (see Section~\ref{sec:length} for how to
translate the bound to Gaussian perturbations without truncation).
It smoothly interpolates between the average-case constant approximation ratio and the worst-case bound of $O(\log n)$. 

In order to obtain our improved bound for the smoothed approximation ratio, we take into account the origins 
of the points, i.e., their unperturbed positions. Although this information is not available to the algorithm,
it can be exploited in the analysis.
The smoothed analyses of approximation ratios so far~\cite{EnglertEA:2Opt:2014,BrunschEA:Scheduling:2014,CurticapeanK15,BlaeserEA:Partitioning:2013,Etscheid15,KargerOnak:SmoothedPacking:2007}
essentially ignored this information. While this simplifies the analysis,
being oblivious to the unperturbed positions seems to be too pessimistic.
In fact, we see that the bound of Englert et al.~\cite{EnglertEA:2Opt:2014} cannot be improved beyond $O(1/\sigma)$
by ignoring the positions of the points (Section~\ref{sec:length}).
The reason for this limitation is that the lower bound for the global optimum is obtained
if all points have the same origin, which corresponds to an average-case rather than a smoothed analysis.
On the other hand, the upper bound for the local optimum has to hold for all choices of the unperturbed points, most of which
yield higher costs for the global optimum than the average-case analysis. Taking this into account carefully yields our bound of $O(\log(1/\sigma))$
(Section~\ref{sec:upper}).

To complement our upper bound, we show that the lower bound of $\Omega(\log n/\log \log n)$ by Chandra et al.~\cite{ChandraEA:OldOpt:1999} remains
true for $\sigma = O(1/\sqrt n)$ (Section~\ref{sec:lb}). This implies that a smoothed bound
of $o(\log(1/\sigma)/\log\log(1/\sigma))$ is impossible, and, thus, our bound cannot be improved significantly.

%
%
%

\section{2-Opt and Smoothing Model}

\subsection{2-Opt Heuristic for the TSP}

Let $X = \subseteq \real^d$ be a set of $n$ points.
The goal of the TSP is to find a Hamiltonian cycle (also called a tour)
$T$ through $X$ that has minimum length according to some distance measure.
In this paper, we consider standard Euclidean distances for both approximation ratio and running-time
as well as
squared Euclidean distances and Manhattan distances for the running-time.

Given a tour $T$, a \emph{2-change}
replaces two edges \edge 12\ and \edge 34\ of $T$ by two new
edges \edge 13\ and \edge 24,
provided that this yields again a tour
(this is the case if $X_1, X_2, X_3, X_4$ appear in this order in the tour)
and that this decreases the length of the tour, i.e.,
$d(X_1, X_2) + d(X_3, X_4) - d(X_1, X_3) - d(X_2, X_4) > 0$,
where $d(a,b) = \|a-b\|_2$ (Euclidean distances),
$d(a,b) = \|a-b\|_1$ (Manhattan distances), or
$d(a,b) = \|a-b\|_2^2$ (squared Euclidean distances). The 2-opt heuristic
iteratively improves an initial tour by applying 2-changes until it reaches a local optimum.
A local optimum is called a \emph{2-optimal} tour.

\subsection{Smoothing Model}

Throughout the rest of this paper, let $\bar X = \{x_1, \ldots, x_n\} \in [0,1]^d$ be a set of $n$ points from the unit hypercube.
In the smoothed analysis, these points are chosen by an adversary, and they serve as unperturbed \emph{origins}.
Let $Z_1, \ldots, Z_n \sim \mathcal N (0, \sigma^2)$ be $n$ independent random variables with mean $0$ and standard deviation $\sigma$.
By slight abuse of notation, $\mathcal N$ refers here to the multivariate normal distribution with covariance
matrix $\diag(\sigma^2)$.
We obtain the \emph{perturbed} point set $X = \{X_1, \ldots, X_n\} \subseteq \real^d$ by adding $X_i = x_i + Z_i$ for each $i \in [n] = \{1, \ldots, n\}$.
We write $X \leftarrow \pert_\sigma(\bar X)$ to make explicit from which
point set $\bar X$ the points in $X$ are obtained.

We assume that $\sigma \leq 1$ throughout the paper. This is justified by two reasons.
First, small $\sigma$ are the interesting case, i.e.,
when the order of magnitude of the perturbation is relatively small.
Second,
smoothed performance guarantees are monotonically decreasing
in $\sigma$: if we have $\sigma > 1$, then this is equivalent
to adversarial instances in $[0, 1/\sigma]^d$ that are perturbed with standard deviation $1$.
This in turn is dominated by adversarial instances in $[0,1]^d$ that are perturbed with standard deviation $1$,
as $[0,1/\sigma]^d \subseteq [0,1]^d$. Thus, any upper bound for $\sigma = 1$ (be it for the number of iterations or the approximation ratio)
holds also for larger $\sigma$.


Let us make a final remark about the smoothing model: while the algorithm itself, the 2-opt heuristic in our case, only sees $X$ and
does not know anything about the origins $\bar X$, we can of course exploit the positions of the unperturbed points in the analysis.

\section{Smoothed Analysis of the Running-time}
\label{sec:rt}

In this section, we make the dependence on all parameters (the number $n$ of points, the dimension~$d$, and the perturbation parameter $\sigma$) explicit.
This means that the $O$ or $\Omega$ do not hide any factors, not even factors depending  on $d$, which is often considered as a constant and therefore ignored. (This is also in contrast to our analysis of the approximation ratio, where the hidden constant can indeed depend on $d$.)

\subsection{Probability Theory for the Running-time}

In order to get an upper bound for the length of the initial tour, we need an upper bound
for the diameter of the point set $X$. Such an upper bound is also necessary for the analysis of 2-changes
with Euclidean distances (Section~\ref{sec:euc}).
We choose $\maxx$ such that $X \subseteq [-\maxx, \maxx]^d$ with a probability of at least
$1-1/n!$. For fixed $d$ and $\sigma \leq 1$, we can choose $\maxx = \Theta(1+\sigma
\sqrt{n \log n})$ according to the following lemma.
For $\sigma = O(1/\sqrt{n \log n})$, we have $\maxx = \Theta(1)$.

\begin{lemma}
\label{lem:maxx}
Let $c\geq 2$ be a sufficiently large constant, and let $\maxx = c \cdot
(\sigma\sqrt{n \log n} + 1)$. Then $\probab(X \not\subseteq [-\maxx, \maxx]^d) \leq 1/n!$.
\end{lemma}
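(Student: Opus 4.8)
The plan is to bound the probability that any single coordinate of any single point leaves the interval $[-\maxx,\maxx]$, and then take a union bound over all $n$ points and all $d$ coordinates. Since each origin $x_i$ lies in $[0,1]^d$, we have $|x_{i,j}| \le 1$, so the event $|X_{i,j}| > \maxx$ is contained in the event $|Z_{i,j}| > \maxx - 1 = c(\sigma\sqrt{n\log n}+1) - 1 \ge c\sigma\sqrt{n\log n}$ (using $c \ge 2$ so that $c \cdot 1 - 1 \ge c/2 \ge 1$, in fact $\ge c\sigma\sqrt{n\log n}$ whenever $\sigma\sqrt{n\log n}$ is at most some constant; more carefully, $c(\sigma\sqrt{n\log n}+1)-1 \ge c\sigma\sqrt{n\log n}$ holds iff $c \ge 1$, which is fine). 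So it suffices to show $\probab(|Z_{i,j}| > c\sigma\sqrt{n\log n}) \le 1/(d\cdot n \cdot n!)$, say.

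First I would invoke the standard Gaussian tail bound: for $W \sim \Gauss(0,\sigma^2)$ and $t > 0$, $\probab(|W| > t) \le 2\exp(-t^2/(2\sigma^2))$. Plugging in $t = c\sigma\sqrt{n\log n}$ gives $\probab(|Z_{i,j}| > t) \le 2\exp(-c^2 n\log n / 2) = 2 n^{-c^2 n/2}$. Then the union bound over the $nd$ coordinate-point pairs yields
\[
\probab(X \not\subseteq [-\maxx,\maxx]^d) \le nd \cdot 2 n^{-c^2 n/2} = 2d\, n^{1 - c^2 n/2}.
\]
It remains to check that this is at most $1/n! = 1/(n!)$. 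Using the crude bound $n! \le n^n$, it suffices that $2d\,n^{1-c^2n/2} \le n^{-n}$, i.e. $2d \le n^{c^2 n/2 - n - 1}$. For $c \ge 2$ we have $c^2 n/2 - n - 1 \ge 2n - n - 1 = n-1$, so the right-hand side is at least $n^{n-1}$, which dominates $2d$ for all $n \ge 2$ once $d$ is, say, polynomially bounded — and in any case one absorbs the factor $d$ (which is at most, e.g., the ambient dimension and is certainly $\le n^{n-2}$ for $n$ large) by taking $c$ sufficiently large, exactly as the statement permits. A clean way to phrase it: choose $c$ large enough that $c^2/2 - 1 \ge 3$, so the exponent $1 - c^2 n/2 \le 1 - 3n - n = 1 - 4n$ and $2d\,n^{1-4n} \le 1/n!$ for all relevant $n$ and $d$.

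The only real subtlety — and it is minor — is handling the additive constants correctly (the $+1$ from the hypercube and the $-1$ they induce, plus the $+1$ inside $\maxx$), making sure the net effect is that the Gaussian must exceed a quantity that is $\Omega(\sigma\sqrt{n\log n})$ with a constant we control via $c$; and confirming that the ``sufficiently large constant $c$'' can be chosen uniformly, independent of $d$ and $n$, to kill the polynomial overhead $2d\,n$ against the super-exponential gain $n^{-c^2 n/2}$ and still beat $1/n!$. There is no genuine obstacle here; this is a routine concentration argument, and I expect the author's proof to be essentially the two-line computation above.
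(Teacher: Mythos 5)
Your proposal is correct and follows essentially the same route as the paper: a union bound over the $nd$ point--coordinate pairs combined with a Gaussian tail bound at threshold $c\sigma\sqrt{n\log n}$, then comparing $n^{-\Theta(c^2 n)}$ against $1/n!$. The only cosmetic difference is that the paper cites the slightly sharper tail bound $2\exp(-t^2/(2\sigma^2))/(c\sqrt{2\pi n\log n})$ from Durrett rather than the plain $2\exp(-t^2/(2\sigma^2))$, which changes nothing in the conclusion.
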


\begin{proof}
We have $X \not\subseteq [-\maxx, \maxx]^d$ only if there is a point $x_i$ and a coordinate
of $x_i$ that is perturbed by more than $\maxx-1 \geq c \sigma \cdot \sqrt{n \log n}$. According to
Durrett~\cite[Theorem 1.2.3]{Durrett:Probability:2013},
the probability that a 1-dimensional Gaussian of standard deviation $\sigma$ is more than $c \sigma \sqrt{n \log n}$ away
from its mean
is bounded from above by $2 \cdot \frac{\exp(-c^2 n \log n /2)}{c \sqrt{2\pi n \log n}}$.
Thus, the probability that $X \not\subseteq [-\maxx, \maxx]^d$ is bounded from above
by $2dn \cdot \frac{\exp(-c^2 n \log n /2)}{c \sqrt{2\pi n \log n}}$.
For sufficiently large $c$, this is at most $1/n!$.
\end{proof}

Note that the constant $c$ in Lemma~\ref{lem:maxx} does not depend on the dimension $d$.

The following lemma is well known and follows from the
fact that the density of a $d$-dimensional Gaussian with standard deviation
$\sigma$ is bounded from above by $(2\sigma)^{-d}$ and the volume of a $d$-dimensional ball of radius
$\eps$ is bounded from above by $(2\eps)^d$.

\begin{lemma}
\label{lem:ball}
Let $a \in \real^d$ be drawn according to a $d$-dimensional Gaussian distribution
of standard deviation $\sigma$, and let $B = \{b \in \real^d \mid \|b-c\|_2 \leq \eps\}$ be a $d$-dimensional hyperball
of radius $\eps$ centered at $c \in \real^d$.
Then $\probab(a \in B) \leq (\eps/\sigma)^d$.
\end{lemma}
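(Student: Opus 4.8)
The plan is to bound the Gaussian mass of the ball $B$ by the trivial product of the supremum of the density and the volume of $B$. First I would write out the density of the $d$-dimensional Gaussian with covariance matrix $\diag(\sigma^2)$: since the coordinates are independent, the density factorizes as $f(x) = \prod_{i=1}^d \frac{1}{\sigma\sqrt{2\pi}} \exp\bigl(-(x_i - \mu_i)^2/(2\sigma^2)\bigr)$, where $\mu$ denotes the mean. Each factor is maximized at $x_i = \mu_i$ with value $\frac{1}{\sigma\sqrt{2\pi}} \le \frac{1}{2\sigma}$, using $\sqrt{2\pi} > 2$, so that $\sup_{x \in \real^d} f(x) \le (2\sigma)^{-d}$.

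Next I would bound $\vol(B)$. The simplest estimate suffices: the Euclidean ball of radius $\eps$ is contained in the axis-aligned cube of side length $2\eps$ centered at $c$, hence $\vol(B) \le (2\eps)^d$. (The exact value $\vol(B) = \pi^{d/2}\eps^d/\Gamma(d/2+1)$ is smaller, but this is not needed here.)

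Finally I would combine the two estimates: $\probab(a \in B) = \int_B f(x)\,dx \le \bigl(\sup_{x \in \real^d} f(x)\bigr)\cdot \vol(B) \le (2\sigma)^{-d}\cdot (2\eps)^d = (\eps/\sigma)^d$, which is exactly the claimed bound.

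There is no real obstacle here; the statement is a standard crude estimate, as the paper itself indicates. The only point requiring a little care is the bookkeeping of constants: one wants the factor of $2$ coming from the density bound and the factor of $2$ coming from the cube bound to cancel exactly, so that the final expression is the clean $(\eps/\sigma)^d$ with no stray constant. This is precisely why one prefers the slightly lossy bounds $\frac{1}{\sqrt{2\pi}} < \frac12$ and $B \subseteq$ (cube of side $2\eps$) over the exact formulas.
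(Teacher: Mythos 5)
Your proof is correct and follows exactly the argument the paper itself sketches just before the lemma: bound the density by $(2\sigma)^{-d}$, bound the volume of the ball by $(2\eps)^d$ (via the enclosing cube), and multiply. Nothing further is needed.
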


For $x, y \in \real^d$ with $x \neq y$, let
$L(x,y) = \{\xi \cdot (y-x)  + x \mid \xi \in \real\}$
denote the straight line through $x$ and $y$.

\begin{lemma}
\label{lem:closeline}
Let $a, b  \in \real^d$ be arbitrary with $a \neq b$. Let $c \in \real^d$
be drawn according to a $d$-dimensional Gaussian distribution with standard deviation $\sigma$.
Then the probability that $c$ is $\eps$-close to $L(a, b)$, i.e.,
$\min_{c^\star \in L(a,b)} \|c-c^\star\|_2 \leq \eps$, is bounded from above by
$(\eps/\sigma)^{d-1}$.
\end{lemma}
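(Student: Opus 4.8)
The plan is to reduce the statement to a one-dimensional estimate by projecting onto a suitable hyperplane. Concretely, let $u = (b-a)/\|b-a\|_2$ be the unit vector along the line $L(a,b)$, and let $H$ be the $(d-1)$-dimensional subspace orthogonal to $u$. The distance from a point $c$ to the line $L(a,b)$ is exactly the Euclidean norm of the orthogonal projection $\pi_H(c - a)$ onto $H$; in particular, $\min_{c^\star \in L(a,b)} \|c - c^\star\|_2 \le \eps$ if and only if $\pi_H(c)$ lies in a $(d-1)$-dimensional ball of radius $\eps$ (centered at $\pi_H(a)$) inside $H$.

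The key fact to invoke is that an orthogonal projection of a spherically symmetric Gaussian is again a Gaussian of the same standard deviation on the lower-dimensional subspace: since the covariance matrix is $\sigma^2 I_d$, the image under the orthogonal projection $\pi_H$ has covariance $\sigma^2 I_{d-1}$ (the projection of an isotropic Gaussian is isotropic). Therefore $\pi_H(c)$ is distributed as a $(d-1)$-dimensional Gaussian with standard deviation $\sigma$. Now I apply Lemma~\ref{lem:ball} in dimension $d-1$: the probability that $\pi_H(c)$ falls into a $(d-1)$-dimensional ball of radius $\eps$ is at most $(\eps/\sigma)^{d-1}$. Chaining the two statements gives the bound.

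One should handle the degenerate case $d = 1$ separately (or simply note it is vacuous): when $d=1$, the line $L(a,b)$ is all of $\real$, so $c$ is always $\eps$-close to it, and the claimed bound $(\eps/\sigma)^0 = 1$ holds trivially. For $d \ge 2$ the argument above applies verbatim, since then $d - 1 \ge 1$ and Lemma~\ref{lem:ball} is stated in positive dimension.

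I do not expect any serious obstacle here; the statement is essentially a dimension-reduction exercise. The only point that needs a clean justification is the invariance of the isotropic Gaussian under orthogonal projection — but this is standard (one can argue via rotational invariance: rotate coordinates so that $u$ is the first basis vector, after which $\pi_H(c)$ is just the vector of the last $d-1$ coordinates of a standard product Gaussian, hence an independent $(d-1)$-dimensional Gaussian with the same $\sigma$). With that in hand, the proof is a two-line reduction to Lemma~\ref{lem:ball}.
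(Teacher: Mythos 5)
Your proof is correct and follows essentially the same route as the paper: both decompose the isotropic Gaussian into its component along the direction $b-a$ and its $(d-1)$-dimensional component in the orthogonal hyperplane, observe that the distance to $L(a,b)$ is determined by the latter, and conclude by applying Lemma~\ref{lem:ball} in dimension $d-1$. The only cosmetic difference is that you phrase this as a single orthogonal projection rather than as a two-stage drawing of the point.
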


\begin{proof}
We divide drawing $c$ into drawing a $1$-dimensional Gaussian $c^\star$ in
the direction of $a-b$
and drawing a $(d-1)$-dimensional Gaussian $c'$ in the hyperplane orthogonal to $a-b$
and containing $c^\star$. Then the distance of $c$ to $L(a,b)$ is $\|c-c^\star\|_2$.
For every $c^\star$, the point $c$ is $\eps$-close to $L(a,b)$ only if
$c'$ falls into a $(d-1)$-dimensional hyperball of radius $\eps$ around $c^\star$
in the $(d-1)$-dimensional subspace orthogonal to $a-b$.
Now the lemma follows by applying Lemma~\ref{lem:ball}.
\end{proof}

We need the following lemma in Section~\ref{sec:euc}.

\begin{lemma}
\label{lem:fctgauss}
Let $f: \real \to \real$ be a differentiable function.
Let $B$ be an upper bound for the absolute value of the derivative of $f$. Let $c$ be distributed
according to a Gaussian distribution with standard deviation $\sigma$.
Let $I$ be an interval of size $\eps$, and let $f(I) = \{f(x) \mid x \in I\}$
be the image of $I$.
Then $\probab(c \in f(I)) = O(B\eps/\sigma)$.
\end{lemma}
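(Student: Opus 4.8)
The plan is to exploit the fact that the bound $|f'| \le B$ turns $f$ into a Lipschitz map on $I$, so that the image $f(I)$ is short, and then to use that a Gaussian of standard deviation $\sigma$ assigns bounded probability to short intervals. First I would observe that, since $f$ is differentiable (hence continuous) and $I$ is an interval, the image $f(I)$ is itself an interval (possibly a single point). To bound its length, take any $x, y \in I$: by the mean value theorem there is a $\xi$ between $x$ and $y$ with $|f(x) - f(y)| = |f'(\xi)| \cdot |x - y| \le B \eps$, using that the length of $I$ is $\eps$. Hence $\sup f(I) - \inf f(I) \le B \eps$, so $f(I)$ is contained in some interval $J$ with $|J| \le B\eps$.

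Next I would estimate $\probab(c \in J)$. The density of a $1$-dimensional Gaussian with standard deviation $\sigma$ is bounded from above by $\frac{1}{\sqrt{2\pi}\,\sigma}$ everywhere, so integrating this bound over $J$ yields
\[
\probab(c \in f(I)) \le \probab(c \in J) \le |J| \cdot \frac{1}{\sqrt{2\pi}\,\sigma} \le \frac{B\eps}{\sqrt{2\pi}\,\sigma} = O\!\left(\frac{B\eps}{\sigma}\right),
\]
which is exactly the claimed bound. (Alternatively, one could invoke Lemma~\ref{lem:ball} with $d = 1$ applied to a ball of radius $B\eps/2$ around the midpoint of $J$, but the direct density estimate is cleaner here.)

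There is essentially no obstacle in this argument; the only points that deserve a moment of care are purely degenerate cases. The endpoints of $f(I)$ need not be attained on $I$ (and $I$ may be open, half-open, or a single point), but this is harmless since we argue via $\sup$ and $\inf$ and enclose $f(I)$ in a closed interval $J$. If $f$ is constant on $I$, or $I$ is degenerate, then $f(I)$ is a single point and the probability is $0$, which is consistent with the stated bound.
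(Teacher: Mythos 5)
Your argument is correct and is essentially the paper's own proof, just spelled out in more detail: the paper likewise notes that the bounded derivative confines $f(I)$ to an interval of length $B\eps$ and then applies the $O(1/\sigma)$ bound on the Gaussian density. The mean value theorem step and the treatment of degenerate cases are fine elaborations of what the paper leaves implicit.
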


\begin{proof}
Since the derivative of $f$ is bounded by $B$, the set $f(I)$ is contained
in some interval of length $B\eps$. The lemma follows
since the density of $c$ is bounded from above by $O(1/\sigma)$.
\end{proof}

The chi distribution~\cite[Section 8]{EvansEA:StatisticalDist:2000} is the distribution of the Euclidean length of a $d$-dimensional Gaussian 
random vector of standard deviation $\sigma$ and mean $0$. In the following,
we denote its density function by $\chi_d$. It is given
by
\begin{equation}
\chi_{d,\sigma}(x) = \frac{2^{1-\frac d2} \cdot \left(\frac x\sigma\right)^{d-1} \cdot \exp\bigl(-(x/\sigma)^2/2\bigr)}{\sigma \cdot \Gamma(d/2)}, \label{equ:chi}
\end{equation}
where $\Gamma$ denotes the gamma function.
We need the following lemma several times.

\begin{lemma}
\label{lem:integral}
Assume that $c \in \nat$ is a fixed constant and $d \in \nat$ is arbitrary with $d > c$. Then we have
\[
\int_0^\infty  \chi_{d,\sigma}(x) x^{-c} \, \textup dx = 
\frac{2^{-c/2} \Gamma\left(\frac{d-c}2\right)}{\sigma^c \cdot \Gamma\left(\frac d2 \right)} = 
 \Theta\left(\frac{1}{d^{c/2} \cdot \sigma^c}\right).
\]
\end{lemma}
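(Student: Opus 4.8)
The plan is to substitute the closed form \eqref{equ:chi} of $\chi_{d,\sigma}$, reduce the integral to the defining integral of the gamma function by two elementary changes of variables, and then estimate the resulting ratio of gamma values via Stirling's approximation.

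First I would plug in the density and substitute $t = x/\sigma$ to extract all powers of $\sigma$:
\[
\int_0^\infty \chi_{d,\sigma}(x)\, x^{-c}\,dx = \frac{2^{1-d/2}}{\sigma\,\Gamma(d/2)} \int_0^\infty \left(\tfrac x\sigma\right)^{d-1} \exp\!\bigl(-(x/\sigma)^2/2\bigr)\, x^{-c}\,dx = \frac{2^{1-d/2}}{\sigma^{c}\,\Gamma(d/2)} \int_0^\infty t^{\,d-1-c} \exp(-t^2/2)\,dt,
\]
where the $\sigma$ coming from $dx = \sigma\,dt$ cancels one factor $\sigma^{-1}$ from the density. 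Then I substitute $s = t^2/2$ (so $t = \sqrt{2s}$, $dt = (2s)^{-1/2}\,ds$), which turns the remaining integral into
\[
\int_0^\infty t^{\,d-1-c} e^{-t^2/2}\,dt = 2^{(d-c)/2-1}\int_0^\infty s^{(d-c)/2-1}e^{-s}\,ds = 2^{(d-c)/2-1}\,\Gamma\!\left(\tfrac{d-c}2\right);
\]
here convergence at $0$ uses precisely the hypothesis $d > c$, which makes the exponent $(d-c)/2-1 > -1$. Collecting the powers of $2$ (the exponent is $1 - d/2 + (d-c)/2 - 1 = -c/2$) yields the claimed identity $\int_0^\infty \chi_{d,\sigma}(x)\, x^{-c}\,dx = 2^{-c/2}\sigma^{-c}\, \Gamma\!\left(\tfrac{d-c}2\right)\big/\Gamma\!\left(\tfrac d2\right)$.

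It then remains to show $\Gamma\!\left(\tfrac{d-c}2\right)\big/\Gamma\!\left(\tfrac d2\right) = \Theta(d^{-c/2})$, where the hidden constants may depend on the fixed constant $c$ but not on $d$. Writing $z = (d-c)/2$, one peels off the integer steps via the recurrence $\Gamma(z+1) = z\Gamma(z)$: $\Gamma(d/2) = \Gamma(z + c/2)$ equals $\Gamma(z + \{c/2\})$ times a product of $\lfloor c/2\rfloor$ factors of the form $z + O(1)$, and $\Gamma(z+\{c/2\})/\Gamma(z)$ is $\Theta(1)$ if $c$ is even and $\Theta(z^{1/2})$ if $c$ is odd (by Stirling's formula, or by Wendel's inequality $\bigl(\tfrac{z}{z+a}\bigr)^{1-a} \le \Gamma(z+a)/(z^a\Gamma(z)) \le 1$ for $0\le a\le 1$). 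Hence $\Gamma(d/2)/\Gamma(z) = \Theta(z^{c/2})$, and since $z = \Theta(d)$ for fixed $c$ this gives $\Theta(d^{-c/2})$; the finitely many small cases $d = c+1, c+2, \dots$ below the range where the asymptotics kick in contribute only a constant depending on $c$, so the estimate holds for all $d > c$.

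The computation is entirely routine; the only point that requires a little care is the last step, i.e.\ bounding the ratio of gamma functions \emph{uniformly} in $d$ for fixed $c$, for which invoking Stirling's formula (or Wendel's inequality together with the recurrence) suffices.
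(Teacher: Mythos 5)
Your proof is correct, and both halves check out: the two substitutions $t=x/\sigma$ and $s=t^2/2$ do reduce the integral to $2^{-c/2}\sigma^{-c}\,\Gamma\bigl(\tfrac{d-c}2\bigr)/\Gamma\bigl(\tfrac d2\bigr)$ (the paper simply asserts that ``the first equality follows by integration''), and your estimate of the gamma ratio is valid uniformly in $d>c$. Where you genuinely diverge from the paper is in the second step. The paper attacks $\Gamma\bigl(\tfrac{d-c}2\bigr)/\Gamma\bigl(\tfrac d2\bigr)$ head-on with Stirling's formula including the explicit error term $\mu(x)\in[0,\tfrac1{12x}]$, and then has to show separately that the factor $A=\bigl(\tfrac{d-c}d\bigr)^{(d-1)/2}$ is $\Theta(1)$ and that $B=\bigl(\tfrac{d-c}2\bigr)^{-c/2}$ is $\Theta(d^{-c/2})$, in each case distinguishing $d\le 2c$ from $d>2c$. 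You instead peel off $\lfloor c/2\rfloor$ integer steps via the recurrence $\Gamma(z+1)=z\Gamma(z)$ and handle the remaining shift by at most $\tfrac12$ with Wendel's inequality; since each peeled factor lies in $[z,z+c/2]$ and $z=(d-c)/2\ge\tfrac12$, the uniformity in $d$ is immediate, and the only residual case analysis (the bounded range $c<d<2c$, where $z=\Theta(1)=\Theta(d)$) is the same one the paper needs anyway. Your route is more elementary --- it avoids asymptotic expansions entirely except optionally for the single half-integer shift --- at the cost of a parity case distinction on $c$; the paper's route is a single uniform computation but requires more care with the exponential factors in Stirling's formula. Both are complete; just make sure, if you write this up, to state Wendel's inequality with its hypotheses ($0\le a\le 1$, $z>0$) and to note explicitly that $z\ge\tfrac12$ follows from $d,c$ being integers with $d>c$, since that is what makes the lower bound $\bigl(\tfrac{z}{z+a}\bigr)^{1-a}\ge 2^{-1/2}$ uniform.
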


\begin{proof}
The first equality follows by integration. For the second inequality, we observe $2^{-c/2}$ is a fixed constant (which also never depends on $d$
when we apply the lemma) and that
\[
  \Gamma(x) = \sqrt{2\pi} x^{x-1/2} e^{-x + \mu(x)}
\]
for some function $\mu$ with $\mu(x) \in \bigl[0, \frac 1{12x}\bigr]$ according to Stirling's formula~\cite[6.1.37]{Abramowitz:Pocket:1984}.
We have $\frac{d-c}2 \geq \frac 12$ as $d > c$ and both are integers.
Then
\begin{align*}
\frac{\Gamma(\frac{d-c}2)}{\Gamma(\frac d2)}
& = \frac{\sqrt{2\pi} \cdot \left(\frac{d-c}2\right)^{\frac{d-c-1}{2}} \cdot 
   \exp\left(-\frac{d-c}2 + \mu\left(\frac{d-c}2\right)\right)}{\sqrt{2\pi}
      \cdot \left(\frac d2\right)^{\frac{d-1}2} \cdot \exp\left(-\frac d2 + \mu\left(\frac d2\right)\right)} \\
& = \frac{\left(\frac{d-c}2\right)^{\frac{d-c-1}{2}}}{\left(\frac d2\right)^{\frac{d-1}2}} 
\cdot \underbrace{\exp\left(\frac{c}2 + \mu\left(\frac{d-c}2\right) - \mu\left(\frac d2\right)\right)}_{=\Theta(1)} \\
& = \underbrace{\left(\frac{d-c}d\right)^{\frac{d-1}2}}_{=A} \cdot \underbrace{\left(\frac{d-c}2\right)^{-\frac c2}}_{=B}
\cdot \Theta(1).
\end{align*}
Here, the third equality follows from two facts: first, $c$ is a fixed constant, thus $\exp(c/2) = \Theta(1)$. Second, $\frac{d-c}2, \frac d2 \geq \frac 12$.
Thus, $\mu(\frac{d-c}2)$ and $\mu(\frac d2)$ lie between $0$ and a constant. Hence, the exponential term is $\Theta(1)$.

Analyzing $A$ and $B$ remains to be done: We have $B \leq (d/2)^{-c/2}$, thus $B = O(d^{-c/2})$.
If $d \leq 2c$, then $B$ is bounded from below by a constant and so is $d^{-c/2}$. If $d \geq 2c$,
then $B \geq (d/4)^{-c/2} = \Omega(d^{-c/2})$. Hence, $B = \Theta(d^{-c/2})$.

We have $A = \bigl(1 - \frac cd\bigr)^{\frac{d-1}2} \leq \exp\bigl(-\frac{(d-1)\cdot c}{2d}\bigr) = O(1)$.
Distinguishing the cases $d \leq 2c$ and $d > 2c$ in the same way as for $B$ yields $A = \Omega(1)$. Thus, $A = \Theta(1)$.
\end{proof}

The analysis with Euclidean and squared Euclidean distances depends on the distribution of the distance between two points perturbed by Gaussians,
where a larger distance between the two points is better for the analysis.
The following two lemmas show that, given that larger distance is better, we can replace the distribution of the distance by the corresponding
chi distribution. Since we do not know the original positions of the points involved, this allows us to replace unknown distributions by the
chi distribution.

\begin{lemma}
\label{lem:gaussdom}
Assume that $a$ is drawn according to a $d$-dimensional Gaussian distribution with standard deviation $\sigma$ and mean $0$.
Assume that $b$ is drawn according to a $d$-dimensional Gaussian distribution with standard deviation $\sigma$ and mean $\mu$.
Then $\|b\|_2$ stochastically dominates $\|a\|_2$, i.e.,
$\probab(\|b\|_2 \leq t) \leq \probab(\|a\|_2 \leq t)$ for all $t \in \real$.
\end{lemma}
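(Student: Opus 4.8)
The plan is to exploit two symmetries of the spherical Gaussian $\Gauss(0,\sigma^2 I_d)$: its rotational invariance and the unimodality of its one-dimensional marginals. First I would reformulate the statement as a geometric claim: among all Euclidean balls of a fixed radius $t$, the one centered at the mean carries the largest Gaussian probability mass. Indeed, writing $b = a' + \mu$ with $a' \sim \Gauss(0,\sigma^2 I_d)$ and substituting $y = x - \mu$ gives $\probab(\|b\|_2 \le t) = \probab(a' \in B(-\mu, t))$, while trivially $\probab(\|a\|_2 \le t) = \probab(a \in B(0,t))$, where $B(v,t)$ denotes the ball of radius $t$ centered at $v$. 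Hence it suffices to prove $\probab(a \in B(v,t)) \le \probab(a \in B(0,t))$ for every $v \in \real^d$.

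Since the covariance $\sigma^2 I_d$ is rotationally invariant, I would apply a rotation mapping $v$ onto the first coordinate axis, so that without loss of generality $v = (v_1, 0, \dots, 0)$. Writing $a = (a_1, \dots, a_d)$ with independent coordinates $a_i \sim \Gauss(0,\sigma^2)$, I then condition on $S := \sum_{i=2}^d a_i^2$. Conditioned on $S = s$: if $s > t^2$ both events are empty; otherwise, with $r := \sqrt{t^2 - s} \ge 0$,
\[
\probab(a \in B(0,t) \mid S = s) = \probab(|a_1| \le r), \qquad \probab(a \in B(v,t) \mid S = s) = \probab(a_1 \in [v_1 - r,\, v_1 + r]).
\]
Thus the lemma reduces to the one-dimensional fact that, for a centered Gaussian $a_1$ and any real $c$, the mass $\probab(a_1 \in [c-r, c+r])$ of an interval of length $2r$ is maximized at $c = 0$. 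This follows because $\frac{\mathrm d}{\mathrm d c}\probab(a_1 \in [c-r,c+r]) = \phi_\sigma(c+r) - \phi_\sigma(c-r)$, where $\phi_\sigma$ denotes the centered Gaussian density; this derivative has the sign of $-c$, since $\phi_\sigma$ is strictly decreasing in $|x|$ and $|c+r| \ge |c-r|$ exactly when $c \ge 0$. Hence $c = 0$ is the unique maximizer. Integrating the conditional inequality over the distribution of $S$ (via the tower property) yields $\probab(a \in B(v,t)) \le \probab(a \in B(0,t))$, which is the claim.

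I do not anticipate a real obstacle; the only points requiring slight care are the use of rotational invariance (which is what keeps the remaining $d-1$ coordinates unaffected by the shift of the mean, so that the conditioning argument goes through) and the bookkeeping for the event $\{S > t^2\}$, on which both sides vanish. Alternatively, the underlying geometric claim is a special case of Anderson's inequality on the mass of symmetric convex bodies under translation of a symmetric log-concave density, which could be invoked directly in place of the conditioning argument.
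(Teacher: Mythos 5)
Your proof is correct and follows essentially the same route as the paper's: both use rotational invariance to align the mean shift with the first coordinate axis, condition on the remaining $d-1$ coordinates (you on $S=\sum_{i\ge 2}a_i^2$, the paper on the coordinates themselves), and thereby reduce to the one-dimensional statement that a centered interval maximizes Gaussian mass among its translates. The only cosmetic difference is that you prove the 1D fact by differentiating in the center $c$, while the paper compares the masses of the two length-$\mu$ end pieces directly; both are valid.
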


\begin{proof}
For $d=1$, we have the following:
\begin{align*}
 \probab\bigl(\|b\|_2 \leq t\bigr) & = \probab\bigl(b \in [-t,t]\bigr) 
 = \probab\bigl(a \in [-t-\mu, t-\mu]\bigr) \\
 & = \probab\bigl(a \in [-t, t]) + \underbrace{\probab\bigl(a \in [-t-\mu, -t]\bigr) - \probab\bigl(a \in [t-\mu, t]\bigr)}_{\leq 0}
  \leq \probab\bigl(\|a \|_2 \leq t\bigr)
\end{align*}
Now we prove the lemma for larger $d$. Since Gaussian distributions are rotation symmetric,
we can assume that $\mu = (\delta, 0, \ldots, 0)$ for some $\delta \geq 0$.

We observe that $\|b\|_2$ dominates $\|a\|_2$ if and only if
$\|b\|_2^2$ dominates $\|a\|_2^2$. Let $b' = a + \mu$. It suffices to prove the lemma for this choice of $b'$, as $b'$ follows the same distribution as $b$.
Fixing $a_2, \ldots, a_d$ fixes also $a_2', \ldots, a_d'$.
Then $\|b'\|_2^2$ dominates $\|a\|_2^2$ if $|a_1 + \delta|$ dominates $|a_1|$. This is true because
the lemma holds for $d=1$.
\end{proof}

\begin{lemma}
\label{lem:replacechi}
Let $b$ be as in Lemma~\ref{lem:gaussdom}, and let $h: [0, \infty] \to [0,\infty)$ be a monotonically decreasing function.
Let $g$ be the density function of $\|b\|$.
Then
\[
 \int_{0}^\infty g(x) h(x) \, \textup d x \leq \int_{0}^\infty \chi_{d, \sigma}(x) h(x) \, \textup d x,
\]
provided that both integrals exist.
\end{lemma}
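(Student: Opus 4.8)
The plan is to deduce the claim from the stochastic dominance established in \lemref{gaussdom}. Let $a$ be a $d$-dimensional Gaussian of standard deviation $\sigma$ and mean $0$, so that $\norm{a}_2$ has exactly the density $\chi_{d,\sigma}$. Put $F_b(t) = \probab(\norm{b}_2 \le t)$ and $F_a(t) = \probab(\norm{a}_2 \le t)$; then \lemref{gaussdom} gives $F_b(t) \le F_a(t)$ for all $t \in \real$. Since $g$ is the density of $\norm{b}_2$, the two integrals in the statement equal $\expected[h(\norm{b}_2)]$ and $\expected[h(\norm{a}_2)]$, respectively, so it suffices to prove $\expected[h(\norm{b}_2)] \le \expected[h(\norm{a}_2)]$.

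For this I would use the layer-cake representation. As $h \ge 0$, we have $h(y) = \int_0^\infty \one[h(y) > u]\,\textup du$ for every $y \ge 0$, so taking expectations and swapping the order of integration by Tonelli's theorem (the integrand is nonnegative) gives $\expected[h(\norm{b}_2)] = \int_0^\infty \probab\bigl(h(\norm{b}_2) > u\bigr)\,\textup du$, and likewise for $a$. Fix $u \ge 0$ and set $S_u = \{y \ge 0 : h(y) > u\}$ with $s_u = \sup S_u \in [0,\infty]$. Since $h$ is monotonically decreasing, $S_u$ is a down-set, hence an interval with $[0, s_u) \subseteq S_u \subseteq [0, s_u]$.

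It remains to compare the two probabilities for each fixed $u$. From $S_u \subseteq [0, s_u]$ we get
\[
 \probab\bigl(h(\norm{b}_2) > u\bigr) = \probab(\norm{b}_2 \in S_u) \le F_b(s_u) \le F_a(s_u).
\]
On the other hand, $\chi_{d,\sigma}$ has no atoms, so $\probab(\norm{a}_2 = s_u) = 0$, and combined with $[0, s_u) \subseteq S_u \subseteq [0, s_u]$ this yields $\probab\bigl(h(\norm{a}_2) > u\bigr) = \probab(\norm{a}_2 \in S_u) = F_a(s_u)$. Hence $\probab\bigl(h(\norm{b}_2) > u\bigr) \le \probab\bigl(h(\norm{a}_2) > u\bigr)$ for every $u \ge 0$, and integrating over $u$ completes the proof. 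The only delicate point is that, because $h$ need not be continuous, the superlevel set $S_u$ may fail to contain its supremum $s_u$; this is precisely why the argument needs the dominating distribution to be atomless, so that $\probab(\norm{a}_2 \in S_u)$ does not depend on whether $s_u \in S_u$. Everything else is a routine application of Tonelli's theorem together with \lemref{gaussdom}.
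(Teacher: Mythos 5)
Your proof is correct and follows essentially the same route as the paper: both start from the stochastic dominance of Lemma~\ref{lem:gaussdom}, use the monotone decrease of $h$ to reverse the dominance, and conclude by comparing the two expectations. The paper states this chain tersely; your layer-cake computation merely spells out the standard details (including the harmless atomlessness point at $s_u$) that the paper leaves implicit.
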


\begin{proof}
Let $a$ denote the $d$-dimensional Gaussian random variable of standard deviation $\sigma$ and mean $0$.
Then $\|a\|_2$ has density $\chi_{d, \sigma}$.
By Lemma~\ref{lem:gaussdom}, $\|a\|_2$ is dominated by $\|b\|_2$. This implies that
$h(\|a\|_2)$ dominates $h(\|b\|_2)$ since $h$ is monotonically decreasing.
The lemma follows by observing that the two integrals are the two expected values of
$h(\|a\|_2)$ and $h(\|b\|_2)$.
\end{proof}

For Euclidean and squared Euclidean distances, it turns out to be useful to study
$\Delta_{a,b}(c) = d(c,a) - d(c,b)$ for points $a, b, c \in X$.
By abusing notation, we sometimes write $\Delta_{i,j}(k)$ instead of $\Delta_{X_i, X_j}(X_k)$
for short. A 2-change that replaces \edge 12\ and \edge 34\ by \edge 13\ and \edge 24\
improves the tour length by $\Delta_{1,4}(2) - \Delta_{1,4}(3)
 = \Delta_{2,3}(1) - \Delta_{2,3}(4)$.

\subsection{2-Opt State Graph and Linked 2-Changes}
\label{sec:linked}

The number of iterations that 2-opt needs depends of course heavily on the initial tour
and on which 2-change is chosen in each iteration.
We do not make any assumptions about the initial tour and about which
2-change is chosen. Following Englert et al.~\cite{EnglertEA:2Opt:2014},
we consider the \emph{2-opt state graph}: we have a node for every tour and a directed
edge from tour $T$ to tour $T'$ if $T'$ can be obtained
by one 2-change. The 2-opt state graph is a directed acyclic graph, and the length of the longest path in the 2-opt state
graph is an upper bound for the number of successful iterations that 2-opt needs.

In order to improve the bounds, we also consider \emph{pairs of
linked 2-changes}~\cite{EnglertEA:2Opt:2014}.
Two 2-changes form a pair of linked 2-changes if there is one edge added in one
2-change and removed in the other 2-change.
Formally, one 2-change replaces \edge 12\ and \edge 34\ by \edge 13\ and \edge 24\
and the other 2-change replaces \edge 13\ and \edge 56\ by \edge 15\ and \edge 26.
The edge \edge 13\ is the one that appears and disappears again (or the other way round).
It can happen
that \edge 24\ and \edge 56\ intersect.
Englert et al.~\cite{EnglertEA:2Opt:2014} called a pair of linked 2-changes a \emph{type $i$ pair}
if $|\edge 24 \cap \edge 56| = i$. As type 2 pairs, which involve only
four nodes, are difficult to analyze because of dependencies, we ignore them.
Fortunately, the following lemma states that we will find enough
disjoint pairs of linked 2-changes of type $0$ and $1$ in any sufficiently long sequence of 2-changes.

\begin{lemma}[\protect{Englert et al.~\cite[Lemma 9 of corrected version]{EnglertEA:2Opt:2014}}]
\label{lem:linkednumber}
Every sequence of $t$ consecutive 2-changes contains at least $t/7 - 3n/28$ disjoint
pairs of linked 2-changes of type 0 or type 1.
\end{lemma}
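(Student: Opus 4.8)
The plan is to count, within an arbitrary sequence of $t$ consecutive 2-changes, how many 2-changes \emph{cannot} be paired into a disjoint pair of linked 2-changes of type $0$ or $1$, and show that this number of ``leftover'' 2-changes is at most $6t/7 + 3n/28$ (so that at least $t/7 - 3n/28$ disjoint pairs survive). The starting point is the simple observation that two consecutive 2-changes in the sequence \emph{often} already form a pair of linked 2-changes: when a 2-change replaces $\edge 12$ and $\edge 34$ by $\edge 13$ and $\edge 24$, at least one of the two new edges $\edge 13, \edge 24$ must be touched again (removed) by some later 2-change --- otherwise that new edge would be present in every subsequent tour, in particular the final one. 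So I would set up a greedy/charging argument that walks through the sequence and repeatedly extracts a linked pair.

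The key steps, in order. First I would make precise the ``an added edge must later be removed'' claim, being careful about the caveat already flagged in the text: the partner 2-change that removes the edge $\edge 13$ might be a \emph{type $2$} pair, which we are forbidden to use. So the extraction step needs: given a 2-change $\alpha$ adding edges $e_1 = \edge 13$ and $e_2 = \edge 24$, look forward for the first 2-change that removes $e_1$ and the first that removes $e_2$; if at least one of these forms a type $0$ or type $1$ pair with $\alpha$ (i.e.\ the ``other removed edge'' in it is not the remaining new edge of $\alpha$), pick that one as $\alpha$'s partner $\beta$; argue that the bad case --- \emph{both} are type $2$ --- forces a very rigid local structure on only four nodes that cannot persist, or can occur at most a bounded number of times. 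Second, I would turn the pairing into a disjointness-respecting extraction: process the sequence left to right, and whenever the current 2-change $\alpha$ can be matched (per step one) to a later, still-unused $\beta$, commit the pair $\{\alpha,\beta\}$ and remove both from further consideration; the 2-changes that never get matched are exactly the leftovers. Third, I would bound the leftovers: each committed pair ``consumes'' $\alpha$ plus possibly up to some constant number of 2-changes that got skipped because their added edges were removed only by type $2$ partners or by already-committed 2-changes; quantifying this skip-overhead as a constant per pair, plus an additive $O(n)$ term coming from edges that are added but genuinely never removed (at most $n$ such edges, one batch near the end of the sequence), gives the ratio $1/7$ and the $3n/28$ slack after optimizing the constants.

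The main obstacle, and where essentially all the real work lies, is the bookkeeping that yields exactly the constants $1/7$ and $3/28$ rather than some weaker $t/c_1 - c_2 n$. This requires a careful amortized analysis: one must decide, for each 2-change, how much ``blame'' it can absorb --- its own presence, the presence of a type-$2$-only partner it had to reject, the presence of a 2-change whose relevant new edge was already claimed as the linking edge of an earlier pair --- and show the total blame per extracted disjoint pair is at most $7$, while the un-extractable remainder is governed by the $\le n$ edges present in the final tour (or never removed), contributing the $3n/28$ (note $3n/28 = \tfrac{3}{4}\cdot\tfrac{n}{7}$, which hints that roughly three of every seven ``slots'' near the tail can be wasted). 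Concretely I expect the proof to interleave a graph-theoretic lemma (a type $2$ pair on four fixed nodes is essentially unique and so cannot recur, pinning down how often the bad case fires) with a discharging scheme over the linear sequence; getting both halves to cooperate so that the worst case is tight is the delicate point. Everything else --- the existence of \emph{some} later removal of a newly added edge, the definition of type $0$/$1$/$2$ pairs, and the reduction from ``longest path in the state graph'' to ``disjoint linked pairs'' --- is routine given the setup already in the excerpt.
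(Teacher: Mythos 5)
This lemma is not proved in the paper at all: it is imported verbatim from the corrected version of Englert et al.\ (Lemma~9 there), so there is no in-paper argument to compare against. Judged on its own terms, your proposal is a strategy outline rather than a proof, and the two places where you defer the work are exactly the places where the real difficulty sits. First, the treatment of type~2 pairs. A type~2 partner of a 2-change $\alpha$ that adds $\edge 13$ and $\edge 24$ is a later 2-change that removes \emph{both} of these edges; your plan requires showing that the bad case --- the first removal of $\edge 13$ and the first removal of $\edge 24$ each form a type~2 pair with $\alpha$ --- ``forces a very rigid local structure \dots\ that cannot persist, or can occur at most a bounded number of times.'' You never establish this, and it is not obviously true: nothing prevents a long sequence in which many 2-changes have both of their added edges first removed together by a single later step, and edges can be removed, re-added, and removed again, so ``first removal'' does not behave as simply as the sketch assumes. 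The fact that the originally published proof of this very lemma contained an error that had to be corrected is a strong hint that this step cannot be waved through.

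Second, the quantitative accounting. You observe that at most $n$ added edges can survive to the final tour and that each committed pair should absorb at most a constant amount of ``blame,'' and you then assert that ``optimizing the constants'' yields $t/7 - 3n/28$. But no charging scheme is actually defined: you do not specify how many skipped 2-changes a committed pair may be charged with, why that number is at most some fixed constant (and specifically why the total is $7$ per pair), or how the never-removed edges translate into exactly the additive $3n/28$. The remark that $3n/28 = \tfrac34\cdot\tfrac n7$ is numerology, not a derivation. Until the type~2 obstruction is handled with an actual structural argument and the discharging scheme is written down and verified, the proposal establishes only that \emph{some} bound of the form $t/c_1 - c_2 n$ is plausible, not the stated one.
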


Following Englert et al.~\cite[Figure 8]{EnglertEA:2Opt:2014}, we subdivide type 1 pairs into type 1a and type 1b depending on
how \edge 24\ and \edge 56\ intersect. One of the 2-changes replaces \edge 12 and \edge 34\ by \edge 13\ and \edge 24.
Then other 2-change, i.e., the one that removes the edge \edge 13\ shared by the linked pair, determines its type:
\begin{description}
\item[Type 0:] $\edge 13$ and $\edge 56$ are replaced by $\edge 15$ and $\edge 36$.
\item[Type 1a:] $\edge 13$ and $\edge 25$ are replaced by $\edge 15$ and $\edge 23$.
\item[Type 1b:] $\edge 13$ and $\edge 25$ are replaced by $\edge 12$ and $\edge 35$.
\end{description}

The main idea in the proofs by Englert et al.~\cite{EnglertEA:2Opt:2014} and also in our proofs is
to bound the minimal improvement of any 2-change or
the minimal improvement of any pair of linked 2-changes. We denote the smallest improvement of any 2-change by $\dmin$
and the smallest improvement of any pair of linked 2-changes of type 0, 1a, or 1b by $\dminl$.
It will be clear from the context which distance measure is used
for $\dmin$ and $\dminl$.

Suppose that the initial tour has a length of at most $L$, then
2-opt cannot run for more than $L/\dmin$ iterations and
not for more than $\Theta(L/\dminl)$ iterations, provided that
$L/\dminl = \Omega(n^2)$ because of Lemma~\ref{lem:linkednumber}.

The following lemma formalizes this and shows how to bound the expected number of iterations using a tail bound for $\dmin$ or $\dminl$.

\begin{lemma}
\label{lem:generic}
Suppose that, with a probability of at least $1-1/n!$, any tour has a length of
at most $L$. Let $\gamma > 1$.
Then
\begin{enumerate}[label=(\arabic{*})]
\item If $\probab(\dmin \leq \eps) = O(P  \eps)$,
then the expected length of the longest path in the 2-opt state graph
is bounded from above by $O(PL n \log n)$. \label{gen:one}
\item If $\probab(\dmin \leq \eps) = O(P  \eps^\gamma)$,
then the expected length of the longest path in the 2-opt state graph
is bounded from above by $O(P^{1/\gamma}L)$. \label{gen:two}
\item The same bounds as (\ref{gen:one}) and (\ref{gen:two})
hold if we replace $\dmin$ by $\dminl$, provided that $PL = \Omega(n^2)$
for Case~\ref{gen:one} and $P^{1/\gamma}L = \Omega(n^2)$ for Case~\ref{gen:two}.
\label{gen:linked}
\end{enumerate}
\end{lemma}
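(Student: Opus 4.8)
The plan is to combine a purely deterministic bound on the length of any path in the 2-opt state graph with the assumed tail bound on $\dmin$ (resp.\ $\dminl$), using a truncation step to cope with the heavy tail of $1/\dmin$. First I would record the trivial bound: the state graph has $N := (n-1)!/2$ nodes, so every path has fewer than $n!$ edges, whatever the perturbation is. Let $\mathcal E$ denote the event that every tour has length at most $L$, so that $\probab(\mathcal E) \geq 1 - 1/n!$ by assumption. Conditioned on $\mathcal E$, along any path $T_0 \to T_1 \to \dots \to T_t$ the tour length strictly decreases by at least $\dmin$ at each step while remaining in $[0,L]$, hence $t \leq L/\dmin$. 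Writing $T^\star$ for the length of the longest path, this gives $T^\star \leq \min\{N, L/\dmin\}$ on $\mathcal E$ and $T^\star \leq N$ unconditionally, so
\[
  \expected[T^\star] \;\leq\; \expected\bigl[\min\{N, L/\dmin\}\bigr] + N\cdot\probab(\overline{\mathcal E}) \;\leq\; \expected\bigl[\min\{N, L/\dmin\}\bigr] + 1 .
\]
It thus remains to estimate $\expected[\min\{N, L/\dmin\}] = \int_0^N \probab(\dmin < L/t)\,\textup{d}t$.

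For Case~\ref{gen:one} I would split this integral at $t_0 = \Theta(PL)$: for $t \leq t_0$ I bound the integrand by $1$ and for $t_0 < t \leq N$ by $O(PL/t)$, so the two parts contribute $O(PL)$ and $\int_{t_0}^N O(PL/t)\,\textup{d}t = O(PL \log N) = O(PLn\log n)$, using $\log N = O(n\log n)$. For Case~\ref{gen:two} the exponent $\gamma > 1$ makes even $\int_0^\infty \probab(\dmin < L/t)\,\textup{d}t$ converge: splitting at $t_0 = \Theta(P^{1/\gamma}L)$, the part $t \leq t_0$ contributes $O(P^{1/\gamma}L)$ and the tail $\int_{t_0}^\infty O(PL^\gamma t^{-\gamma})\,\textup{d}t = O(P^{1/\gamma}L)$. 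Plugging these into the displayed inequality yields the claimed bounds $O(PLn\log n)$ and $O(P^{1/\gamma}L)$.

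For Case~\ref{gen:linked} I would first strengthen the deterministic step via \lemref{linkednumber}: a path of length $t$ contains at least $t/7 - 3n/28$ pairwise disjoint linked pairs of type $0$, $1a$, or $1b$. On $\mathcal E$, the combined improvement of such a pair — the sum of the improvements of its two constituent 2-changes, which is at least $\dminl$ by definition — telescopes, over the disjoint pairs, into at most the total length decrease along the path, which is at most $L$; hence $(t/7 - 3n/28)\,\dminl \leq L$, i.e.\ $t \leq 7L/\dminl + 3n/4$, so that $T^\star \leq \min\{N, 7L/\dminl\} + 3n/4$ on $\mathcal E$. Re-running the two integral estimates with $7L$ in place of $L$ and $\dminl$ in place of $\dmin$ bounds $\expected[\min\{N, 7L/\dminl\}]$ by $O(PLn\log n)$ and $O(P^{1/\gamma}L)$ respectively, and the additive $O(n)$ loss is dominated by these exactly because $PL = \Omega(n^2)$ (resp.\ $P^{1/\gamma}L = \Omega(n^2)$). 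The only subtle point in the whole argument is the truncation: $1/\dmin$ is in general not integrable — in Case~\ref{gen:one} its expectation is infinite — so one must cap the number of iterations at the trivial bound $N$ and pay the factor $\log N = O(n\log n)$; the larger exponent in Case~\ref{gen:two} removes this loss, and in Case~\ref{gen:linked} one additionally has to check that the disjoint linked pairs' improvements really sum to at most $L$.
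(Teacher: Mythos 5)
Your proposal is correct and follows essentially the same route as the paper: cap the bad event (tour length exceeding $L$) by the trivial $n!$ bound so it contributes $O(1)$, convert $\expected[T^\star]$ into a tail sum/integral of $\probab(\dmin < L/t)$, truncate at the threshold where the tail bound becomes nontrivial, and for linked pairs invoke Lemma~\ref{lem:linkednumber} together with disjointness to get $(t/7-3n/28)\dminl \le L$. The only differences are presentational (integral versus sum, and a more explicit write-up of Case~(3), which the paper treats tersely).
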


\begin{proof}
If the length of the longest tour is longer than $L$, then we use the trivial
upper bound of $n!$. This contributes only $O(1)$ to the expected value.

Consider the first statement.
Let $T$ be the longest path in the 2-opt state graph. If $T \geq t$,
then $\dmin \leq L/t$.
Plugging this in and observing that $n!$ is an upper bound for $T$ yields
\[
  \expected(T)  = \sum_{t=1}^{n!} \probab(T\geq t) 
 \leq \sum_{t=1}^{n!} O(PL/t) = O(\log(n!) \cdot PL) = O(PLn \log n).
\]

Now consider the second statement, and let $T$ be as above.
Let $K = O(L \cdot P^{1/\gamma})$.
Then
\begin{align*}
  \expected(T) & = \sum_{t=1}^{n!} \probab(T \geq t) 
  \leq \sum_{t=1}^{n!} \min\left\{1, O\bigl(P \cdot (L/t)^\gamma\bigr)\right\} \\
& = K + PL^{\gamma} \cdot \sum_{t \geq K} O(t^{-\gamma})
 = K + PL^{\gamma} \cdot O(K^{1-\gamma}) = O(K).
\end{align*}

Finally, we consider the third statement.
The statement follows from the observation that
the maximal number of disjoint pairs of linked 2-changes and the length of the longest
path in the 2-opt state graph are asymptotically equal if they are of length at least $\Omega(n^2)$ (Lemma~\ref{lem:linkednumber}) and the probability statements become nontrivial
only for $t = \Omega(PL)$ in the first
and $t = \Omega(P^{1/\gamma}L)$ in the second case.
\end{proof}

\subsection{Manhattan Distances}
\label{sec:manhattan}

The essence of our analysis for Manhattan distances is a straightforward adaptation
of the analysis in the one-step model.
The extra factor of $\maxx$ comes from the bound of the initial tour, and the extra factor
of $d^2$ stems from stating the dependence on $d$ explicitly and getting rid of the exponential dependence on $d$~\cite[Proofs of Theorem~7 and Lemma~10]{EnglertEA:2Opt:2014}.

\begin{lemma}
\label{lem:manhattandminl}
$\probab(\dminl \leq \eps) = O(d^2n^6\eps^2/\sigma^2)$.
\end{lemma}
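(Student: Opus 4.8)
The plan is to bound, for each of the three types of linked 2-changes (0, 1a, 1b), the probability that the combined improvement is at most $\eps$, and then take a union bound over all $O(n^6)$ choices of the six (or fewer) indices involved. For a fixed pair of linked 2-changes, the improvement is the sum of the two individual improvements, each of which has the form $\Delta_{a,b}(c) - \Delta_{a,b}(c')$ for appropriate points; with Manhattan distances $\Delta_{a,b}(c) = \|c - a\|_1 - \|c - b\|_1$ decomposes coordinatewise into $d$ independent one-dimensional contributions. The standard trick (following Englert et al.~\cite[Section 4.1]{EnglertEA:2Opt:2014}) is that in each coordinate the one-dimensional analogue of $\Delta_{a,b}(c)$ is a piecewise-linear function of the relevant coordinate of one of the perturbed points, with slopes in $\{-2,0,2\}$, so conditioning on all but one well-chosen coordinate of one well-chosen point makes the total improvement an affine function of a single one-dimensional Gaussian with a coefficient of absolute value $2$ (unless that coefficient vanishes, which we handle by choosing the coordinate where it does not).

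Concretely, for a type-0 pair the two 2-changes together involve six points $X_1,\dots,X_6$, and the improvement is $(\Delta_{1,4}(2) - \Delta_{1,4}(3)) + (\Delta_{1,5}(3)\text{-type term})$; I would first identify a point, say $X_2$, that appears in only one of the two 2-changes, and a coordinate $\ell$ in which $X_2$ lies strictly between the corresponding coordinates of $X_1$ and $X_4$ with the right sign pattern — failing which one gets the bound essentially for free in that coordinate. Fixing all coordinates of all points except the $\ell$-th coordinate of $X_2$, the improvement becomes $\pm 2 \cdot (X_2)_\ell + (\text{const})$, so by the density bound on a one-dimensional Gaussian (density at most $O(1/\sigma)$), the improvement lands in any interval of length $\eps$ with probability $O(\eps/\sigma)$. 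For types 1a and 1b only five points are involved and the same scheme applies, being slightly more careful about which point/coordinate to freeze because of the extra sharing. To get the square, $\eps^2/\sigma^2$, rather than just $\eps/\sigma$, one repeats the argument for the \emph{other} 2-change in the pair, using a second point and coordinate whose Gaussian is independent of the first choice (this is exactly why type-2 pairs are excluded: with only four shared nodes one cannot find two independent "free" Gaussians). So $\probab(\text{improvement} \le \eps) = O(d^2 \eps^2/\sigma^2)$ for a single fixed tuple — the $d^2$ arising because for each of the two 2-changes we may have to union-bound over $d$ possible coordinates in which the desired slope-$\pm 2$ configuration occurs. Multiplying by the $O(n^6)$ tuples gives the claim.

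The main obstacle I anticipate is the bookkeeping needed to guarantee two \emph{independent} well-conditioned one-dimensional Gaussians: one must verify that for each of the three pair types there exist two points (one "private" to each 2-change, or otherwise chosen so their perturbations are independent) and coordinates such that, after conditioning on everything else, each 2-change's improvement is a nonconstant affine function of its own Gaussian — and that the conditioning done for one does not destroy the randomness used for the other. The slope-vanishing cases (where the chosen point happens not to separate the two endpoints in the chosen coordinate) must be disposed of by observing that then we simply pick another coordinate, or that the improvement in that coordinate is deterministically nonnegative so it can be dropped; handling this cleanly for all types is the delicate part. Everything else — the coordinatewise decomposition of Manhattan distance, the one-dimensional density bound, and the union bound — is routine.
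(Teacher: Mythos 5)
Your proposal follows essentially the same route as the paper: decompose the Manhattan improvements $\Gamma_1,\Gamma_2$ coordinatewise, use the case analysis of Englert et al.\ to find for each of the two 2-changes a variable $x_{ji}$ with coefficient $\pm 2$ in that improvement and coefficient $0$ in the other, condition on everything else to reduce each event to a one-dimensional Gaussian landing in an interval of length $\eps/2$ (probability $O(\eps/\sigma)$ each, independent), pay a factor $O(d^2)$ for guessing the two coordinates $i,i'$ in advance so that only $O(1)$ linear combinations remain, and finish with a union bound over the $O(n^6)$ pairs. This matches the paper's proof in all essential respects, including the correct identification of where the $d^2$ comes from and why type-2 pairs are excluded.
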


\begin{proof}
We consider a pair of linked 2-changes as described in Section~\ref{sec:linked}.
The improvement of the first 2-change is
\[
  \Gamma_1 = \sum_{i=1}^d | x_{1i} - x_{2i}| + | x_{3i} - x_{4i}| - | x_{1i} - x_{3i}| - | x_{2i} - x_{4i}|,
\]
where $x_{ji}$ is the $i$-th coordinate of $X_j \in X$.
The improvement of the second 2-change is
\[
  \Gamma_2 = \sum_{i=1}^d | x_{1i} - x_{3i}| + | x_{5i} - x_{6i}| - | x_{1i} - x_{5i}| - | x_{3i} - x_{6i}|.
\]
Note that we can have a type 1 pair, i.e., two of the points $X_2, X_4, X_5, X_6$ can be identical.

Each ordering of the $x_{ji}$ gives rise to a linear combination for $\Gamma_1$ and $\Gamma_2$. We have $(6!)^d$ such orderings.
If we examine the case distinctions by Englert et al.~\cite[Lemmas 11, 12, 13]{EnglertEA:2Opt:2014} closely,
we see that any pair of linear combinations is either impossible (it uses a different ordering of the variables for
$\Gamma_1$ and $\Gamma_2$ or one of $\Gamma_1$ and $\Gamma_2$ is non-positive, thus the corresponding
2-change is in fact not a 2-change)
or we have one variable $x_{ji}$ that has a non-zero coefficient in $\Gamma_1$ and a coefficient of $0$
in $\Gamma_2$ and another variable $x_{j'i'}$ that has 
a non-zero coefficient in $\Gamma_2$ and a coefficient of $0$
in $\Gamma_1$. The absolute values of the non-zero coefficients of $x_{ji}$
and $x_{j'i'}$ is $2$.
Now $\Gamma_1$ falls into $(0,\eps]$ only if $x_{ji}$ falls into an interval
of length $\eps/2$. This happens with a probability of at most $O(\eps/\sigma)$.
By independence, the same holds for $\Gamma_2$ and $x_{j'i'}$.

However, we would incur an extra factor of $(6!)^d$ in this way, and we would like to remove all exponential dependence of $d$.
In order to do this, we assume that we know $i$ and $i'$ already. This comes at the expense of a factor of $O(d^2)$ for taking
a union bound over the choices of $i$ and $i'$. We let an adversary fix values for all $x_{j \tilde i}$ with $\tilde i \neq i, i'$.
Since we know $i$ and $i'$, we are left with at most $(6!)^2 = O(1)$ possible linear combinations.

Finally, the lemma follows by taking a union bound over all $O(n^6)$ possible pairs of linked 2-changes.
\end{proof}

\begin{theorem}
\label{thm:manhattan}
The expected length of the longest path in the 2-opt state graph corresponding to $d$-dimensional instances
with Manhattan distances is at most $O(d^2 n^4 \maxx/\sigma)$.
\end{theorem}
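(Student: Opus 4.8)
The plan is to combine the tail bound on $\dminl$ from Lemma~\ref{lem:manhattandminl} with the generic running-time estimate of Lemma~\ref{lem:generic}, once we have pinned down an upper bound $L$ on the length of any tour. First I would bound $L$: by Lemma~\ref{lem:maxx}, with probability at least $1-1/n!$ all perturbed points lie in $[-\maxx,\maxx]^d$. On this event any two points of $X$ have Manhattan distance at most $2d\maxx$, so any Hamiltonian cycle, having exactly $n$ edges, has length at most $2dn\maxx$. Hence we may take $L = O(dn\maxx)$ as the bound required in the hypothesis of Lemma~\ref{lem:generic}.

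Next I would invoke Lemma~\ref{lem:generic}, case~\ref{gen:linked} in the regime of case~\ref{gen:two} with $\gamma = 2$, since Lemma~\ref{lem:manhattandminl} states $\probab(\dminl \le \eps) = O(P\eps^2)$ with $P = O(d^2 n^6/\sigma^2)$, hence $P^{1/2} = O(dn^3/\sigma)$. Before applying it I would verify the side condition $P^{1/\gamma}L = \Omega(n^2)$: because $d \ge 1$ and $\sigma \le 1$ we have $P^{1/2} = \Omega(n^3)$, and because $\maxx \ge 2$ (as $c \ge 2$ in Lemma~\ref{lem:maxx}) we have $L = \Omega(n)$, so indeed $P^{1/2}L = \Omega(n^4) = \Omega(n^2)$. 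Lemma~\ref{lem:generic} then yields that the expected length of the longest path in the 2-opt state graph is $O(P^{1/2}L) = O\bigl(\tfrac{dn^3}{\sigma}\cdot dn\maxx\bigr) = O\bigl(\tfrac{d^2 n^4 \maxx}{\sigma}\bigr)$, which is the claimed bound.

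I do not expect a genuine obstacle here: all the substantive work is already carried out in Lemma~\ref{lem:manhattandminl} (the coefficient analysis following Englert et al., with the $d^2$ union bound over coordinate pairs) and in the generic Lemma~\ref{lem:generic}. The only points needing a little care are the factor $d$ in the Manhattan diameter bound — which is precisely what supplies one of the two factors of $d$ in the final estimate, the other coming from $P^{1/2}$ — and confirming that the hypothesis of Lemma~\ref{lem:generic}\ref{gen:linked} holds, so that we are entitled to count disjoint pairs of linked 2-changes via Lemma~\ref{lem:linkednumber} rather than individual 2-changes (counting individual 2-changes via $\dmin$ would lose a $\log$ factor and a worse power of $n$).
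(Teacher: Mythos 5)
Your proposal is correct and follows exactly the paper's own argument: bound the initial tour length by $O(dn\maxx)$ via Lemma~\ref{lem:maxx}, then apply Lemma~\ref{lem:generic} for linked 2-changes with $\gamma=2$ and the tail bound of Lemma~\ref{lem:manhattandminl}. The paper states this in two sentences; your version merely spells out the arithmetic and verifies the $\Omega(n^2)$ side condition, which the paper leaves implicit.
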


\begin{proof}
The initial tour has a length of at most $O(nd\maxx)$ with a probability of at least $1-1/n!$
by Lemma~\ref{lem:maxx}.
We apply Lemma~\ref{lem:generic} for linked 2-changes using Lemma~\ref{lem:manhattandminl} and $\gamma = 2$.
\end{proof}

\subsection{Squared Euclidean Distances}
\label{sec:sed}

\subsubsection{Preparation}

In this section, we have $\Delta_{a,b}(c) = \|c-a\|^2_2 - \|c-b\|^2_2$
for $a, b, c \in \real^d$.

Assume that we have a 2-change that replaces $\edge 12$ and \edge 34\ by \edge 13\ and \edge 24.
The improvement caused by this 2-change is $\Delta_{2,3}(1) - \Delta_{2,3}(4) = \Delta_{1,4}(2) - \Delta_{1,4}(3)$.
Given the positions of the four nodes except for a single $i \in \{1,2,3,4\}$, such a 2-change yields a small improvement only if 
the corresponding $\Delta_{\cdot, \cdot}(i)$ falls into some interval of size $\eps$.
The following lemma gives an upper bound for the probability that this happens.

\begin{lemma}
\label{lem:seddelta}
Let $a, b \in \real^d$, $a \neq b$, and let $c$ be drawn according to a Gaussian
distribution with standard deviation $\sigma$. Let $I \subseteq \real$
be an interval of length $\eps$.
Then
\[
  \probab\bigl(\Delta_{a,b}(c) \in I\big) \leq \frac{\eps}{4 \sigma \cdot \|a-b\|_2}.
\]
\end{lemma}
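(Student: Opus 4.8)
The plan is to exploit the fact that $\Delta_{a,b}(c)$ is an \emph{affine} function of $c$ whose linear part points in the direction $b-a$, so that the event $\{\Delta_{a,b}(c)\in I\}$ is equivalent to a one-dimensional Gaussian landing in a short interval. Expanding the squared norms gives
\[
\Delta_{a,b}(c) = \|c-a\|_2^2 - \|c-b\|_2^2 = 2\langle c,\, b-a\rangle + \|a\|_2^2 - \|b\|_2^2,
\]
so, up to an additive constant independent of $c$, the quantity $\Delta_{a,b}(c)$ equals $2\langle c,\,b-a\rangle$.

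Next I would decompose the Gaussian vector $c$ into its component $\xi = \langle c,\, u\rangle$ along the unit vector $u = (b-a)/\|a-b\|_2$ and its component in the orthogonal hyperplane, exactly as in the proof of \lemref{closeline}. The coordinate $\xi$ is a one-dimensional Gaussian with standard deviation $\sigma$ (its mean is irrelevant for what follows), and $\langle c,\,b-a\rangle = \|a-b\|_2 \cdot \xi$. Hence
\[
\Delta_{a,b}(c) = 2\|a-b\|_2 \cdot \xi + \bigl(\|a\|_2^2 - \|b\|_2^2\bigr),
\]
and $\Delta_{a,b}(c)\in I$ holds if and only if $\xi$ falls into the interval $J$ obtained from $I$ by the inverse of this affine map; since the map scales by $2\|a-b\|_2$, the interval $J$ has length $\eps/(2\|a-b\|_2)$.

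Finally, since the density of a one-dimensional Gaussian with standard deviation $\sigma$ is bounded from above by $1/(\sqrt{2\pi}\,\sigma)$ and $\sqrt{2\pi} > 2$, we conclude
\[
\probab\bigl(\Delta_{a,b}(c) \in I\bigr) = \probab(\xi \in J) \leq \frac{1}{\sqrt{2\pi}\,\sigma} \cdot \frac{\eps}{2\|a-b\|_2} \leq \frac{\eps}{4\sigma\,\|a-b\|_2}.
\]
There is no genuine obstacle here; the only point requiring care is bookkeeping the constant: the factor $2$ coming from the linear term $2\langle c,\,b-a\rangle$, together with the sharp bound $1/\sqrt{2\pi} < 1/2$ on the maximal Gaussian density, is exactly what produces the claimed constant $4$ in the denominator. (If one is content with the weaker estimate $O(\eps/(\sigma\|a-b\|_2))$, one could instead invoke \lemref{fctgauss}, but that would not yield the sharp constant.)
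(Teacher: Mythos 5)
Your proof is correct and follows essentially the same route as the paper: both arguments reduce $\Delta_{a,b}(c)$ to an affine function of the one-dimensional Gaussian component of $c$ along the direction $b-a$ (the paper does this by rotating coordinates, you by projecting onto the unit vector), observe that the slope is $2\|a-b\|_2$, and apply the maximum-density bound $\frac{1}{\sqrt{2\pi}\,\sigma}\le\frac{1}{2\sigma}$ to obtain the constant $4$ in the denominator. No gaps.
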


\begin{proof}
Since Gaussian distributions are rotationally symmetric, we can assume without
loss of generality that $a = (0,\ldots, 0)$
and $b = (\delta, 0, \ldots, 0)$ with $\delta = \|a-b\|_2$.
Let $c = (c_1, \ldots, c_d)$. Then $\Delta_{a,b}(c) = c_1^2 - (c_1 - \delta)^2 = 2c_1\delta + \delta^2$.
Thus, $\Delta_{a,b}(c) \in I$ if and only
if $c_1$ falls into an interval of length $\frac\eps{2\delta}$.
Since $c_1$ is a 1-dimensional Gaussian random variable with a standard deviation
of $\sigma$, the probability for this is bounded from above
by $\frac{\eps}{4 \delta \sigma}$ since the maximum density of a 1-dimensional Gaussian of standard deviation $\sigma$ is
bounded from above by $\frac 1{2\sigma}$.
\end{proof}

\subsubsection{Single 2-Changes}
\label{ssec:sedsimple}

In this section, we prove a simple bound for the expected number of iterations of 2-opt with squared Euclidean distances.
This bounds holds for all $d \geq 2$. In the next section, we improve this bound for the case $d \geq 3$ using pairs of linked 2-changes.

\begin{lemma}
\label{lem:sedsingle}
For $d \geq 2$, we have $\probab(\dmin \in (0,\eps]) = O\bigl(\frac{n^4 \eps}{\sigma^2\sqrt d}\bigr)$.
\end{lemma}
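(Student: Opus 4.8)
The plan is to bound $\probab(\dmin \in (0,\eps])$ by a union bound over all $O(n^4)$ choices of the four nodes $X_1, X_2, X_3, X_4$ involved in a 2-change replacing $\edge12$ and $\edge34$ by $\edge13$ and $\edge24$. For a fixed choice of the four nodes, the improvement equals $\Delta_{1,4}(2) - \Delta_{1,4}(3)$, and I want to show that this falls into $(0,\eps]$ only with probability $O(\eps/(\sigma^2 \sqrt d))$. The natural approach is to condition on three of the four points and use the randomness of the remaining one via \lemref{seddelta}: conditioned on $X_1$, $X_3$, $X_4$, the quantity $\Delta_{1,4}(2)-\Delta_{1,4}(3)$ is, up to an additive constant, an affine function of $X_2$ of exactly the form handled by \lemref{seddelta} (with $a = X_1$, $b = X_4$, $c = X_2$), so the conditional probability is at most $\eps/(4\sigma\|X_1 - X_4\|_2)$.

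The obstacle is that this bound is useless when $\|X_1 - X_4\|_2$ is small, so I cannot simply plug in a worst-case lower bound on the distance. Instead I would integrate over the distance: condition on $X_2, X_3$ and treat $X_4$ (or symmetrically $X_1$) as the remaining random point, write $\|X_1 - X_4\|_2$ using that $X_4 = x_4 + Z_4$ is Gaussian around $x_4$, and apply the law of total probability
\[
  \probab\bigl(\text{improvement} \in (0,\eps]\bigr) \leq \expected\left[\min\left\{1, \frac{\eps}{4\sigma \|X_1 - X_4\|_2}\right\}\right].
\]
Here $\|X_1 - X_4\|_2$, conditioned on $X_1$, is the norm of a $d$-dimensional Gaussian with standard deviation $\sigma$ and some nonzero mean, so by \lemref{replacechi} (with the monotonically decreasing function $h(x) = \min\{1, \eps/(4\sigma x)\}$) this expectation is at most $\int_0^\infty \chi_{d,\sigma}(x)\, h(x)\, \textup dx$. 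For the main contribution I would bound $h(x) \leq \eps/(4\sigma x)$, reducing the integral to $\frac{\eps}{4\sigma}\int_0^\infty \chi_{d,\sigma}(x) x^{-1}\,\textup dx$, which \lemref{integral} with $c=1$ evaluates as $\Theta(1/(\sqrt d\, \sigma))$; together with the prefactor this gives $O(\eps/(\sigma^2 \sqrt d))$ for one fixed 2-change.

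One subtlety is that \lemref{integral} requires $d > c = 1$, i.e. $d \geq 2$, which is exactly the hypothesis of the lemma; for $d = 1$ the integral $\int \chi_{1,\sigma}(x) x^{-1}\,\textup dx$ diverges, which is why squared Euclidean distances in one dimension need separate (and weaker) treatment. A second point is that I must be slightly careful about which point I leave random at each stage: to invoke \lemref{seddelta} I need $X_2$ random given $X_1, X_3, X_4$, but to get the chi-distributed distance I need $X_4$ random given $X_1$; since $X_2$ and $X_4$ are independent, I can first integrate out $X_2$ (getting the factor $\eps/(4\sigma\|X_1-X_4\|_2)$ as a bound that holds pointwise in $X_1, X_3, X_4$), then integrate out $X_4$ against its Gaussian density, and the intermediate conditioning on $X_3$ plays no role. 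Finally I take the union bound over the $\binom{n}{4} \cdot O(1) = O(n^4)$ labelled quadruples, yielding $\probab(\dmin \in (0,\eps]) = O(n^4 \eps/(\sigma^2\sqrt d))$ as claimed. The main work is just the careful bookkeeping of the conditioning order; the analytic heart is the single application of \lemref{integral}.
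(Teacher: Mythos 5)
Your proposal is correct and follows essentially the same route as the paper's proof: a pointwise application of Lemma~\ref{lem:seddelta} to the last point drawn, replacement of the relevant inter-point distance by the chi distribution via Lemma~\ref{lem:replacechi}, evaluation via Lemma~\ref{lem:integral} with $c=1$ (using $d\geq 2$), and a union bound over $O(n^4)$ 2-changes. The only difference is cosmetic: the paper writes the improvement as $\Delta_{2,3}(1)-\Delta_{2,3}(4)$ and integrates over $\|X_2-X_3\|$ with $X_1$ as the final random point, whereas you use $\Delta_{1,4}(2)-\Delta_{1,4}(3)$ and $\|X_1-X_4\|$ with $X_2$ last; these are symmetric relabelings of the same argument.
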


\begin{proof}
Consider a 2-change where $\edge 12$ and $\edge 34$ are replaced by $\edge 13$ and $\edge 24$.
Its improvement is given by $\Delta_{2,3}(1) - \Delta_{2,3}(4)$.
We let an adversary fix $X_3$. Then we draw $X_2$. This fixes the distance $\delta = \|X_2 - X_3\|_2$.
Now we draw $X_4$. This fixes $\Delta_{2,3}(4)$. The 2-change yields an improvement of at most $\eps$
only if $\Delta_{2,3}(1)$ falls into an interval of size at most $\eps$.
According to Lemma~\ref{lem:seddelta},
the probability that this happens is at most $\frac{\eps}{4\delta \sigma}$.

Now let $g$ be the probability density of $\delta = \|X_2 - X_3\|$.
Then the probability that the 2-change yields an improvement of at most $\delta$ is bounded from above by
\[
\int_{0}^\infty g(\delta) \cdot \frac{\eps}{4 \sigma\delta} \, \textup d \delta 
\leq\int_{0}^\infty \chi_{d, \sigma}(\delta) \cdot \frac{\eps}{4 \sigma\delta} \, \textup d \delta 
= O\left(\frac{\eps}{\sigma^2 \sqrt d}\right).
\]
The first step is due to Lemma~\ref{lem:replacechi}. The second step is due to Lemma~\ref{lem:integral}
using $c=1$ and $d \geq 2$.
The lemma follows by a union bound over the $O(n^4)$ possible 2-changes.
\end{proof}

\begin{theorem}
\label{thm:squaredsingle}
For all $d \geq 2$, the expected length of the longest path in the 2-opt state graph corresponding to $d$-dimensional instances with
squared Euclidean distances is at most $O(\sqrt d \maxx^2 n^6 \log(n) /\sigma^2)$.
\end{theorem}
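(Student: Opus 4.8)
The plan is to assemble this theorem directly from three ingredients established earlier: the concentration bound on the perturbed point set (\lemref{maxx}), the tail bound for the smallest improvement of a single $2$-change (\lemref{sedsingle}), and the generic reduction from tail bounds to expected running-time (\lemref{generic}). No new probabilistic estimate is needed; the only genuine work is to supply the missing upper bound $L$ on the length of an arbitrary tour and then to feed the right parameters into \lemref{generic}.

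First I would bound the length of any tour. By \lemref{maxx}, with probability at least $1 - 1/n!$ we have $X \subseteq [-\maxx, \maxx]^d$, and on this event any two points $X_i, X_j \in X$ satisfy $\|X_i - X_j\|_2 \leq 2\maxx\sqrt d$, so the squared Euclidean length of any edge is at most $(2\maxx\sqrt d)^2 = 4d\maxx^2$. A tour has $n$ edges, hence every tour has length at most $L := 4 n d \maxx^2 = O(n d \maxx^2)$ with probability at least $1 - 1/n!$. Second, since every $2$-change by definition strictly decreases the tour length, \lemref{sedsingle} gives the linear tail bound $\probab(\dmin \leq \eps) = \probab(\dmin \in (0,\eps]) = O(n^4 \eps /(\sigma^2 \sqrt d))$, i.e.\ the hypothesis of the first case of \lemref{generic} with $P = O(n^4/(\sigma^2 \sqrt d))$. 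Applying that case with this $P$ and the bound $L$ above yields that the expected length of the longest path in the $2$-opt state graph is
\[
  O(P L n \log n) \;=\; O\!\left( \frac{n^4}{\sigma^2 \sqrt d}\cdot n d \maxx^2 \cdot n\log n \right) \;=\; O\!\left( \frac{\sqrt d\, \maxx^2 n^6 \log n}{\sigma^2} \right),
\]
which is exactly the claimed bound; note the cancellation $d/\sqrt d = \sqrt d$ coming from the $d$-dimensional diameter combined with the $1/\sqrt d$ in $P$.

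There is essentially no hard step here — this theorem is a routine composition — so I do not anticipate a real obstacle, only a few small points worth stating carefully. One is that $\dmin$ only ever takes positive values (an improving $2$-change exists precisely when the improvement is positive), so the $(0,\eps]$ form of the estimate in \lemref{sedsingle} is literally the input required by \lemref{generic}. Another is that the bound $L = O(nd\maxx^2)$ carries an extra factor of $d$ relative to the Euclidean-length case, but this is harmless because it is absorbed into the $\sqrt d$ just discussed, and \lemref{sedsingle} is available for all $d \geq 2$. Finally, it is worth recording why this bound is weaker than the one obtained later for $d \geq 3$: here we use single $2$-changes and therefore only a linear tail ($\gamma = 1$), so the first case of \lemref{generic} incurs the extra $\Theta(n\log n)$ factor that the second case (with $\gamma > 1$, as used for linked $2$-changes) avoids.
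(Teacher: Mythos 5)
Your proposal is correct and follows exactly the paper's own proof: bound the tour length by $O(nd\maxx^2)$ via Lemma~\ref{lem:maxx}, feed the linear tail bound of Lemma~\ref{lem:sedsingle} into case~(1) of Lemma~\ref{lem:generic}, and observe the $d/\sqrt d$ cancellation. The additional remarks (why $\dmin$ is positive, why the $n\log n$ factor appears here but not in the linked-pair analysis) are accurate and consistent with the paper's discussion.
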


\begin{proof}
With a probability of at least $1-1/n!$, the instance is contained in a hypercube of sidelength $\maxx$. Thus, the longest
edge has a length of at most $\sqrt d \maxx$. Therefore, the initial tour has a length of at most $n d \maxx^2$.
We combine this with Lemmas~\ref{lem:generic} and~\ref{lem:sedsingle} to complete the proof.
\end{proof}

\subsubsection{Pairs of Linked 2-Changes}

We can obtain a better bound than in the previous section by analyzing pairs of linked 2-changes.
With the following three lemmas, we analyze the probability that pairs of linked 2-changes of type 0, 1a, or 1b
yield an improvement of at most $\eps$.

\begin{lemma}
\label{lem:pairs0sed}
For $d \geq 2$, the probability that there exists a pair of type 0 of linked 2-changes that yields an improvement of at most $\eps$ is bounded
from above by $O\bigl(\frac{n^6 \eps^2}{\sigma^{4}d}\bigr)$.
\end{lemma}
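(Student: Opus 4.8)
\textbf{Proof proposal for \lemref{pairs0sed}.}
The plan is to bound the improvement of a type 0 pair of linked 2-changes by the sum of two contributions, each of which is an instance of the single-2-change analysis from \lemref{seddelta}, and then combine the two factors using independence and the substitution-by-$\chi$ trick of \lemref{replacechi}. Recall the configuration: the first 2-change replaces $\edge 12$ and $\edge 34$ by $\edge 13$ and $\edge 24$, with improvement $\Gamma_1 = \Delta_{2,3}(1) - \Delta_{2,3}(4)$; the second replaces $\edge 13$ and $\edge 56$ by $\edge 15$ and $\edge 36$, with improvement $\Gamma_2 = \Delta_{5,6}(1) - \Delta_{5,6}(3)$ (up to the relabelling in the type-0 description). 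The overall improvement of the pair is at least $\min\{\Gamma_1,\Gamma_2\}$, so if the pair improves by at most $\eps$ then both $\Gamma_1 \le \eps$ and $\Gamma_2 \le \eps$. For type 0 all six points $X_1,\dots,X_6$ are distinct, so we have enough independent randomness to treat the two events as (conditionally) independent.

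First I would fix $X_3$ by the adversary and expose $X_2$; this fixes $\delta_1 = \|X_2 - X_3\|_2$ and the term $\Delta_{2,3}(4)$ once $X_4$ is additionally exposed, so by \lemref{seddelta} the event $\Gamma_1 \in (0,\eps]$ has conditional probability at most $\eps/(4\sigma\delta_1)$ over the draw of $X_1$. Next, with $X_1$ and $X_3$ already fixed, I would expose $X_6$ and then $X_5$; this fixes $\delta_2 = \|X_5 - X_6\|_2$ and the term $\Delta_{5,6}(3)$, and then \lemref{seddelta} applied to the remaining randomness — here it is cleanest to keep $X_5$ as the "free" Gaussian used for the first factor and expose a different coordinate-free point for the second, but since all six points are distinct one can simply order the exposure so that one fresh Gaussian is used per factor. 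The key point is that after conditioning appropriately, $\probab(\Gamma_1 \le \eps,\ \Gamma_2 \le \eps) \le \frac{\eps}{4\sigma\delta_1}\cdot\frac{\eps}{4\sigma\delta_2}$, where $\delta_1,\delta_2$ are lengths of Gaussian-perturbed point differences.

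Now I would integrate out $\delta_1$ and $\delta_2$. Let $g_1,g_2$ be the densities of $\delta_1,\delta_2$; since $x\mapsto 1/x$ is monotonically decreasing, \lemref{replacechi} lets us replace each $g_i$ by $\chi_{d,\sigma}$, and then \lemref{integral} with $c=1$ (valid for $d\ge 2$) gives $\int_0^\infty \chi_{d,\sigma}(\delta)\,\delta^{-1}\,\textup d\delta = O(1/(\sigma\sqrt d))$ for each factor. Hence for a single fixed type 0 pair the probability is $O\bigl(\eps^2/(\sigma^4 d)\bigr)$. Taking a union bound over the $O(n^6)$ choices of the six points defining a type 0 pair yields the claimed bound $O\bigl(\frac{n^6\eps^2}{\sigma^4 d}\bigr)$.

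The main obstacle I anticipate is making the conditional-independence step fully rigorous: one has to choose the order of exposure so that, after conditioning on everything needed to invoke \lemref{seddelta} for $\Gamma_1$, there is still a clean untouched Gaussian direction available to invoke it again for $\Gamma_2$, and simultaneously the lengths $\delta_1,\delta_2$ appearing in the two bounds must themselves be (stochastically dominated by) independent $\chi$-distributed quantities so that the product of integrals is legitimate. For type 0 this works because $\{X_1,\dots,X_6\}$ are six distinct points, giving ample independent randomness; the care needed is just in the bookkeeping of which point plays the role of the "free" Gaussian $c$ in each application of \lemref{seddelta} and which pair of points supplies each $\delta_i$. The remaining steps (the two integral evaluations and the union bound) are routine given the lemmas already established.
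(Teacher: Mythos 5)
Your overall plan---deferred decisions, two applications of \lemref{seddelta}, replacing the resulting distance densities by $\chi_{d,\sigma}$ via \lemref{replacechi}, integrating via \lemref{integral} with $c=1$, and a union bound over the $O(n^6)$ pairs---is exactly the paper's strategy. But the concrete bookkeeping for the second 2-change, which you yourself flag as the delicate point, fails as written, and this bookkeeping is where the proof actually lives. The second 2-change of a type 0 pair replaces \edge 13 and \edge 56 by \edge 15 and \edge 36; its improvement is $\Delta_{1,6}(3) - \Delta_{1,6}(5) = \Delta_{3,5}(1) - \Delta_{3,5}(6)$, not $\Delta_{5,6}(1) - \Delta_{5,6}(3)$. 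Consequently the only distances that can play the role of $\|a-b\|$ in \lemref{seddelta} for this 2-change are $\|X_1 - X_6\|$ (with free Gaussian $X_3$ or $X_5$) or $\|X_3 - X_5\|$ (with free Gaussian $X_1$ or $X_6$); your $\delta_2 = \|X_5 - X_6\|$ never arises, because $\{X_5,X_6\}$ is a \emph{removed edge} of that 2-change, not the pair $\{a,b\}$ of \lemref{seddelta}. A second, related problem: you spend $X_1$ as the free Gaussian for the first factor, but $X_1$ is an endpoint of the shared edge \edge 13 and therefore must be held fixed when setting up the second factor (it determines $\delta_2 = \|X_1-X_6\|$ and, via $\Delta_{1,6}(3)$, the position of the second bad interval). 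Having already randomized over $X_1$, you cannot condition on it for the second application, so the two factors do not decouple into the product of integrals you write down.

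The paper resolves both issues by analyzing $\Delta_{1,4}(3)$ and $\Delta_{1,6}(5)$ and exposing the points in the order $X_1, X_4, X_2, X_3, X_6, X_5$: the free Gaussians are $X_3$ and $X_5$, the interval widths are governed by $\delta_1 = \|X_1-X_4\|$ and $\delta_2 = \|X_1-X_6\|$, and conditioned on $X_1$ these two distances are independent norms of shifted Gaussians, so \lemref{replacechi} applies to each and the product of the two chi-integrals is legitimate. Your choice for the first factor ($\delta_1 = \|X_2-X_3\|$) could be salvaged by using $X_4$ rather than $X_1$ as its free Gaussian; with either repair, the remaining steps of your outline go through unchanged.
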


\begin{proof}
Consider a fixed pair of type 0 of linked 2-changes involving the six points $X_1, \ldots, X_6$ as described in Section~\ref{sec:linked}.
We show that the probability that it yields an improvement of at most $\eps$ is at most $O(\eps \sigma^{-2} / \sqrt d)$.
A union bound over the $O(n^6)$ possibilities of pairs of type 0 yields the lemma.

The basic idea is that we restrict ourselves to analyzing $\Delta_{1,4}(3)$ and $\Delta_{1,6}(5)$ only in order to
bound the probability that we have a small improvement.
In this way, we use the principle of deferred decision to show that we can analyze the improvements of the two 2-changes as if they were independent:
\begin{enumerate}[label=\arabic{*}.]
\item We let an adversary fix $X_1$ arbitrarily.
\item We draw $X_4$, which determines the distance $\|X_1 - X_4\|$.
\item We draw $X_2$. This fixes the position of the ``bad'' interval for $\Delta_{1,4}(3)$. Its size is already fixed since
we know the positions of $X_1$ and $X_4$. The position of $X_3$ is still random.
\item We draw $X_3$. The probability that
$X_3$ assumes a position such that the first 2-change yields an improvement of at most $\eps$ is thus at most $\frac{\eps}{4 \sigma \cdot \|X_1 - X_4\|}$.
\item We draw $X_6$. This determines the distance $\|X_1 - X_6\|$.
\item We draw $X_5$. The probability that $X_5$ assumes a position
such that the second 2-change yields an improvement of at most $\eps$ is thus at most $\frac{\eps}{4 \sigma \cdot \|X_1 - X_6\|}$.
\end{enumerate}

Let $g$ be the probability density function of the distance between $X_1$ and $X_4$, and let
$g'$ be the probability density function of the distance between $X_1$ and $X_6$.
By independence of the points, the probability that both 2-changes of the pair yield an improvement of at most $\eps$
is bounded from above by
\[
\int_{\delta=0}^\infty g(\delta) \cdot \frac{\eps}{4 \sigma \delta} \, \textup d \delta
\cdot 
\int_{\delta=0}^\infty g'(\delta) \cdot \frac{\eps}{4 \sigma \delta} \, \textup d \delta.
\]
We observe that $\frac{\eps}{4 \sigma \delta}$ is monotonically decreasing in $\delta$.
Thus, by Lemma~\ref{lem:replacechi}, we can replace $g$ and $g'$ by the density $\chi_{d,\sigma}$ of the chi distribution
to get the following upper bound for the probability that a pair of type 0 yields an improvement of at most $\eps$:
\[
\left(\int_{0}^{\infty} \chi_{d,\sigma}(\delta) \cdot \frac{\eps}{4 \delta \sigma} \, \textup d \delta \right)^2=
O\left(\frac{\eps^2}{\sigma^4 d}\right).
\]
Here, we use Lemma~\ref{lem:integral} with $c=1$, which is allowed since $d \geq 2$.
\end{proof}

\begin{lemma}
\label{lem:pairs1ased}
For $d \geq 2$, the probability that there exists a pair of type 1a of linked 2-changes that yields an improvement of at most $\eps$ is bounded
from above by $O\bigl(\frac{n^5 \eps^2}{\sigma^{4}d}\bigr)$.
\end{lemma}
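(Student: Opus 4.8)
The plan is to mirror the proof of \lemref{pairs0sed}, adapting the deferred-decision argument to the combinatorial structure of a type 1a pair. Recall that in a type 1a pair, the first 2-change replaces $\edge 12$ and $\edge 34$ by $\edge 13$ and $\edge 24$, while the second replaces $\edge 13$ and $\edge 25$ by $\edge 15$ and $\edge 23$. The crucial feature is that only \emph{five} distinct points $X_1,\dots,X_5$ are involved (the vertex $X_6$ of the generic type 1 description coincides with $X_2$), which is exactly why the union bound will run over $O(n^5)$ rather than $O(n^6)$ pairs and produce the claimed factor $n^5$. The improvement of the first 2-change is $\Delta_{1,4}(2) - \Delta_{1,4}(3)$, and the improvement of the second is $\Delta_{1,5}(2) - \Delta_{1,5}(3)$ (equivalently $\Delta_{2,3}(1) - \Delta_{2,3}(5)$); in particular $X_3$ appears in both. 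As in \lemref{pairs0sed}, I would restrict attention to the single terms $\Delta_{1,4}(3)$ for the first 2-change and $\Delta_{1,5}(3)$ — or whichever of $\Delta_{1,5}(\cdot)$ keeps $X_3$ as the randomized point — for the second, which suffices to upper-bound the probability of a small improvement.

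The ordering of deferred decisions is the delicate part. First let an adversary fix $X_1$. Then draw $X_4$, which fixes $\|X_1-X_4\|=\delta$; then draw $X_2$, which fixes the position (not just the size) of the bad interval $I$ of length $\eps$ for $\Delta_{1,4}(3)$; the size of $I$ is already determined by $X_1,X_4$ via \lemref{seddelta}. At this point $X_3$ is still random. Now draw $X_5$, which fixes $\|X_1-X_5\|=\delta'$ and hence the size of the bad interval $I'$ for $\Delta_{1,5}(3)$; since $X_1$ is already fixed, $I'$ is a function of $X_5$ alone, and its position as an interval for the value $\Delta_{1,5}(X_3)$ is likewise determined. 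Finally draw $X_3$. Both events — the first 2-change improving by at most $\eps$, and the second improving by at most $\eps$ — now become events about the single Gaussian vector $X_3$: by the computation in \lemref{seddelta}, $\Delta_{1,4}(X_3)\in I$ forces the first coordinate of $X_3$ (in the coordinate system aligned with $X_1-X_4$) into an interval of length $\eps/(2\delta)$, so has probability at most $\frac{\eps}{4\sigma\delta}$, and similarly $\Delta_{1,5}(X_3)\in I'$ has probability at most $\frac{\eps}{4\sigma\delta'}$.

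The main obstacle, and the reason this is not quite a verbatim copy of \lemref{pairs0sed}, is that the two events we want to bound are \emph{not independent}: both are events about the same random vector $X_3$. So I cannot simply multiply the two probabilities. Instead I would bound the probability of the intersection by the probability of the more restrictive of the two one-dimensional constraints — or, better, observe that the two constraints pin down $X_3$ in two (generically distinct) directions simultaneously, so the joint probability is at most $\min\{\frac{\eps}{4\sigma\delta},\frac{\eps}{4\sigma\delta'}\}$, and in fact at most the product $\frac{\eps}{4\sigma\delta}\cdot\frac{\eps}{4\sigma\delta'}$ up to a constant \emph{when the two directions are not parallel}, which one can arrange to hold generically; if they are parallel one falls back to $\Delta_{1,4}$ and $\Delta_{1,5}$ constraining $X_3$ along the same line, and then having $\Delta_{1,4}(X_3)$ in an interval of length $\eps/(2\delta)$ already caps the probability at $O(\eps/(\sigma\delta))$ while the other constraint is, at worst, vacuous — but careful bookkeeping shows the second constraint still contributes. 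Cleanest is to condition on $X_3$ one coordinate at a time: the constraint from the first 2-change fixes $X_3$'s component along $X_1-X_4$ to an interval of length $O(\eps/\delta)$, the constraint from the second fixes its component along $X_1-X_5$ to an interval of length $O(\eps/\delta')$, and since the conditional density of a Gaussian along any line is still bounded by $O(1/\sigma)$, the joint probability is $O\!\big(\frac{\eps^2}{\sigma^2\delta\delta'}\big)$ whenever $X_1-X_4$ and $X_1-X_5$ are linearly independent (a measure-zero exception otherwise).

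With that in hand the rest is routine. Let $g$ be the density of $\delta=\|X_1-X_4\|$ and $g'$ that of $\delta'=\|X_1-X_5\|$; these are independent given $X_1$. The probability that the pair yields improvement at most $\eps$ is then bounded by
\[
\int_0^\infty \int_0^\infty g(\delta)\,g'(\delta')\cdot O\!\left(\frac{\eps^2}{\sigma^2\delta\delta'}\right)\textup d\delta\,\textup d\delta'
= O\!\left(\frac{\eps^2}{\sigma^2}\right)\left(\int_0^\infty g(\delta)\frac{\textup d\delta}{\delta}\right)\left(\int_0^\infty g'(\delta')\frac{\textup d\delta'}{\delta'}\right).
\]
Since $1/\delta$ is monotonically decreasing, \lemref{replacechi} lets me replace $g$ and $g'$ by $\chi_{d,\sigma}$, and then \lemref{integral} with $c=1$ (valid since $d\geq 2$) gives $\int_0^\infty \chi_{d,\sigma}(\delta)\,\delta^{-1}\,\textup d\delta = \Theta(1/(\sqrt d\,\sigma))$. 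Hence the per-pair probability is $O\!\big(\frac{\eps^2}{\sigma^4 d}\big)$, and a union bound over the $O(n^5)$ choices of the five-tuple $(X_1,\dots,X_5)$ yields the claimed bound $O\!\big(\frac{n^5\eps^2}{\sigma^4 d}\big)$.
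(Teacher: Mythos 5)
There is a genuine gap, and it stems from your choice of which differences to analyze. By insisting that $X_3$ be the last point drawn for \emph{both} 2-changes, you make the two ``small improvement'' events into two constraints on the same Gaussian vector $X_3$: one pins its projection onto the direction $X_4-X_1$ to an interval of length $O(\eps/\delta)$, the other pins its projection onto a second direction to an interval of length $O(\eps/\delta')$. Your claim that the joint probability is then $O\bigl(\frac{\eps^2}{\sigma^2\delta\delta'}\bigr)$ ``since the conditional density of a Gaussian along any line is still bounded by $O(1/\sigma)$'' is false: if the two unit directions have angle $\theta$, the pair of projections is a bivariate Gaussian with correlation $\cos\theta$, so the conditional density of the second projection given the first is bounded only by $O(1/(\sigma|\sin\theta|))$, and the joint probability is $O\bigl(\frac{\eps^2}{\sigma^2\delta\delta'\,|\sin\theta|}\bigr)$. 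Near-parallel configurations are not a measure-zero exception but an event of positive probability over the randomness of $X_4$ and $X_5$, and the extra factor $1/|\sin\theta|$ does not integrate to $O(1)$ in low dimension (for $d=2$ it diverges), so the argument does not deliver the claimed $\eps^2$ bound for all $d\ge 2$; your fallback to a single constraint only yields a factor $\eps$, which destroys the $\gamma=2$ exponent needed downstream. There is also a smaller but real slip: the improvement of the second 2-change (replacing $\edge 13$ and $\edge 25$ by $\edge 15$ and $\edge 23$) is $\Delta_{1,2}(3)-\Delta_{1,2}(5)=\Delta_{3,5}(1)-\Delta_{3,5}(2)$, not $\Delta_{1,5}(2)-\Delta_{1,5}(3)$; the pairs that stay fixed are $\{X_1,X_2\}$ and $\{X_3,X_5\}$, so $\Delta_{1,5}(\cdot)$ is not a valid decomposition of that move at all.

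The paper sidesteps all of this by choosing the differences so that the two last-drawn points are \emph{distinct}: it analyzes $\Delta_{2,3}(4)$ for the first 2-change and $\Delta_{1,2}(5)$ for the second, fixing $X_2$ adversarially and drawing $X_3$, then $X_1$ (which fixes both interval positions and both scales $\|X_2-X_3\|$ and $\|X_1-X_2\|$), and only then $X_4$ and $X_5$. Conditioned on $X_1,X_2,X_3$, the two events concern the independent points $X_4$ and $X_5$, so the probabilities simply multiply; the two scales are functions of the independent points $X_3$ and $X_1$ (given the fixed $X_2$), so the double integral factors and Lemmas~\ref{lem:replacechi} and~\ref{lem:integral} with $c=1$ apply exactly as in the type-0 case. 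Your overall skeleton (deferred decisions, replacing the radial densities by $\chi_{d,\sigma}$, the $O(n^5)$ union bound) is right, but you need to reselect the two $\Delta$'s so that the final randomness of the two 2-changes lives in two different points.
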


\begin{proof}
We can analyze pairs of type 1a in the same way as type 0 pairs in Lemma~\ref{lem:pairs0sed}. To do this, we analyze
$\Delta_{2,3}(4)$ and $\Delta_{1,2}(5)$:
\begin{enumerate}[label=\arabic{*}.]
\item We let an adversary fix the position of $X_2$.
\item We draw $X_3$. This fixes $\|X_2 - X_3\|$.
\item We draw $X_1$. This fixes $\|X_1 - X_2\|$. In addition, this fixes the positions of the intervals
into which $\Delta_{2,3}(4)$ and $\Delta_{1,2}(5)$ must fall if the first or second 2-change yield an improvement of at most $\eps$.
\item We draw $X_4$.
\item We draw $X_5$.
\end{enumerate}
The remainder of the proof is identical to the proof of Lemma~\ref{lem:pairs0sed}, except that we have to take a union bound only
over $O(n^5)$ possible choices.
\end{proof}

\begin{lemma}
\label{lem:pairs1bsed}
For $d \geq 3$, the probability that there exists a pair of type 1b of linked 2-changes that yields an improvement of at most $\eps$ is bounded
from above by $O\bigl(\frac{n^5 \eps^2}{\sigma^{4}d}\bigr)$.
\end{lemma}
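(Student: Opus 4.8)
The plan is to follow the template of the proofs of \lemref{pairs0sed} and \lemref{pairs1ased}, but exploiting a simplification that is special to type 1b for squared Euclidean distances: both improvements of the linked pair are inner products against the \emph{same} direction $X_3 - X_2$. First I would put the two improvements into closed form. Using $\Delta_{a,b}(c) = 2\langle c, b-a\rangle + \|a\|_2^2 - \|b\|_2^2$ for squared Euclidean distances, the first 2-change (removing \edge 12 and \edge 34, adding \edge 13 and \edge 24) improves the tour by $\Gamma_1 = \Delta_{2,3}(1) - \Delta_{2,3}(4) = 2\langle X_1 - X_4,\, X_3 - X_2\rangle$, while the second 2-change (removing \edge 13 and \edge 25, adding \edge 12 and \edge 35) improves it by $\Gamma_2 = \Delta_{2,3}(5) - \Delta_{2,3}(1) = 2\langle X_5 - X_1,\, X_3 - X_2\rangle$. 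Thus the only distance that enters the analysis is $\delta := \|X_2 - X_3\|_2$.

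Next I would apply deferred decisions in the order: let an adversary fix $X_2$ (the bound we derive will be uniform in $X_2$, so this is harmless); draw $X_3$, which fixes $u := X_3 - X_2$ and $\delta = \|u\|_2$; draw $X_1$, which fixes $\langle X_1, u\rangle$ and hence the positions (but not the outcomes) of the ``bad'' events for both 2-changes; finally draw $X_4$ and $X_5$ independently. Conditioned on everything drawn before the last step, $\Gamma_1 \in (0,\eps]$ forces the component of $X_4$ along $u$ into an interval of length $\eps/(2\delta)$, which has probability at most $\eps/(4\sigma\delta)$ by the max-density bound for a one-dimensional Gaussian (exactly as in \lemref{seddelta}); likewise $\Gamma_2 \in (0,\eps]$ has conditional probability at most $\eps/(4\sigma\delta)$ via $X_5$. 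Since $X_4$ and $X_5$ are independent and these two bounds do not depend on $X_1$, integrating out $X_1$ shows that a fixed type 1b pair yields an improvement of at most $\eps$ with conditional probability at most $\eps^2/(16\sigma^2\delta^2)$, given $X_2$ and $X_3$.

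Finally I would integrate over the randomness of $\delta$. Conditioned on $X_2$, the vector $X_3 - X_2$ is a $d$-dimensional Gaussian of standard deviation $\sigma$, and since $\delta \mapsto \eps^2/(16\sigma^2\delta^2)$ is monotonically decreasing, \lemref{replacechi} lets me replace its length density by $\chi_{d,\sigma}$; then \lemref{integral} with $c = 2$ gives $\int_0^\infty \chi_{d,\sigma}(\delta)\,\delta^{-2}\,\textup d\delta = \Theta(1/(d\sigma^2))$, so the probability for one fixed pair is $O(\eps^2/(\sigma^4 d))$. This is precisely where the hypothesis $d \geq 3$ is used, since \lemref{integral} requires $d > c$. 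A union bound over the $O(n^5)$ type 1b pairs then yields the claimed bound $O(n^5\eps^2/(\sigma^4 d))$. The one point that must be gotten right — and the only place the argument differs from \lemref{pairs1ased} — is the deferred-decision ordering: $X_1$ must be drawn \emph{before} $X_4$ and $X_5$, so that conditioned on the earlier draws the two improvement events become independent events each governed by a fresh Gaussian coordinate; drawing $X_1$ last would couple $\Gamma_1$ and $\Gamma_2$ through their shared dependence on $\langle X_1, u\rangle$.
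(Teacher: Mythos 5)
Your proposal is correct and follows essentially the same route as the paper's proof: the same deferred-decision order ($X_2$, $X_3$, $X_1$, then $X_4$ and $X_5$), the same observation that the two bad events share only the interval length $\eps/(4\sigma\delta)$ determined by $\delta = \|X_2 - X_3\|_2$ but are otherwise independent through $X_4$ and $X_5$, and the same finish via Lemma~\ref{lem:replacechi} and Lemma~\ref{lem:integral} with $c=2$ (hence $d \geq 3$) followed by a union bound over $O(n^5)$ pairs. Your explicit inner-product formulation and the closing remark on why $X_1$ must precede $X_4$ and $X_5$ merely make explicit what the paper states more tersely.
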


\begin{proof}
Again, we proceed similarly to Lemma~\ref{lem:pairs0sed}. We analyze a fixed pair of type 1b, where $\edge 12$ and $\edge 34$
are replaced by $\edge 13$ and $\edge 24$ in one step and $\edge 13$ and $\edge 25$ are replaced by $\edge 12$ and $\edge 35$,
and apply a union bound over the $O(n^5)$ possible type 1a pairs. We analyze the probability that $\Delta_{2,3}(4)$ or
$\Delta_{2,3}(5)$ assume a bad value.

We draw the points in the following order:
\begin{enumerate}[label=\arabic{*}.]
\item We fix $X_2$.
\item We draw $X_3$. This fixes the distance $\|X_2 - X_3\|_2$, which is crucial for both 2-changes.
\item We draw $X_1$.
\item We draw $X_4$. The probability that the first 2-change yields an improvement of at most $\eps$ is at most $\frac{\eps}{4 \sigma \cdot \|x_2 - x_3\|}$.
\item We draw $X_5$. The probability that the second 2-change yields an improvement of at most $\eps$ is at most $\frac{\eps}{4 \sigma \cdot \|x_2 - x_3\|}$.
\end{enumerate}

The main difference to Lemma~\ref{lem:pairs0sed} is that the sizes of the bad intervals are not independent.
However, once the size of the bad intervals is fixed, we can analyze the probabilities that $\Delta_{2,3}(4)$ or $\Delta_{2,3}(5)$
fall into their bad intervals as independent.
Given that $\|X_2 - X_3\| = \delta$ is fixed, the probability that the first and the second 2-change yield an improvement
of at most $\eps$ is bounded from above by $\frac{\eps^2}{16\delta^2\sigma^2}$.
Since this is decreasing in $\delta$, we can replace the distribution of $\delta$ by the chi distribution
to obtain an upper bound according to Lemma~\ref{lem:replacechi}.
Thus, using Lemma~\ref{lem:integral} with $c=2$ and $d \geq 3$, we obtain the following upper bound for the probability that
a pair of type 1b yields an improvement of at most $\eps$:
\[
\int_{\delta = 0}^\infty \chi_{d, \sigma}(\delta) \cdot \frac{\eps^2}{16\delta^2\sigma^2} \,\textup d \delta
 = O\left(\frac{\eps^2}{d\sigma^4}\right).
\]
\end{proof}

With the three lemmas above, we can obtain a bound on the expected number of iterations of 2-opt for TSP with squared Euclidean distances.

\begin{theorem}
\label{thm:sed}
For $d \geq 3$, the expected length of the longest path in the 2-opt state graph
corresponding to $d$-dimensional instances with squared Euclidean distances
is at most
$O\bigl(\frac{n^4 \sqrt d \maxx^2}{\sigma^2}\bigr)$.
\end{theorem}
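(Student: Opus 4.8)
The plan is to combine the three tail bounds for pairs of linked 2-changes (Lemmas~\ref{lem:pairs0sed}, \ref{lem:pairs1ased}, and~\ref{lem:pairs1bsed}) into a single tail bound for $\dminl$ under squared Euclidean distances, and then feed this into the generic running-time lemma (Lemma~\ref{lem:generic}, Case~\ref{gen:linked}). First I would bound the length of the initial tour: by Lemma~\ref{lem:maxx}, with probability at least $1-1/n!$ the perturbed point set lies in a hypercube of side length $2\maxx$, so every squared Euclidean edge length is at most $d\maxx^2$ (up to constants), and hence the initial tour has length $L = O(n d \maxx^2)$; outside this event the trivial bound $n!$ contributes only $O(1)$ to the expectation, exactly as in Lemma~\ref{lem:generic}.

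Next I would assemble the tail bound for $\dminl$. A pair of linked 2-changes of type 0, 1a, or 1b yields improvement at most $\eps$ with probability bounded by the sum of the three estimates; since each is $O\bigl(\frac{n^6\eps^2}{\sigma^4 d}\bigr)$ (the type 1a and 1b bounds have only $n^5$, which is dominated), a union bound gives
\[
  \probab(\dminl \leq \eps) = O\!\left(\frac{n^6 \eps^2}{\sigma^4 d}\right).
\]
This has the form $O(P\eps^\gamma)$ with $\gamma = 2$ and $P = O\bigl(\frac{n^6}{\sigma^4 d}\bigr)$, so $P^{1/\gamma} = O\bigl(\frac{n^3}{\sigma^2 \sqrt d}\bigr)$. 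I would then apply Lemma~\ref{lem:generic}\itemref{gen:two} via~\itemref{gen:linked}, which yields an expected longest-path length of $O(P^{1/\gamma} L) = O\bigl(\frac{n^3}{\sigma^2\sqrt d} \cdot n d \maxx^2\bigr) = O\bigl(\frac{\sqrt d\, n^4 \maxx^2}{\sigma^2}\bigr)$, as claimed. I should also check the side condition of Case~\ref{gen:linked}, namely $P^{1/\gamma} L = \Omega(n^2)$: since $P^{1/\gamma} L = \Theta\bigl(\frac{\sqrt d\, n^4\maxx^2}{\sigma^2}\bigr)$ and $\maxx \geq 1$, $\sigma \leq 1$, this is $\Omega(n^4) = \Omega(n^2)$, so the condition holds and the $\Theta(L/\dminl)$ bound from Lemma~\ref{lem:linkednumber} is available.

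The only genuine subtlety — and the step I would flag as the main point to get right — is the interplay between Lemma~\ref{lem:linkednumber} and the probabilistic estimates inside Lemma~\ref{lem:generic}. We bounded the probability that \emph{any particular} pair of linked 2-changes has small improvement, but the longest path in the state graph is controlled by $\dminl$ together with the guarantee that a run of $t$ consecutive 2-changes contains $\Omega(t) - O(n)$ disjoint type-0/1 pairs. Lemma~\ref{lem:generic}\itemref{gen:linked} already packages this reduction, so the remaining obligation is simply to confirm that the events "some linked pair among the $O(n^6)$ candidates yields improvement $\leq \eps$" dominate "the $\Omega(t)$-th successful pair in the actual run yields improvement $\leq L/t$" for the relevant range of $t$, which is exactly the regime $t = \Omega(P^{1/\gamma}L) = \Omega(n^2)$ where the sequence is long enough for Lemma~\ref{lem:linkednumber} to apply. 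With that in hand the proof is a direct substitution, so I expect no further obstacles beyond carefully tracking the $d$-dependence (which enters favorably, as $\sqrt d$ rather than $2^{O(d)}$, precisely because Lemma~\ref{lem:integral} contributes a $d^{-c/2}$ factor).
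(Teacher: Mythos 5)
Your proposal is correct and follows essentially the same route as the paper: combine Lemmas~\ref{lem:pairs0sed}, \ref{lem:pairs1ased}, and~\ref{lem:pairs1bsed} into the tail bound $\probab(\dminl \leq \eps) = O\bigl(\frac{n^6\eps^2}{\sigma^4 d}\bigr)$, take $L = O(nd\maxx^2)$ from the hypercube containment, and apply Lemma~\ref{lem:generic} with $\gamma = 2$. Your explicit verification of the side condition $P^{1/\gamma}L = \Omega(n^2)$ is a detail the paper leaves implicit, and it checks out.
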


\begin{proof}
The probability that any pair of linked 2-changes of type 0, 1a, or 1b yields an improvement of at most
$\eps$ is bounded from above by $O\bigl(\frac{\eps^2 n^6}{\sigma^4 d}\bigr)$.
We apply Lemma~\ref{lem:generic} with $\gamma = 2$ and observe that the initial tour has a length of at most $O(n d \maxx^2)$ with a
probability of at least $1 - 1/n!$.
\end{proof}

\subsection{Euclidean Distances}
\label{sec:euc}

\subsubsection{Differences of Euclidean Distances}

In this section, we have
$\Delta_{a,b}(z) = \|z-a\|_2 - \|z-b\|_2$
for $a, b, z \in \real^d$.
Analyzing $\|z-a\|_2 - \|z-b\|_2$ turns out to be more difficult than
analyzing $\|z-a\|^2_2 - \|z-b\|^2_2$ in the previous section. In particular the case
when $\|z-a\|_2 - \|z-b\|_2$ is close to its maximal value of $\|a-b\|_2$ requires special attention.
Intuitively, this is for the following reason: if $\Delta_{a,b}(z) \approx \|a-b\|_2$, then $z$ is close $L(a,b)$. Assume that $z \in L(a,b)$ for the moment.
Then either $z$ is between $a$ and $b$, which is fine. Or $z$ is not between $a$ and $b$. Then moving $z$ in the direction of $L(a,b)$ does not
change $\Delta_{a,b}(z)$ at all.

We observe that $\eta = \Delta_{a,b}(z)$ behaves essentially 2-dimensionally:
it depends only on the distance of $z$ from $L(a,b)$
(this is $x$ in the following lemma) and
on the position of the projection $z$ onto $L(a,b)$
(this is $y$ in the following lemma).
It also depends on the distance $\|a-b\|_2$ between $a$ and $b$
(this is $\delta$ in the following lemma, and we had this dependency also in the previous section about squared Euclidean distances).
The following lemma makes the connection between $x$ and $y$ explicit for a given $\eta$.
Figure~\ref{fig:implicit} depicts the situation described in the lemma.

\begin{lemma}
\label{lem:implicit}
Let $z = (x,y) \in \real^2$, $x\geq 0$, $y\geq 0$.
Let $a = (0, -\delta/2)$ and $b = (0, \delta/2)$ be two points at a distance of $\delta$.
Let $\eta = \|z-a\|_2 - \|z-b\|_2$. Then we have
\begin{equation}
  y^2 = \frac{\eta^2 \delta^2 + 4 \eta^2 x^2 - \eta^4}{4 \delta^2 - 4 \eta^2}
  = \frac{\eta^2}4 + \frac{\eta^2 x^2}{\delta^2 - \eta^2}
\label{expr:y}
\end{equation}
for $0 \leq \eta < \delta$
and
\begin{equation}
x^2 = \frac{y^2 \cdot \bigl(4 \delta^2 - 4 \eta^2) + \eta^4 - \eta^2 \delta^2}{4\eta^2}
=\frac{y^2 \cdot \bigl(\delta^2 - \eta^2)}{\eta^2} - \frac{\delta^2 - \eta^2}{4}.
\label{expr:x}
\end{equation}
for $\delta \geq \eta > 0$.
Furthermore, $\eta > \delta$ is impossible.
\end{lemma}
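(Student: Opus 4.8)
The plan is to work directly from the definition $\eta = \|z-a\|_2 - \|z-b\|_2$ with $a = (0,-\delta/2)$, $b = (0,\delta/2)$, and $z = (x,y)$. Write $r_a = \|z-a\|_2 = \sqrt{x^2 + (y+\delta/2)^2}$ and $r_b = \|z-b\|_2 = \sqrt{x^2 + (y-\delta/2)^2}$, so that $r_a - r_b = \eta$. First I would record the elementary identity $r_a^2 - r_b^2 = (y+\delta/2)^2 - (y-\delta/2)^2 = 2y\delta$, hence $(r_a - r_b)(r_a + r_b) = 2y\delta$. Combined with $r_a - r_b = \eta$, this gives $r_a + r_b = 2y\delta/\eta$ whenever $\eta \neq 0$, and therefore $r_a = y\delta/\eta + \eta/2$ and $r_b = y\delta/\eta - \eta/2$. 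This is the standard ``focal'' parametrization of a hyperbola with foci $a,b$, and it is the conceptual heart of the computation.

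Next I would substitute back: squaring $r_b = y\delta/\eta - \eta/2$ gives $x^2 + (y-\delta/2)^2 = y^2\delta^2/\eta^2 - y\delta + \eta^2/4$. Expanding $(y-\delta/2)^2 = y^2 - y\delta + \delta^2/4$, the $-y\delta$ terms cancel on both sides, leaving $x^2 + y^2 + \delta^2/4 = y^2\delta^2/\eta^2 + \eta^2/4$. Solving this single linear-in-$(x^2,y^2)$ relation for $y^2$ yields $y^2(\delta^2/\eta^2 - 1) = x^2 + \delta^2/4 - \eta^2/4$, i.e. $y^2 = \eta^2(x^2 + \delta^2/4 - \eta^2/4)/(\delta^2 - \eta^2)$, which is exactly the claimed formula~\eqref{expr:y} after clearing denominators, valid for $0 \le \eta < \delta$ (the denominator $\delta^2 - \eta^2$ must be positive, and one checks the right-hand side is then nonnegative since $x^2 + \delta^2/4 \ge \delta^2/4 > \eta^2/4$). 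Solving the same relation for $x^2$ instead gives~\eqref{expr:x}, valid for $\delta \ge \eta > 0$; I should note the degenerate boundary cases ($\eta = 0$ forces $y = 0$ directly from $2y\delta = \eta(r_a+r_b)$, and $\eta = \delta$ forces $x = 0$, $y \ge \delta/2$, consistent with~\eqref{expr:y}) separately or absorb them by continuity. Finally, $\eta > \delta$ is impossible because $|\eta| = |r_a - r_b| \le \|a - b\|_2 = \delta$ by the triangle inequality; and the sign conventions $x \ge 0$, $y \ge 0$ give $\eta \ge 0$ since $z$ lies (weakly) on the $b$-side of the perpendicular bisector $\{y = 0\}$.

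The main obstacle is purely bookkeeping rather than conceptual: keeping the two algebraically equivalent forms in each of~\eqref{expr:y} and~\eqref{expr:x} consistent (the first written over the common denominator $4\delta^2 - 4\eta^2$, the second in reduced form) and correctly identifying the ranges of $\eta$ for which each formula is meaningful and nonnegative. I would therefore present the derivation once in the reduced form, then multiply numerator and denominator by $4$ to match the paper's first displayed expression, and close with the short triangle-inequality argument for $\eta \le \delta$.
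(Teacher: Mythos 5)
Your proposal is correct and follows essentially the same route as the paper: both derive the single polynomial relation $4\eta^2 x^2 + 4\eta^2 y^2 + \eta^2\delta^2 = 4y^2\delta^2 + \eta^4$ from the definition of $\eta$ and then solve for $y^2$ or $x^2$, with the triangle inequality handling $\eta \leq \delta$. The only (cosmetic) difference is that you extract $r_a + r_b = 2y\delta/\eta$ via the difference of squares and square once, whereas the paper isolates a square root and squares twice; your explicit treatment of the boundary cases $\eta = 0$ and $\eta = \delta$ is a small bonus in care over the paper's one-line dismissal.
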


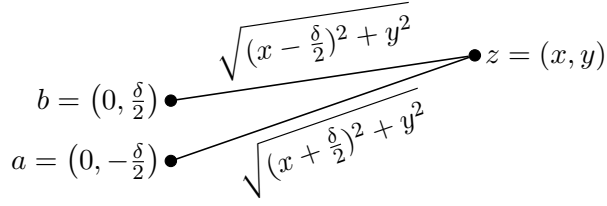
\begin{figure}[t]
\centering
\begin{tikzpicture}
\coordinate (pA) at (0, -0.4);
\coordinate (pB) at (0, 0.4);
\coordinate (pC) at (4, 1);
\node[Point] at (pA) (A) {};
\node[Point] at (pB) (B) {};
\node[Point] at (pC) (C) {}
    edge[Line] node[pos=0.5, sloped, above] {$\sqrt{(x-\frac \delta 2)^2 + y^2}$} (pB)
    edge[Line] node[pos=0.5, sloped, below] {$\sqrt{(x+\frac \delta 2)^2 + y^2}$} (pA);
\node[left] at (pA) {$a = \bigl(0, -\frac \delta 2\bigr)$};
\node[left] at (pB) {$b = \bigl(0, \frac \delta 2\bigr)$};
\node[right] at (pC) {$z=(x,y)$};
\end{tikzpicture}
\caption{The situation for Lemma~\ref{lem:implicit}.}
\label{fig:implicit}
\end{figure}

\begin{proof}
The last statement follows from the triangle inequality.

We have $\eta = \sqrt{(y+\delta/2)^2 + x^2} - \sqrt{(y-\delta/2)^2 + x^2}$.
Rearranging terms and squaring implies
\begin{align*}
 \eta^2 + (y-\delta/2)^2 + x^2 + 2 \eta \sqrt{(y-\delta/2)^2 + x^2} & = (y+\delta/2)^2 + x^2 \\
\Leftrightarrow 
 2 \eta \sqrt{(y-\delta/2)^2 + x^2} & = 2y\delta - \eta^2.
\end{align*}
Squaring again yields
\begin{align*}
 4 \eta^2 \cdot \bigl((y-\delta/2)^2 + x^2\bigr) & = 4y^2 \delta^2 - 4y\delta \eta^2 + \eta^4 \\
 \Leftrightarrow 4 \eta^2 \cdot \bigl(y^2 - \delta y + \delta^2/4 + x^2\bigr) & = 4y^2 \delta^2 - 4y\delta \eta^2 + \eta^4 \\
 \Leftrightarrow 4 \eta^2 y^2 - 4 \eta^2 \delta y + \eta^2 \delta^2 + 4 \eta^2 x^2 & = 4y^2 \delta^2 - 4y\delta \eta^2 + \eta^4 \\
 \Leftrightarrow 4 \eta^2 y^2 + \eta^2 \delta^2 + 4 \eta^2 x^2 & = 4y^2 \delta^2 + \eta^4.
\end{align*}
By rearranging terms again, we obtain
\[
y^2 \cdot \bigl(4 \delta^2 - 4 \eta^2) = \eta^2 \delta^2 + 4 \eta^2 x^2 - \eta^4.
\]
Using the assumption $\eta < \delta$ or $\eta>0$ implies the two claims.
\end{proof}

As said before, the difficult case in analyzing $\Delta_{a,b}(c) = \eta$ is when
$\eta \approx \|a-b\|_2$. In terms of the previous lemma, this can only happen if $x$ is small, i.e.,
if $c$ is close to $L(a,b)$, but not between $a$ and $b$.
The following lemmas makes a quantitative statement about this connection.

\begin{lemma}
\label{lem:newbadcone}
Let $a, b, z \in [-\maxx, \maxx]^d$. Assume that $\|a-b\| = \delta$ and that $z$ has a distance of $x$ from
$L(a,b)$. Then
\begin{equation}
  |\Delta_{a,b}(z)| \leq \delta - \frac{x^2\delta}{32d \maxx^2} . 
\label{newbadcone}
\end{equation}
\end{lemma}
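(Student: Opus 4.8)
The plan is to reduce to the two-dimensional picture of Lemma~\ref{lem:implicit} and read the estimate off equation~\eqref{expr:x}. Since $\Delta_{a,b}(z)$ depends only on $\|z-a\|_2$ and $\|z-b\|_2$, I would first choose coordinates so that $a = (0,-\delta/2)$ and $b = (0,\delta/2)$ lie on $L(a,b)$, write $z = (x,y)$ with $x \ge 0$ the distance of $z$ from $L(a,b)$, and --- since only $x^2$, $y^2$ and $\eta^2$ will occur --- assume $y \ge 0$ and $\eta := |\Delta_{a,b}(z)| \ge 0$. If $\delta = 0$ the claim is trivial, so assume $\delta > 0$; then $0 \le \eta \le \delta$ by the triangle inequality. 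The only fact I need about the cube is this: if $p$ is the orthogonal projection of $z$ onto $L(a,b)$ and $m := (a+b)/2 \in L(a,b)$, then $z-p \perp p-m$, so $x^2 + y^2 = \|z-p\|_2^2 + \|p-m\|_2^2 = \|z-m\|_2^2$; since $m \in [-\maxx,\maxx]^d$ by convexity, $z$ and $m$ differ by at most $2\maxx$ per coordinate, so $x^2 + y^2 \le 4 d \maxx^2$. In particular $x^2 \le 4d\maxx^2$ and $y^2 \le 4d\maxx^2$.

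Then I would split on the size of $\eta$. If $\eta < \delta/2$, then $\delta - \eta > \delta/2$, while $x^2 \le 4d\maxx^2$ gives $\frac{x^2 \delta}{32 d \maxx^2} \le \frac{\delta}{8} < \delta - \eta$, so \eqref{newbadcone} holds trivially (no use of Lemma~\ref{lem:implicit} here). If $\eta \ge \delta/2$, then $\eta > 0$ and \eqref{expr:x} gives $x^2 \le \frac{y^2(\delta^2 - \eta^2)}{\eta^2}$, i.e.
\[
  \eta^2 x^2 \;\le\; y^2(\delta - \eta)(\delta + \eta) \;\le\; 2\delta\, y^2 (\delta - \eta) \;\le\; 8 d \maxx^2\, \delta\, (\delta - \eta),
\]
using $\eta \le \delta$ and $y^2 \le 4d\maxx^2$. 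Rearranging and then using $\eta^2 \ge \delta^2/4$,
\[
  \delta - \eta \;\ge\; \frac{\eta^2 x^2}{8 d \maxx^2 \delta} \;\ge\; \frac{\delta^2 x^2}{32 d \maxx^2 \delta} \;=\; \frac{x^2 \delta}{32 d \maxx^2},
\]
which is \eqref{newbadcone} since $|\Delta_{a,b}(z)| = \eta$.

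The only step that is more than bookkeeping is the reduction to two dimensions together with the identification of $y$ from Lemma~\ref{lem:implicit} as the along-line displacement $\|p-m\|$ of the foot of the perpendicular from $z$; granting that, the cube bound $x^2 + y^2 \le 4d\maxx^2$ is immediate and the rest is the elementary chain above. A minor subtlety to keep in mind is that \eqref{expr:x} degenerates as $\eta \to \delta$ (it forces $x = 0$) and carries no information as $\eta \to 0$, which is exactly why I would dispatch the regime $\eta < \delta/2$ separately and crudely rather than via Lemma~\ref{lem:implicit}.
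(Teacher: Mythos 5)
Your proposal is correct and follows essentially the same route as the paper's proof: read off $x^2$ from \eqref{expr:x}, drop the negative term, bound the along-line coordinate by $O(\sqrt d\,\maxx)$ using that $z$ and $(a+b)/2$ both lie in $[-\maxx,\maxx]^d$, and split into the cases $\eta<\delta/2$ (where the bound is trivial) and $\eta\ge\delta/2$ (where $\delta^2-\eta^2\le 2\delta(\delta-\eta)$ and $\eta^2\ge\delta^2/4$ finish the argument). The only (harmless) difference is that you justify the bound on $y$ slightly more carefully via the Pythagorean identity $x^2+y^2=\|z-m\|_2^2$, and you explicitly dispatch the degenerate cases $\delta=0$ and $\Delta_{a,b}(z)<0$.
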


\begin{proof}
Let $y$ be the distance of $z$ from $m = \frac{a+b}2$, and let $\eta = \Delta_{a,b}(z)$. Then, according to~\eqref{expr:x}, we have
\[
  x^2 = \frac{y^2 \cdot \bigl(\delta^2 - \eta^2)}{\eta^2} - \frac{\delta^2 - \eta^2}{4}.
\]
We have $\delta \leq \eta$. This and the upper bound $y \leq 2\sqrt d \maxx$ yields the following weaker bound:
\begin{equation}
  x^2 \leq \frac{4 d \maxx^2 \cdot \bigl(\delta^2 - \eta^2)}{\eta^2} . \label{simplified}
\end{equation}
We distinguish two cases.
The first case is that $\eta < \delta/2$.
In this case, it suffices to show that $\delta/2 \leq \delta - \frac{x^2 \delta }{32 d \maxx^2}$ in order to prove \eqref{newbadcone}.
Since $|x| \leq 2 \sqrt d \maxx$, this holds because
$\delta/2 \leq \delta - \delta/2$.

The second case is that $\eta \geq \delta/2$.
We have
\[
  \delta^2 - \eta^2 = (\delta + \eta) \cdot (\delta - \eta) \leq 2 \delta \cdot (\delta - \eta).
\]
Replacing $\delta^2-\eta^2$ by $2 \delta \cdot (\delta - \eta)$ in the numerator and $\eta^2$ by $\delta^2/4$ in the denominator of \eqref{simplified},
we obtain
\[
  x^2 \leq 
  \frac{4 d \maxx^2 \cdot \bigl(\delta^2 - \eta^2)}{\eta^2}  \leq \frac{32 d \maxx^2 \cdot (\eta - \delta)}{\delta}.
\]
Rearranging terms completes the proof.
\end{proof}

In order to be able to apply Lemma~\ref{lem:fctgauss}, we need the following
upper bound on the derivative of $y$ with respect to $\eta$, given that $x$ is fixed.

\begin{lemma}
\label{lem:derivative}
For $x, y \geq 0$, let
$y = \sqrt{\frac{\eta^2}4 + \frac{\eta^2x^2}{\delta^2 - \eta^2}}$ with $0 < \eta < \delta$.
Assume further that
$\eta \leq \delta - \frac{x^2\delta}{8d \maxx^2}$
and that $x \leq 2 \sqrt d \maxx$.
Then the derivative of $y$ with respect to $\eta$ is bounded by
\[
\frac 12 + \frac{32^{3/2}\maxx^3d^{3/2}}{\delta x^2} .
\]
\end{lemma}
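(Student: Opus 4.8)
The quantity to bound is $\frac{dy}{d\eta}$ where $y = \sqrt{g(\eta)}$ with $g(\eta) = \frac{\eta^2}{4} + \frac{\eta^2 x^2}{\delta^2 - \eta^2}$. The plan is to compute $\frac{dy}{d\eta} = \frac{g'(\eta)}{2\sqrt{g(\eta)}} = \frac{g'(\eta)}{2y}$ explicitly, split $g'$ into the contribution from the $\eta^2/4$ term and the contribution from the $\frac{\eta^2 x^2}{\delta^2-\eta^2}$ term, bound each piece separately, and finally divide by $2y$. For the first term, $\frac{d}{d\eta}\frac{\eta^2}{4} = \frac{\eta}{2}$, and since $y \geq \sqrt{\eta^2/4} = \eta/2$, this piece contributes at most $\frac{\eta/2}{2\cdot(\eta/2)} = \frac12$, giving the first summand of the claimed bound. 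The real work is in the second term.

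For the second term, differentiate $\frac{\eta^2 x^2}{\delta^2-\eta^2}$ with respect to $\eta$ (treating $x,\delta$ as constants): the quotient rule gives $x^2 \cdot \frac{2\eta(\delta^2-\eta^2) + \eta^2 \cdot 2\eta}{(\delta^2-\eta^2)^2} = \frac{2\eta\delta^2 x^2}{(\delta^2-\eta^2)^2}$. So the corresponding contribution to $\frac{dy}{d\eta}$ is $\frac{1}{2y}\cdot\frac{2\eta\delta^2 x^2}{(\delta^2-\eta^2)^2} = \frac{\eta\delta^2 x^2}{y(\delta^2-\eta^2)^2}$. Now I lower-bound $y$ again, this time using the second summand of $g$: $y \geq \sqrt{\frac{\eta^2 x^2}{\delta^2-\eta^2}} = \frac{\eta x}{\sqrt{\delta^2-\eta^2}}$ (all quantities positive). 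Substituting this in, the second contribution is at most $\frac{\eta\delta^2 x^2}{\frac{\eta x}{\sqrt{\delta^2-\eta^2}}\cdot(\delta^2-\eta^2)^2} = \frac{\delta^2 x}{(\delta^2-\eta^2)^{3/2}}$.

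It remains to bound $\frac{\delta^2 x}{(\delta^2-\eta^2)^{3/2}}$ by $\frac{32^{3/2}\maxx^3 d^{3/2}}{\delta x^2}$. This is where the hypotheses $\eta \leq \delta - \frac{x^2\delta}{8d\maxx^2}$ and $x \leq 2\sqrt d\maxx$ come in, and this is the main obstacle — everything before is routine calculus. The key estimate is a lower bound on $\delta^2-\eta^2$. Writing $\delta^2 - \eta^2 = (\delta-\eta)(\delta+\eta) \geq (\delta-\eta)\delta$, and using $\delta - \eta \geq \frac{x^2\delta}{8d\maxx^2}$ from the hypothesis, I get $\delta^2-\eta^2 \geq \frac{x^2\delta^2}{8d\maxx^2}$. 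Hence $(\delta^2-\eta^2)^{3/2} \geq \left(\frac{x^2\delta^2}{8d\maxx^2}\right)^{3/2} = \frac{x^3\delta^3}{(8d\maxx^2)^{3/2}}$. Therefore
\[
\frac{\delta^2 x}{(\delta^2-\eta^2)^{3/2}} \leq \frac{\delta^2 x \cdot (8d\maxx^2)^{3/2}}{x^3\delta^3} = \frac{(8d)^{3/2}\maxx^3}{x^2\delta} = \frac{8^{3/2}d^{3/2}\maxx^3}{\delta x^2}.
\]
Since $8^{3/2} \leq 32^{3/2}$, this gives the claimed bound on the second summand. Adding the two contributions, $\frac{dy}{d\eta} \leq \frac12 + \frac{32^{3/2}\maxx^3 d^{3/2}}{\delta x^2}$, as desired. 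One subtlety to check along the way: I should note that $\delta^2-\eta^2 > 0$ throughout (so $y$ is well-defined and the denominators are nonzero), which follows from $0 < \eta < \delta$; and the hypothesis $x \leq 2\sqrt d\maxx$, while seemingly unused in the final chain, is needed to ensure the hypothesis $\eta \leq \delta - \frac{x^2\delta}{8d\maxx^2}$ is consistent (i.e., that the right-hand side is positive, so that such $\eta$ exist) — worth a one-line remark but not essential to the inequality itself.
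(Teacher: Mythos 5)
Your proof is correct and follows essentially the same route as the paper's: both bound the contribution of the $\eta^2/4$ term by $\tfrac12$, reduce the remaining contribution to exactly $\frac{\delta^2 x}{(\delta^2-\eta^2)^{3/2}}$ (your lower bound $y \ge \frac{\eta x}{\sqrt{\delta^2-\eta^2}}$ is equivalent to the paper's estimate $\sqrt{\delta^2-\eta^2+4x^2} \ge 2x$ applied to the closed-form derivative), and then finish with $\delta^2-\eta^2 \ge \delta(\delta-\eta)$ and the hypothesis on $\delta-\eta$. Your final constant $8^{3/2}$ is in fact slightly better than the stated $32^{3/2}$ (the paper weakens the hypothesis to $\delta-\eta \ge \frac{x^2\delta}{32 d \maxx^2}$ at that step), which is harmless.
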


\begin{proof}
The derivative of $y$ with respect to $\eta$ is given by
\begin{align*}
y' & = \frac{\textup d y}{\textup d \eta} = \frac{\delta^4 - 2 \delta^2\eta^2 + \eta^4 + 4 \delta^2 x^2}{2 \cdot (\delta^2 - \eta^2)^{3/2} \cdot \sqrt{\delta^2 - \eta^2 + 4 x^2}}\\
& = \frac{(\delta^2 - \eta^2)^2 + 4 \delta^2 x^2}{2 \cdot (\delta^2 - \eta^2)^{3/2} \cdot \sqrt{\delta^2 - \eta^2 + 4 x^2}} \\
& = \frac{(\delta^2 - \eta^2)^2}{2 \cdot (\delta^2 - \eta^2)^{3/2} \cdot \sqrt{\delta^2 - \eta^2 + 4 x^2}}
+\frac{4 \delta^2 x^2}{2 \cdot (\delta^2 - \eta^2)^{3/2} \cdot \sqrt{\delta^2 - \eta^2 + 4 x^2}} \\
& \leq \frac 12 
+\frac{2 \delta^2 x^2}{(\delta^2 - \eta^2)^{3/2} \cdot \sqrt{\delta^2 - \eta^2 + 4 x^2}}.
\end{align*}
We observe that $y' \geq 0$ for all $x$ and allowed choices of $\eta$ and $\delta$.
For the second term, we have
\[
\frac{2 \delta^2 x^2}{(\delta^2 - \eta^2)^{3/2} \cdot \sqrt{\delta^2 - \eta^2 + 4 x^2}}
\leq 
\frac{2 \delta^2 x^2}{(\delta^2 - \eta^2)^{3/2} \cdot \sqrt{4 x^2}}
= \frac{\delta^2 x}{(\delta^2 - \eta^2)^{3/2}}.
\]
By assumption, we have $\delta - \eta \geq \frac{x^2\delta}{32d \maxx^2}$
and $\eta \leq \delta$.
Thus, we have
\[
\frac{\delta^2 x}{(\delta^2 - \eta^2)^{3/2}} 
 = \frac{\delta^2 x}{\bigl((\delta + \eta)\cdot (\delta - \eta) \bigr)^{3/2} } 
 \leq \frac{\delta^2 x}{\delta^{3/2} \cdot \bigl(\frac{x^2\delta}{32d \maxx^2}\bigr)^{3/2}} 
  = \frac{32^{3/2}\maxx^3d^{3/2}}{\delta x^2}  .
\]
\end{proof}

Using Lemmas~\ref{lem:derivative} and~\ref{lem:fctgauss}, we can bound the probability
that $\Delta_{a,b}(z)$ assumes a value in an interval of size $\eps$.


\begin{lemma}
\label{lem:differenceprob}
Let $d \geq 4$.
Let $a, b \in [-\maxx, \maxx]^d$ be arbitrary, $a \neq b$, and let
$z$ be drawn according to a Gaussian distribution with standard deviation $\sigma$.
Let $\delta = \|a-b\|_2$. Let $I$ be an interval of length $\eps$.
Then
\[
\probab\bigl(\Delta_{a,b}(z) \in I \bigr) \leq
  O\left(\frac{\eps \maxx^3 \sqrt d}{\sigma^3 \delta} \right) + \probab\bigl(z \notin [-\maxx, \maxx]^d\bigr).
\]
\end{lemma}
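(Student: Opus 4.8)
The plan is to collapse the problem to the two–dimensional situation of Lemma~\ref{lem:implicit}, condition on the part of $z$ orthogonal to $L(a,b)$, and then turn the derivative estimate of Lemma~\ref{lem:derivative} into a probability bound via Lemma~\ref{lem:fctgauss}. Since Gaussians are rotationally symmetric, I would assume $a=(0,-\delta/2,0,\dots,0)$ and $b=(0,\delta/2,0,\dots,0)$, so that $L(a,b)$ is the second coordinate axis. I split $z$ into its coordinate $y\in\real$ along $L(a,b)$ (a one–dimensional Gaussian of standard deviation $\sigma$) and its orthogonal part $z_\perp\in\real^{d-1}$ (a $(d-1)$–dimensional Gaussian of standard deviation $\sigma$, independent of $y$), and write $x=\norm{z_\perp}_2$ for the distance of $z$ from $L(a,b)$. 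By Lemma~\ref{lem:implicit}, $\eta:=\Delta_{a,b}(z)$ depends only on $x$, $y$, $\delta$; it always lies in $(-\delta,\delta)$, is an odd function of $y$, and for fixed $x>0$ it is a strictly increasing bijection from $\{y\ge 0\}$ onto $[0,\delta)$, given by~\eqref{expr:y}. By the symmetry $\eta\mapsto-\eta$ it therefore suffices, up to a factor $2$, to bound $\probab(\eta\in I^+)$ for $I^+:=I\cap[0,\delta)$, an interval of length at most $\eps$; and since $\probab(\eta\in I^+)\le\probab(\eta\in I^+,\ z\in[-\maxx,\maxx]^d)+\probab(z\notin[-\maxx,\maxx]^d)$, only the first summand needs to be bounded.

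Next I would condition on $z_\perp$ (hence on $x$). On the event $z\in[-\maxx,\maxx]^d$ we have $x\le 2\sqrt d\,\maxx$, and Lemma~\ref{lem:newbadcone} gives $0\le\eta\le\delta-\frac{x^2\delta}{32d\maxx^2}$; thus $\eta$ lies in the interval $\tilde I:=I^+\cap[0,\,\delta-\frac{x^2\delta}{32d\maxx^2}]$, again of length at most $\eps$. The crucial point is that $\eta\le\delta-\frac{x^2\delta}{32d\maxx^2}$ is exactly the main hypothesis of Lemma~\ref{lem:derivative} with $\maxx$ replaced by $2\maxx$, while the remaining side condition $x\le 2\sqrt d\,(2\maxx)$ is implied by $z\in[-\maxx,\maxx]^d$. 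Hence, writing $f$ for the function $\eta\mapsto y$ of~\eqref{expr:y}, Lemma~\ref{lem:derivative} yields $|f'(\eta)|\le B(x):=\frac12+\frac{32^{3/2}(2\maxx)^3 d^{3/2}}{\delta x^2}=O\bigl(1+\frac{\maxx^3 d^{3/2}}{\delta x^2}\bigr)$ for all $\eta\in\tilde I$. Since $\{\eta\in I^+,\ z\in[-\maxx,\maxx]^d\}\subseteq\{\eta\in\tilde I\}=\{y\in f(\tilde I)\}$ on $\{y\ge 0\}$, and $y$ is a one–dimensional Gaussian of standard deviation $\sigma$, Lemma~\ref{lem:fctgauss} gives
\[
  \probab\bigl(\eta\in I^+,\ z\in[-\maxx,\maxx]^d \;\big|\; z_\perp\bigr)=O\!\left(\frac{B(x)\,\eps}{\sigma}\right).
\]

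Finally I would take the expectation over $z_\perp$. As $B(x)$ is monotonically decreasing in $x=\norm{z_\perp}_2$, Lemmas~\ref{lem:gaussdom} and~\ref{lem:replacechi} let me replace the (possibly non‑central) distribution of $x$ by the chi distribution $\chi_{d-1,\sigma}$, so that
\[
  \probab\bigl(\eta\in I^+,\ z\in[-\maxx,\maxx]^d\bigr)\le\int_0^\infty\chi_{d-1,\sigma}(x)\,O\!\left(\frac{\eps}{\sigma}\Bigl(1+\frac{\maxx^3 d^{3/2}}{\delta x^2}\Bigr)\right)\textup dx .
\]
The constant term integrates to $O(\eps/\sigma)$, and by Lemma~\ref{lem:integral} with $c=2$ --- valid because $d-1>2$, which is exactly where the hypothesis $d\ge 4$ enters --- the other term is $O\bigl(\frac{\eps\maxx^3 d^{3/2}}{\sigma\delta}\cdot\frac1{d\sigma^2}\bigr)=O\bigl(\frac{\eps\maxx^3\sqrt d}{\sigma^3\delta}\bigr)$. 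Since $\delta\le 2\sqrt d\,\maxx$ and $\sigma\le 1\le\maxx$, we have $\frac{\eps}{\sigma}=O\bigl(\frac{\eps\maxx^3\sqrt d}{\sigma^3\delta}\bigr)$, so after reinstating the factor $2$ for the sign of $\eta$ and adding back $\probab(z\notin[-\maxx,\maxx]^d)$ the stated bound follows.

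The step I expect to be most delicate is the interface between Lemmas~\ref{lem:newbadcone} and~\ref{lem:derivative}: Lemma~\ref{lem:newbadcone} only keeps $\eta$ away from its maximum $\delta$ by $\frac{x^2\delta}{32d\maxx^2}$, whereas Lemma~\ref{lem:derivative} as stated needs the stronger $\eta\le\delta-\frac{x^2\delta}{8d\maxx^2}$, and on the intermediate range of $\eta$ one has no a priori control on $|f'|$ --- there the $f$–preimage of an $\eps$–interval need not be short, and a direct length estimate would only produce an $\eps$–independent error. Rescaling $\maxx\mapsto 2\maxx$ when invoking Lemma~\ref{lem:derivative} absorbs this constant mismatch (at the cost of an $8$ hidden in the $O$–notation); one just has to check that the rescaled side condition $x\le 2\sqrt d\,(2\maxx)$ is still guaranteed by $z\in[-\maxx,\maxx]^d$, which it is.
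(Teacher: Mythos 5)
Your proof follows essentially the same route as the paper's: condition on the distance $x$ of $z$ to $L(a,b)$, combine Lemmas~\ref{lem:newbadcone}, \ref{lem:derivative} and~\ref{lem:fctgauss} to bound the conditional probability by $O\bigl((1+\frac{\maxx^3 d^{3/2}}{\delta x^2})\cdot\frac{\eps}{\sigma}\bigr)$, then integrate against $\chi_{d-1,\sigma}$ via Lemmas~\ref{lem:replacechi} and~\ref{lem:integral} with $c=2$ (which is where $d\ge 4$ enters), and absorb the $O(\eps/\sigma)$ term using $\delta\le 2\sqrt d\,\maxx$. Your extra care at the interface between Lemma~\ref{lem:newbadcone} (constant $32$) and the stated hypothesis of Lemma~\ref{lem:derivative} (constant $8$) is well spotted and your rescaling $\maxx\mapsto 2\maxx$ is a valid fix, although the proof of Lemma~\ref{lem:derivative} in fact only uses the weaker bound $\delta-\eta\ge\frac{x^2\delta}{32d\maxx^2}$, so the two lemmas already interlock as the paper implicitly assumes.
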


\begin{proof}
We assume throughout this proof that $z \in [-\maxx, \maxx]^d$. The case that this is not satisfied is taken care of by the
second term in the upper bound for the probability in the statement of the lemma.

Let $x$ denote the distance of $z$ to $L(a,b)$, and let $y$ denote the position
of the projection of $z$ onto $L(a,b)$.
First, let us assume that $x$ is fixed. Then, by Lemmas~\ref{lem:derivative}
and~\ref{lem:fctgauss}, the probability that
$\Delta_{a,b}(z) \in I$ is bounded from above by
\[
  O\left(\left(1 + \frac{\maxx^3d^{3/2}}{\delta x^2}\right) \cdot \frac{\eps}{\sigma}\right).
\]
Here, the requirements of Lemma~\ref{lem:derivative} are satisfied because of Lemma~\ref{lem:newbadcone}, or we have $z \notin [-\maxx, \maxx]^d$.

We observe that this probability is decreasing in $x$.
Thus, in order to get an upper bound for the probability with random $x$, we can use the $(d-1)$-dimensional chi distribution
for $x$ according to Lemma~\ref{lem:replacechi}.
We obtain
\begin{align*}
& \int_{x=0}^\infty \chi_{d-1, \sigma}(x) \cdot O\left(\left(1 + \frac{\maxx^3d^{3/2}}{\delta x^2}\right) \cdot \frac{\eps}{\sigma}\right) \, \text d x \\
= ~ & O\left(\frac \eps\sigma\right) + 
\int_{x=0}^\infty \chi_{d-1, \sigma}(x) \cdot O\left(\frac{\maxx^3d^{3/2}\eps}{\sigma\delta x^2}\right)  \, \textup d x 
= O\left(\frac \eps\sigma  +\frac{\maxx^3 \sqrt d \eps}{\sigma^3 \delta}    \right).
\end{align*}
by Lemma~\ref{lem:integral} using $c = 2$ and $d-1 \geq 3$.
Since $\delta \leq 2 \sqrt d \maxx$, the lemma follows.
\end{proof}

\subsubsection{Analysis of Pairs of 2-Changes}

We immediately go to pairs of linked 2-changes, as these yield the better bounds.

\begin{lemma}
\label{lem:euc0}
For $d \geq 4$, the probability that a pair of linked 2-changes of type 0 yields an improvement of at most
$\eps$ or some point lies outside $[-\maxx, \maxx]^d$ is bounded from above by
\[
O\left(\frac{n^6 \eps^2 \maxx^6}{\sigma^8}\right).
\]
\end{lemma}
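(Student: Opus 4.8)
The plan is to mimic the deferred-decision argument used for type-0 pairs with squared Euclidean distances (\lemref{pairs0sed}), replacing every application of \lemref{seddelta} by its (considerably harder) Euclidean counterpart \lemref{differenceprob}; the hypothesis $d \geq 4$ is inherited from that lemma. Fix one pair of linked 2-changes of type~0 on $X_1, \dots, X_6$, where the first 2-change replaces $\edge12$ and $\edge34$ by $\edge13$ and $\edge24$ and the second replaces $\edge13$ and $\edge56$ by $\edge15$ and $\edge36$. With $\Delta_{a,b}(z) = \|z-a\|_2 - \|z-b\|_2$, the improvement of the first 2-change equals $\Delta_{1,4}(2) - \Delta_{1,4}(3)$ and that of the second equals $\Delta_{1,6}(3) - \Delta_{1,6}(5)$. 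Hence both improvements are at most $\eps$ only if $\Delta_{1,4}(3)$ lies in the length-$\eps$ interval $I_1 = [\Delta_{1,4}(2) - \eps,\, \Delta_{1,4}(2)]$ \emph{and} $\Delta_{1,6}(5)$ lies in the length-$\eps$ interval $I_2 = [\Delta_{1,6}(3) - \eps,\, \Delta_{1,6}(3)]$.

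I would then reveal the points in the order $X_1$ (adversarial), $X_4$, $X_2$, $X_3$, $X_6$, $X_5$, and work on the event $G$ that all six points lie in $[-\maxx,\maxx]^d$; its complement is exactly the ``or some point outside $[-\maxx,\maxx]^d$'' alternative in the statement, and by \lemref{maxx} this has probability at most $1/n!$, contributing only lower-order terms (in fact it adds merely a constant to the expected running time when \lemref{generic} is applied). Once $X_1, X_4, X_2$ are revealed, the interval $I_1$ and the distance $\delta_1 = \|X_1 - X_4\|_2$ are determined; on $G$ we have $X_1, X_4 \in [-\maxx,\maxx]^d$, so \lemref{differenceprob} (applicable since $d \geq 4$) bounds $\probab\bigl(\Delta_{1,4}(3) \in I_1 \text{ and } X_3 \in [-\maxx,\maxx]^d \mid X_1,X_2,X_4\bigr)$ by $O\bigl(\eps\maxx^3\sqrt d/(\sigma^3\delta_1)\bigr)$. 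Conditioning further on $X_3$ and then on $X_6$ fixes $I_2$ and $\delta_2 = \|X_1 - X_6\|_2$, and on $G$ (so $X_1, X_6 \in [-\maxx,\maxx]^d$) \lemref{differenceprob} bounds $\probab\bigl(\Delta_{1,6}(5) \in I_2 \text{ and } X_5 \in [-\maxx,\maxx]^d \mid X_1,\dots,X_4,X_6\bigr)$ by $O\bigl(\eps\maxx^3\sqrt d/(\sigma^3\delta_2)\bigr)$.

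The purpose of this reveal order is that, conditioned on $X_1,X_2,X_3,X_4,X_6$, the first interval-hitting event is already determined and the second depends only on the fresh point $X_5$; while, conditioned on $X_1$ alone, that first event (a function of $X_2,X_3,X_4$) is independent of $\delta_2$ (a function of $X_6$), and $\delta_1, \delta_2$ are independent distances from the fixed point $X_1$ to the Gaussian points $X_4$ and $X_6$. Multiplying the two conditional bounds and taking expectations thus shows that the probability that the pair yields an improvement of at most $\eps$ while all six points stay in $[-\maxx,\maxx]^d$ is at most
\[
  \expected\!\left[O\!\left(\frac{\eps\maxx^3\sqrt d}{\sigma^3\delta_1}\right)\right] \cdot
  \expected\!\left[O\!\left(\frac{\eps\maxx^3\sqrt d}{\sigma^3\delta_2}\right)\right].
\]
Since $\eps\maxx^3\sqrt d/(\sigma^3\delta)$ is decreasing in $\delta$, \lemref{replacechi} lets me replace the density of each $\delta_k$ by $\chi_{d,\sigma}$, and \lemref{integral} with $c=1$ (valid as $d \geq 4 > 1$) gives $\expected[\delta_k^{-1}] = O\bigl(1/(\sqrt d\,\sigma)\bigr)$; hence each factor is $O(\eps\maxx^3/\sigma^4)$, their product is $O(\eps^2\maxx^6/\sigma^8)$, and a union bound over the $O(n^6)$ type-0 pairs (plus the $O(1/n!)$ from the cube event) yields the claimed $O(n^6\eps^2\maxx^6/\sigma^8)$.

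I expect the main obstacle to be the conditioning bookkeeping in the two middle steps: arranging the reveal order so that for each invocation of \lemref{differenceprob} the two reference points of the relevant $\Delta_{a,b}(\cdot)$ are already fixed \emph{and} inside $[-\maxx,\maxx]^d$ (so that the lemma applies with no residual error term) while the perturbed point is still fresh, and simultaneously so that the two bad-interval events decouple once $X_1$ is fixed. This is exactly why \lemref{differenceprob} carries its $\probab(z \notin [-\maxx,\maxx]^d)$ term and why the present statement carries the ``or some point outside the cube'' clause; once that is in place, the chi-distribution integral is routine given \lemref{integral}.
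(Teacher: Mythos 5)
Your proposal is correct and follows essentially the same route as the paper's proof: it transplants the deferred-decision argument of \lemref{pairs0sed} to Euclidean distances, substituting \lemref{differenceprob} for \lemref{seddelta}, integrating each factor against $\chi_{d,\sigma}$ via \lemref{replacechi} and \lemref{integral} with $c=1$ to get $O(\eps\maxx^3/\sigma^4)$ per 2-change, and finishing with a union bound over the $O(n^6)$ type-0 pairs. Your handling of the ``some point outside $[-\maxx,\maxx]^d$'' clause is in fact spelled out more carefully than in the paper, which leaves that bookkeeping implicit.
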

\begin{proof}
We proceed similarly as in the proof of Lemma~\ref{lem:pairs0sed} for type 0 pairs for squared Euclidean distances.
We draw the points of a fixed pair of linked 2-changes as in the proof of Lemma~\ref{lem:pairs0sed}.

In the same way as in the proof of Lemma~\ref{lem:pairs0sed}, using Lemma~\ref{lem:differenceprob} instead of Lemma~\ref{lem:seddelta}, we obtain
that the probability that one fixed of the
two 2-changes yields an improved of at most $\eps$ is bounded from above by
\[
 \int_{\delta = 0}^\infty \chi_{d, \sigma}(\delta) \cdot 
 O\left(\frac{\eps \maxx^3 \sqrt d}{\sigma^3 \delta} \right) \, \textup d \delta
  = O\left(\frac{\eps \maxx^3}{\sigma^4} \right).
\]
Here, we applied Lemma~\ref{lem:integral} with $c = 1$.

Again in the same way as in the proof of Lemma~\ref{lem:pairs0sed}, we can analyze both 2-changes of the type 0 pair
is if they are independent. Finally, the lemma follows by a union bound over the $O(n^6)$ possibilities for a type 0 pair.
\end{proof}

\begin{lemma}
\label{lem:euc1a}
For $d \geq 4$, the probability that a pair of linked 2-changes of type 1a yields an improvement of at most
$\eps$ or some point lies outside $[-\maxx, \maxx]^d$ is bounded from above by
\[
O\left(\frac{n^5 \eps^2 \maxx^6}{\sigma^8}\right).
\]
\end{lemma}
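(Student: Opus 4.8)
The plan is to mirror the proof of Lemma~\ref{lem:euc0} almost verbatim, changing only the combinatorial bookkeeping that distinguishes a type 1a pair from a type 0 pair. Concretely, I would analyze the same two quantities used for type 1a pairs with squared Euclidean distances in Lemma~\ref{lem:pairs1ased}, namely $\Delta_{2,3}(4)$ and $\Delta_{1,2}(5)$, and draw the six (really five, since a type 1 pair shares a point) points in the order used there: first an adversary fixes $X_2$; then $X_3$ is drawn, fixing $\|X_2 - X_3\|_2$; then $X_1$ is drawn, fixing $\|X_1 - X_2\|_2$ as well as the \emph{positions} of the two intervals into which $\Delta_{2,3}(4)$ and $\Delta_{1,2}(5)$ must fall for the respective 2-changes to yield improvement at most $\eps$; finally $X_4$ and then $X_5$ are drawn. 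Crucially, in this ordering the two ``bad events'' depend on disjoint fresh randomness ($X_4$ for the first, $X_5$ for the second), so the conditional probabilities multiply as if independent, exactly as in Lemma~\ref{lem:euc0}.

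Once the drawing order is set up, the estimate for a single one of the two 2-changes is identical to the one in Lemma~\ref{lem:euc0}: conditioning on the relevant distance $\delta$ (which is $\|X_2 - X_3\|_2$ for the first 2-change and $\|X_1 - X_2\|_2$ for the second), Lemma~\ref{lem:differenceprob} bounds the probability that $\Delta_{\cdot,\cdot}(\cdot)$ lands in the bad interval by $O\bigl(\frac{\eps \maxx^3 \sqrt d}{\sigma^3 \delta}\bigr)$ plus the probability that some point escapes $[-\maxx,\maxx]^d$; then, since this bound is monotonically decreasing in $\delta$, Lemma~\ref{lem:replacechi} lets us replace the true density of $\delta$ by $\chi_{d,\sigma}$, and Lemma~\ref{lem:integral} with $c=1$ (valid since $d \geq 4$) evaluates the integral to $O\bigl(\frac{\eps \maxx^3}{\sigma^4}\bigr)$. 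Multiplying the two such bounds gives $O\bigl(\frac{\eps^2 \maxx^6}{\sigma^8}\bigr)$ for a fixed type 1a pair, and a union bound over the $O(n^5)$ choices of a type 1a pair (one fewer point than a type 0 pair) yields the claimed $O\bigl(\frac{n^5 \eps^2 \maxx^6}{\sigma^8}\bigr)$.

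I would phrase the write-up compactly, stating that the proof is ``identical to that of Lemma~\ref{lem:euc0} except that we use the drawing order of Lemma~\ref{lem:pairs1ased} and take a union bound over only $O(n^5)$ choices,'' rather than reproducing the integral computation. The one point that deserves an explicit sentence is why the two bad events can still be treated as independent despite the pair being of type~1: after $X_1$, $X_2$, $X_3$ are fixed, both interval \emph{positions and sizes} are determined, and the first event is a function of $X_4$ alone while the second is a function of $X_5$ alone, so conditioning on the first leaves the conditional law of $X_5$ unchanged. The main (and only mild) obstacle is getting this dependency structure stated correctly for the type 1a incidence pattern --- one must double-check that $X_4$ does not appear in the second 2-change and $X_5$ does not appear in the first, which is exactly what the type 1a definition ($\edge 13$ and $\edge 25$ replaced by $\edge 15$ and $\edge 23$) guarantees. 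Everything else is a routine transcription of Lemmas~\ref{lem:pairs1ased} and~\ref{lem:euc0}.
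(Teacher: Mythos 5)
Your proposal is correct and matches the paper's proof exactly: the paper also disposes of this lemma by stating that it follows from the argument of Lemma~\ref{lem:pairs1ased} (the type~1a drawing order and the $O(n^5)$ union bound) with the modifications of Lemma~\ref{lem:euc0} (using Lemma~\ref{lem:differenceprob} and the chi-distribution integral). Your additional sentence justifying the independence of the two bad events via the disjoint fresh randomness of $X_4$ and $X_5$ is a welcome elaboration of what the paper leaves implicit.
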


\begin{proof}
The lemma can be proved in the same way as Lemma~\ref{lem:pairs1ased} with differences analogous to the proof of Lemma~\ref{lem:euc0}.
\end{proof}

\begin{lemma}
\label{lem:euc1b}
For $d \geq 4$, the probability that a pair of linked 2-changes of type 1b yields an improvement of at most
$\eps$ or some point lies outside $[-\maxx, \maxx]^d$ is bounded from above by
\[
O\left(\frac{n^5 \eps^2 \maxx^6}{\sigma^8}\right) .
\]
\end{lemma}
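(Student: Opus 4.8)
The plan is to follow the proof of Lemma~\ref{lem:pairs1bsed} almost verbatim, substituting the squared-Euclidean estimate of Lemma~\ref{lem:seddelta} by its Euclidean counterpart Lemma~\ref{lem:differenceprob}, exactly in the way Lemma~\ref{lem:euc0} was obtained from Lemma~\ref{lem:pairs0sed}. Fix a type~1b pair on the points $X_1, \ldots, X_5$, where $\edge{1}{2}$ and $\edge{3}{4}$ are replaced by $\edge{1}{3}$ and $\edge{2}{4}$ in the first 2-change and $\edge{1}{3}$ and $\edge{2}{5}$ by $\edge{1}{2}$ and $\edge{3}{5}$ in the second. The improvements of the two 2-changes are $\Delta_{2,3}(1) - \Delta_{2,3}(4)$ and $\Delta_{2,3}(5) - \Delta_{2,3}(1)$, so both refer to the same pair $(X_2, X_3)$ and, in particular, to the same distance $\delta = \|X_2 - X_3\|_2$.

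I would reveal the points in the order $X_2, X_3, X_1, X_4, X_5$, as in Lemma~\ref{lem:pairs1bsed}. Once $X_1, X_2, X_3$ are fixed --- and assuming $X_2, X_3 \in [-\maxx, \maxx]^d$, since the complementary event is covered by the ``or some point lies outside $[-\maxx,\maxx]^d$'' in the statement --- the length-$\eps$ intervals into which $\Delta_{2,3}(4)$ and $\Delta_{2,3}(5)$ must fall in order for the respective 2-change to improve by at most $\eps$ are determined. By Lemma~\ref{lem:differenceprob} applied with $a = X_2$, $b = X_3$, each of the independent draws $X_4$ and $X_5$ lands in its interval with probability $O\!\left(\frac{\eps \maxx^3 \sqrt d}{\sigma^3 \delta}\right)$, up to the event that the drawn point leaves the box, which we again charge to the ``or''-clause via Lemma~\ref{lem:maxx}. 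Conditioning on $X_1, X_2, X_3$ and using independence of $X_4$ and $X_5$, the probability that both 2-changes improve by at most $\eps$ is therefore $O\!\left(\frac{\eps^2 \maxx^6 d}{\sigma^6 \delta^2}\right)$, a quantity that depends on $X_1, X_2, X_3$ only through $\delta$.

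It remains to average over $\delta$. As $\delta^{-2}$ is monotonically decreasing, Lemma~\ref{lem:replacechi} allows replacing the true density of $\delta = \|X_2 - X_3\|_2$ by $\chi_{d,\sigma}$, and then Lemma~\ref{lem:integral} with $c = 2$ (permissible since $d \geq 4 > 2$) gives $\int_0^\infty \chi_{d,\sigma}(\delta)\,\delta^{-2}\,\textup d\delta = \Theta\!\left(\frac{1}{d\sigma^2}\right)$. Hence a fixed type~1b pair improves by at most $\eps$ (or causes a point to leave the box) with probability $O\!\left(\frac{\eps^2 \maxx^6}{\sigma^8}\right)$, and a union bound over the $O(n^5)$ type~1b pairs, together with Lemma~\ref{lem:maxx} for the boundary events, yields the claimed bound.

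The only genuine subtlety, as already in Lemma~\ref{lem:pairs1bsed}, is that the two bad intervals are \emph{not} independent: their common width scales like $1/\delta$. The remedy is the same, namely to condition on $\delta$ first, so that the remaining randomness, $X_4$ and $X_5$, is genuinely independent, and then to note that the resulting $\delta^{-2}$ singularity is integrable against $\chi_{d,\sigma}$ precisely because $d \geq 3$. Passing from Lemma~\ref{lem:seddelta} to Lemma~\ref{lem:differenceprob} additionally produces the factor $\maxx^3 \sqrt d$ and the boundary term; both are dealt with exactly as in the proof of Lemma~\ref{lem:euc0}, so no new difficulty arises there.
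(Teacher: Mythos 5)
Your proposal is correct and follows essentially the same route as the paper: condition on $\delta = \|X_2 - X_3\|_2$ so that the bad events for $X_4$ and $X_5$ become independent, bound each via Lemma~\ref{lem:differenceprob}, replace the density of $\delta$ by $\chi_{d,\sigma}$ using Lemma~\ref{lem:replacechi}, integrate via Lemma~\ref{lem:integral} with $c=2$, and union-bound over the $O(n^5)$ pairs. The paper's proof is just a terser version of exactly this argument (citing Lemma~\ref{lem:pairs1bsed} for the conditioning structure), so there is nothing to add.
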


\begin{proof}
Similar to the proof of Lemma~\ref{lem:pairs1bsed} and using Lemma~\ref{lem:differenceprob}, the probability that
the two 2-changes of the pair both yield an improvement of at most $\eps$ is bounded from above by
\[
 \int_{\delta = 0}^\infty \chi_{d, \sigma}(\delta) \cdot O\left(\frac{\eps \maxx^3 \sqrt d}{\sigma^3 \delta} \right)^2 \, \textup d \delta
 = \int_{\delta = 0}^\infty \chi_{d, \sigma}(\delta) \cdot O\left(\frac{\eps^2 \maxx^6 d}{\sigma^6 \delta^2} \right) \, \textup d \delta.
\]
Now the lemma follows by applying Lemma~\ref{lem:integral} with $c = 2$.
\end{proof}

\begin{theorem}
For $d \geq 4$, the expected length of the longest path in the 2-opt state graph corresponding to $d$-dimensional instances
with Euclidean distances is at most $O(\frac{\sqrt d n^4 \maxx^4}{\sigma^4})$.
\end{theorem}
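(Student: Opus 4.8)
The plan is to assemble Lemmas~\ref{lem:euc0}, \ref{lem:euc1a}, and~\ref{lem:euc1b} in exactly the way the proof of Theorem~\ref{thm:sed} assembles the corresponding squared-Euclidean lemmas, and then feed the resulting tail bound on $\dminl$ into Lemma~\ref{lem:generic}. First I would take a union bound over the three pair types: the probability that \emph{some} pair of linked 2-changes of type 0, 1a, or 1b yields an improvement of at most $\eps$, or that some point falls outside $[-\maxx,\maxx]^d$, is at most $O\bigl(\frac{n^6 \eps^2 \maxx^6}{\sigma^8}\bigr) + O\bigl(\frac{n^5 \eps^2 \maxx^6}{\sigma^8}\bigr) + O\bigl(\frac{n^5 \eps^2 \maxx^6}{\sigma^8}\bigr) = O\bigl(\frac{n^6 \eps^2 \maxx^6}{\sigma^8}\bigr)$, the type~0 term dominating. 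This is precisely the hypothesis needed for the linked-2-changes version of Lemma~\ref{lem:generic} with $\gamma = 2$ and $P = O\bigl(\frac{n^6 \maxx^6}{\sigma^8}\bigr)$.

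Second I would bound the length of an arbitrary tour. By Lemma~\ref{lem:maxx}, with probability at least $1 - 1/n!$ all $n$ points lie in $[-\maxx,\maxx]^d$; in that case every edge has Euclidean length at most the diameter $2\sqrt d\,\maxx$, so every tour has length at most $L := 2\sqrt d\, n\maxx = O(\sqrt d\, n\maxx)$. As in the proof of Lemma~\ref{lem:generic}, the complementary event has probability at most $1/n!$ and contributes only $O(1)$ to the expectation via the trivial $n!$ bound; the ``some point escapes the box'' disjunct inside the combined tail bound above is absorbed by the very same argument, so it causes no harm.

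Finally, applying Lemma~\ref{lem:generic} (the version for linked 2-changes, with $\gamma = 2$) yields expected longest-path length
\begin{align*}
  O\bigl(P^{1/\gamma} L\bigr)
  &= O\!\left(\sqrt{\tfrac{n^6 \maxx^6}{\sigma^8}}\cdot \sqrt d\, n\maxx\right) \\
  &= O\!\left(\frac{n^3 \maxx^3}{\sigma^4}\cdot \sqrt d\, n\maxx\right)
   = O\!\left(\frac{\sqrt d\, n^4 \maxx^4}{\sigma^4}\right),
\end{align*}
which is the claimed bound. I would also verify the side condition required by the linked-2-changes case, $P^{1/\gamma} L = \Omega(n^2)$: this holds since $P^{1/2} L = \Omega(n^4/\sigma^4) = \Omega(n^2)$ because $\sigma \le 1$. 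I do not expect any genuine obstacle: all the work is in Lemmas~\ref{lem:euc0}--\ref{lem:euc1b} (and ultimately in Lemma~\ref{lem:differenceprob}, which is where the restriction $d \ge 4$ enters), and the only mild point of care is the bookkeeping of the ``escaping the box'' event, handled exactly as in Lemma~\ref{lem:generic}.
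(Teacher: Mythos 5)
Your proposal is correct and follows essentially the same route as the paper's own proof: union bound over the three pair types to get $\probab(\dminl \leq \eps) = O(n^6\eps^2\maxx^6/\sigma^8)$, bound the initial tour length by $O(\sqrt d\, n \maxx)$, and apply Lemma~\ref{lem:generic} with $\gamma = 2$. Your additional bookkeeping (the side condition $P^{1/\gamma}L = \Omega(n^2)$ and the handling of the escape event) is correct and merely makes explicit what the paper leaves implicit.
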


\begin{proof}
We have $\probab(\dminl \leq \eps) = O(\frac{n^6\eps^2 \maxx^6}{\sigma^8})$ by Lemmas~\ref{lem:euc0}, \ref{lem:euc1a}, and~\ref{lem:euc1b}.
If all points are in $[-\maxx, \maxx]^d$, then the longest edge has a length of $O(\sqrt d \maxx)$.
Thus, the initial tour has a length of at most $O(n \sqrt d \maxx)$.
Plugging this into Lemma~\ref{lem:generic} yields the result.
\end{proof}

%
%

\section{Smoothed Analysis of the Approximation Ratio}
\label{sec:apx}

\subsection{Technical Preparation}

The following standard lemma provides a convenient way to bound the deviation of a perturbed point from its mean in the two-step model.

\begin{lemma}[\protect{Chi-square bound~\cite[Cor.\ 2.19]{SpielmanTeng:SmoothedAnalysisWhy:2004}}]\label{lem:chisquare}
Let $x$ be a Gaussian random vector in $\real^d$ of standard deviation $\sigma$ centered at the origin. Then, for $t\ge 3$, we have
 $\probab\bigl(\norm{x} \ge \sigma 3\sqrt{d\ln t}\bigr) \le t^{-2.9d}.$
\end{lemma}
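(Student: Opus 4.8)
The plan is to reduce to a standard chi-square tail estimate and then carry out one Chernoff-type calculation. Since $x$ has independent coordinates, write $\norm{x}^2 = \sigma^2 Y$ with $Y = \sum_{i=1}^d g_i^2$ for independent standard Gaussians $g_i$, so that $Y$ is chi-square distributed with $d$ degrees of freedom and the claim becomes $\probab(Y \ge 9d\ln t) \le t^{-2.9d}$. (Of course, one may instead simply invoke Corollary~2.19 of~\cite{SpielmanTeng:SmoothedAnalysisWhy:2004} verbatim; the argument sketched below is the self-contained version.)

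First I would apply the exponential Markov inequality together with the moment generating function $\expected[e^{\lambda Y}] = (1-2\lambda)^{-d/2}$, valid for $0 < \lambda < \frac12$, to obtain
\[
\probab(Y \ge a) \;\le\; e^{-\lambda a}(1-2\lambda)^{-d/2}.
\]
Since $a = 9d\ln t > d$ for $t \ge 3$, the minimizing choice $\lambda = \frac12 - \frac{d}{2a}$ lies in $(0,\frac12)$, and substituting it gives the clean bound
\[
\probab(Y \ge a) \;\le\; \Bigl(\frac{a}{d}\Bigr)^{d/2} e^{-(a-d)/2}.
\]

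Next I would plug in $a = 9d\ln t$, which turns the right-hand side into $(9e\ln t)^{d/2}\, t^{-9d/2}$. The final step is the elementary verification that this is at most $t^{-2.9d}$, i.e.\ that $9e\ln t \le t^{3.2}$ for every $t \ge 3$: at $t = 3$ the left-hand side is $\approx 26.9$ while $3^{3.2}\approx 33.6$, and comparing derivatives shows the right-hand side grows faster, so the inequality persists for all $t\ge 3$. I expect this last numerical check (and, correspondingly, the choice of the constant $2.9$ that makes it go through) to be the only point requiring any care; everything else is the routine chi-square Chernoff computation.
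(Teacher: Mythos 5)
The paper does not prove this lemma at all---it is imported verbatim as Corollary~2.19 of Spielman and Teng---so there is no in-paper argument to compare against. Your self-contained derivation is correct: the reduction $\norm{x}^2=\sigma^2 Y$ with $Y\sim\chi^2_d$, the Chernoff bound $\probab(Y\ge a)\le (a/d)^{d/2}e^{-(a-d)/2}$ obtained from the moment generating function $(1-2\lambda)^{-d/2}$ at the optimizer $\lambda=\frac12-\frac{d}{2a}$ (legitimate since $a=9d\ln t>d$ for $t\ge 3$), and the final reduction to $9e\ln t\le t^{3.2}$ all check out, including the numerical verification at $t=3$ (about $26.9$ versus $33.6$) and the monotonicity argument for larger $t$. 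This is exactly the standard computation underlying the cited result, so nothing further is needed.
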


To give large-deviation bounds on sums of independent variables with bounded support, we will make use of a standard Chernoff-Hoeffding bound.
\begin{lemma}[\protect{Chernoff-Hoeffding Bound~\cite[Exercise 1.1]{DubhashiPanconesi:ConcentrationOfMeasure:2009}}]\label{lem:chernoff}
Let $X := \sum_{i=1}^n X_i$, where $X_i, i=1,\dots,n$ are independently distributed in $[0,1]$, and $\mu_L \le \expected(X)\le \mu_H$. Then, for $0\le \varepsilon \le 1$, we have
\begin{align*}
 \probab\bigl(X > (1+\varepsilon) \cdot \mu_H \bigr) & \le \exp(-(\varepsilon^2/3) \cdot \mu_H),\\
 \probab\bigl(X < (1-\varepsilon) \cdot \mu_L \bigr) & \le \exp(-(\varepsilon^2/2) \cdot \mu_L).
\end{align*}
\end{lemma}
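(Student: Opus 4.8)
The plan is to prove this by the standard exponential moment method (Chernoff's trick), handling the two tails separately but symmetrically. For the upper tail, fix $t>0$; by Markov's inequality applied to $e^{tX}$ and independence of the $X_i$,
\[
\probab\bigl(X > (1+\varepsilon)\mu_H\bigr)\le e^{-t(1+\varepsilon)\mu_H}\,\expected\bigl(e^{tX}\bigr)=e^{-t(1+\varepsilon)\mu_H}\prod_{i=1}^n \expected\bigl(e^{tX_i}\bigr).
\]
Since $s\mapsto e^{ts}$ is convex, the chord bound $e^{ts}\le 1+s(e^t-1)$ holds for $s\in[0,1]$, whence $\expected(e^{tX_i})\le 1+(e^t-1)\expected(X_i)\le \exp\bigl((e^t-1)\expected(X_i)\bigr)$. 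Multiplying over $i$ and using $e^t-1>0$ together with $\expected(X)\le\mu_H$ yields the bound $\exp\bigl(\mu_H(e^t-1-t(1+\varepsilon))\bigr)$; choosing $t=\ln(1+\varepsilon)$ makes the exponent $\mu_H\bigl(\varepsilon-(1+\varepsilon)\ln(1+\varepsilon)\bigr)$. It then remains to verify the elementary inequality $(1+\varepsilon)\ln(1+\varepsilon)\ge \varepsilon+\varepsilon^2/3$ for $0\le\varepsilon\le 1$, which I would do by setting $f(\varepsilon)=(1+\varepsilon)\ln(1+\varepsilon)-\varepsilon-\varepsilon^2/3$ and checking $f(0)=f'(0)=0$ and $f'(\varepsilon)=\ln(1+\varepsilon)-\tfrac23\varepsilon$, a function that vanishes at $0$, increases up to $\varepsilon=\tfrac12$, and is still positive at $\varepsilon=1$ (since $\ln 2>\tfrac23$), hence $f'\ge 0$ and $f\ge 0$ on $[0,1]$.

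The lower tail is analogous: for $t>0$, Markov's inequality on $e^{-tX}$ gives $\probab(X<(1-\varepsilon)\mu_L)\le e^{t(1-\varepsilon)\mu_L}\prod_i \expected(e^{-tX_i})$, the chord bound $e^{-ts}\le 1-s(1-e^{-t})$ on $[0,1]$ gives $\expected(e^{-tX_i})\le\exp\bigl(-(1-e^{-t})\expected(X_i)\bigr)$, and here $\expected(X)\ge\mu_L$ together with $1-e^{-t}>0$ is exactly the direction needed to pass to $\mu_L$. Optimizing at $t=-\ln(1-\varepsilon)$ leaves exponent $\mu_L\bigl(-\varepsilon-(1-\varepsilon)\ln(1-\varepsilon)\bigr)$, and the required inequality $(1-\varepsilon)\ln(1-\varepsilon)\ge-\varepsilon+\varepsilon^2/2$ follows from $g(\varepsilon)=(1-\varepsilon)\ln(1-\varepsilon)+\varepsilon-\varepsilon^2/2$ satisfying $g(0)=g'(0)=0$ and $g''(\varepsilon)=\varepsilon/(1-\varepsilon)\ge 0$ on $[0,1)$; the cases $\varepsilon\in\{0,1\}$ are trivial.

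I expect the only real subtlety — beyond the fact that this is a textbook Chernoff--Hoeffding bound that could simply be cited — to be twofold. First, one must be careful that the substitutions $\expected(X)\to\mu_H$ (upper tail) and $\expected(X)\to\mu_L$ (lower tail) go in the correct direction; this hinges on $e^t-1\ge 0$ and $1-e^{-t}\ge 0$ respectively, both of which hold precisely because $t>0$. Second, one must discharge the two elementary logarithmic inequalities on $[0,1]$, which is routine after differentiating as above. Everything else is bookkeeping.
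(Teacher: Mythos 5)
Your proof is correct. Note, however, that the paper itself offers no proof of this lemma at all: it is quoted verbatim as a standard Chernoff--Hoeffding bound with a pointer to Dubhashi and Panconesi (Exercise~1.1), so there is no in-paper argument to compare against. Your derivation is the classical exponential-moment method, and every step checks out: the chord bounds $e^{ts}\le 1+s(e^t-1)$ and $e^{-ts}\le 1-s(1-e^{-t})$ on $[0,1]$ are valid by convexity; the passage from $\expected(X)$ to $\mu_H$ (resp.\ $\mu_L$) is in the right direction because $e^t-1>0$ (resp.\ $1-e^{-t}>0$) for $t>0$, as you correctly flag; the optimizing choices $t=\ln(1+\varepsilon)$ and $t=-\ln(1-\varepsilon)$ give the exponents you state; and both elementary inequalities $(1+\varepsilon)\ln(1+\varepsilon)\ge\varepsilon+\varepsilon^2/3$ and $(1-\varepsilon)\ln(1-\varepsilon)\ge-\varepsilon+\varepsilon^2/2$ on $[0,1]$ are verified correctly (for the first, your observation that $f'$ increases on $[0,\tfrac12]$, decreases thereafter, and satisfies $f'(1)=\ln 2-\tfrac23>0$ indeed yields $f'\ge 0$ throughout). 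The only thing you could have done more economically is what the authors did: cite the textbook.
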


Throughout this paper, we assume that the dimension $d\ge 2$ is a fixed constant. Given a sequence of points $X_1,\dots,X_n \in \real^d$, we call a collection $T\subseteq [n]\times [n]$ of edges a \emph{tour}, if $T$ is connected and every $i\in[n] = \{1, \ldots, n\}$
has in- and outdegree exactly one in $T$. Note that we consider directed tours, which is useful in the analysis in this chapter,
but our distances are always symmetric.

Given any collection of edges $S$, its length is denoted by $L(S) = \sum_{(u,v)\in S} d(u,v)$, where $d(u,v)$ denotes
the Euclidean distance $\norm{X_u - X_v}$ between points $X_u$ and $X_v$.

We call a collection $T \subseteq [n]^2$ a \emph{partial 2-optimal tour} if $T$ is a subset of a tour and $d(u,v) + d(w,z) \le d(u,w) + d(v,z)$ holds
for all edges $(u,v), (w,z) \in T$.
Our main interests are the traveling salesperson functional $\TSP(X) := \min_{\text{tour $T$}} L(T)$ as well as the functional $\TwoOPT(X):= \max_{\text{2-optimal tour $T$}} L(T)$ 
that maps the point set $X$ to the length of the longest 2-optimal tour through~$X$.

We note that the results in Section~\ref{sec:length} hold for metrics induced by arbitrary norms in $\real^d$ (Lemma~\ref{lem:2optworst} and~\ref{lem:2optSigma1}) or typical $\ell_p$ norms (Lemma~\ref{lem:phi-lbOPT} and~\ref{lem:sigma-lbOPT}), not only for the Euclidean metric.
We conjecture that also the upper bound in Section~\ref{sec:upper} holds for more general metrics, while the lower bound
in Section~\ref{sec:lb} is probably specific for the Euclidean metric. Still, we think that
the construction can be adapted to work for most natural metrics.

For obtaining lower bounds on the length of optimal tours, we consider the boundary functional $\bTSP(X)$ that attains the length of the shortest tour through all points in~$X$ that is allowed to traverse the boundary of $[0,1]^d$ at zero cost. For a proof of the following lemma, we refer to the monograph by Yukich~\cite{Yukich:ProbEuclidean:1998}.

\begin{lemma}[\protect{Boundary Functional~\cite[Lemma 3.7]{Yukich:ProbEuclidean:1998}}]\label{lem:boundary}
There is a constant $C > 0$ such that for all sets $X \subseteq [0,1]^d$ of $n$ points, we have $\bTSP(X) \ge \TSP(X) - C n^{\frac{d-2}{d-1}}$.
\end{lemma}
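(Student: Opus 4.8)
Since every genuine tour through $X$ is also a feasible boundary tour of the same length, we have $\bTSP(X)\le\TSP(X)$ for free, so the task reduces to proving $\TSP(X)\le\bTSP(X)+Cn^{\frac{d-2}{d-1}}$. I would start from an optimal boundary tour $T_B$ and look at the portions of $T_B$ that lie in the interior of $[0,1]^d$: these form a family of paths $P_1,\dots,P_m$ whose union passes through every point of $X$, whose total Euclidean length is exactly $\bTSP(X)$, and whose $2m$ endpoints all lie on $\partial[0,1]^d$. Any interior excursion containing no point of $X$ can be removed without increasing the length of $T_B$, so we may assume that each $P_i$ contains at least one point of $X$; hence $m\le n$. (Points of $X$ that happen to lie on $\partial[0,1]^d$ correspond to degenerate length-zero paths with coinciding endpoints; they cause no harm.)

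The heart of the argument is to reconnect the $2m$ path endpoints into a single Hamiltonian cycle on $X$ using only cheap real edges. The key geometric fact is that all these endpoints live in the set $\partial[0,1]^d$, which has bounded diameter and is only $(d-1)$-dimensional. By the standard grid / space-filling-curve bound for Euclidean TSP in dimension $d-1$ --- partition each of the $2d=O(1)$ facets into subcubes of side $r$, traverse the cells in a snake order at total cost $O(r^{-(d-2)})$, pay $O(r)$ inside each of the $O(r^{-(d-1)})$ cells, optimize at $r=(2m)^{-1/(d-1)}$, and stitch the $2d$ facet tours together with $O(1)$ further edges --- there is a closed tour through the $2m$ endpoints of length $O\bigl((2m)^{1-1/(d-1)}\bigr)=O\bigl(n^{\frac{d-2}{d-1}}\bigr)$. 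From the edges of this cheap tour I would extract a perfect matching $M'$ on the endpoints, disjoint from the pairing $N=\bigl\{\{u_i,v_i\}:i\in[m]\bigr\}$ induced by the paths, chosen so that $P_1\cup\dots\cup P_m\cup M'$ is a connected $2$-regular spanning subgraph of $X$, i.e.\ a genuine tour; its length is $\bTSP(X)+O\bigl(n^{\frac{d-2}{d-1}}\bigr)$, which bounds $\TSP(X)$ as required. For $d=2$ the boundary is a curve of length $4$, visiting the endpoints in their cyclic order along it costs $O(1)=O(n^{0})$, and the bound degenerates gracefully.

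The one genuinely delicate point, and where I expect most of the work to go, is guaranteeing connectivity of $P_1\cup\dots\cup P_m\cup M'$: a matching $M'$ lifted naively from the cheap endpoint-tour need not make $N\cup M'$ connected, so $N\cup M'$ may split into $k\ge 1$ disjoint cycles. These can be merged one at a time by a local ``$2$-opt''-type swap: whenever two current cycles contain matching edges that are consecutive along the cheap tour, one replaces those two edges by the two ``crossing'' edges on the same four endpoints, which decreases $k$ by one while increasing the total length only by $O(1)$ times the lengths of a few edges of the cheap tour; performing these swaps at distinct positions of the tour keeps the total extra length $O\bigl(n^{\frac{d-2}{d-1}}\bigr)$. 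This bookkeeping, together with the routine verification of the degenerate cases above, is exactly the content of the cited argument of Yukich. Collecting the estimates yields the claimed constant $C$, which may depend on the fixed dimension $d$.
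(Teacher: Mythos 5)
The paper gives no proof of this lemma at all: it is quoted from Yukich's monograph (Lemma~3.7 there), and the surrounding text explicitly defers to that source. So there is no internal argument to compare against; what you have reconstructed is essentially the standard ``closeness of the boundary functional'' proof from the cited literature, and its skeleton is sound. The three ingredients are exactly right: (i) $\bTSP(X)\le\TSP(X)$ is trivial, so only the other direction matters; (ii) the interior excursions of an optimal boundary tour form at most $n$ paths of total length $\bTSP(X)$ whose $2m\le 2n$ endpoints lie on the $(d-1)$-dimensional set $\partial[0,1]^d$ and hence admit a connecting tour $T'$ of length $O\bigl(n^{\frac{d-2}{d-1}}\bigr)$ by the grid bound; (iii) the paths can be stitched into a genuine tour of $X$ at extra cost $O(L(T'))$. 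Two remarks. First, the patching step you rightly flag as delicate can be done without any cycle-merging bookkeeping: add to $P_1\cup\dots\cup P_m\cup T'$ the cheaper of the two alternating perfect matchings of $T'$ (cost at most $L(T')/2$); then, up to the degenerate coincidences you already note, every vertex has even degree ($2$ for interior path points, $1+2+1=4$ for endpoints), the multigraph is connected because $T'$ spans all endpoints, and shortcutting an Eulerian circuit via the triangle inequality yields a tour of $X$ of length at most $\bTSP(X)+\tfrac32 L(T')$. The shortcutting also disposes of the fact that the path endpoints are boundary-crossing points that need not belong to $X$ --- a point your write-up glosses over. If you prefer your explicit $2$-opt-style merging, you do need to argue that each merge can be performed using a pair of $T'$-consecutive matching edges lying in distinct current cycles and that each edge of $T'$ is charged $O(1)$ times overall; this is doable but is real work, whereas the Eulerian route avoids it. Second, a small imprecision in your grid bound: the in-cell cost is $O(r)$ \emph{per point}, i.e.\ $O(mr)$ in total, not $O(r)$ per cell; with $r=(2m)^{-1/(d-1)}$ both terms are still $O\bigl(m^{\frac{d-2}{d-1}}\bigr)$, so the conclusion is unaffected.
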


\subsection{Length of 2-Optimal Tours under Perturbations}
\label{sec:length}

In this section, we provide an upper bound for the length of any 2-optimal tour and a lower bound for the length of any global optimum. These
two results yield an upper bound of $O(1/\sigma)$ for the approximation ratio.

Chandra et al.~\cite{ChandraEA:OldOpt:1999}  proved a bound on the worst-case length of 2-optimal tours that, in fact, already holds for the more general notion of \emph{partial} 2-optimal tours. For an intuition why this is true, let us point out that their proof strategy is to argue that not too many long arcs in a tour may have similar directions due to the 2-optimality of the edges, while short edges do not contribute much to the length. The claim then follows from a packing argument. It is straight-forward to verify that it is never required that the collection of edges is closed or connected. 

\begin{lemma}[Length of partial 2-optimal tours \protect{\cite[Theorem 5.1]{ChandraEA:OldOpt:1999}}, paraphrased]
\label{lem:2optworst}
There exists a constant $c_d$ such that for every sequence $X$ of $n$ points in $[0,1]^d$, any partial 2-optimal tour has length less than $c_d \cdot n^{1-1/d}$.
\end{lemma}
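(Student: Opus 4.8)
The plan is to bound, for each length scale, how many edges of a partial $2$-optimal tour can have (roughly) that length and (roughly) that direction, and then sum the contributions over all scales. The starting point is a clean consequence of $2$-optimality. Let $e_1 = (a_1,b_1)$ and $e_2 = (a_2,b_2)$ be two directed edges of the tour, write $m_i = \tfrac12(X_{a_i}+X_{b_i})$ for their midpoints and $\vec e_i = X_{b_i}-X_{a_i}$ for their displacement vectors. The $2$-optimality inequality applied to this ordered pair reads $d(a_1,b_1)+d(a_2,b_2)\le d(a_1,a_2)+d(b_1,b_2)$, i.e.\ the tail-to-tail/head-to-head reconnection is not improving. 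Substituting $X_{a_i}=m_i-\vec e_i/2$ and $X_{b_i}=m_i+\vec e_i/2$ and setting $w=m_1-m_2$, $v=\tfrac12(\vec e_1-\vec e_2)$, the left side is $\norm{\vec e_1}+\norm{\vec e_2}$ and the right side is $\norm{w-v}+\norm{w+v}$. Since $\norm{w\pm v}\le\norm w+\norm v$, this gives $\norm{\vec e_1}+\norm{\vec e_2}\le 2\norm{m_1-m_2}+\norm{\vec e_1-\vec e_2}$, hence
\[
  \norm{m_1-m_2}\ \ge\ \tfrac12\bigl(\norm{\vec e_1}+\norm{\vec e_2}-\norm{\vec e_1-\vec e_2}\bigr).
\]
So any two tour edges of comparable length and nearly equal direction must have midpoints that are far apart.

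Next I would make "comparable length and nearly equal direction" precise. Fix small constants $\alpha=\alpha(d)>0$ and $\eps_0=\eps_0(d)\in(0,1)$, cover the unit sphere $S^{d-1}$ by $O_d(1)$ spherical caps of angular radius $\alpha$, and partition the tour edges into classes $T_{C,k}$, where $C$ is the cap containing the (traversal-)direction of the edge and $k$ is the integer with $\norm{\vec e}\in[(1+\eps_0)^k,(1+\eps_0)^{k+1})$; edges of length $0$ contribute nothing and are discarded. For two edges in the same class, writing $\ell=(1+\eps_0)^k$ and $\phi\le 2\alpha$ for the angle between $\vec e_1,\vec e_2$, we have $\norm{\vec e_1-\vec e_2}^2=\norm{\vec e_1}^2+\norm{\vec e_2}^2-2\norm{\vec e_1}\norm{\vec e_2}\cos\phi$ with both lengths in $[\ell,(1+\eps_0)\ell)$; this expression tends to $0$ as $\eps_0,\alpha\to 0$, so choosing $\eps_0,\alpha$ small enough forces $\norm{\vec e_1-\vec e_2}\le\ell$. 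Combined with the displayed inequality this yields $\norm{m_1-m_2}\ge\ell/2$. As all midpoints lie in $[0,1]^d$, the balls of radius $\ell/4$ around the midpoints of edges in $T_{C,k}$ are disjoint and contained in $[-\tfrac14,\tfrac54]^d$, so a volume argument gives $|T_{C,k}|=O_d(\ell^{-d})=O_d\bigl((1+\eps_0)^{-kd}\bigr)$; trivially also $|T_{C,k}|\le n$.

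Finally I would sum over classes. Using $L(T)=\sum_{C,k}\sum_{e\in T_{C,k}}\norm{\vec e}\le\sum_{C,k}|T_{C,k}|\,(1+\eps_0)^{k+1}$, split the sum over $k$ (for each fixed $C$) at the scale $k_0$ where $(1+\eps_0)^{k_0}\approx n^{-1/d}$, i.e.\ where the bounds $n$ and $O_d((1+\eps_0)^{-kd})$ cross. For $k\le k_0$ use $|T_{C,k}|\le n$; the resulting geometric series in $(1+\eps_0)^k$ is dominated by its largest term and sums to $O_d\bigl(n\cdot n^{-1/d}\bigr)=O_d(n^{1-1/d})$. For $k>k_0$ use the packing bound; the series becomes $O_d\bigl(\sum_{k>k_0}(1+\eps_0)^{-k(d-1)}\bigr)$, which converges because $d\ge 2$ and sums to $O_d\bigl((n^{-1/d})^{-(d-1)}\bigr)=O_d(n^{1-1/d})$. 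Multiplying by the $O_d(1)$ caps gives $L(T)\le c_d\,n^{1-1/d}$ for a suitable constant $c_d$ (and enlarging $c_d$ slightly makes the inequality strict). The main obstacle is the geometric core: translating $2$-optimality into the "well-separated midpoints" statement, in particular keeping the directed-edge bookkeeping consistent with the exact form of the $2$-optimality inequality (so it is indeed the tail-to-tail/head-to-head reconnection that appears), and pinning down $\alpha$ and $\eps_0$ so that edges in the same class genuinely have $\norm{\vec e_1-\vec e_2}$ bounded away from $\norm{\vec e_1}+\norm{\vec e_2}$; the scale-by-scale summation is then routine.
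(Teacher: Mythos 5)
The paper does not prove this lemma itself: it cites Chandra, Karloff, and Tovey (Theorem 5.1) and only sketches their strategy in one sentence (edges of similar length and direction must be far apart by 2-optimality, followed by a packing argument over length scales). Your proof is a correct, self-contained reconstruction of exactly that strategy --- the midpoint-separation inequality derived from the tail-to-tail/head-to-head reconnection, the cap-and-dyadic-scale bucketing, the volume packing bound, and the split of the scale sum at $\ell \approx n^{-1/d}$ all check out --- so it matches the approach the paper points to rather than offering a genuinely different route.
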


While this bound directly applies to any perturbed instance under the one-step model, Gaussian perturbations fail to satisfy the premise of bounded
support in $[0,1]^d$. However, Gaussian tails are sufficiently light to enable us to translate the result to the two-step model by carefully taking care of outliers.

\begin{lemma}\label{lem:2optSigma1}
There exists a constant $b_d$ such that for any $\sigma \le 1$ the following statement holds.  For any $\bar{X}$, the probability that any partial 2-optimal tour on $X\leftarrow \pert_\sigma(\bar X)$ has length greater than $b_d \cdot n^{1-1/d}$, i.e.,
$\TwoOPT(X) > b_d\cdot n^{1-1/d}$, is bounded by $\exp(-\Omega(\sqrt{n}))$. Furthermore, 
\[
   \expected_{X\leftarrow \pert_\sigma(\bar X)}\bigl (\TwoOPT(X)\bigr) \le b_d\cdot  n^{1-1/d}.
\]
\end{lemma}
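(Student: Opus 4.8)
I would prove the stronger statement that the bound holds simultaneously for \emph{every} partial $2$-optimal tour $P$ on $X$, by charging $L(P)$ to a dyadic family of nested cubes. Put $C_k := [-2^k,2^k]^d$, so that $[0,1]^d \subseteq C_1 \subseteq C_2 \subseteq \cdots$, and for each edge $e=(u,v)\in P$ let $k(e)$ be the smallest index with $X_u,X_v\in C_{k(e)+1}$; write $P_k := \{e\in P : k(e)=k\}$, so $P=\bigsqcup_{k\ge0} P_k$. The crucial observation is that, for $k\ge1$, every edge of $P_k$ has an endpoint in the annulus $R_k := C_{k+1}\setminus C_k$, and since each point has $P$-degree at most two, $|P_k|\le 2N_k$ where $N_k := |X\cap R_k|$; hence $P_k$ touches at most $4N_k$ points, all of them in the cube $C_{k+1}$ of side $2^{k+2}$. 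Now a subset of a partial $2$-optimal tour, restricted to the points it touches, is again a partial $2$-optimal tour, and $2$-optimality is invariant under the affine map rescaling $C_{k+1}$ onto $[0,1]^d$ (it scales all distances by the same factor). So \lemref{2optworst} applied to the rescaled configuration gives $L(P_k) < c_d\, 2^{k+2}(4N_k)^{1-1/d} = O_d(2^k N_k^{1-1/d})$ for $k\ge1$, and likewise $L(P_0) < 4c_d\, n^{1-1/d} = O_d(n^{1-1/d})$. Summing over $k$ yields the deterministic bound
\[
  \TwoOPT(X) \;<\; O_d\bigl(n^{1-1/d}\bigr) \;+\; O_d\Bigl(\textstyle\sum_{k\ge1} 2^k N_k^{1-1/d}\Bigr),
\]
whose right-hand side depends on $X$ only through the counts $N_k$.

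It would then remain to control $\sum_{k\ge1}2^k N_k^{1-1/d}$. A point lands in $R_k$ only if its Gaussian part has $\ell_\infty$-norm exceeding $2^k-1\ge 2^{k-1}$, so a union bound over the $d$ coordinates and a standard Gaussian tail estimate give $\expected N_k \le 2dn\exp(-4^{k-1}/(2\sigma^2)) \le 2dn\exp(-4^{k-1}/2)$, using $\sigma\le1$. For the expectation bound, concavity of $t\mapsto t^{1-1/d}$ gives $\expected[N_k^{1-1/d}]\le(\expected N_k)^{1-1/d}$, and because $d\ge2$ the exponent satisfies $1-1/d\ge\tfrac12$, so $\sum_{k\ge2}2^k\bigl(2dn\exp(-4^{k-1}/2)\bigr)^{1-1/d}$ is a fast-converging series equal to $O_d(n^{1-1/d})$; the $k\le1$ terms are $O(n^{1-1/d})$ since $N_k\le n$. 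Taking expectations in the displayed bound gives $\expected(\TwoOPT(X))\le b_d\, n^{1-1/d}$.

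For the tail estimate I would first apply the chi-square bound (\lemref{chisquare}) with $t=\exp(\sqrt n/d)$ and union-bound over the $n$ points: except with probability $\exp(-\Omega(\sqrt n))$, all points lie within $\ell_2$-distance $O(\sigma n^{1/4})=O(n^{1/4})$ of the origin, so $N_k=0$ for all $k>K^*$ with $K^*=O(\log n)$. For each $k$ with $2\le k\le K^*$, the Chernoff--Hoeffding bound (\lemref{chernoff}) with $\mu_H=\max(\expected N_k,\sqrt n)$ shows $N_k\le 2\max(\expected N_k,\sqrt n)$ except with probability $\exp(-\Omega(\sqrt n))$; a union bound over these $O(\log n)$ indices keeps the overall failure probability at $\exp(-\Omega(\sqrt n))$. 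On the complementary event,
\[
  \textstyle\sum_{k\ge1}2^k N_k^{1-1/d} \;\le\; O_d\bigl(n^{1-1/d}\bigr) \;+\; O\bigl(n^{(1-1/d)/2}\sum_{k=2}^{K^*}2^k\bigr) \;=\; O_d\bigl(n^{1-1/d}\bigr) + O\bigl(n^{3/4-1/(2d)}\bigr),
\]
and since $3/4-1/(2d)\le1-1/d$ for $d\ge2$, the whole expression is $O_d(n^{1-1/d})$; plugging this into the deterministic bound finishes the proof.

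The step I expect to be the crux is the deterministic decomposition, and specifically the observation that \emph{all} edges meeting the $k$-th annulus --- including long ``crossing'' edges reaching from far-out points deep into the bulk --- are incident to only $O(N_k)$ points, so that \lemref{2optworst} may be invoked with point count $N_k$ rather than $n$. Bounding such a crossing edge crudely by the diameter of $C_{k+1}$ would leave a term $\sum_k 2^k N_k$, whose expectation is $\Theta(n)$ when $\sigma=\Theta(1)$ (a constant fraction of the points may fall outside any fixed cube) and which would only yield the much weaker bound $O(n^{1-1/d}\sqrt{\log n})$; replacing $N_k$ by $N_k^{1-1/d}$ is exactly what makes the shell sum collapse to $O_d(n^{1-1/d})$.
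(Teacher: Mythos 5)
Your proposal is correct and follows essentially the same route as the paper's proof: a shell decomposition of the edges of the tour, the observation that the degree-$2$ bound lets you invoke the worst-case partial-2-optimal bound (Lemma~\ref{lem:2optworst}) on each rescaled shell with point count $O(N_k)$ rather than $n$, Jensen's inequality for the expectation, and Chernoff bounds for the inner shells combined with an a~priori radius bound of $O(n^{1/4})$ for the outer ones. The only cosmetic difference is that you use dyadic cubes of radius $2^k$ (giving doubly exponential decay of $\expected N_k$) where the paper uses cubes of radius $\Theta(\sqrt{k})$ (giving geometric decay); both make the shell sums collapse to $O_d(n^{1-1/d})$ with the same final exponent $n^{3/4-1/(2d)}$ for the outer-shell contribution.
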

\begin{proof}
By translation, assume without loss of generality that the input points are contained in $[-\nicefrac{1}{2},\nicefrac{1}{2}]^d$. We define cubes $C_1=\ell_1[-1,1]^d,C_2=\ell_2[-1,1]^d,\ldots$ with $\ell_k := 6\sqrt{kd\ln 3}$. The side length of cube $C_k$ is $2 \ell_k$.
We consider the partitioning of $\real^d$ into the regions $C_1$ and $C_i \setminus C_{i-1}$ for $i \geq 2$. For some cube $C$ and any tour $T$, let $E_C(T)$ denote the edges in $T$ that are completely contained in $C$. For any tour $T$, the sequence $E_1, E_2, \dots$ defined by $E_1 := E_{C_1}(T)$ and $E_k := E_{C_k}(T) \setminus E_{C_{k-1}}(T)$, for $k\ge 2$, partitions the edges of $T$. Thus, $L(T) = \sum_{k=1}^\infty L(E_k)$.

For any outcome of the perturbed points, let $T$ be the longest 2-optimal tour. Then, each $E_k$ is a partial 2-optimal tour in $C_k$. Let $n_k$ be the (random) number of points in $\real^d \setminus C_{k-1}$, which is an upper bound on the number of points in $C_k\setminus C_{k-1}$. At most $3n_k$ vertices are incident to the edges $E_k$, since each such edge is incident to at least one endpoint in $C_k \setminus C_{k-1}$ and every point has degree 2 in $T$. Since $C_k  = \ell_k [-1,1]^d$ is a translated unit cube scaled by $2\ell_k$, Lemma~\ref{lem:2optworst} yields
$L(E_k) \le c_d\cdot (2\ell_k) (3n_k)^{1-1/d}$.

Observe that $X_i$ is not contained in $C_k$ only if its origin has been perturbed by noise of length at least $\ell_k/2$.
Thus, let $Z\sim \Gauss(0, \sigma^2)$ 
and note that 
$\sigma \le 1$ implies that $\ell_k/2 \ge 3\sqrt{dk\ln 3}\sigma$. Hence, for each point $X_i$, Lemma~\ref{lem:chisquare}
yields
\[
   \probab(X_i \notin C_k) \le \probab\left(\norm{Z} \ge \frac{\ell_k}{2}\right) \le 3^{-(2.9d)k}.
\]
By linearity of expectation, we conclude that $\expected(n_k) \le n3^{-(2.9d)(k-1)}$ for $k\ge 1$. 
This yields
\begin{align*} 
\expected\bigl(L(T)\bigr) & =  \sum_{k=1}^\infty \expected(L(E_k)) 
 \le  \sum_{k=1}^\infty c_d \cdot (2\ell_k) (3\expected(n_k))^{1-1/d}\\
& \le  c_d \cdot 12\sqrt{d\ln 3}\cdot (3n)^{1-1/d} \left(\sum_{k=1}^\infty \sqrt{k} 3^{-2.9(d-1)(k-1)}\right) = O(n^{1-1/d}),
\end{align*} 
where we used Jensen's inequality for the first inequality.

To derive tail bounds for the length of any 2-optimal tour, let $N_k := n3^{-2.9d(k-1)}$ be the upper bound on $\expected(n_k)$ derived above. By the Chernoff bound (Lemma~\ref{lem:chernoff}), we have
\[ \probab(n_k \ge 2N_k) \le \exp(-N_k/3).\]
This guarantee is only strong as long as $N_k$ is sufficiently large. Hence, we regard this guarantee only for $1\le k \le k_1$, where $k_1$ is chosen such that $\sqrt{n} \le N_{k_1}\le 3^{2.9d}\sqrt{n}$. Assume that $n_k \le 2N_k$ for all $1\le k \le k_1$. Then, analogously to the above calculation, the contribution of $E_1, \dots, E_{k_1}$ is bounded by
\begin{align*}
  \sum_{k=1}^{k_1} L(E_k) & \le \sum_{k=1}^{k_1} c_d \cdot (2\ell_k) (6N_k)^{1-1/d} \\
& \le c_d \cdot 12\sqrt{d\ln 3}\cdot  (6n)^{1-1/d} \left(\sum_{k=1}^\infty \sqrt{k} 3^{-2.9(d-1)(k-1)}\right) = O(n^{1-1/d}).
\end{align*}
Let $p_1$ denote the probability that some $1\le k \le k_1$ fails to satisfy $n_k \le 2N_k$. Then,
\[ p_1 \le \sum_{k=1}^{k_1} \probab(n_k > 2N_k) \le k_1 \exp(-N_{k_1}/3) = \exp(-\Omega(\sqrt{n})).\]
Let us continue assuming that all $1\le k \le k_1$ satisfy $n_k \le 2N_k$. Since in particular $n_{k_1} \le 2 N_{k_1}$, at most $n_{k_1} \le 2\cdot 3^{2.9d}\sqrt{n}$ vertices remain outside $C_{k_1-1}$. Let $k_2 := \lceil \sqrt{n}\rceil $. By a union bound,
\[ p_2:= \probab(\exists j: X_j \notin C_{k_2}) \le n3^{-2.9d(k_2-1)} = \exp(-\Omega(\sqrt{n})). \]
Assume that the corresponding event holds (i.e., $X\subseteq C_{k_2}$), then the remaining points outside $C_{k_1-1}$ (and hence, outside $C_{k_1}$) are contained in $C_{k_2}$. We conclude that, with probability at least $1-(p_1+p_2)=1-\exp(-\Omega(\sqrt{n}))$, we have
\begin{align*}
 \sum_{k=k_1+1}^\infty L(E_k) & = \sum_{k=k_1+1}^{k_2} L(E_k) \le c_d \cdot (2\ell_{k_2}) (6N_{k_1})^{1-1/d} \\
& = O(\sqrt{k_2}(N_{k_1})^{1-1/d}) = O(n^{\frac{1}{4}+\frac{1}{2}\left(1-\frac{1}{d}\right)}) = O(n^{1-1/d}).
\end{align*}
This finishes the claim, since we have shown that with probability $1-\exp(-\Omega(\sqrt{n}))$, both the contribution of $E_1,\dots,E_{k_1}$ and $E_{k_1+1},\dots$ is bounded by $O(n^{1-1/d})$.
\end{proof}

We complement the bound above by a lower bound on tour lengths of perturbed inputs, making use of the following result by Englert et al.~\cite{EnglertEA:2Opt:2014} for the one-step model.
\begin{lemma}[\protect{Englert et al.~\cite[Proof of Theorem 1.4]{EnglertEA:2Opt:2014}}]\label{lem:phi-lbOPT}
Let $X_1,\dots,X_n$ be a $\phi$-perturbed instance. Then with probability $1-\exp(-\Omega(n))$, any tour on $X_1,\dots,X_n$ has length at least $\Omega(n^{1-1/d}/\sqrt[d]{\phi})$.
\end{lemma}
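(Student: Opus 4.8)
The plan is to prove the stronger statement that already the sum of nearest-neighbour distances of the perturbed points is of order $n^{1-1/d}/\sqrt[d]{\phi}$, since this quantity lower bounds the length of every tour. For a point set $X=\{X_1,\dots,X_n\}$ write $r_i := \min_{j\neq i}\norm{X_i-X_j}$. In any tour $T$ each vertex has degree two, and every edge $\{X_i,X_j\}$ has length at least $\max(r_i,r_j)\geq \tfrac12(r_i+r_j)$; summing over the $n$ edges of $T$ gives $L(T)\geq \tfrac12\sum_i \deg_T(i)\, r_i = \sum_{i=1}^n r_i$. Hence it suffices to show that $\sum_{i=1}^n r_i = \Omega(n^{1-1/d}/\sqrt[d]{\phi})$ with probability $1-\exp(-\Omega(n))$.

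I would fix a small constant $c=c_d>0$ and set $\rho := c\,(n\phi)^{-1/d}$, noting that then $\phi\, n\, (2\rho)^d = (2c)^d$. The goal becomes: with probability $1-\exp(-\Omega(n))$, at least $n/2$ of the indices $i$ satisfy $r_i\geq\rho$; this yields $\sum_i r_i \geq \tfrac{n}{2}\rho = \tfrac{c}{2}\, n^{1-1/d}\phi^{-1/d}$, as required. Call $i$ \emph{bad} if $r_i<\rho$. A bad index has a neighbour within distance $\rho$ of smaller or of larger index, so $\mathbf 1[r_i<\rho]\leq Y_i^- + Y_i^+$ with $Y_i^- := \mathbf 1[\exists j<i:\norm{X_i-X_j}<\rho]$ and $Y_i^+ := \mathbf 1[\exists j>i:\norm{X_i-X_j}<\rho]$. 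Thus the number of bad indices is at most $\sum_i Y_i^- + \sum_i Y_i^+$, and it is enough to bound each of these sums by $n/4$ with the desired probability.

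To bound $\sum_i Y_i^-$, reveal $X_1,\dots,X_n$ sequentially. Conditioned on $X_1,\dots,X_{i-1}$, the event $Y_i^-=1$ says that $X_i$ lands in $\bigcup_{j<i}B(X_j,\rho)$, a set of volume at most $(i-1)(2\rho)^d$; since the density of $X_i$ is bounded by $\phi$, we get $\expected[Y_i^-\mid X_1,\dots,X_{i-1}] \leq \phi(i-1)(2\rho)^d \leq (2c)^d$. Choosing $c$ so that $(2c)^d\leq \tfrac18$, the exponential-moment estimate $\expected[e^{\lambda Y_i^-}\mid X_1,\dots,X_{i-1}] \leq 1+(e^\lambda-1)/8$ (valid for all $\lambda>0$) holds at every step, and multiplying these out exactly as in the proof of the Chernoff--Hoeffding bound (Lemma~\ref{lem:chernoff}) gives $\probab\bigl(\sum_i Y_i^- > n/4\bigr)\leq \exp(-\Omega(n))$. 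Revealing the points in reverse order gives the same bound for $\sum_i Y_i^+$. A union bound over these two events, together with the reduction of the first paragraph, completes the proof.

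The one delicate point is the concentration: the indicators $[r_i<\rho]$ are not independent — a single point sitting in a dense spot could make many of them equal one — so neither a plain Chernoff bound nor a bounded-differences inequality (where one coordinate can flip $\Theta(n)$ indicators) is strong enough. The sequential-revelation device above is exactly what fixes this, by handing us variables whose \emph{conditional} means are uniformly below a small constant, after which domination by a binomial is routine. (An alternative, likely closer to the original argument, is a grid/packing proof: tile $[0,1]^d$ by $\Theta(n\phi)$ cubes of side $\Theta((n\phi)^{-1/d})$, show that $\Omega(n)$ of them are occupied with probability $1-\exp(-\Omega(n))$, and turn "many occupied cubes, pairwise far apart via a $3^d$-colouring of the grid" into a tour-length bound; the nearest-neighbour route seems cleaner and uses only the density bound.)
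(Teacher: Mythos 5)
Your proof is correct, and it is worth noting that the paper itself offers no proof of this lemma to compare against: the statement is imported wholesale from Englert et al.~\cite{EnglertEA:2Opt:2014} (proof of their Theorem~1.4), so what you have produced is a self-contained replacement for an external citation. The argument is sound at every step: the reduction $L(T)\ge\sum_i r_i$ is exact because each vertex has degree two in a tour and each edge $\{X_i,X_j\}$ has length at least $\tfrac12(r_i+r_j)$; the choice $\rho=c(n\phi)^{-1/d}$ makes each conditional probability $\probab(Y_i^-=1\mid X_1,\dots,X_{i-1})\le\phi(i-1)(2\rho)^d\le(2c)^d$ legitimate, since the union of $i-1$ balls of radius $\rho$ has volume at most $(i-1)(2\rho)^d$ and the conditional density of $X_i$ is still bounded by $\phi$ by independence of the points in the one-step model (the only distributional assumption the lemma needs); and the forward/backward split into $Y_i^-$ and $Y_i^+$ is exactly the right device to repair the dependence among the indicators $\mathbf 1[r_i<\rho]$ --- you correctly observe that neither a plain Chernoff bound nor bounded differences would suffice here, whereas each of your two sums is dominated by a sum of conditionally-Bernoulli$(1/8)$ variables, for which the exponential-moment calculation gives $\exp(-\Omega(n))$ tails (e.g.\ $\lambda=\ln 2$ yields exponent $n(1/8-\tfrac14\ln 2)<0$). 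The original argument of Englert et al.\ is in the same spirit --- a high-probability lower bound on nearest-neighbour distances of $\phi$-perturbed points, obtained via the density bound and a packing/counting step --- so your route is not a fundamentally different decomposition, but it is cleaner in that it uses only the pointwise density bound and independence, and it makes the concentration step fully explicit rather than deferring it to a cited computation.
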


It also follows from their results that this bound translates to the two-step model consistently with the intuitive correspondence of $\phi \sim \sigma^{-d}$ between the one-step and the two-step model. 
\begin{lemma}\label{lem:sigma-lbOPT}
Let $X_1,\dots,X_n$ be an instance of points in the unit cube perturbed by Gaussians of standard deviation $\sigma \le 1$. Then with probability $1-\exp(-\Omega(n))$ any tour on $X_1,\dots,X_n$ has length at least $\Omega(\sigma n^{1-1/d})$.
\end{lemma}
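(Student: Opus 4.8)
The plan is to reduce to the one-step lower bound of Lemma~\ref{lem:phi-lbOPT}. The obstacle is that a Gaussian is not supported on a bounded box, and naively rescaling by $1/\sigma$ would place the origins in the large cube $[0,1/\sigma]^d$ rather than in $[0,1]^d$, so Lemma~\ref{lem:phi-lbOPT} does not apply directly. I would circumvent this by restricting each point to a small box around its origin, keeping only a constant fraction of the points, and then rescaling that restricted instance into $[0,1]^d$.

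First, for each $i$ let $B_i$ be the event $\norm{Z_i}_\infty \le \sigma$; these are independent and each has probability $p_d := (2\Phi(1)-1)^d$, a constant depending only on the fixed dimension $d$ (here $\Phi$ is the standard normal cdf). Conditioned on $B_i$, the point $X_i$ has a density supported on the box $Q_i := x_i + [-\sigma,\sigma]^d$, bounded by $(2\pi\sigma^2)^{-d/2}/p_d$. Since all origins lie in $[0,1]^d$ and $\sigma \le 1$, each $Q_i$ is contained in $[-1,2]^d$; applying the affine map $u \mapsto (u+\one)/3$ maps every $Q_i$ into $[0,1]^d$, multiplies the density bound by $3^d$, and multiplies all pairwise distances by $1/3$. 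Thus, conditioned on $B_i$, the rescaled point $Y_i$ has density at most $\phi := 3^d(2\pi)^{-d/2}\sigma^{-d}/p_d$ on $[0,1]^d$, and one checks $\phi \ge 1$.

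Second, let $S = \{i : B_i \text{ holds}\}$. By the Chernoff bound (Lemma~\ref{lem:chernoff}), $\probab(|S| < (p_d/2)n) \le \exp(-\Omega(n))$. I would condition on an arbitrary value $S = S_0$ with $|S_0| \ge (p_d/2)n$. Since the $Z_i$ are independent and $\{S=S_0\}$ is a product event, conditioned on $S=S_0$ the points $\{Y_i\}_{i\in S_0}$ form a genuine $\phi$-perturbed instance of $|S_0| = \Theta(n)$ points, so Lemma~\ref{lem:phi-lbOPT} gives that with conditional probability $1-\exp(-\Omega(n))$ every tour on $\{Y_i\}_{i\in S_0}$ has length $\Omega(|S_0|^{1-1/d}/\phi^{1/d}) = \Omega(\sigma n^{1-1/d})$, with implied constant depending only on $d$ and uniform over $S_0$. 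Undoing the rescaling multiplies lengths by $3$, so the same bound holds for $\TSP(\{X_i\}_{i\in S_0})$; and since adding points to a Euclidean (or any normed-space) TSP instance cannot decrease the optimal tour length — delete the extra points one at a time and shortcut via the triangle inequality — we get $\TSP(X) \ge \TSP(\{X_i\}_{i\in S_0}) = \Omega(\sigma n^{1-1/d})$.

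Finally, I would assemble the pieces by summing over all admissible $S_0$: the conditional failure probabilities are uniformly $\exp(-\Omega(n))$ and are weighted by $\probab(S=S_0)$, which sum to at most $1$, and we add the $\exp(-\Omega(n))$ probability that $|S| < (p_d/2)n$. This yields $\TSP(X) = \Omega(\sigma n^{1-1/d})$ — hence every tour on $X$ has length $\Omega(\sigma n^{1-1/d})$ — with probability $1-\exp(-\Omega(n))$. The only mildly delicate point is the bookkeeping of the conditioning, so that Lemma~\ref{lem:phi-lbOPT} is applied to a sub-instance that is truly $\phi$-perturbed with independence intact; everything else is elementary rescaling, a Chernoff bound, and a union bound, and nothing in the argument is special to the Euclidean norm.
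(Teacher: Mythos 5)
Your proposal is correct and follows essentially the same route as the paper: condition each perturbation on landing in a bounded box so that the surviving points form a $\phi$-perturbed instance with $\phi = \Theta(\sigma^{-d})$, use a Chernoff bound to keep a constant fraction of the points, apply Lemma~\ref{lem:phi-lbOPT} to that subinstance, and conclude by monotonicity of the tour length. The only cosmetic differences are your choice of truncation box ($\|Z_i\|_\infty \le \sigma$ rather than a larger constant-size box) and your more explicit bookkeeping of the rescaling and of the conditioning on the set of surviving indices.
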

\begin{proof}
We summarize the arguments of Englert et al.~\cite[Section 6]{EnglertEA:2Opt:2014} first, who considered truncated Gaussian perturbations: Here, we condition the Gaussian perturbation $Z_i$ for each input point $X_i$ to be contained in $A := [-\alpha, \alpha]^d$ for some $\alpha \ge 1$. Conditioned on this event, the resulting input instance is contained in the cube $C := [-\alpha,1+\alpha]^d$. By straight-forward calculations, the conditional distribution of each point in $C$ has maximum density bounded by $O(\alpha^d/\sigma^d)$. Moreover, the probability that the condition fails for a single point is bounded by $\probab(Z_i \notin A) \le d\sigma \exp(-\alpha^2/(2\sigma^2))$ for all $i$. Thus, by choosing $\alpha\ge 1$ sufficiently large, each point has at least constant probability to satisfy the condition $Z_i\in A$.

Given any instance (with Gaussian perturbations which are not truncated), first reveal the (random) subinstance of those points for which the condition $Z_i \in A$ is satisfied and let $n'$ be the number of such points. By the Chernoff bound (Lemma~\ref{lem:chernoff}), and $\probab(Z_i\in A)= \Omega(1)$, we have $n' \ge c\cdot n$ for some $c>0$ with probability at least $1-\exp(-\Omega(n))$. If this event occurs, we obtain a random instance of $n'\ge cn$ points and maximum density $\phi = O(\alpha^d/\sigma^d)$. Hence an application of Lemma~\ref{lem:phi-lbOPT} yields that, for some constant $c'>0$, the probability that a tour of length less than $c' \cdot (n')^{1-1/d}/\sqrt[d]{\phi} = O((\sigma/\alpha) n^{1-1/d})=O(\sigma n^{1-1/d})$ exists is at most $\exp(-\Omega(n))+\exp(-\Omega(n'))=\exp(-\Omega(n))$.
\end{proof}

Note that Lemmas~\ref{lem:2optSigma1} and~\ref{lem:sigma-lbOPT} almost immediately yield the following bound on the approximation performance for the two-step model.
(The large-deviation bound is immediate. For the expected approximation ratio, we make use of the worst-case bound of $O(\log n)$, given in \lemref{logworstcase}
below.)
\begin{observation}\label{obs:easyApprox}
Let $X_1,\dots,X_n$ be an instance of points in the unit cube perturbed by Gaussians of standard deviation $\sigma \le 1$. Then the approximation performance of 2-Opt is bounded by $O(1/\sigma)$ in expectation and with probability $1-\exp(-\Omega(\sqrt{n}))$.
\end{observation}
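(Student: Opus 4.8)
The plan is to combine the upper bound on the longest 2-optimal tour from \lemref{2optSigma1} with the lower bound on the optimal tour length from \lemref{sigma-lbOPT}, and to absorb the low-probability events on which one of these bounds fails by falling back on the deterministic worst-case approximation guarantee of $O(\log n)$ from \lemref{logworstcase}.

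First I would establish the large-deviation statement. Let $\mathcal{E}_1$ be the event that $\TwoOPT(X) \le b_d \cdot n^{1-1/d}$ and let $\mathcal{E}_2$ be the event that every tour through $X$ has length at least $c_L \cdot \sigma n^{1-1/d}$, where $c_L > 0$ is the constant hidden in \lemref{sigma-lbOPT}. By \lemref{2optSigma1} we have $\probab(\overline{\mathcal{E}_1}) \le \exp(-\Omega(\sqrt n))$, and by \lemref{sigma-lbOPT} we have $\probab(\overline{\mathcal{E}_2}) \le \exp(-\Omega(n))$, so a union bound gives $\probab(\mathcal{E}_1 \cap \mathcal{E}_2) \ge 1 - \exp(-\Omega(\sqrt n))$. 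Since $\TSP(X)$ is the length of a particular tour through $X$, on the event $\mathcal{E}_1 \cap \mathcal{E}_2$ we obtain
\[
  \frac{\TwoOPT(X)}{\TSP(X)} \le \frac{b_d \cdot n^{1-1/d}}{c_L \cdot \sigma n^{1-1/d}} = O\!\left(\frac{1}{\sigma}\right),
\]
which is the claimed bound with probability $1-\exp(-\Omega(\sqrt n))$. (Here $\TSP(X)>0$ almost surely, since the perturbed points are pairwise distinct with probability $1$, so the ratio is well defined.)

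For the bound in expectation I would split on whether $\mathcal{E}_1 \cap \mathcal{E}_2$ holds. On this event the ratio is $O(1/\sigma)$ by the display above. On the complementary event, which has probability at most $\exp(-\Omega(\sqrt n))$, I would invoke the deterministic worst-case guarantee $\TwoOPT(X)/\TSP(X) = O(\log n)$ from \lemref{logworstcase}, which holds for every realization of $X$. Therefore
\[
  \expected_{X\leftarrow \pert_\sigma(\bar X)}\!\left(\frac{\TwoOPT(X)}{\TSP(X)}\right) \le O\!\left(\frac{1}{\sigma}\right) + \exp(-\Omega(\sqrt n)) \cdot O(\log n) = O\!\left(\frac{1}{\sigma}\right),
\]
where the last step uses $\exp(-\Omega(\sqrt n))\cdot O(\log n) = o(1)$ together with $\sigma \le 1$, so that the $o(1)$ term is absorbed into $O(1/\sigma)$.

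Since this is essentially a bookkeeping combination of results already in place, there is no genuine technical obstacle. The only points that require a little care are that the worst-case bound of \lemref{logworstcase} must be used as a deterministic (distribution-free) statement so that it legitimately covers the low-probability failure event, and that one should note $\TSP(X)>0$ almost surely so that the approximation ratio is well defined; both the $d$-dependence of $b_d$ and $c_L$ and the additive $o(1)$ term are harmless because $d$ is a fixed constant and $\sigma \le 1$.
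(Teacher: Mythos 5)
Your proposal is correct and follows exactly the route the paper intends: it combines Lemma~\ref{lem:2optSigma1} and Lemma~\ref{lem:sigma-lbOPT} via a union bound for the high-probability statement, and absorbs the failure event using the deterministic $O(\log n)$ worst-case bound of Lemma~\ref{lem:logworstcase} for the expectation. The paper gives only this sketch in the sentence preceding the observation, so your write-up is a faithful (and slightly more explicit) version of the same argument.
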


We remark that this bound is best possible for an analysis of perturbed instances that separately bounds the lengths of any 2-optimal tour from above and gives a lower bound on any optimal tour. To see this, we argue that Lemma~\ref{lem:phi-lbOPT}, Lemma~\ref{lem:2optworst} (even under $\phi$-perturbed input),
Lemma~\ref{lem:sigma-lbOPT} and Lemma~\ref{lem:2optSigma1} cannot be improved in general. This is straight-forward for Lemma~\ref{lem:phi-lbOPT}, since $n$ points distributed uniformly at random in a cube of volume $1/\phi$ always have, by scaling and Lemma~\ref{lem:2optworst}, a tour of length $O(n^{1-1/d}/\sqrt[d]{\phi})$. Hence, the lower bound on optimal tours on perturbed instances is tight. To see that the upper bound on any 2-optimal tour is tight, take $n$ uniformly distributed points that have, by Lemma~\ref{lem:phi-lbOPT}, an optimal tour of length $\Omega(n^{1-1/d})$ with high probability and thus also in expectation.

Naturally, this transfers to the case of Gaussian perturbations, albeit more technical to verify: If we place $n$ identical points in $[0,1]^d$, say at the origin, and perturb them with Gaussians of standard deviation $\sigma$, then we may without loss of generality scale the unit cube to $[0,1/\sigma]^d$ and perturb the points with standard deviation $1$ instead. By Lemma~\ref{lem:2optSigma1}, any 2-optimal tour and, thus, any optimal tour on these points has a length of $O(n^{1 - 1/d})$ on the scaled instance, since the origins are still contained in the unit cube.
Thus, the optimal tour on the original instance has a length of at most $O(\sigma \cdot n^{1-1/d})$ in expectation and with high probability.

We only sketch that 2-optimal tours can have a length of at least $\Omega(n^{1- 1/d})$:
We distribute the $n$ (unperturbed) points into $1/\sigma^d$ groups of $\sigma^d n$ points each, and we partition the cube $[0,1]^d$ into $1/\sigma^d$
subcubes of equal side length. Let $c > 0$ be a constant such that with high probability, at least $c \sigma^d n$ points of a group remain in their subcube after perturbation.
We call these points successful.
Since successful points are identically distributed, conditioned on falling into a compact set, the shortest tour through these (at least) $c \sigma^d n$ points has a length of at least $\sigma\cdot c'(\sigma^d n)^{1-1/d}=c'\sigma^d n^{1-1/d}$ for some other constant $c'> 0$~\cite{Yukich:ProbEuclidean:1998}.
(This is just a scaled version of perturbing and truncating a Gaussian of standard deviation $1$ to a unit hypercube, which would result in a tour length
of $m^{1-1/d}$ for $m$ points.)
By closeness of the tour on all points to the boundary functional and geometric superadditivity of the boundary functional
(see Yukich~\cite{Yukich:ProbEuclidean:1998} for details), it follows that the optimal tour on all successful points is at least
$\Omega\bigl((1/\sigma^d) \cdot \sigma^dn^{1-1/d}\bigr) = \Omega(n^{1-1/d})$.
 
\subsection{Upper Bound on the Approximation Performance}
\label{sec:upper}

In this section, we establish an upper bound on the approximation performance of 2-Opt under Gaussian perturbations.
We achieve a bound of $O(\log 1/\sigma)$. 
Due to the lower bound presented in Section~\ref{sec:lb}, improving the smoothed approximation ratio to $o\bigl(\log(1/\sigma)/\log\log(1/\sigma)\bigr)$
is impossible.
Thus, our bound is almost tight.

As noted in the previous section, to beat $O(1/\sigma)$ it is essential to exploit the structure of the unperturbed input. This will be achieved by classifying edges of a tour into \emph{long} and \emph{short} edges and bounding the length of long edges by a (worst-case) global argument and short edges locally against the partial optimal tour on subinstances (by a reduction to an (almost-)average case). The local arguments for short edges will exploit how many unperturbed origins lie in the vicinity of a given region.

The global argument bounding long edges follows from the worst-case $O(\log n)$ bound on the worst-case approximation performance~\cite{ChandraEA:OldOpt:1999} that we rephrase here for our purposes.

\begin{lemma}[\protect{\cite[Proof of Theorem 4.3]{ChandraEA:OldOpt:1999}}]\label{lem:logworstcase}
Let $T$ be a 2-optimal tour and $\OPT$ denote the length of the optimal traveling salesperson tour $T_\OPT$. Let $T_i$ contain the set of all edges in $T$ whose length is in $[\OPT/2^i,\OPT/2^{i-1}]$. Then $L(T_i) = O(\OPT)$. In particular, it follows that $L(T)= O(\log n)\cdot \OPT$.
\end{lemma}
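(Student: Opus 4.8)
The plan is to follow the packing argument of Chandra et al.~\cite{ChandraEA:OldOpt:1999}, rephrased so that it bounds each individual length class $T_i$ by $O(\OPT)$; the final bound $L(T) = O(\log n)\cdot\OPT$ is then only bookkeeping. Indeed, any edge $e$ of $T$ has length at most $\OPT/2$, because the two arcs of $T_\OPT$ joining its endpoints each have length at least $\norm{e}$ and together sum to $\OPT$; hence every edge of length at least $\OPT/n$ lies in one of the classes $T_1,\dots,T_{\lceil\log_2 n\rceil+1}$, which therefore contribute $O(\log n)\cdot\OPT$ in total, while the at most $n$ edges of length below $\OPT/n$ contribute less than $\OPT$ altogether.

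For a fixed class $T_i$ put $\ell := \OPT/2^i$, so every edge of $T_i$ has length in $[\ell,2\ell]$. I would refine $T_i$ into $O(1)$ \emph{sub-classes}, a sub-class consisting of those edges whose length lies in a fixed interval $[\rho,(1+\gamma)\rho]$ and whose direction (oriented along the tour) lies in a fixed spherical cap of angular diameter at most $\beta$, where $\gamma$ (say $\gamma=\tfrac12$) and $\beta$ are small absolute constants with $\gamma+2\beta<1$. Since $[\ell,2\ell]$ has bounded multiplicative width and $S^{d-1}$ is covered by $O_d(1)$ such caps, $T_i$ splits into $O(1)$ sub-classes. The one genuinely geometric step is the claim that \emph{any two distinct edges $e_1=(a_1,b_1)$, $e_2=(a_2,b_2)$ in the same sub-class have start points at distance more than $\rho/2$.} To see this, note near-equal lengths and near-parallel directions give $\norm{(b_1-a_1)-(b_2-a_2)}\le\rho(\gamma+2\beta)$, so $\norm{a_1-a_2}\le\rho/2$ would force $\norm{a_1-a_2}+\norm{b_1-b_2}\le\rho(1+\gamma+2\beta)<2\rho\le\norm{e_1}+\norm{e_2}$; after relabelling so that $e_1$ precedes $e_2$, the four points occur in cyclic order $a_1,b_1,a_2,b_2$, so replacing $e_1,e_2$ by $\{a_1,a_2\}$ and $\{b_1,b_2\}$ is a valid 2-change that shortens $T$, contradicting 2-optimality. (Adjacent edges cause no trouble, since then one start point is an endpoint of the other edge and the two start points are automatically at distance at least $\rho$.)

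It then remains to charge lengths to $T_\OPT$. For each edge $e=(a_e,b_e)$ of a fixed sub-class, trace an arc $A_e$ of $T_\OPT$ of length $\rho/10$ starting at $a_e$; this is well defined since $a_e$ is one of the input points and hence lies on $T_\OPT$, and $\rho/10<\OPT$. Every point of $A_e$ is within Euclidean distance $\rho/10$ of $a_e$, so $A_e$ lies in the ball of radius $\rho/10$ about $a_e$; by the claim these balls are pairwise disjoint (their centres are more than $\rho/2>\rho/5$ apart), hence the arcs $A_e$ are pairwise disjoint subarcs of $T_\OPT$. Consequently the sub-class contains at most $\OPT/(\rho/10)=10\,\OPT/\rho$ edges, of total length at most $(1+\gamma)\rho\cdot 10\,\OPT/\rho=O(\OPT)$. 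Summing over the $O(1)$ sub-classes yields $L(T_i)=O(\OPT)$, and then the deduction in the first paragraph finishes the proof.

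The only non-mechanical points are the geometric claim and the choice of $\gamma,\beta$ making the strict inequality $\gamma+2\beta<1$ work, plus the careful accounting of which edges land in which class; I expect these (still elementary) estimates to be where all the care is needed. What makes the bound scale with $\OPT$ rather than with the diameter of the instance is precisely that the arcs are carved out of the optimal tour itself, so disjointness of the arcs translates directly into an $O(\OPT)$ budget instead of a volume/packing bound inside a bounding box.
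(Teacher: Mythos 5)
Your proof is correct, and since the paper does not prove Lemma~\ref{lem:logworstcase} itself but imports it from the proof of Theorem~4.3 of Chandra, Karloff, and Tovey, the relevant comparison is with that source: your reconstruction (partition each length class by direction, use 2-optimality to force the tails of same-direction, same-scale edges to be $\Omega(\rho)$-separated, then charge disjoint pieces of $T_\OPT$ to these tails) is essentially that argument. The constants and the handling of adjacent edges and of the sub-$\OPT/n$ edges all check out, so nothing further is needed.
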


In the proof of our bound of $O(\log 1/\sigma)$, the above lemma accounts for all edges of length $[\Omega(\sigma),O(1)]$.
A central idea to bound all shorter edges is to apply the one-step model result to small parts of the input space. In particular, we will condition sets of points to be perturbed into cubes of side length $\sigma$. The following technical lemma helps to capture what values of $\phi$ suffice to express the conditional density function of these points depending on the distance of their unperturbed origins to the cube. This allows for appealing to the one-step model result of Lemma~\ref{lem:phi-lbOPT}.

\begin{lemma}\label{lem:squaredens}
Let $c\in [0,\sigma]^d$ and $k=(k_1,\dots,k_d) \in \mathbb{N}_0^d$. Let $Y$ be the random variable $X\sim \Gauss(c,\sigma^2)$ conditioned on $X \in Q:=[k_1\sigma,(k_1+1)\sigma]\times \cdots \times[k_d\sigma,(k_d+1)\sigma]$ and $f_Y$ be the corresponding probability density function. Then $f_Y$ is bounded from above by $\exp(\norm{k}_1+(3/2)d)\sigma^{-d}$.
\end{lemma}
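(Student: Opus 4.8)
The plan is to write the conditional density out explicitly and reduce the claim to comparing the density of $X\sim\Gauss(c,\sigma^2)$ at two arbitrary points of the cube $Q$. First I would note that for $x\in Q$ we have $f_Y(x)=f_X(x)/\probab(X\in Q)$, where $f_X$ is the density of $\Gauss(c,\sigma^2)$ and $\probab(X\in Q)=\int_Q f_X(t)\,\textup dt>0$ (so the conditioning is well defined), while $f_Y(x)=0$ for $x\notin Q$. Since $\vol(Q)=\sigma^d$, it suffices to prove that $f_X(x)\le \exp(\norm{k}_1+(3/2)d)\,f_X(t)$ for all $x,t\in Q$: integrating this bound over $t\in Q$ gives $f_X(x)\,\sigma^d\le \exp(\norm{k}_1+(3/2)d)\,\probab(X\in Q)$, which is exactly the assertion.

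The core of the proof is therefore an in-cube ratio bound. Because $\Gauss(c,\sigma^2)$ has a product density, $\ln\!\bigl(f_X(x)/f_X(t)\bigr)=\sum_{i=1}^d \frac{(t_i-c_i)^2-(x_i-c_i)^2}{2\sigma^2}$, and the key observation is that the quadratic-in-$k_i$ terms cancel: $(t_i-c_i)^2-(x_i-c_i)^2=(t_i-x_i)(t_i+x_i-2c_i)$. Since $x_i$ and $t_i$ both lie in an interval of length $\sigma$ we have $|t_i-x_i|\le\sigma$; and since $0\le x_i,t_i\le (k_i+1)\sigma$ and $c_i\ge 0$ we get $t_i+x_i-2c_i\le 2(k_i+1)\sigma$ (a crude lower bound $\ge -2\sigma$ handles the sign), so $\bigl|(t_i-c_i)^2-(x_i-c_i)^2\bigr|\le 2(k_i+1)\sigma^2$. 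Each summand is thus at most $k_i+1$, giving $f_X(x)/f_X(t)\le \exp(\norm{k}_1+d)$, which is already slightly stronger than what is needed. (Equivalently, one may argue coordinate by coordinate, using independence of the $d$ coordinates of $X$ together with the one-dimensional version of this estimate.)

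The main obstacle is recognizing that the naive route fails: bounding $f_X(x)$ by the maximum Gaussian density $\Theta(\sigma^{-d})$ and lower-bounding $\probab(X\in Q)$ by $\sigma^d$ times the minimum of $f_X$ over $Q$ produces a blow-up of order $\exp(\Theta(\norm{k}_2^2))$ rather than $\exp(\Theta(\norm{k}_1))$ — which is genuinely the truth, since the mass in a cube at squared distance $\approx\norm{k}_2^2\sigma^2$ from $c$ really is that small. The fix is to never compare $f_X$ at a point of $Q$ with its global maximum, but only with other values of $f_X$ inside $Q$, so that the $\Theta(\norm{k}_2^2)$ contributions cancel and only the $\Theta(\norm{k}_1)$ ``gradient over a width-$\sigma$ cube'' term remains. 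Once this is in place, the rest — writing the product density, the elementary factorization, the two one-line estimates on $|t_i-x_i|$ and $t_i+x_i-2c_i$, and the final integration over $Q$ — is entirely routine, and the slack in the constant ($d$ versus the stated $(3/2)d$) leaves comfortable room.
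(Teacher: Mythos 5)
Your proof is correct and follows essentially the same strategy as the paper's: bound $f_Y(x)$ by $\bigl(\max_{z\in Q}f_X(z)\bigr)/\bigl(\min_{z\in Q}f_X(z)\cdot\vol(Q)\bigr)$ and then control the in-cube density ratio by $\exp(O(\norm{k}_1+d))$, which is exactly the ``never compare against the global maximum'' insight the paper also uses. The only difference is cosmetic: the paper bounds the ratio by comparing the extremal point $q=\argmin_{z\in Q}\norm{z-c}$ against the shifted point $q+\sigma\one$, whereas you use the coordinatewise factorization $(t_i-c_i)^2-(x_i-c_i)^2=(t_i-x_i)(t_i+x_i-2c_i)$ for arbitrary $x,t\in Q$, which even yields the marginally sharper constant $\exp(\norm{k}_1+d)$.
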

\begin{proof}
Let $f_X(x)=\frac{1}{(2\pi)^{d/2}\sigma^d}\cdot \exp(-\frac{\norm{x-c}^2}{2\sigma^2})$ be the probability density function of $X$. Let $q:=\argmin_{z\in Q} \norm{z-c}$ be the point in $Q$ that is closest to $c$. Then, since $f_X(x)$ is rotationally invariant around $c$ and decreasing in $\norm{x-c}$, the density $f_X(x)$ inside~$Q$ is maximized at $x=q$. Likewise, $q':=\argmax_{z\in Q}\norm{z-c}$ minimizes the density inside $Q$. Since $Q$ is a $(\sigma\times \cdots \times \sigma)$-cube in $\mathbb{R}_{\ge 0}$, $\norm{q'-c} \le \norm{(q + \sigma \one)-c}$, where $\one= (1,\dots,1) \in \real^d$ denotes the all-ones vector. Given $g(q):=\frac{f_X(q)}{f_X(q+\sigma \one)}$, we can thus bound the conditional probability density function $f_Y$ for $x\in Q$ by
\[f_Y(x) = \frac{f_X(x)}{\int_Q f_X(y) dy} \le \frac{f_X(x)}{f_X(q')\vol(Q)} \le \frac{f_X(q)}{f_X(q+\sigma \one)}\cdot \frac{1}{\sigma^d} =\frac{g(q)}{\sigma^d}.\]
It remains to bound, for $x\in Q$, 
\[g(x) = \frac{f(x)}{f(x+\sigma \one)} =  \exp\left(\frac{\norm{(x-c)+\sigma \one}^2-\norm{x-c}^2}{2\sigma^2}\right) = \exp\left(\frac{\norm{x-c}_1}{\sigma}+\frac{d}{2}\right).\]
Since for all $x\in Q$, $\norm{x-c}_1 \le (\norm{k}_1 + d)\sigma$, we can bound $g(x)\le \exp(\norm{k}_1 + (3/2)d)$, yielding the claim.
\end{proof}

The main result of this section is the following theorem, which will be proved in the remainder of the section.

\begin{theorem}\label{thm:upper}
Let $X=(X_1,\dots,X_n)$ be an instance of points in $[0,1]^d$ perturbed by Gaussians of standard deviation $\sigma \le 1$.  With probability $1-\exp(-\Omega(n^{1/2-\varepsilon}))$ for any constant $\varepsilon > 0$, we have $\TwoOPT(X) = O(\log(1/\sigma)) \cdot \TSP(X)$.
Furthermore,
\[\expected\left(\frac{\TwoOPT(X)}{\TSP(X)}\right) = O(\log(1/\sigma)).
\]
\end{theorem}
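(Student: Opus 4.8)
The plan is to split the edges of any 2-optimal tour $T$ into \emph{long} edges (length at least $\sigma$, say) and \emph{short} edges (length less than $\sigma$), and to bound the total length of each class against $\TSP(X)$ separately. For the long edges, I would appeal directly to \lemref{logworstcase}: grouping the long edges by dyadic length scales $[\OPT/2^i,\OPT/2^{i-1}]$, only $O(\log(1/\sigma))$ of these scales contain edges of length at least $\sigma$, because by \lemref{2optSigma1} (and \lemref{sigma-lbOPT}) we have $\OPT = \Theta(\sigma n^{1-1/d})$ up to polynomial factors, so the ratio between the longest possible edge (of order the diameter $\maxx = O(1)$, which holds with high probability) and $\sigma$ is at most $\poly(n)/\sigma$, but the relevant count of \emph{non-empty} dyadic classes down to length $\sigma$ is $\log(\OPT/\sigma) = O(\log(n/\sigma)) = O(\log(1/\sigma))$ once we also fold in the worst-case $O(\log n)$ bound to absorb the $\log n$ term (this is exactly the promised use of \lemref{logworstcase} for the expected-ratio statement). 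Each such class contributes $O(\OPT)$ by \lemref{logworstcase}, so the long edges total $O(\log(1/\sigma))\cdot \OPT$.

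For the short edges, the key idea is the localization announced before \lemref{squaredens}: tile the bounding box (which is $[-\maxx,\maxx]^d$ with probability $1-1/n!$ by \lemref{maxx}) into axis-parallel cubes of side length $\Theta(\sigma)$. A short edge has both endpoints within a bounded number of adjacent cubes, so up to a constant blow-up in the number of cubes considered we may charge each short edge to a single cube. Within one cube $Q$, the edges of $T$ with both endpoints in $Q$ form a \emph{partial 2-optimal tour} (this is where the ``partial'' generalization of 2-optimality matters), so by the scaled version of \lemref{2optworst} their total length is $O(\sigma \cdot m_Q^{1-1/d})$, where $m_Q$ is the number of perturbed points landing in $Q$. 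Summing over all $O((\maxx/\sigma)^d) = O(1/\sigma^d)$ cubes and using concavity of $x\mapsto x^{1-1/d}$ together with $\sum_Q m_Q \le n$, the short edges contribute at most $O(\sigma)\cdot (1/\sigma^d)^{1/d}\cdot n^{1-1/d} = O(n^{1-1/d})$ deterministically; but we need this to be $O(\log(1/\sigma))\cdot \OPT$, i.e. we need a matching \emph{lower} bound $\OPT = \Omega(n^{1-1/d})$ on a good fraction of the cubes.

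This is where \lemref{squaredens} and \lemref{phi-lbOPT} come in, and where I expect the main obstacle to lie. For a cube $Q$ at ``grid distance'' $\norm{k}_1$ from the strip of origins, conditioning the points whose origin is near $Q$ to land in $Q$ yields a conditional density bounded by $\phi_k = \exp(\norm{k}_1 + O(d))\sigma^{-d}$ by \lemref{squaredens}; applying \lemref{phi-lbOPT} inside $Q$ (after rescaling $Q$ to the unit cube) shows that if $m_Q$ points fall into $Q$ then the optimal tour restricted to them, hence a corresponding chunk of the global $\OPT$, has length $\Omega(\sigma\cdot m_Q^{1-1/d}/\phi_k^{1/d}) = \Omega(\sigma\cdot m_Q^{1-1/d}\cdot e^{-\norm{k}_1/d})$. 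Thus a cube only gives a ``weak'' lower bound when $\norm{k}_1$ is large, i.e. when $Q$ is far from the origin strip; but $Q$ being far from the strip also means the \emph{expected} number of points in $Q$ is exponentially small in $\norm{k}_1$, so such far cubes contribute negligibly to the short-edge total in the first place (a Chernoff bound, \lemref{chernoff}, controls the deviation, and is the source of the $\exp(-\Omega(n^{1/2-\eps}))$ failure probability — one restricts the union bound to cubes whose expected occupancy is at least $n^{1/2-\eps}$ or so, analogously to the $k_1$ truncation in the proof of \lemref{2optSigma1}). Balancing the $e^{\norm{k}_1/d}$ loss in the lower bound against the $e^{-\norm{k}_1}$ decay in occupancy, and summing the geometric-type series over $k$ and over dyadic scales of $\sigma$ (one needs to handle $\log(1/\sigma)$ scales of cube side lengths to catch short edges of all lengths down to essentially $0$, since edges of length $\ll \sigma$ still need to be charged somewhere — those of length below $\exp(-\poly(n))$ can be absorbed trivially), produces the factor $O(\log(1/\sigma))$. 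The expected-ratio bound then follows by combining the high-probability estimate with the deterministic worst-case bound $\TwoOPT(X) = O(\log n)\cdot\TSP(X)$ from \lemref{logworstcase} on the exceptional event, exactly as in \lemref{2optSigma1}.
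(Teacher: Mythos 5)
Your overall skeleton (long edges handled by \lemref{logworstcase} over $O(\log(1/\sigma))$ dyadic scales; short edges localized to a grid of $\sigma$-cells, with local upper bounds from \lemref{2optworst} and local lower bounds from \lemref{squaredens} combined with \lemref{phi-lbOPT}) matches the paper's proof, and the long-edge half is essentially right. The gap is in the short-edge half, concretely in the sentence ``$Q$ being far from the strip also means the expected number of points in $Q$ is exponentially small in $\norm{k}_1$, so such far cubes contribute negligibly.'' The expected number of points landing in a cell $C_i$ from origins at grid distance $k$ is at most $n(D_k)\exp(-\Omega((k-d)^2/d))$, where $n(D_k)$ counts \emph{origins} at that distance --- and $n(D_k)$ is adversarial and can be as large as $n$. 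Hence a cell can be heavy (expected occupancy far above the sparseness threshold $n/\alpha$) while essentially all of its occupants originate at distance $k$ with $\log\log(1/\sigma) \ll k \ll \sqrt{\log(1/\sigma)}$. For such a cell the local lower bound via \lemref{squaredens} and \lemref{phi-lbOPT} loses a factor $\exp(\norm{k}_1/d)$, which is $\omega(\polylog(1/\sigma))$, and there is no compensating decay in the cell's occupancy: your proposed ``balancing'' compares the per-point travel probability with the total occupancy, which are different quantities. Left as is, this part of your argument yields only a ratio of roughly $\exp(O(\sqrt{\log(1/\sigma)}))$ for these cells, not $O(\log(1/\sigma))$.

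The missing idea is the non-local charging of type-\itemref{Tfar} cells in \lemref{heavyCells}: if a heavy cell $C_i$ receives most of its mass from origins at distance $k\in(k_1,k_2]$ with $k_1=\Theta(\log\log(1/\sigma))$ and $k_2=\Theta(\sqrt{\log(1/\sigma)})$, then inverting the Gaussian tail forces some nearby cell $C_j$ to contain $n_i\exp(\Omega((\log\log(1/\sigma))^2))$ origins; by \lemref{IntoOwnSquare} a constant fraction of these stay in $C_j$, so $L(E_{C_j}(T_\OPT))$ exceeds $L(E_{C_i}(T))$ by any prescribed $\polylog(1/\sigma)$ factor, and one charges $C_i$ to $C_j$, with the charging multiplicity $O(k_2^d)=\polylog(1/\sigma)$ controlled. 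Only cells whose mass comes from distance at most $k_1$ can be compared locally, and there the loss $\exp(O(k_1))=O(\log(1/\sigma))$ is exactly the claimed factor. Two smaller points: a single grid of side $\sigma$ already covers all short edges, however short (no sum over dyadic cube scales is needed, since \lemref{2optworst} is applied per cell), and the bounding box is not $O(1)$ when $\sigma=\omega(1/\sqrt{n\log n})$, so the outliers require the separate growing-cube treatment of \lemref{outliers} rather than an appeal to $\maxx=O(1)$.
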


Since the approximation performance of 2-Opt is bounded by $O(\log n)$ in the worst-case, we may assume that $1/\sigma = O(n^\varepsilon)$ for all constant $\varepsilon>0$, since otherwise our smoothed result is superseded by Lemma~\ref{lem:logworstcase}. Furthermore, we may also assume that $1/\sigma = \omega(1)$, since otherwise \obsref{easyApprox} already yields the result. In what follows, let $T_\OPT$ and $T$ be any optimal and longest 2-optimal, respectively, traveling salesperson tour on $X_1,\dots,X_n$. Furthermore, we let $\OPT = L(T_\OPT)$ denote the length of the shortest traveling salesperson tour.

\subsubsection{Outliers and Long Edges}

We will first show that the contribution of almost all points outside $[0,1]^d$ is bounded by $O(\sigma n^{1-1/d})$ with high probability and in expectation, similar to Lemma~\ref{lem:2optSigma1}. For this, we define growing cubes $A_i := [-a_i, 1+a_i]^d$, where we set $a_i := 3\sigma \sqrt{di\ln(3/\sigma)}$ for $i\ge 1$ and $A_0 = [0,1]^d$. Let $n_i$ be the number of points not contained in $A_{i-1}$. For every point $X_j$, Lemma~\ref{lem:chisquare} with $t:=(3/\sigma)^i$ bounds $\probab(X_j \notin A_{i-1}) \le (\sigma/3)^{2.9d(i-1)}$ (note that we have chosen the $a_i$ such that $t\ge 3$).
Thus, $\expected(n_i)\le n(\sigma/3)^{2.9d(i-1)}$. 
We define $E_i$ as the set of edges of the longest 2-optimal tour $T$ contained in $A_i$ with at least one endpoint in $A_i\setminus A_{i-1}$. We first bound the contribution of the $E_i$ with $i\ge 2$. 

\begin{lemma}\label{lem:outliers}
With probability $1-\exp(-\Omega(n^{1/2-\varepsilon}))$ for any constant $\varepsilon > 0$, 
we have
\[
 \sum_{i=2}^\infty L(E_i)= O\left(\sigma n^{1-1/d}\right).
\]
In addition, we have $\expected\left(\sum_{i=2}^\infty L(E_i)\right)= O\left(\sigma n^{1-1/d}\right)$.
\end{lemma}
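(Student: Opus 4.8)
The plan is to mimic the strategy of Lemma~\ref{lem:2optSigma1}, but now tailored to the cubes $A_i = [-a_i, 1+a_i]^d$ with $a_i = 3\sigma\sqrt{di\ln(3/\sigma)}$ instead of the $C_k$'s, and to push the tail estimates down to $\exp(-\Omega(n^{1/2-\varepsilon}))$ rather than $\exp(-\Omega(\sqrt n))$. First I would observe that each $E_i$, being a set of edges of the 2-optimal tour $T$ all of whose points lie inside the translated-and-scaled cube $A_i$, is a partial 2-optimal tour; after rescaling $A_i$ to a unit cube (its side length is $1 + 2a_i = \Theta(\sigma\sqrt{di\ln(3/\sigma)})$, using $\sigma\le 1$ so the additive $1$ is absorbed), Lemma~\ref{lem:2optworst} gives $L(E_i) \le c_d\cdot(1+2a_i)\cdot(3n_i)^{1-1/d}$, since at most $3n_i$ vertices are incident to $E_i$ (each such edge has an endpoint in $A_i\setminus A_{i-1}$ and every vertex has degree $2$ in $T$). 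So the whole sum is controlled once the $n_i$ are controlled.

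For the expectation bound I would take expectations termwise, use Jensen to pull $\expected$ inside the concave power $(\cdot)^{1-1/d}$, substitute $\expected(n_i) \le n(\sigma/3)^{2.9d(i-1)}$, and bound
\[
\sum_{i\ge 2} c_d(1+2a_i)(3\expected(n_i))^{1-1/d}
= O\!\left(\sigma n^{1-1/d}\right)\sum_{i\ge 2}\sqrt{i\ln(3/\sigma)}\,(\sigma/3)^{2.9(d-1)(i-1)};
\]
the geometric factor $(\sigma/3)^{2.9(d-1)(i-1)}$ (with $\sigma\le 1$, $d\ge 2$) dominates the $\sqrt{i}$ and the $\sqrt{\ln(3/\sigma)}$ grows only polylogarithmically while the leading geometric term for $i=2$ already carries a factor $(\sigma/3)^{2.9(d-1)}$, so the series converges to $O(1)$ and we get $\expected(\sum_{i\ge2}L(E_i)) = O(\sigma n^{1-1/d})$ as claimed.

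For the high-probability bound I would split the index range at some cutoff $i_1$, chosen so that $N_{i_1} := n(\sigma/3)^{2.9d(i_1-1)}$ is around $n^{1/2-\varepsilon}$ (this is where the $1/\sigma = O(n^\varepsilon)$ assumption from just after Theorem~\ref{thm:upper} enters, to ensure such an $i_1$ is $O(1/\varepsilon)$ and in particular $\ln(3/\sigma) = O(\log n)$ is dwarfed). For $2\le i\le i_1$, Chernoff (Lemma~\ref{lem:chernoff}) gives $\probab(n_i \ge 2N_i) \le \exp(-N_i/3)$, and a union bound over the $O(1/\varepsilon)$ indices yields failure probability $\exp(-\Omega(N_{i_1})) = \exp(-\Omega(n^{1/2-\varepsilon}))$; on the complementary event the contribution of $E_2,\dots,E_{i_1}$ is bounded by the same convergent series as above with $\expected(n_i)$ replaced by $2N_i$, giving $O(\sigma n^{1-1/d})$. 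For $i > i_1$, I would use a direct union bound: $\probab(\exists j: X_j\notin A_{i_1}) \le n(\sigma/3)^{2.9d\cdot i_1} \le \exp(-\Omega(n^{1/2-\varepsilon}))$, so with high probability all remaining points lie in $A_{i_1}$, hence $E_i = \emptyset$ for $i > i_1$ and that tail contributes nothing. Adding up the two failure probabilities gives the stated $\exp(-\Omega(n^{1/2-\varepsilon}))$, and the expected-value statement can alternatively be recovered from the tail bound together with the deterministic worst-case fallback $\TwoOPT(X) \le n\cdot(\text{diameter})$, which is at most $\exp(-\Omega(n^{1/2-\varepsilon}))\cdot\poly(n)$ once we also discard the super-rare event that some point escapes to distance $\omega(\sqrt{n})$. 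The main obstacle I anticipate is purely bookkeeping: choosing $i_1$ so that $N_{i_1}$ lands at $n^{1/2-\varepsilon}$ while simultaneously keeping $a_{i_1}$ (and the $\ln(3/\sigma)$ factors) from contaminating the $O(\sigma n^{1-1/d})$ target, which is exactly where the assumptions $1/\sigma = O(n^\varepsilon)$ and $1/\sigma = \omega(1)$ must be invoked carefully.
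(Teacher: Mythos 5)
Your setup, your bound $L(E_i)\le c_d(1+2a_i)(3n_i)^{1-1/d}$, your expectation computation via Jensen, and your Chernoff-plus-union-bound treatment of the indices $2\le i\le i_1$ all match the paper's proof. The gap is in how you dispose of the tail $i>i_1$. You claim that $\probab(\exists j: X_j\notin A_{i_1})\le n(\sigma/3)^{2.9d\,i_1}\le \exp(-\Omega(n^{1/2-\varepsilon}))$, so that $E_i=\emptyset$ for $i>i_1$ with the required probability. But with your choice of $i_1$ (namely $N_{i_1}=n(\sigma/3)^{2.9d(i_1-1)}\approx n^{1/2-\varepsilon}$), this union bound evaluates to $N_{i_1}\cdot(\sigma/3)^{2.9d}\approx n^{1/2-\varepsilon}\sigma^{2.9d}$, which is only \emph{polynomially} small in $n$ (recall $1/\sigma=O(n^{\varepsilon'})$, so $\sigma$ is at most polynomially small). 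Indeed, the expected number of points outside $A_{i_1}$ is of exactly this order, so the event that at least one point escapes $A_{i_1}$ cannot be exponentially unlikely. Your stated failure probability is therefore off by an exponential amount, and the tail does not vanish as you assert. (A minor additional slip: $i_1=\Theta(\log n/\log(1/\sigma))=\omega(1)$ under the standing assumptions, not $O(1/\varepsilon)$; this does not break the union bound over $i\le i_1$, which tolerates $O(\log n)$ terms.)

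The paper closes this gap with a second, much larger cutoff: it sets $k_2:=\lceil\sigma\sqrt n\rceil=\Omega(n^{1/2-\varepsilon})$, for which $\probab(\exists j: X_j\notin A_{k_2})\le n(\sigma/3)^{2.9d(k_2-1)}$ really is $\exp(-\Omega(n^{1/2-\varepsilon}))$ because the exponent now grows like $k_2$ rather than like $i_1$. On the intersection of the good events, the at most $n_{k_1}\le 2N_{k_1}=O(\sigma\sqrt n)$ points outside $A_{k_1-1}$ all lie in $A_{k_2}$, so the edges of $E_{k_1},E_{k_1+1},\dots$ form a partial 2-optimal tour on few points inside $A_{k_2}$, and Lemma~\ref{lem:2optworst} bounds their total length by $c_d(1+2a_{k_2})(6N_{k_1})^{1-1/d}=O(\sqrt{k_2}\,N_{k_1}^{1-1/d})=O(\sigma n^{1-1/d})$ for $d\ge 2$. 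This intermediate-range argument is not optional bookkeeping; without it (or an equivalent), the high-probability statement is not established.
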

\begin{proof}
The proof is analogous to the proof of \lemref{2optSigma1}. Linearity of expectation, \lemref{2optworst}, and Jensen's inequality yield
\begin{align*}
\sum_{i=2}^\infty \expected(L(E_i)) & \le  \sum_{i=2}^\infty c_d\cdot (3\expected(n_i))^{1-1/d}(1+2a_i) \\
& \le  \sum_{i=2}^\infty 3c_d \cdot n^{1-1/d}\left(\frac{\sigma}{3}\right)^{2.9(d-1)(i-1)} \left(1+6\sigma \sqrt{i\ln(3/\sigma)}\right)\\
& \le  3c_d\cdot n^{1-1/d}\left(\frac{\sigma}{3}\right)^{2.9(d-1)} (1+6\sigma\sqrt{\ln(3/\sigma)}) \left(
   \sum_{i=0}^\infty \sqrt{i+2}\left(\frac{\sigma}{3}\right)^{2.9(d-1)i}\right).
\end{align*}
By observing that $\sum_{i=0}^\infty \sqrt{i+2}(\sigma/3)^{2.9(d-1)i}$ is bounded by a constant, we conclude that $\sum_{i=2}^\infty \expected(L(E_i))$ is bounded by $O(\sigma n^{1-1/d})$. 

Let $N_k := n(\sigma/3)^{2.9d(k-1)}$ be the upper bound on $\expected(n_k)$ derived above. By the Chernoff bounds (Lemma~\ref{lem:chernoff}), we have
\[
  \probab(n_k \ge 2N_k) \le \exp(-N_k/3).
\]
Choose $k_1$ such that $(\sigma/3)^{2.9d} \sigma \sqrt{n} \le N_{k_1}\le \sigma\sqrt{n}$. Thus, $k_1 = O(\log n)$.  Assume that $n_k \le 2N_k$ for all $1\le k \le k_1$. Then, analogously to the above calculation, the contribution of $E_2, \dots, E_{k_1}$ is bounded by
\begin{align*}
 \sum_{k=2}^{k_1} L(E_k) & \le  \sum_{k=2}^{k_1} c_d \cdot (1+2a_k) (6N_k)^{1-1/d} \\
& \le  c_d \cdot \left(1+6\sigma \sqrt{d\ln( 3/\sigma)}\right)\cdot  (6n)^{1-1/d} (\sigma/3)^{2.9(d-1)} \\
& \qquad \cdot \left(\sum_{k=0}^\infty \sqrt{k+2} (\sigma / 3)^{2.9(d-1)k}\right)  =  O(\sigma n^{1-1/d}).
\end{align*}
Note that the probability that some $1\le k \le k_1$ fails to satisfy $n_k \le 2N_k$ is bounded by 
\[ \sum_{k=1}^{k_1} \probab(n_k > 2N_k) \le k_1 \exp(-N_{k_1}/3) = \exp(-\Omega(n^{1/2-\varepsilon})),\]
for any constant $\varepsilon > 0$. 
Since $n_{k_1} \le 2 N_{k_1}$, at most $n_{k_1} \le 2\sigma\sqrt{n}$ vertices remain outside $A_{k_1-1}$. Let $k_2 := \lceil \sigma \sqrt{n} \rceil$. By a
union bound, for any constant $\varepsilon >0$,
\[ \probab(\exists j: X_j \notin A_{k_2}) \le n(\sigma/3)^{2.9d(k_2-1)} = \exp(-\Omega(n^{1/2-\varepsilon})). \]
Assume that we have the -- very likely -- event that all points are in $A_{k_2}$, then the remaining points outside $A_{k_1-1}$ are contained in $A_{k_2}$. We conclude that
\begin{align*}
\sum_{k=k_1}^\infty L(E_k) & = \sum_{k=k_1}^{k_2} L(E_k) \\
& \le c_d \cdot (1+2a_{k_2}) (6N_{k_1})^{1-1/d} \\
& = O(\sqrt{k_2}(N_{k_1})^{1-1/d}) \\
& = O(\sigma^{3/2-1/d} n^{\frac{1}{4}+\frac{1}{2}\left(1-\frac{1}{d}\right)}) = O(\sigma n^{1-1/d}). \qedhere
\end{align*}
\end{proof}

In the remainder of the proof, we bound the total length of edges inside~$A_1$. Define $C:=A_1$ and note that all edges in $C$ have bounded length $\sqrt{d}(1+2a_1)=O(1)$. We let $T_i$ contain the set of all those edges within $C$ (in the longest 2-optimal tour $T$) whose lengths are in $[\OPT/2^i,\OPT/2^{i-1}]$. Let $k_1$ be such that $\sqrt{d}(1+2a_1) \in [\OPT/2^{k_1},\OPT/2^{k_1-1}]$. Then $L(T_k)=0$ for all $k <k_1$, since no longer edges exist. Let $k_2$ be such that $\sigma \in [\OPT/2^{k_2},\OPT/2^{k_2-1}]$. Then $\sum_{k=k_1}^{k_2} L(T_k) = O((k_2-k_1)\cdot \OPT) = O(\log(1/\sigma)\OPT)$ by Lemma~\ref{lem:logworstcase}. This argument bounds the contribution of \emph{long edges}, i.e., edges longer than $\sigma$, in the worst case, after observing the perturbation of the input points. It remains to bound the length of short edges in~$C$, which we do in the next section.

\subsubsection{Short Edges}

To account for the length of the remaining edges, we take a different route than for the long edges: Call an edge that is shorter than $\sigma$ a \emph{short edge} and partition the bounding box $C=[-a_1,1+a_1]^d$ into a grid of $(\sigma \times \cdots \times \sigma)$-cubes $C_1,\dots, C_M$ with $M = \Theta((\sigma/(1+a_1))^{-d})=\Theta(\sigma^{-d})$, which we call \emph{cells}. All edges in $T_k$ for $k\ge k_2$, i.e., short edges, are completely contained in a single cell or run from some cell $C_i$ to one of its $3^d-1$ neighboring cells.  For a given tour $T$, let $E_{C_i}(T)$ denote the short edges of $T$ for which at least one of the endpoints lies in $C_i$. 

We aim to relate the length of the edges $E_{C_i}(T)$ for the longest 2-optimal tour $T$ to the length of the edges $E_{C_i}(T_\OPT)$ of the optimal tour $T_\OPT$. This local approach is justified by the following property.

\begin{lemma}\label{lem:boundTSP}
For any tour $T'$, the contribution $L(E_{C_i}(T'))$ of cell $C_i$ is lower bounded by $\TSP(X\cap C_i) - O(\sigma {|X\cap C_i|}^{\frac{d-2}{d-1}})$.
\end{lemma}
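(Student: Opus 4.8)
The plan is to bound $L(E_{C_i}(T'))$ from below by the boundary functional $\bTSP(X\cap C_i)$ of the cell, and then to pass from $\bTSP(X\cap C_i)$ to $\TSP(X\cap C_i)$ via \lemref{boundary} — this last step is exactly what produces the exponent $\tfrac{d-2}{d-1}$. Since $L(\cdot)$, $\TSP$ and $\bTSP$ are translation invariant and scale linearly under dilation, we may rescale and translate $C_i$ to the unit cube; applying \lemref{boundary} to the rescaled point set then gives
\[
  \bTSP(X\cap C_i)\ \ge\ \TSP(X\cap C_i)-O\bigl(\sigma\,|X\cap C_i|^{(d-2)/(d-1)}\bigr),
\]
so it suffices to establish $L(E_{C_i}(T'))\ge\bTSP(X\cap C_i)$, up to the lower-order term discussed at the end.

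For the lower bound I would build a feasible ``boundary tour'' of $C_i$ (a collection of paths inside $C_i$ with all endpoints on $\partial C_i$, covering $X\cap C_i$, reconnected for free along $\partial C_i$) out of the edges $E_{C_i}(T')$. Since $E_{C_i}(T')$ is a subgraph of the single cycle $T'$ with maximum degree $2$, it is a vertex-disjoint union of simple paths; together with a trivial one-vertex path for each point of $X\cap C_i$ incident to no short edge, these paths cover $X\cap C_i$. Now clip each path to $\overline{C_i}$: a clipped edge that leaves $C_i$ is shortened and ends on $\partial C_i$, so clipping only decreases the total length and the pieces still cover $X\cap C_i$; moreover every segment-endpoint created is either on $\partial C_i$ — where the boundary is used to reconnect it at zero cost — or is an original path-endpoint lying inside $C_i$. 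Extend each such interior endpoint to $\partial C_i$ by a straight segment of length at most $\tfrac{\sqrt d}{2}\,\sigma$. The result is a feasible boundary tour, whence
\[
  \bTSP(X\cap C_i)\ \le\ L(E_{C_i}(T'))+O(\sigma)\cdot \ell_i ,
\]
where $\ell_i$ is the number of interior endpoints that were extended.

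The step I expect to be the main obstacle is controlling the term $O(\sigma)\cdot\ell_i$. A clipped path has an endpoint in the interior of $C_i$ precisely where $T'$ continues from it along an edge of length at least $\sigma$ (a \emph{long} edge), so $\ell_i$ equals the number of long edges of $T'$ incident to $C_i$ (counted with multiplicity), which need not itself be $O(|X\cap C_i|^{(d-2)/(d-1)})$. The route I would take is to charge each $O(\sigma)$ extension to the length ($\ge\sigma$) of its incident long edge: summed over all cells, every long edge of $T'$ is charged a bounded number of times, so the total extra cost is $O(L(T'))$, which for the relevant choice $T'=T_\OPT$ is $O(\OPT)$ and is absorbed by the separately-handled long-edge contribution in the proof of \thmref{upper}. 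With this accounting in place, combining the two displayed inequalities yields the lemma.
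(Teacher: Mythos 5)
Your skeleton is exactly the paper's: clip the edges meeting $C_i$ at $\partial C_i$, observe that clipping only shortens them and that the result is a feasible input for the boundary functional, rescale $C_i$ to the unit cube, and invoke \lemref{boundary} to pass from $\bTSP$ to $\TSP$ at a cost of $O(\sigma\,|X\cap C_i|^{(d-2)/(d-1)})$. The place where you diverge is the ``main obstacle'' you flag, and it is worth being precise about it. The paper's proof does \emph{not} restrict to short edges: it takes $S$ to be \emph{all} edges of $T'$ with at least one endpoint in $C_i$. Then every point of $X\cap C_i$ keeps both of its tour edges in $S$, so after clipping every loose end of every path lies on $\partial C_i$; there are no interior endpoints, no extension segments, no $\ell_i$ term, and the clean bound of the lemma follows immediately. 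You are reading $E_{C_i}(T')$ according to the displayed definition (short edges only), and under that reading you are right that the statement cannot be saved — indeed it is simply false: a tour that enters and leaves every point of $C_i$ by an edge longer than $\sigma$ has $E_{C_i}(T')=\emptyset$ while $\TSP(X\cap C_i)$ can exceed the error term. So you have correctly spotted a notational inconsistency in the paper; the resolution is that in \lemref{boundTSP} (and in its applications to $T_\OPT$) the quantity $L(E_{C_i}(T_\OPT))$ must be understood as the length of all edges incident to $C_i$. This only strengthens the downstream uses, since the lemma is only ever invoked as a lower bound on the optimal tour's local cost and the counting bound $\sum_i L(E_{C_i}(T_\OPT))\le 2\cdot\OPT$ still holds for the larger edge sets.

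Your proposed patch, by contrast, does not prove the lemma as stated and would not survive the way the lemma is actually used. Charging each $O(\sigma)$ extension to an incident long edge controls $\sum_i O(\sigma)\ell_i$ \emph{globally} by $O(L(T'))$, but \lemref{boundTSP} is applied \emph{per cell}: in Case 2 of \lemref{heavyCells} one needs $L(E_{C_j}(T_\OPT))\ge \sigma n_i^{1-1/d}\exp(\Omega((\log\log 1/\sigma)^2))$ for a single designated cell $C_j$, and an additive error of $-O(\sigma)\ell_j$ with $\ell_j$ uncontrolled can make that bound vacuous. Replace the patch by the paper's move — lower-bound the length of \emph{all} edges of $T'$ incident to $C_i$ — and the rest of your argument (path decomposition, clipping, reconnection along $\partial C_i$ at zero cost, rescaling, \lemref{boundary}) goes through verbatim.
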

\begin{proof}
Consider all edges $S$ in $T'$ that have at least one endpoint in $C_i$. Replacing those edges $(u,v) \in S$ with $u\in C_i$ and $v\notin C_i$ by the shortest edge connecting $u$ to the boundary of $C_i$ does not increase the total edge length by triangle inequality. If $C_i$ were the unit cube, $L(E_{C_i}(T'))$ would thus be lower bounded by the boundary functional $\TSP^B(X\cap C_i)$. Instead, we scale the instance $X\cap C_i$ by $1/\sigma$ to obtain an instance $X'$ in the unit cube, satisfying $\TSP(X\cap C_i) = \sigma \TSP(X')$ and, as argued above, $L(E_{C_i}(T')) \ge \sigma \TSP^B(X')$. Thus an application of Lemma~\ref{lem:boundary} yields 
\[ L(E_{C_i}(T')) \ge \sigma \left(\TSP(X') - O\bigl({|X'|}^{\frac{d-2}{d-1}}\bigr)\right) = \TSP(X\cap C_i) - O\bigl(\sigma \cdot {|X\cap C_i|}^{\frac{d-2}{d-1}}\bigr).\qedhere\]
\end{proof}

Intuitively, a cell $C_i$ is of one of two kinds: either few points are expected to be perturbed into it and hence it cannot contribute much to the length of any 2-optimal tour (a \emph{sparse cell}), or many unperturbed origins are close to the cell (a \emph{heavy cell}). In the latter case, either the conditional densities of points perturbed into $C_i$ are small, hence any optimal tour inside $C_i$ has a large value by Lemma~\ref{lem:phi-lbOPT}, or we find another cell close to $C_i$ that has a very large contribution to the length of any tour.

To formalize this intuition, fix a cell $C_i$ and let $n_i$ be the expected number of points $X_j$ with $X_j\in C_i$. Assume for convenience that $a_1/\sigma$ and $(1+a_1)/\sigma$ are integer. We describe the position of a cube $C_i$ canonically by indices $\pos(C_i) \in \{-\frac{a_i}{\sigma},\dots, \frac{1+a_i}{\sigma}\}^d$. For two cells $C_i$ and $C_j$, we define their distance as $\distance(C_i,C_j) = \norm{\pos(C_i)-\pos(C_j)}_1$.
For $k\ge 0$, let $D_k$ denote all cells of distance $k$ to $C_i$ and let $n(D_k)$ denote the cardinality of unperturbed origins located in a cell in $D_k$. We call a perturbed point $X_\ell\in C_i$ with unperturbed origin $x_\ell\in C_j$, for some $C_j \in D_k$, a \emph{$k$-successful point}. Let $S_k$ denote the set of all $k$-successful points. Then $n_i = \sum_{k=0}^\infty \expected(|S_k|)$. 

Our first technical lemma shows that any cell $C_i$, having (in expectation) a large number $\mu$ of points perturbed into it from cells of distance at most $K$, contributes at least $\sigma \mu^{1-1/d} \exp(-O(K+1))$ to the length of the optimal tour. 

\begin{lemma}\label{lem:lb-k-successful}
Let $K\ge 0$ and define $S_{\le K}:= S_0\cup \cdots \cup S_{K}$ as the set of $k$-successful points for $k\le K$. Let $\mu := \expected(|S_{\le K}|)$. If $K=o(\log \mu)$, then with probability $1-\exp(-\Omega(\mu))$, we have
\[ L(E_{C_i}(T_\OPT)) \ge \frac{\sigma \mu^{1-1/d}}{\exp(O(K+1))}. \]
\end{lemma}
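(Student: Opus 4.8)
The plan is to apply the one-step lower bound of Lemma~\ref{lem:phi-lbOPT} to the subinstance of $k$-successful points with $k \le K$, after revealing them via the principle of deferred decisions and computing the relevant maximum density $\phi$. First I would reveal, for each point $X_\ell$ with unperturbed origin $x_\ell$, whether $X_\ell \in C_i$ and (if so) in which distance class $D_k$ its origin sits; this determines the set $S_{\le K}$ and its cardinality. Conditioned on $X_\ell \in C_i$ with $x_\ell \in C_j$ for some $C_j \in D_k$, $k \le K$, the point $X_\ell$ is distributed as a Gaussian centered at $x_\ell$ conditioned to lie in the cell $C_i$. Writing $C_i$ as a $(\sigma \times \cdots \times \sigma)$-cube and shifting coordinates so that $C_i$ plays the role of the cube $[k_1\sigma,(k_1+1)\sigma]\times\cdots$ in Lemma~\ref{lem:squaredens}, the $\ell_1$-distance of the center $x_\ell$ from the corner of $C_i$ closest to the origin is $O((K+1)\sigma)$ since $\distance(C_i,C_j) \le K$ and each cell has side length $\sigma$. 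Hence Lemma~\ref{lem:squaredens} gives that the conditional density of $X_\ell$ inside $C_i$ is bounded by $\exp(O(K+1)) \cdot \sigma^{-d}$, so after rescaling the cell $C_i$ to the unit cube the rescaled points have maximum density $\phi = \exp(O(K+1))$.

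Next I would control the number of points in $S_{\le K}$: the points in $S_{\le K}$ are revealed independently (each $X_\ell$ independently lands in $C_i$ with its origin in the right distance class or not), so $|S_{\le K}|$ is a sum of independent indicators with mean $\mu$. By the Chernoff bound (Lemma~\ref{lem:chernoff}), $|S_{\le K}| \ge \mu/2$ with probability $1 - \exp(-\Omega(\mu))$. On this event, rescale $X \cap C_i$ (in fact just the $k$-successful points) by $1/\sigma$ to obtain an instance of $m \ge \mu/2$ points in the unit cube that is $\phi$-perturbed with $\phi = \exp(O(K+1))$. Applying Lemma~\ref{lem:phi-lbOPT} to this instance: with probability $1 - \exp(-\Omega(m)) = 1 - \exp(-\Omega(\mu))$, every tour on these $m$ points has length $\Omega(m^{1-1/d}/\phi^{1/d}) = \Omega\bigl((\mu/2)^{1-1/d} \cdot \exp(-O(K+1)/d)\bigr) = \mu^{1-1/d}\exp(-O(K+1))$ on the rescaled instance, hence $\sigma \mu^{1-1/d}\exp(-O(K+1))$ on the original scale. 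Since $\TSP(X \cap C_i)$ lower-bounds the TSP value restricted to these successful points (adding points only increases the optimal tour length), and $L(E_{C_i}(T_\OPT)) \ge \TSP(X \cap C_i) - O(\sigma\,|X\cap C_i|^{(d-2)/(d-1)})$ by Lemma~\ref{lem:boundTSP}, I would absorb the subtracted boundary term into the main bound.

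The one point needing care — and the main obstacle — is verifying that the $\Omega(n^{1-1/d}/\phi^{1/d})$ bound of Lemma~\ref{lem:phi-lbOPT} is not swamped when $\phi$ itself grows like $\exp(O(K+1))$; this is exactly why the hypothesis $K = o(\log\mu)$ is imposed. Under $K = o(\log\mu)$ we have $\phi^{1/d} = \exp(O(K+1)) = \mu^{o(1)}$, so $\mu^{1-1/d}/\phi^{1/d}$ is still $\mu^{1-1/d}$ up to the stated $\exp(-O(K+1))$ factor, and likewise the boundary correction term $O(\sigma\,|X\cap C_i|^{(d-2)/(d-1)})$, which is $O(\sigma\, m^{(d-2)/(d-1)})$ after conditioning on $|X\cap C_i| = O(\mathrm{poly}(\mu))$ with overwhelming probability, is of strictly smaller order than $\sigma\mu^{1-1/d}\exp(-O(K+1))$ for $K = o(\log\mu)$, so it does not destroy the bound. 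Collecting the failure probabilities (the Chernoff deviation of $|S_{\le K}|$ and the failure event of Lemma~\ref{lem:phi-lbOPT}), both of which are $\exp(-\Omega(\mu))$, completes the argument.
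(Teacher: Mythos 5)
Your proposal follows the paper's proof almost step for step: reveal the realization of $S_{\le K}$ by deferred decisions, use Lemma~\ref{lem:squaredens} to bound the conditional density of each successful point inside $C_i$ by $\exp(O(K+1))\sigma^{-d}$, rescale $C_i$ to the unit cube to obtain a $\phi$-perturbed instance with $\phi=\exp(O(K+1))$, invoke Lemma~\ref{lem:phi-lbOPT}, and control $|S_{\le K}|$ by the Chernoff bound. All of this matches the paper and is correct, and you correctly identified that the hypothesis $K=o(\log\mu)$ is what keeps $\phi^{1/d}=\mu^{o(1)}$ from swamping the bound.

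The one genuine gap is in how you absorb the boundary-functional correction. You lower-bound $L(E_{C_i}(T_\OPT))$ by $\TSP(X\cap C_i)-O\bigl(\sigma\,|X\cap C_i|^{(d-2)/(d-1)}\bigr)$ and then dismiss the subtracted term ``after conditioning on $|X\cap C_i|=O(\poly(\mu))$''. This fails for two reasons. First, the lemma's hypotheses give no control of $|X\cap C_i|$ in terms of $\mu$: the quantity $\mu$ only counts points arriving from cells at distance at most $K$, and $C_i$ may receive many more points whose origins lie farther away (in the application to Case 2 of Lemma~\ref{lem:heavyCells} the lemma is invoked with $K=0$, where exactly this happens). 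Second, even granting $|X\cap C_i|=\mu^{c}$ for a constant $c$, the term $\sigma\mu^{c(d-2)/(d-1)}$ is \emph{not} $o(\sigma\mu^{1-1/d})$ unless $c<(d-1)^2/(d(d-2))$, a threshold that tends to $1$ as $d$ grows; so ``$\poly(\mu)$'' is not enough, and your subsequent replacement of $|X\cap C_i|$ by $m$ in the exponent is unjustified. The paper sidesteps this entirely by applying the boundary-functional argument to the sub-point-set $S_{\le K}$ itself (shortcut $T_\OPT$ through the successful points only and clip at the boundary of $C_i$), which gives $L(E_{C_i}(T_\OPT))\ge \TSP(S_{\le K})-O\bigl(\sigma\,|S_{\le K}|^{(d-2)/(d-1)}\bigr)$; since $(d-2)/(d-1)<1-1/d$ and $\exp(O(K+1))=\mu^{o(1)}$, the correction $O(\sigma\mu^{(d-2)/(d-1)})$ is genuinely of lower order. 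Replacing your use of Lemma~\ref{lem:boundTSP} on all of $X\cap C_i$ by this restricted version repairs the argument; everything else in your proposal stands.
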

\begin{proof}
Note that by Lemma~\ref{lem:boundTSP}, $L(E_{C_i}(T_\OPT)) \ge \TSP(S_{\le K})-O(\sigma \cdot {|S_{\le K}|}^{\frac{d-2}{d-1}})$. Fix any realization of $S_{\le K}$, i.e., choice of unperturbed origins inside some cell in $D_0,\dots,D_K$ whose perturbed points fall into $C_i$. We can simulate the distribution of $\TSP(S_{\le K})$ (under this realization of $S_{\le K}$) by appealing to the one-step model.  Note that each point in $S_{\le K}$ is distributed as a Gaussian conditioned on containment in cell $C_j$. By rotational invariance of the Gaussian distribution, Lemma~\ref{lem:squaredens} is applicable and bounds the conditional density function of each point in $S_{\le K}$ by $\exp(K + (3/2)d)\sigma^{-d}$. By scaling, we obtain an instance in the unit cube with $N:=|S_{\le K}|$ points distributed according to density functions of maximum density $\exp(K+(3/2)d)$. Hence, by Lemma~\ref{lem:phi-lbOPT} we obtain that any tour has length $\Omega(N^{1-1/d}/\exp(K/d+3/2))$ on the scaled instance with probability $1-\exp(-\Omega(N))$. Scaling back to $C_i$, we obtain $\TSP(S_{\le K})\ge \Omega(\sigma N^{1-1/d}/\exp(K/d+3/2))$. Since by Chernoff bounds (Lemma~\ref{lem:chernoff}), $|S_{\le K}| = \Omega(\mu)$ with probability $1-\exp(-\Omega(\mu))$, we finally obtain, using \lemref{boundTSP},
\[L(E_{C_i}(T_\OPT)) \ge \Omega\left(\frac{\sigma \mu^{1-\frac{1}{d}}}{\exp(\frac{K}{d}+\frac{3}{2})}\right) - O(\sigma \cdot {\mu}^{\frac{d-2}{d-1}}) \ge \frac{\sigma \mu^{1-1/d}}{\exp(O(K+1))}, \]
with probability $1-\exp(-\Omega(\mu))$, where we used that $K=o(\log \mu)$.
\end{proof}


The following simple technical lemma shows that with constant probability, a point is perturbed into the cell it originates in.

\begin{lemma}\label{lem:IntoOwnSquare}
Let $c\in Q_0 := [0,\sigma]^d$ and $Z \sim \Gauss(c,\sigma^2)$. Then $\probab(Z\in Q_0) \ge \frac{1}{(2\pi)^{d/2}}\exp(-\frac{d}{2})$.
\end{lemma}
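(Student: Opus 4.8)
The plan is to reduce the $d$-dimensional statement to a one-dimensional one by independence of the coordinates, and then bound the one-dimensional integral crudely from below using that the Gaussian density is monotone in the distance to its mean.

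First I would write $Z = (Z_1,\dots,Z_d)$ with $Z_i \sim \Gauss(c_i, \sigma^2)$ independent, where $c = (c_1,\dots,c_d) \in [0,\sigma]^d$. Since $Q_0 = [0,\sigma]^d$ is a product of intervals and the $Z_i$ are independent, $\probab(Z \in Q_0) = \prod_{i=1}^d \probab(Z_i \in [0,\sigma])$, so it suffices to show $\probab(Z_i \in [0,\sigma]) \ge \frac{1}{\sqrt{2\pi}}\exp(-\tfrac12)$ for each $i$.

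For the one-dimensional bound, I would use that for $t \in [0,\sigma]$ and $c_i \in [0,\sigma]$ we have $|t-c_i| \le \sigma$, hence $\exp(-\tfrac{(t-c_i)^2}{2\sigma^2}) \ge \exp(-\tfrac12)$ on the whole interval. Therefore
\[
  \probab(Z_i \in [0,\sigma]) = \int_0^\sigma \frac{1}{\sqrt{2\pi}\,\sigma}\exp\!\left(-\frac{(t-c_i)^2}{2\sigma^2}\right)\textup d t \ge \frac{1}{\sqrt{2\pi}\,\sigma}\cdot \exp(-\tfrac12)\cdot \sigma = \frac{\exp(-\tfrac12)}{\sqrt{2\pi}}.
\]
Multiplying these $d$ inequalities gives $\probab(Z\in Q_0) \ge \bigl(\tfrac{1}{\sqrt{2\pi}}\exp(-\tfrac12)\bigr)^d = \frac{1}{(2\pi)^{d/2}}\exp(-\tfrac d2)$, as claimed.

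There is no real obstacle here; the only thing to be slightly careful about is that the crude lower bound $\exp(-(t-c_i)^2/(2\sigma^2)) \ge \exp(-1/2)$ is valid uniformly over $t \in [0,\sigma]$ precisely because $c_i$ lies in the same interval $[0,\sigma]$, which is exactly the hypothesis $c \in Q_0$.
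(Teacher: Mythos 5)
Your proof is correct and is essentially the paper's argument in coordinate-factored form: the paper bounds the $d$-dimensional density directly by $f(x) \ge \frac{1}{(2\pi)^{d/2}\sigma^d}\exp(-d/2)$ on $Q_0$ using $\norm{x-c} \le \sqrt{d}\sigma$ and multiplies by $\vol(Q_0)=\sigma^d$, which is exactly the product of your per-coordinate bounds. No substantive difference.
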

\begin{proof}
Let $f(x)=\frac{1}{(2\pi)^{d/2}\sigma^d}\cdot \exp(-\frac{\norm{x-c}^2}{2\sigma^2})$ be the probability density function of $Z$. For all $x\in Q_0$, we have $\norm{x-c} \le \sqrt{d}\sigma$ and hence $f(x) \ge \frac{1}{(2\pi)^{d/2}\sigma^d}\cdot \exp(-\frac{d}{2})=: f_{\mathrm{min}}$. This yields 
\[\probab(Z\in Q_0) = \int_{Q_0} f(x) dx \ge \sigma^d  f_{\mathrm{min}} = \frac{1}{(2\pi)^{d/2}}\exp(-d/2).\qedhere\]
\end{proof}

We are set-up to formally show the classification of heavy cells. Recall that $M = \Theta(\sigma^{-d})$ denotes the number of cells $C_i$.

\begin{lemma}\label{lem:heavyCells}
Let $\alpha := M^\frac{d}{d-1}$, $k_1 :=\gamma \log \log (1/\sigma)$ and $k_2 := (1/\gamma') \sqrt{\log 1/\sigma}$ for sufficiently small constants $\gamma,\gamma'$. Then we can classify each cell $C_i$ with $n_i \ge \frac{n}{\alpha}$ into one of the following two types.
\begin{enumerate}[label=(T\arabic{*})]
\item\label{itm:Tclose} With probability $1-\exp(-\Omega(n^{1-\varepsilon}))$ for any constant $\varepsilon>0$, we have 
\[L(E_{C_i}(T)) \le O(\log 1/\sigma) L(E_{C_i}(T_\OPT)).\]
\item\label{itm:Tfar} There is some $C_j\in D_{k_1} \cup \cdots \cup D_{k_2}$ such that for any $f(1/\sigma)=\polylog(1/\sigma)$, we have 
\[L(E_{C_i}(T)) \le \frac{L(E_{C_j}(T_\OPT))}{f(1/\sigma)},\]
with probability $1-\exp(-\Omega(n^{1-\varepsilon}))$ for any constant $\varepsilon>0$.
\end{enumerate}
\end{lemma}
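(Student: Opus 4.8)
The plan is to first prove one upper bound on $L(E_{C_i}(T))$ that serves both types, and then to split the heavy cells according to whether the origins near $C_i$ already account for a constant fraction of $n_i$. Throughout, one may assume $1/\sigma = \omega(1)$ (else \obsref{easyApprox} already gives the claim) and $1/\sigma = O(n^\varepsilon)$ for a suitably small constant $\varepsilon>0$, so that $n/\alpha = \Theta(n\sigma^{d^2/(d-1)}) = \Omega(n^{1-\varepsilon})$ and $k_1 \ge 2d$. For the common upper bound: the edges of $E_{C_i}(T)$ are short and incident to $C_i$, hence contained in the $3\sigma$-cube obtained by enlarging $C_i$ by $\sigma$ in each coordinate, and they span only $O(|X\cap C_i|)$ points because every point of $C_i$ has degree two in $T$; since $E_{C_i}(T) \subseteq T$ is a partial $2$-optimal tour, rescaling that cube to the unit cube and applying \lemref{2optworst} yields $L(E_{C_i}(T)) = O(\sigma\,|X\cap C_i|^{1-1/d})$, and a Chernoff bound (\lemref{chernoff}, using $n_i \ge n/\alpha = \Omega(n^{1-\varepsilon})$) gives $|X\cap C_i| \le 2n_i$, so $L(E_{C_i}(T)) = O(\sigma n_i^{1-1/d})$ with probability $1-\exp(-\Omega(n^{1-\varepsilon}))$. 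The classification will be deterministic (depending only on $\bar X$): put $C_i$ in type \itemref{Tclose} if $\mu_i := \expected(|S_0\cup\cdots\cup S_{k_1}|) \ge n_i/2$ and in type \itemref{Tfar} otherwise; since $n_i = \sum_{k\ge 0}\expected(|S_k|)$ this is exhaustive.

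For type \itemref{Tclose}, since $\mu_i \ge n_i/2 = \Omega(n^{1-\varepsilon})$ we have $k_1 = O(\log\log n) = o(\log\mu_i)$, so \lemref{lb-k-successful} applies to $C_i$ with $K = k_1$ and gives, with probability $1-\exp(-\Omega(n^{1-\varepsilon}))$, that $L(E_{C_i}(T_\OPT)) \ge \sigma\mu_i^{1-1/d}/\exp(O(k_1)) = \Omega(\sigma n_i^{1-1/d}/(\log(1/\sigma))^{O(\gamma)})$; dividing this into the common upper bound leaves a factor $(\log(1/\sigma))^{O(\gamma)} = O(\log(1/\sigma))$ once $\gamma$ is small enough.

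For type \itemref{Tfar}, most of the mass entering $C_i$ comes from cells of grid-distance $>k_1$, i.e. $\sum_{k>k_1}\expected(|S_k|) > n_i/2$. A point whose origin lies in a cell of $D_k$ must move Euclidean distance $\Omega(k\sigma/d)$ to reach $C_i$, so \lemref{chisquare} gives $\expected(|S_k|) \le n(D_k)\exp(-\Omega(k^2/d^2))$ once $k$ exceeds a suitable (dimension-dependent) constant. Choosing $\gamma'$ small enough, the cells with $k > k_2 = (1/\gamma')\sqrt{\log(1/\sigma)}$ contribute at most $n_i/4$ to $n_i = \sum_k \expected(|S_k|)$, so $\sum_{k_1 < k \le k_2}\expected(|S_k|) > n_i/4$; averaging over the $\le k_2$ admissible distances and then over the $\le (2k_2+1)^d$ cells of $D_{k^*}$, and invoking the tail bound once more, produces a cell $C_j$ at grid-distance $k^*\in(k_1,k_2]$ with
\[
  n(C_j) \ge \frac{n_i}{4k_2(2k_2+1)^d}\,\exp\!\bigl(\Omega((k^*)^2/d^2)\bigr) = n_i\cdot g(1/\sigma), \qquad g(1/\sigma) := \frac{\exp\bigl(\Omega((\log\log(1/\sigma))^2)\bigr)}{(\log(1/\sigma))^{O(1)}},
\]
where $g(1/\sigma)$ grows faster than every fixed power of $\log(1/\sigma)$ because $k^* > k_1 = \gamma\log\log(1/\sigma)$. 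Since a constant fraction of the origins of $C_j$ stay in $C_j$ (\lemref{IntoOwnSquare}), \lemref{lb-k-successful} applied to $C_j$ with $K=0$ gives $L(E_{C_j}(T_\OPT)) = \Omega(\sigma n(C_j)^{1-1/d}) = \Omega(\sigma n_i^{1-1/d} g(1/\sigma)^{1-1/d})$ with probability $1-\exp(-\Omega(n^{1-\varepsilon}))$; combined with $L(E_{C_i}(T)) = O(\sigma n_i^{1-1/d})$ and the fact that $g(1/\sigma)^{1-1/d}$ eventually beats any $f(1/\sigma) = \polylog(1/\sigma)$, this yields $L(E_{C_i}(T)) \le L(E_{C_j}(T_\OPT))/f(1/\sigma)$, with $C_j \in D_{k_1}\cup\cdots\cup D_{k_2}$ as required. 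A union bound over the $O(\sigma^{-d})$ cells makes the relevant events hold for all of them simultaneously.

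The hard part is the type \itemref{Tfar} case: the Chernoff and double-pigeonhole steps lose a factor polynomial in $n_i$, so the argument can only succeed if a superpolylogarithmic gain is available, and that gain is exactly the Gaussian tail $\exp(-\Omega(k^2/d^2))$ evaluated at the cut-off $k_1=\gamma\log\log(1/\sigma)$, i.e.\ $\exp(\Omega((\log\log(1/\sigma))^2))$ — which beats the $\polylog(1/\sigma)$ losses only by a $(\log\log)^2$-versus-$\log\log$ margin. Consequently $k_1, k_2, \gamma, \gamma'$ must be tuned so that $k_1$ is simultaneously small enough for \lemref{lb-k-successful} to apply in the type \itemref{Tclose} case and large enough for $\exp(\Omega(k_1^2/d^2))$ to dominate, and $k_2$ is large enough that distant origins are negligible for $n_i$; it is important that for $C_j$ we invoke \lemref{lb-k-successful} with $K=0$ (rather than for $|S_{k^*}|$ with $K = k^*$, which would pay an extra $\exp(k^*)$ through \lemref{squaredens}), which is why the lemma is most naturally stated via $n(C_j)$.
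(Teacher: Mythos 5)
Your proposal is correct and follows essentially the same route as the paper's proof: the same common upper bound $L(E_{C_i}(T)) = O(\sigma n_i^{1-1/d})$ via Chernoff and \lemref{2optworst}, the same split on whether $\expected(|S_{\le k_1}|) \ge n_i/2$, \lemref{lb-k-successful} with $K=k_1$ for type~\itemref{Tclose}, and for type~\itemref{Tfar} the Gaussian tail bound plus pigeonhole to find a cell $C_j$ with $n_i\exp(\Omega((\log\log 1/\sigma)^2))$ origins, followed by \lemref{IntoOwnSquare} and \lemref{lb-k-successful} with $K=0$. The only cosmetic difference is that you pigeonhole first over distance classes and then over cells, where the paper pigeonholes once over all cells in $D_{k_1+1}\cup\cdots\cup D_{k_2}$; this changes nothing.
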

\begin{proof}
We start with some intuition. By Lemma~\ref{lem:2optworst}, we can bound $L(E_{C_i}(T)) = O(\sigma n_i^{1-1/d})$. If we have $\expected(|S_{\le k_1}|)=\Omega(n_i)$, then Lemma~\ref{lem:lb-k-successful} already proves $C_i$ to have type~\itemref{Tclose}. Otherwise, by tail bounds for the Gaussian distribution, we argue that some cell $C_j$ in distance at most $k_2$ contains at least $n_i \exp(\Omega((\log \log 1/\sigma)^2))$ unperturbed origins. These are sufficiently many to let $C_j$ contribute $f(1/\sigma)\sigma n_i^{1-1/d}$, for any $f(1/\sigma) = \polylog(1/\sigma)$, to the optimal tour length.

To make the intuition formal, note that all edges in $E_{C_i}(T)$ are contained in a cube of side length $3\sigma$ around $C_i$. By Chernoff bounds (Lemma~\ref{lem:chernoff}), at most $2n_i$ points are contained in $C_i$ with probability $1-\exp(-\Omega(n_i))$. Hence, Lemma~\ref{lem:2optworst} bounds 
\begin{equation}\label{eq:heavyUp}
L(E_{C_i}(T)) \le 3\sigma c_d (6n_i)^{1-1/d},
\end{equation}
with probability $1-\exp(-\Omega(n_i))$. 

\paragraph{Case 1: $\expected(S_{\le k_1}) > n_i/2$.} In this case, we may appeal to Lemma~\ref{lem:lb-k-successful} (since $k_1 = o(\log n_i)$) and obtain
\begin{equation}\label{eq:heavyLow1}
L(E_{C_i}(T_\OPT)) \ge  \frac{\sigma (|S_{\le k_1}|)^{1-1/d}}{\exp(O(k_1))} = \Omega\left( \frac{\sigma n_i^{1-1/d}}{\log(1/\sigma)}\right),
\end{equation}
with probability $1-\exp(-\Omega(n_i))$, since $k_1 = \gamma\log \log 1/\sigma$ and $\gamma$ can be chosen sufficiently small. By a union bound, \eqref{eq:heavyUp} and \eqref{eq:heavyLow1} hold with probability $1-\exp(-\Omega(n_i))=1-\exp(-\Omega(n^{1-\varepsilon}))$ for any constant $\varepsilon > 0$, proving that $C_i$ has type~\itemref{Tclose}.

\paragraph{Case 2: $\expected(S_{\le k_1}) \leq n_i/2$.} 
Every point in $C_i$ has an $\ell_1$-distance of at least $\sigma(\distance(C_i,C_j)-d)$ to every point in $C_j$. Thus, by Lemma~\ref{lem:chisquare}, we have
\begin{equation} \label{eq:ES_k}
 \expected(|S_k|) \le n(D_k) \probab\left(\norm{Z} \ge \frac{k-d}{\sqrt{d}}\sigma\right) \le n(D_k)\exp\left(-0.32\frac{(k-d)^2}{d}\right),
\end{equation}
for sufficiently large $k$.
Since $\alpha  = \mathrm{poly}(1/\sigma)$, we can choose a sufficiently small constant $\gamma'$ such that $k_2 = (1/\gamma') \sqrt{ \log 1/\sigma}$ satisfies $\exp(-0.32(k_2-d)^2/d) \le 1/(4\alpha)$. From $\sum_{k=0}^{\infty} n(D_k) = n$, we conclude
\[ \sum_{k=k_2+1}^\infty \expected(|S_k|) \le \sum_{k=k_2+1}^\infty n(D_k) \exp(-0.32(k-d)^2/d) \le \frac{n}{4\alpha} \le \frac{n_i}{4}.\]
Hence, we have
\[\sum_{k=k_1+1}^{k_2} \expected(|S_k|) = n_i - \expected(|S_{\le k_1}|)  - \sum_{k=k_2+1}^\infty \expected(|S_k|) \ge \frac{n_i}{4}.\]
By~\eqref{eq:ES_k}, it follows that
\[ N:= \sum_{k=k_1+1}^{k_2} n(D_k) \ge  \exp\left(0.32\frac{(k_1-d)^2}{d}\right) \sum_{k=k_1+1}^{k_2} \expected(|S_k|) = n_i \exp(\Omega((\log \log 1/\sigma)^2)) \]
unperturbed origins are situated in cells in distance $k_1 < k \le k_2$ from $C_i$. Note that there are at most $\sum_{k=k_1+1}^{k_2} |D_k| = O(k_2^d)=\polylog(1/\sigma)$ such cells and $\exp(\Omega((\log \log 1/\sigma)^2)=\omega(\log^c(1/\sigma))$ for any $c\in \mathbb{N}$. By pigeon hole principle, there is a cell $C_j\in D_{k_1}\cup \cdots \cup D_{k_2}$ with $\Omega(N/k_2^d) = n_i \exp(\Omega((\log \log 1/\sigma)^2))$ many unperturbed origins. 

Let $S'_0$ be the 0-successful points for cell $C_j$, i.e., the points with origin in $C_j$ that are perturbed into $C_j$. By Lemma~\ref{lem:IntoOwnSquare}, each unperturbed origin $x_\ell \in C_j$ has constant probability to be perturbed into $C_j$, i.e., $\probab(X_\ell \in C_j)=\Omega(1)$. Hence, $\expected(|S'_0|) = n_i \exp(\Omega((\log \log 1/\sigma)^2))$. Thus, Lemma~\ref{lem:lb-k-successful} bounds
\begin{equation} \label{eq:heavyLow2}
 L(E_{C_j}(T_\OPT)) \ge \frac{\sigma (\expected(|S'_0|))^{1-\frac{1}{d}}}{\exp(O(1))} = \sigma n_i^{1-\frac{1}{d}} \exp(\Omega((\log \log 1/\sigma)^2)),
\end{equation}
with probability $1-\exp(-\Omega(\expected(|S'_0|))) = 1-\exp(-\Omega(n_i))$. Since~\eqref{eq:heavyUp} and~\eqref{eq:heavyLow2} hold simultaneously with probability $1-\exp(-\Omega(n_i))=1-\exp(-\Omega(n^{1-\varepsilon}))$ for any constant $\varepsilon > 0$, this proves that $C_i$ has type~\itemref{Tfar}.

\end{proof}

\subsubsection{Total Length of 2-Optimal Tours}

With the analyses of the previous subsections, we can finally bound the total length of 2-optimal tours.
To bound the total length of short edges, consider first sparse cells~$C_i$, i.e., cells containing $n_i\le n/\alpha$ perturbed points in expectation (recall that $\alpha = M^\frac{d}{d-1}$, where $M = \Theta(\sigma^{-d})$ is the number of cells). For each such cell, the Chernoff bound (Lemma~\ref{lem:chernoff}) yields that with probability $1-\exp(-\Omega(n/\alpha))$, at most $2n/\alpha$ points are contained in~$C_i$, since each point is perturbed independently. By a union bound, no sparse cell contains more than $2n/\alpha$ points with probability at least $1-M\exp(-\Omega(n/\alpha))=1-\exp(-\Omega(n^{1-\varepsilon}))$ for any constant $\varepsilon>0$. In this event, Lemma~\ref{lem:2optworst} allows for bounding the contribution of sparse cells by
\begin{equation}\label{eq:sparse}
 \sum_{i: n_i \le n/\alpha} L(E_{C_i}(T)) \le M (3\sigma) c_d \left(\frac{6n}{\alpha}\right)^{1-\frac{1}{d}} = O\left(\frac{M\sigma  n^{1-\frac{1}{d}}}{\alpha^{1-\frac{1}{d}}}\right) = O(\sigma n^{1-\frac{1}{d}}).
\end{equation}
For bounding the length in the remaining cells (the heavy cells), let $\Tone := \{i \mid C_i \text{ has type~\itemref{Tclose}}\}$ and $\Ttwo := \{i \mid C_i \text{ has type \itemref{Tfar}}\}$. We observe the following: with probability at least $1-M\exp(-\Omega(n^{1-\varepsilon}))=1-\exp(-\Omega(n^{1-\varepsilon}))$, all type-\itemref{Tclose} cells $C_i$ satisfy $L(E_{C_i}(T)) = O(\log 1/\sigma) L(E_{C_i}(T_\OPT))$. Thus,
\begin{equation}\label{eq:heavyT1}
 \sum_{i \in \Tone} L(E_{C_i}(T)) \le  O(\log 1/\sigma) \cdot \left(\sum_{i \in \Tone} L(E_{C_i}(T_\OPT))\right) \le O(\log 1/\sigma) \cdot \OPT,
\end{equation}
where the last inequality follows from  $\sum_{i=1}^M L_{C_i}(T_\OPT) \le 2\cdot\OPT$, which holds since every edge in $\OPT$ (inside $C$) is counted at most twice on the left-hand side.

Let $A: \Ttwo \to \{1,\dots,M\}$ be any function that assigns to each 
cell $C_i$ of type-\itemref{Tfar} a corresponding cell $C_{A(i)} \in D_{k_1}\cup \cdots \cup D_{k_2}$ satisfying the condition~\itemref{Tfar}. We say that $C_i$ \emph{charges} $C_{A(i)}$. We can choose any $f(1/\sigma)=\polylog(1/\sigma)$ and have with probability  at least $1-M\exp(-\Omega(n^{1-\varepsilon}))=1-\exp(-\Omega(n^{1-\varepsilon}))$ that $L(E_{C_i}(T)) \le \frac{L(E_{C_{A(i)}}(T_\OPT))}{f(1/\sigma)}$ for all $i\in \Ttwo$. Assume that this event occurs. Since every cell $C_i$ can only be charged by cells in distance $k_1 < k \le k_2$, each cell can only be charged $\sum_{k=k_1+1}^{k_2} |D_k| = O(k_2^d)$ times. Hence,
\[ \sum_{i \in \Ttwo} L(E_{C_{A(i)}}(T_\OPT)) \le O(k_2^d) \sum_{i=1}^M  L(E_{C_{i}}(T_\OPT)) = O(k_2^d) \OPT. \]
Since $k_2^d=\polylog(1/\sigma)$, choosing $f(1/\sigma)=\polylog(1/\sigma)$ sufficiently large yields
\begin{equation}\label{eq:heavyT2}
 \sum_{i \in \Ttwo} L(E_{C_i}(T)) \le  \sum_{i \in \Ttwo} \frac{L(E_{C_{A(i)}}(T_\OPT))}{f(1/\sigma)} \le \frac{O(k_2^d)\OPT}{f(1/\sigma)} = O(\OPT).
\end{equation}

\begin{proof}[Proof of Theorem~\ref{thm:upper}]
By a union bound, we can bound by $1-\exp(-\Omega(n^{1/2-\varepsilon}))$, for any constant $\varepsilon>0$, the probability that (i) $\OPT=\Omega(\sigma n^{1-1/d})$ (by Lemma~\ref{lem:sigma-lbOPT}), (ii) all edges outside $C$ contribute $O(\sigma n^{1-1/d})=O(\OPT)$ (by Lemma~\ref{lem:outliers}), (iii) all sparse cells contribute $O(\sigma n^{1-1/d})=O(\OPT)$ (by \eqref{eq:sparse}), (iv) the type-\itemref{Tclose} cells~$C_i$ induce a cost of $O(\log 1/\sigma) \OPT$ (by \eqref{eq:heavyT1}), and (v) the type-\itemref{Tfar} cells induce a cost of $O(\OPT)$ (by~\eqref{eq:heavyT2}). Since the remaining edges are long edges and contribute only $O(\log(1/\sigma) \cdot \OPT)$, we obtain that every 2-optimal tour has a length of at most $O(\log 1/\sigma) \OPT$ with probability $1-\exp(-\Omega(n^{1/2-\varepsilon}))$. 

Since a 2-optimal tour always constitutes a $O(\log n)$-approximation to the optimal tour length by Lemma~\ref{lem:logworstcase}, we also obtain that the expected cost of the worst 2-optimal tour is bounded by 
\[
  O(\log 1/\sigma)\cdot \OPT + \exp(-\Omega(n^{1/2-\varepsilon})) \cdot O(\log n) \cdot \OPT = O(\log 1/\sigma) \cdot \OPT. \qedhere\] 
\end{proof}

\subsection{Lower Bound on the Approximation Ratio}
\label{sec:lb}

We complement our upper bound on the approximation performance by the following lower bound: for $\sigma = O(1/\sqrt{n})$, the worst-case lower bound is robust against perturbations. For this, we face the technical difficulty that in general, a single outlier might destroy the 2-optimality of a desired long tour, potentially cascading into a series of 2-Opt iterations that result in a substantially different or even optimal tour. 

\begin{theorem}\label{thm:2optlb}
Let $\sigma = O(1/\sqrt{n})$. For infinitely many $n$, there is an instance~$X$ of points in $\real^2$ perturbed by normally distributed noise of standard deviation $\sigma$ such that with probability $1-O(n^{-s})$ for any constant $s>0$, we have $\TwoOPT(X) = \Omega(\log n/\log \log n) \cdot \TSP(X)$. This also yields
\[ \expected\left(\frac{\TwoOPT(X)}{\TSP(X)}\right) = \Omega\left(\frac{\log n}{\log \log n}\right).\]
\end{theorem}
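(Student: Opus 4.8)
The plan is to show that the classical worst-case construction of Chandra et al.~\cite{ChandraEA:OldOpt:1999}, which achieves approximation ratio $\Omega(\log n/\log\log n)$, is stable under perturbations of magnitude $O(1/\sqrt n)$ and hence survives, up to constant factors, as a \emph{lower} bound. Recall that for infinitely many $n$ this construction provides origins $\bar X\subseteq[0,1]^2$ and a $2$-optimal tour $\bar T$ on $\bar X$ with $L(\bar T)/\TSP(\bar X)=\Omega(\log n/\log\log n)$; by rescaling (which preserves $2$-optimality) we may assume $\bar X$ fills a constant fraction of $[0,1]^2$, so that $L(\bar T)=\Theta(\sqrt n)$ and $\TSP(\bar X)=\Theta(\sqrt n\,\log\log n/\log n)$. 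The construction is recursive with $\Theta(\log n/\log\log n)$ levels, and I would work with a version in which the \emph{slack} of $2$-changes is made explicit: all ``coarse'' edges (those above the innermost $O(1)$ recursion levels) have length $\omega(\sigma\sqrt{\log n})$, every $2$-change touching a coarse edge is non-improving in $\bar T$ with slack larger than its coarse edge length, and only $2$-changes confined to a single innermost cluster (each of $\operatorname{polylog}(n)$ points, of diameter $\omega(\sigma\sqrt{\log n})$ but containing edges of length $o(1/\sqrt n)$) can have slack as small as $o(1/\sqrt n)$. Since the perturbation is monotone in $\sigma$, it suffices to treat $\sigma\le c_0/\sqrt n$; I would then condition on the event $\mathcal E$ that $\norm{Z_i}\le g$ for all $i$, where $g=O_s(\sigma\sqrt{\log n})$. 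By \lemref{chisquare} and a union bound, $\probab(\mathcal E)\ge 1-O(n^{-s})$, and for large $n$ this $g$ lies below the feature scale of every coarse level.

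\emph{Lower bound on $\TwoOPT(X)$ under $\mathcal E$.} The gain of a $2$-change is $8$-Lipschitz in the displacements of its four points, so a $2$-change whose slack in $\bar T$ exceeds $8g$ is still non-improving for $X\leftarrow\pert_\sigma(\bar X)$; by the slack structure this holds for every $2$-change that touches a coarse edge, and also for every $2$-change between two fine edges of \emph{different} clusters (one of the new edges then has coarse length). I would now run $2$-opt on $X$ starting from $\bar T$. Each improving step must therefore be confined to a single innermost cluster, so it never removes a coarse edge and does not change any coarse edge; consequently the invariant ``no $2$-change touching a coarse edge is improving'' is maintained throughout, and the process terminates at a $2$-optimal tour $T^\star$ still containing every coarse edge of $\bar T$. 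Hence $\TwoOPT(X)\ge L(T^\star)\ge\sum\{\text{coarse edges of }\bar T\}=\Omega(L(\bar T))=\Omega(\sqrt n)$, using that coarse edges carry a $1-o(1)$ fraction of $L(\bar T)$.

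\emph{Upper bound on $\TSP(X)$ under $\mathcal E$.} The naive estimate $\TSP(X)\le\TSP(\bar X)+2\sum_i\norm{Z_i}$ is too lossy, so instead I would exploit near-additivity of the TSP functional applied recursively along the levels of the construction: at every coarse level a tour of $X$ may mimic the optimal tour of $\bar X$ at a $1+o(1)$ relative cost (since $g$ is negligible against the coarse feature scale), and each innermost cluster of $\operatorname{polylog}(n)$ perturbed points can be toured within a constant (in fact $1+o(1)$) factor of the unperturbed cluster's cost, which holds by \lemref{2optworst} applied to the rescaled cluster together with the Chernoff bound (\lemref{chernoff}) controlling the per-cluster count — and this $1+o(1)$ factor is exactly where $\sigma=O(1/\sqrt n)$ is used, being the threshold at which a cluster's perturbed diameter does not exceed its unperturbed diameter by more than a lower-order term. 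Summing yields $\TSP(X)=O(\TSP(\bar X))=O(\sqrt n\,\log\log n/\log n)$. Combining the two bounds, under $\mathcal E$ we obtain $\TwoOPT(X)/\TSP(X)=\Omega(\sqrt n)/O(\sqrt n\,\log\log n/\log n)=\Omega(\log n/\log\log n)$, an event of probability $1-O(n^{-s})$. For the expectation, note that $T_\OPT$ is itself $2$-optimal and $\TSP(X)>0$ almost surely, so $\TwoOPT(X)/\TSP(X)\ge 1$ deterministically; hence $\expected(\TwoOPT(X)/\TSP(X))\ge\probab(\mathcal E)\cdot\Omega(\log n/\log\log n)=\Omega(\log n/\log\log n)$ once $n$ is large enough that $\probab(\mathcal E)\ge\tfrac12$.

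The main obstacle is the first paragraph: one cannot use the Chandra et al.\ construction as a black box but must produce a scaled, slackness-explicit version in which \emph{every} $2$-change touching a long edge — including the many such $2$-changes that do not occur in $\bar T$ but can arise after $2$-opt locally rearranges clusters — has slack $\gg g$, so that the $2$-opt cascade started from $\bar T$ cannot escape the innermost clusters and dismantle the long edges. Achieving this forces a careful placement of clusters relative to the endpoints (``hubs'') of the long edges, keeping every cluster at distance $\gg g$ from every long edge, and the matching structural facts used above — innermost clusters of $\operatorname{polylog}(n)$ points, of diameter $\gg g$ yet with finest edges of length $o(1/\sqrt n)$, and coarse edges carrying a $1-o(1)$ fraction of $L(\bar T)$ — must be read off from the same construction. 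Everything else reduces to the concentration estimates already available (\lemref{chisquare}, \lemref{chernoff}, \lemref{2optworst}).
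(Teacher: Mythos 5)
Your high-level architecture (perturb a Chandra--Karloff--Tovey-type instance, condition on every displacement being at most $g=O(\sigma\sqrt{\log n})$, certify that a long tour stays 2-optimal, and bound the optimum separately) matches the paper's, but the proposal leaves unproved exactly the step that carries all of the difficulty, and the properties you postulate for your ``slackness-explicit'' construction cannot be obtained by rescaling Chandra et al. In their instance (embedded in $[0,1]^2$), layer $i$ has $p^{2i}+1$ points at spacing $a_i=p^{2p-2i}/P$ with $P=3p^{2p}=\Theta(n)$; the finest layers therefore contain $\Theta(n)$ points at spacing $\Theta(1/n)\ll\sigma$. Consequently no partition of these points into ``innermost clusters of $\polylog(n)$ points of diameter $\omega(\sigma\sqrt{\log n})$'' exists, the optimal tour is $O(1)$ rather than your claimed $\Theta(\sqrt n\,\log\log n/\log n)$, and after perturbation the left-to-right order of the finest-layer points is scrambled, so the long tour admits improving 2-changes with no usable slack. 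Your closing paragraph concedes that producing a construction with the required slack structure is ``the main obstacle,'' but the proposal gives no mechanism for overcoming it; in addition, your dynamic argument (run 2-opt from $\bar T$ and show the cascade never leaves a cluster) needs an invariant maintained over the entire, adversarially chosen 2-opt trajectory, which you do not establish.

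The paper resolves this by changing the instance rather than the analysis: it truncates the construction at the largest odd $t$ with $p^{2t+1}\le(3\sigma)^{-1}$, so that the finest surviving feature scale $a_t$ exceeds the perturbation radius by a factor $\Theta(p)$, and restores the point count to $n=\Theta(p^{2p})$ by placing $p^{2p}-1$ coincident padding points at a single vertex of Layer $t$. The long tour is then exhibited \emph{statically} on the perturbed points (with an arbitrary 2-optimal path threaded through the padding cluster) and shown to be 2-optimal by an exhaustive case analysis of horizontal/vertical/padding edge pairs, conditioned only on each point lying in a container of radius $a_t/8\ge\sigma p/8$; the optimum is bounded by $O(1)$ via the MST of the $O(\sqrt n/p)$ non-padding points plus a scaling argument (\lemref{2optSigma1}) for the padding cluster, whereupon $\TwoOPT(X)/\TSP(X)=\Omega(p)/O(1)=\Omega(\log n/\log\log n)$. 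To complete your route you would need an analogous truncation-plus-padding step (or some other explicit construction realizing your postulated cluster structure); without it the proposal is not a proof.
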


We remark that our result transfers naturally to the one-step model with $\phi = \Omega(n)$ and interestingly, holds \emph{with probability 1} over such random perturbations.

\begin{figure}
\includegraphics[width=\textwidth]{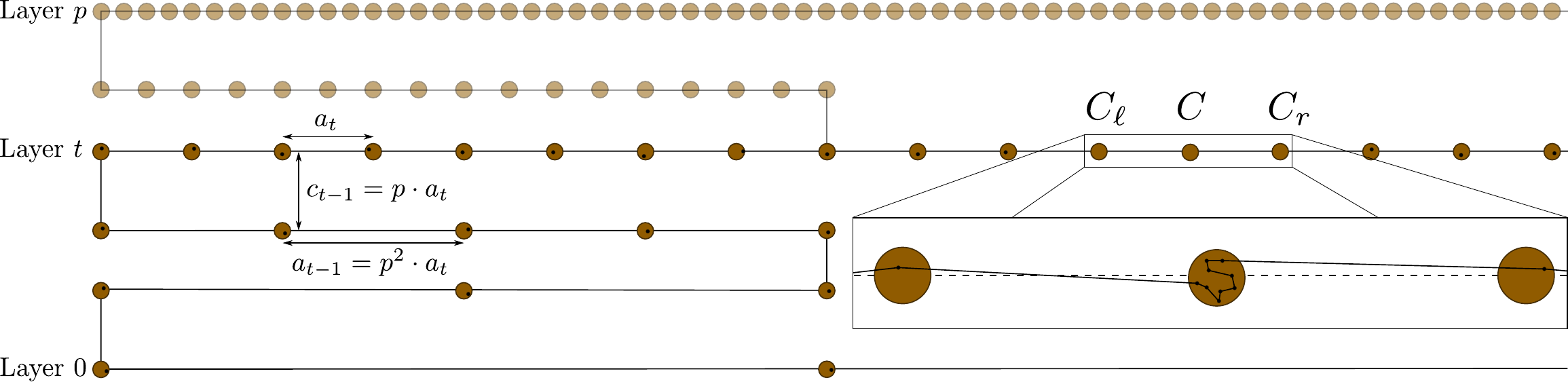}
\caption{Parts $V_1$ and $V_3$ of the lower bound instance. Each point is contained in a corresponding small container (depicted as brown circle) with high probability. The black lines indicate the constructed 2-optimal tour, which runs analogously on $V_2$.}
\label{fig:CKTconstruction}
\end{figure}


\paragraph{Proof of Theorem~\ref{thm:2optlb}.}
 We alter the construction of Chandra et al.~\cite{ChandraEA:OldOpt:1999} to strengthen it against Gaussian perturbations with standard deviation $\sigma = O(1/\sqrt{n})$ (see Figure~\ref{fig:CKTconstruction}). Let $p\ge 3$ be an odd integer and $P := 3p^{2p}$. The original instance of~\cite{ChandraEA:OldOpt:1999} is a subset of the $(P\times P)$-grid, which we embed into $[0,1]^2$ by scaling by $1/P$, and consists of three parts $V_1$, $V_2$ and $V_3$. The vertices in $V_1$ are partitioned into the layers $L_0,\dots,L_p$. Layer $i$ consists of $p^{2i} + 1$ equidistant vertices, each of which has a vertical distance of $c_i = p^{2p-2i-1}/P$ to the point above it in Layer $i+1$ and a horizontal distance of $a_i = p^{2p-2i}/P$ to the nearest neighbor(s) in the same layer. The set $V_2$ is a copy of $V_1$ shifted to the right by a distance of $2/3$. The remaining part $V_3$ consists of a copy of Layer $p$ of $V_1$ shifted to the right by $1/3$ to connect $V_1$ and $V_2$ by a path of points. We regard $L_i$ as the set of Layer-$i$ points in $V_1\cup V_2 \cup V_3$.

As in the original construction, we will construct an instance of $n=\Theta(p^{2p})$ points, which implies $p=\Theta(\log n/\log \log n)$. Let $0\le t \le p$ be the largest odd integer such that $p^{2t+1} \le (3\sigma)^{-1}$. In our construction, we drop all Layers $t+1, \dots, p$ in both $V_1$ and $V_2$, as well as Layer $p$ in $V_3$. Instead, we connect $V_1$ and $V_2$ already in Layer $t$ by an altered copy 
 of Layer $t$ of $V_1$ shifted to the right by $1/3$. Let $C$ be an arbitrary point of our construction, for convenience we will use the central point of Layer $t$ in $V_3$. We introduce $p^{2p}-1$ additional copies of this point $C$. These surplus points serve as a ``padding'' of the instance to ensure $n=\Theta(p^{2p})$. Note that the resulting instance has $t+1$ layers $L_0, \dots, L_t$. 
We choose $t$ such that the magnitude of perturbation is negligible compared to the pairwise distances of all non-padding points. Furthermore, the restriction on~$\sigma$ ensures that incorporating the padding points increases the optimal tour length only by a constant.

\begin{lemma}\label{lem:optOnInstance}
With probability $1-O(n^{-s})$ for any constant $s>0$, the optimal tour has length $O(1)$.
\end{lemma}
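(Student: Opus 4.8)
The plan is to exhibit a \emph{single} tour on the perturbed point set of total length $O(1)$; since $\TSP(X)$ is the minimum over all tours, this gives the bound. Throughout I would condition on the event $\mathcal E$ that every point is displaced from its origin by at most $r:=c\,\sigma\sqrt{\ln n}$, where $c=c(s)$ is chosen so that $c^2\ge 2(s+1)$; by \lemref{chisquare} and a union bound over the $n$ points, $\probab(\overline{\mathcal E})=O(n^{-s})$. I would then split the length of the constructed tour into the contribution of the $N_0$ non-padding points and that of the $m:=p^{2p}-1=\Theta(n)$ copies of $C$.

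For the non-padding points I would start from the ``good'' tour on the non-padding \emph{origins}, which has length $O(1)$: this is the deterministic fact underlying the Chandra et al.~\cite{ChandraEA:OldOpt:1999} construction, and dropping Layers $t+1,\dots,p$ and reconnecting $V_1,V_2$ through the shifted copy of Layer~$t$ is a routine adaptation that keeps the length $O(1)$ (a finite check on the construction, not a probabilistic statement). On $\mathcal E$, replacing each origin by its perturbed image adds at most $2N_0 r$, since every edge lengthens by at most $2r$. There are $N_0=\Theta\!\bigl(\sum_{i=0}^{t}p^{2i}\bigr)=\Theta(p^{2t})$ non-padding points, and the defining inequality $p^{2t+1}\le(3\sigma)^{-1}$ yields $p^{2t}\le\tfrac1{3\sigma p}$, so
\[
  2N_0 r=O\!\left(\frac{\sigma\sqrt{\ln n}}{\sigma p}\right)=O\!\left(\frac{\sqrt{\ln n}}{p}\right)=O\!\left(\frac{\log\log n}{\sqrt{\log n}}\right)=o(1),
\]
using $p=\Theta(\log n/\log\log n)$. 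Hence on $\mathcal E$ there is a tour through all non-padding points of length $O(1)$.

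The heart of the argument is routing the $m=\Theta(n)$ copies of $C$, which I would insert as a closed detour at the (perturbed copy of the) vertex $C$ on the tour above. On $\mathcal E$ all of these points lie in the disk $B$ of radius $r$ around $C$, and the naive estimate that $m$ points in a disk of radius $r$ admit a tour of length $O(r\sqrt m)=O(\sqrt{\log n})$ is \emph{not} good enough --- beating this $\sqrt{\log n}$ loss is exactly what $\sigma=O(1/\sqrt n)$ buys, and it is the main obstacle. I would instead decompose $B$ into the annuli $R_k:=\{x:k\sigma\le\norm{x-C}<(k+1)\sigma\}$ for $0\le k\le K:=\lceil r/\sigma\rceil=O(\sqrt{\log n})$ (no point lies outside $R_0\cup\dots\cup R_{K-1}$ on $\mathcal E$, by the choice of $c$). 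Since $\norm{Z}^2/\sigma^2$ is $\chi^2_2$-distributed for a $2$-dimensional Gaussian $Z$ of standard deviation $\sigma$, the expected number of copies of $C$ in $R_k$ is at most $\mu_k:=m\,e^{-k^2/2}$; a Chernoff bound (\lemref{chernoff}) with slack $\Theta(\log n)$, union-bounded over the $K$ annuli, shows that with probability $1-O(n^{-s})$ the actual count $N_k$ in $R_k$ satisfies $N_k=O(\mu_k+\log n)$ for all $k$. A strip/grid tour through the $N_k$ points of $R_k$ (a set of diameter at most $2(k+1)\sigma$) has length $O\!\bigl((k+1)\sigma\sqrt{N_k}\bigr)$, and joining these sub-paths in order of increasing $k$ costs $O(K\sigma)=o(1)$ in connector edges. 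Thus the detour has length
\[
  O\!\left(\sum_{k=0}^{K}(k+1)\sigma\sqrt{\mu_k}\right)+O\!\left(\sigma\sqrt{\log n}\sum_{k=0}^{K}(k+1)\right)+o(1)
  =O\!\left(\sigma\sqrt m\sum_{k\ge0}(k+1)e^{-k^2/4}\right)+o(1)=O(1),
\]
where the first sum is $O(\sigma\sqrt m)=O(1)$ because $\sum_{k\ge0}(k+1)e^{-k^2/4}$ converges and $\sigma^2 m=O(1)$ (precisely $\sigma=O(1/\sqrt n)$, $m=\Theta(n)$), and the middle term is $O\!\bigl(\sigma(\log n)^{3/2}\bigr)=o(1)$.

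Putting the three pieces together, on an event of probability $1-O(n^{-s})$ the constructed tour --- the $O(1)$-length tour on the non-padding points with the $O(1)$-length padding detour spliced in at $C$ --- has length $O(1)$, whence $\TSP(X)=O(1)$. The only delicate step is the padding detour: the elementary enclosing-disk bound loses a $\sqrt{\log n}$ factor, so one must exploit concentration of the Gaussian cloud. The annulus decomposition above is one self-contained way to do this; alternatively one could cite a concentration inequality for the Euclidean $\TSP$ functional (Beardwood--Halton--Hammersley / Rhee--Talagrand type), which directly gives that the $\TSP$ of the padding points is $\Theta(\sigma\sqrt m)+o(1)=O(1)$.
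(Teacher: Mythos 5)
Your proof is correct and follows the same overall skeleton as the paper's: exhibit one cheap tour by (i) bounding a tour through the non-padding origins by $O(1)$, (ii) paying for the perturbation displacement of the $\Theta(p^{2t})=O(1/(\sigma p))$ non-padding points, and (iii) handling the $\Theta(n)$ padding points separately and splicing them in at $C$. The differences are in how steps (i) and (iii) are executed. For (i), the paper is slightly more careful: rather than asserting an $O(1)$ tour on the origins as a ``finite check,'' it exhibits an explicit spanning tree of length $O(1)$ on $\overline{X}^{\orig}\cup\overline{X}^{t}$ and invokes $\TSP\le 2\cdot\MST$; your displacement bound $2N_0 r=o(1)$ with $r=O(\sigma\sqrt{\ln n})$ is in fact a bit sharper than the paper's $O(n'p\sigma)=O(1)$. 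For (iii), the paper simply rescales $\overline{X}^{\pad}\subseteq[0,\sigma]^d$ to the unit cube, observes that the padding points then form an instance perturbed with standard deviation $1$, and applies \lemref{2optSigma1} to get $\TSP(X^{\pad})=O(\sigma\sqrt n)=O(1)$ with probability $1-\exp(-\Omega(\sqrt n))$ --- exactly the ``cite a concentration result'' alternative you mention at the end. Your annulus-plus-Chernoff argument is a correct, self-contained substitute, but it essentially re-derives the $d=2$ special case of \lemref{2optSigma1} (whose own proof uses the same growing-shell decomposition); given that the lemma is already available in the paper, the citation route is shorter and also yields a stronger failure probability for that sub-step, though $1-O(n^{-s})$ is all that is needed for the final union bound.
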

\begin{proof}
Let $n$ be the number of points in the constructed instance. Note that $\overline{X} = (x_1,\dots,x_n)$ consists of (i) a subset $\overline{X}^\orig$ of the instance of Chandra et al.~\cite{ChandraEA:OldOpt:1999}, plus (ii) an additional copy $\overline{X}^t$ of Layer $t$ and (iii) the padding points $\overline{X}^\pad$ in $V_3$. Denote the number of points in $\overline{X}^\orig \cup \overline{X}^t$ by $n'$. We have
\[ n' = p^{2t} + 2\left(\sum_{i=0}^t p^{2i} + 1\right) \le p^{2t} + 2(1-p^{-2})^{-1}p^{2t} + 2t = O(\sigma^{-1}/p),  \]
by choice of $t$. Hence $n= (p^{2p}-1) + n' = \Theta(p^{2p})$.  It is easy to see~\cite{ChandraEA:OldOpt:1999} that the original instance of Chandra et al.~has a minimum spanning tree of length
$\MST(\overline{X}^\orig) \le 9p^{2p}/P$. (This is achieved by the spanning tree that includes, for each Layer-$i$ vertex with $0\le i< p$, the vertical edge to the point above it, and each edge between consecutive points on Layer $p$.)
Clearly,
\[
\MST(\overline{X}^\orig \cup \overline{X}^t) \le \MST(\overline{X}^\orig) + \MST(\overline{X}^t) + 2a_t \le 9p^{2p}/P + p^{2p}/P + 2a_t= O(1).
\]
Consider the perturbed instance $X \leftarrow \pert_\sigma(\overline X)$. 
Note that for every constant $s>0$, we have $p\sigma \ge 3\sigma \sqrt{d(s+1)\ln n}$ for sufficiently large $n$. Thus for each $1\le i \le n$, the Gaussian noise $Z_i \sim \Gauss(0,\sigma)$ satisfies $\norm{Z_i} \le p\sigma$ with probability at least $1-O(n^{-s+1})$ by Lemma~\ref{lem:chisquare}. By a union bound, we have $\sum_{i=1}^{n'} \norm{Z_i} \le O(n'p\sigma)=O(1)$ with probability at least $1-O(n^{-s})$. In this case, by the triangle inequality, the fact that $\TSP(Y) \le 2\cdot \MST(Y)$ for all point sets $Y$ and since only a constant number of edges connects the three parts, we obtain
\begin{align*}
\TSP(X_1,\dots,X_n) & \le  2\cdot \MST(\overline{X}^\orig \cup \overline{X}^t) + 2\left(\sum_{i=1}^{n'} \norm{Z_i}\right) + \TSP(X^\pad) + O(1)\\
 & \le  \TSP(X^\pad) + O(1).
\end{align*}
Note that we may translate and scale $\overline{X}^\pad$ to be contained in $[0,\sigma]^d$, by which $\TSP(X^\pad)$ may be regarded as the optimal tour length on an instance of $p^{2p} = \Theta(n)$ points in $[0,1]^d$ perturbed by Gaussians with standard deviation~1. By Lemma~\ref{lem:2optSigma1}, any 2-optimal tour and hence also the optimal tour on the scaled instance has length $O(\sqrt{n})$ with probability $1-\exp(-\Omega(\sqrt{n}))$. Scaling back to the original instance, we obtain $\TSP(X^\pad) = O(\sqrt{n}\sigma) = O(1)$ with probability $1-\exp(-\Omega(\sqrt{n}))$. This yields the result by a union bound.
\end{proof}

We find a long 2-optimal tour on all non-padding points analogously to the original construction by taking a shortcut of the original 2-optimal tour, which connects $V_1$ and $V_2$ already in Layer $t$ (see Figure~\ref{fig:CKTconstruction}). 

Consider the padding points, which are yet to be connected. Let $C_\ell$ denote the nearest point in Layer~$t$ of $V_3$ that is to the left of $C$. Symmetrically, $C_r$ is the nearest point to the right of $C$. Let $T^p$ be any 2-optimal path from $C_\ell$ to $C_r$ that passes through all the padding points (including $C$). We replace the edges $(C_\ell, C)$ and $(C,C_r)$ by the path $T^p$, completing the construction of our tour $T$.

\begin{lemma}\label{lem:prob2opt}
Let $s>0$ be arbitrary. With probability $1-O(n^{-s})$, $T$ is 2-optimal and has a length of $\Omega(\log n / \log \log n)$.
\end{lemma}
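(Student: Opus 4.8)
The plan is to establish the two assertions of Lemma~\ref{lem:prob2opt} separately: first that $T$ has length $\Omega(\log n/\log\log n)$, and second that $T$ is indeed 2-optimal with high probability. For the length bound, I would argue exactly as in Chandra et al.~\cite{ChandraEA:OldOpt:1999}: the tour $T$ (restricted to the non-padding points) is, up to the shortcut connecting $V_1$ and $V_2$ in Layer $t$, a copy of their long tour, whose length is $\Omega(p) = \Omega(\log n/\log\log n)$ since the horizontal traversals at each of the $t+1 = \Theta(p)$ layers each contribute $\Omega(1)$ to the total length (the layer-$i$ horizontal edges number $\Theta(p^{2i})$ and have length $a_i = p^{2p-2i}/P$, so their total is $\Theta(p^{2p}/P) = \Theta(1)$ per layer). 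Since the perturbations have magnitude $O(p\sigma) = o(1/p^{2p+1}) \cdot p = o(1)$ relative to the relevant distances with probability $1-O(n^{-s})$ (Lemma~\ref{lem:chisquare}, as in \lemref{optOnInstance}), the perturbed tour length is still $\Omega(p)$. Combined with \lemref{optOnInstance}, this gives the claimed approximation ratio.

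The substantive part is showing that $T$ is 2-optimal with probability $1-O(n^{-s})$. I would first condition on the high-probability event $\mathcal{E}$ that every point $X_i$ lies within distance $p\sigma$ of its origin $x_i$; on $\mathcal{E}$, each non-padding point lies in its designated ``container'' (the brown circles of Figure~\ref{fig:CKTconstruction}), and these containers are well-separated because $t$ was chosen precisely so that $p\sigma$ is negligible compared to the smallest pairwise origin distance among non-padding points. In the unperturbed instance, Chandra et al.~showed their tour is 2-optimal by a case analysis over all pairs of edges; the key robustness observation is that a 2-change $d(X_1,X_2)+d(X_3,X_4) - d(X_1,X_3)-d(X_2,X_4) > 0$ that strictly fails in the unperturbed instance with a gap bounded below by some $\Delta > 0$ cannot succeed after perturbations of size $O(p\sigma) \ll \Delta$, by the triangle inequality (a perturbation of each endpoint changes each distance by at most $O(p\sigma)$, hence the 2-change value by at most $O(p\sigma)$). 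So I would verify that the unperturbed tour restricted to non-padding points has a uniform positive 2-change gap $\Delta = \Omega(1/P)$ (or at least $\omega(p\sigma)$) for all relevant edge pairs, which follows from inspecting the Chandra et al.~argument at the coarsest scale — distinct Layer-$i$ points differ by $\Omega(a_i) = \Omega(p^{2p-2i}/P)$, and the smallest such gap over $i \le t$ is $\Omega(p^{2p-2t}/P) = \Omega(1/(p^{2t} P))$, which dominates $p\sigma$ by the choice of $t$.

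The remaining and genuinely delicate issue — the "technical difficulty" flagged before the theorem statement — is the interaction of the padding points with the rest of the tour: a 2-change could involve one padding edge (inside $T^p$, near the central point $C$ of Layer $t$ in $V_3$) and one non-padding edge. Here I would argue that all padding points, together with $C_\ell$ and $C_r$, lie within a tiny region of diameter $O(p\sigma)$ (since they are all copies of $C$ perturbed independently), so any edge incident to a padding point has length $O(p\sigma)$, which is far shorter than any non-padding edge; hence replacing two edges where one is a padding edge and one is not can only produce edges at least as long as the non-padding edge was, up to $O(p\sigma)$ error, while destroying the short padding edge — this cannot decrease the total length. For the case of two padding edges, $T^p$ was chosen to be 2-optimal among paths through the padding points from $C_\ell$ to $C_r$, so no improving 2-change exists there either, and one checks that reconnecting within $T^p$ while keeping it a valid tour corresponds exactly to a 2-change on $T^p$ as a path. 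Assembling these cases — (a) two non-padding edges, handled by the gap argument; (b) one padding, one non-padding edge, handled by the length-scale separation; (c) two padding edges, handled by 2-optimality of $T^p$ — shows that no improving 2-change exists on $\mathcal{E}$, so $T$ is 2-optimal. I expect case (b) to be the main obstacle, since it requires carefully ruling out that a long non-padding edge plus a short padding edge could be swapped for two edges that are collectively shorter; the resolution is that any 2-change replacing edges $e_1$ (long, non-padding) and $e_2$ (short, padding, length $O(p\sigma)$) by $e_1', e_2'$ must have $e_1'$ connecting an endpoint of $e_1$ to an endpoint of $e_2$, hence $|e_1'| \ge |e_1| - O(p\sigma)$, and similarly $|e_2'| \ge |e_1| - O(p\sigma)$ or the configuration is not a valid tour, so $|e_1'| + |e_2'| \ge |e_1| + |e_2| - O(p\sigma) + (|e_1| - O(p\sigma)) > |e_1| + |e_2|$ since $|e_1| = \Omega(1/(p^{2t}P)) \gg p\sigma$.
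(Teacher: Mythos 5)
Your overall architecture matches the paper's: condition on every point lying in a container of radius $\beta=a_t/8\ge \sigma p/8$ around its origin (which holds with probability $1-O(n^{-s})$ by Lemma~\ref{lem:chisquare} and a union bound), deduce the length bound from the $t=\Omega(p)$ horizontal traversals of width $2/3$, and reduce 2-optimality to a deterministic statement about points placed arbitrarily inside their containers (the paper's Lemma~\ref{lem:Tis2opt}). The probabilistic part and the length bound in your write-up are fine.

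The genuine gap is in how you discharge the deterministic 2-optimality. You claim the unperturbed tour has a uniform 2-change gap $\Delta$ with $p\sigma\ll\Delta$, so that the generic observation ``a perturbation of each endpoint by $O(p\sigma)$ changes each 2-change value by $O(p\sigma)$'' finishes the proof. This is quantitatively false. Since $t$ is the \emph{largest} odd integer with $p^{2t+1}\le(3\sigma)^{-1}$, one only gets $a_t/p^4< p\sigma\le a_t$, so the perturbation scale can coincide with the finest geometric scale $a_t=8\beta$. A 2-change value can shift by up to $4\cdot 2\beta=a_t$ when all four points move by $\beta$, and several edge pairs have unperturbed gap exactly $\Theta(a_t)$: a vertical edge paired with a Layer-$t$ horizontal edge has gap $a_t$, as does a padding edge paired with a nearby Layer-$t$ edge. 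The argument therefore closes only with exact constant tracking (this is why the paper fixes $\beta=a_t/8$ and verifies inequalities like $2a_{k'}+c_k-4\beta\ge a_{k'}+c_k+4\beta$ case by case, and why for same-layer and adjacent-layer horizontal pairs it applies Lemma~\ref{lem:45angle} and a crossing argument directly to the perturbed points rather than a slack-gap argument, and needs the $\beta$-dependent hypothesis in Lemma~\ref{lem:horiz} for far-apart layers). Your padding analysis has the same flaw: edges incident to padding points are not ``far shorter than any non-padding edge''---the connectors to $C_\ell,C_r$ and the Layer-$t$ edges are all of length $\Theta(a_t)=\Theta(\beta)$---and your concluding inequality for case (b) rests on $|e_1|\gg p\sigma$, which fails when $p\sigma=\Theta(a_t)$. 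In short, the entire content of the paper's Lemma~\ref{lem:Tis2opt} (the multi-case geometric verification with tight constants) is what your proposal replaces with an assertion, and the assertion as stated does not hold.
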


Note that given \lemref{prob2opt}, \thmref{2optlb} follows directly using \lemref{optOnInstance}. The (rather technical) proof of \lemref{prob2opt} hence concludes our lower bound.

\paragraph{Probability of 2-optimality.}

To account for the perturbation in the analysis, we define a safe region for every point. More formally, let $x_j$ be any unperturbed origin. We define its \emph{container} $B_j$ as the circle centered at $x_j$ with radius $\beta := a_t/8 = p^{2p-2t}/(8P) \ge \sigma p/8$. Very likely, all perturbed points lie in their containers.

\begin{lemma}
\label{lem:Tis2opt}
For sufficiently large $p$, the tour $T$ constructed as described in Section~\ref{sec:lb} is 2-optimal, provided that all points $X_j$ lie in their corresponding containers~$B_j$.
\end{lemma}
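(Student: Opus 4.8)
The plan is to verify directly, assuming every perturbed point lies in its container $B_j$ (a disc of radius $\beta = a_t/8$ about its origin), that no 2-change improves $T$. The driving observation is that $\beta$ is much smaller than the minimum distance $a_t$ between two distinct unperturbed non-padding origins (and than every layer gap, since $c_i \ge c_{t-1} = p\,a_t$): a perturbation changes any single distance between non-padding points by at most $2\beta$, hence changes the gain $d(X_1,X_2)+d(X_3,X_4)-d(X_1,X_3)-d(X_2,X_4)$ of any 2-change by at most $8\beta = a_t$, and $\beta = a_t/8$ is chosen precisely so that this error stays below the $\Omega(a_t)$ slacks that will occur. Write $K$ for the set of all $p^{2p}$ copies of $C$ (including $C$); conditioned on the containers, all of $K$ lies in the single disc $B_C$, and every edge of $T$ incident to a point of $K$ lies on the padding path $T^p$. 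I would then classify a hypothetical improving 2-change, removing edges $e_1$ and $e_2$, by how many of $e_1,e_2$ lie on $T^p$.

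\emph{Both on $T^p$.} One of the two tour arcs determined by $e_1,e_2$ is a subpath of $T^p$, so the 2-change is exactly a segment reversal of the path $T^p$, which cannot improve its length since $T^p$ was chosen to be a 2-optimal path.

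\emph{Neither on $T^p$.} Then $e_1,e_2$ are edges of the skeleton path $T\setminus T^p$, that is, of the Chandra-type tour $T_0$ on the non-padding origins with its two edges at $C$ removed; the four endpoints and all four relevant distances coincide with those of the corresponding 2-change of $T_0$, so it suffices that $T_0$ is 2-optimal on the \emph{perturbed} non-padding points. On the origins $T_0$ is 2-optimal as in Chandra et al.~\cite{ChandraEA:OldOpt:1999} (their argument applies verbatim to the version truncated and bridged at layer $t$); one has to revisit their case analysis to see that every non-improving 2-change in fact has gain at most $-\Omega(a_t)$ — for instance a 2-change among collinear points of a single layer telescopes to a gain $\le -2a_t$, and 2-changes mixing edges of different layers, or of different parts among $V_1,V_2,V_3$, lose far more — so the $\le a_t$ perturbation error cannot turn any of them positive.

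\emph{Exactly one on $T^p$, say $e_1$.} Here the two new edges each join an endpoint of $e_1$ — a point within distance $a_t+\beta$ of $C$ — to an endpoint of $e_2=(x,y)$, and a short triangle-inequality estimate shows the gain is negative once one notes that $B_C$ is geometrically isolated: every layer of $V_1$ and of $V_2$ occupies an $x$-interval of length $1/3$ disjoint from the column of $C$, and the only skeleton edges at $C$'s height near its column are $V_3$'s layer-$t$ edges, the two nearest having length $\Theta(a_t)$ and lying at horizontal distance exactly $a_t$ from $C$ (the edges $(C_\ell,C)$ and $(C,C_r)$ having been rerouted through $T^p$). Thus $e_2$ is either short and within $\Theta(a_t)$ of $B_C$ — a bounded list of configurations, each non-improving by the same telescoping estimate as above, robustly under a $\pm2\beta$ perturbation — or at horizontal distance $\Omega(1)$ from $B_C$, in which case the two new edges exceed $d(e_2)$ by $\Omega(1)$, which dominates $|e_1| = O(a_t)$.

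The step I expect to be the main obstacle is the middle case: extracting from Chandra et al.'s proof a quantitative slack that exceeds the perturbation error $8\beta = a_t$. The cleanest route is probably to re-run their case distinction with the perturbation present from the start, invoking only the combinatorial structure that $\beta = a_t/8 \ll a_t \le c_i$ preserves — the left-to-right and layer membership of each point and which pairs of edges cross — so that every one of their estimates keeps a margin of order $a_t$. Given this, the remaining cases, including the finitely many near-$C_\ell,C_r$ configurations of the last case, reduce to routine arithmetic with the bound $8\beta = a_t$.
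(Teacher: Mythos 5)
Your plan is structurally the same as the paper's proof: classify the candidate 2-changes by the types of the two removed edges, dispose of the padding path by its 2-optimality, and handle the skeleton by re-running the Chandra--Karloff--Tovey case distinction in a perturbation-robust way, with the container radius $\beta = a_t/8$ calibrated against the minimum horizontal spacing $a_t$. The problem is that this case distinction \emph{is} the proof, and you have not carried it out --- you explicitly defer it as ``the main obstacle'' and assert without verification that every unperturbed non-improving 2-change has slack $\Omega(a_t)$. That blanket claim cannot be left generic, because your error budget is exactly $8\beta = a_t$ and in the tightest configurations the slack is exactly $a_t$ as well: for instance, for a vertical layer-$k$ edge paired with a horizontal layer-$t$ edge, the paper's accounting gives $d(X_i,X_j)+d(X_{i+1},X_{j+1}) \ge 2a_{k'} + c_k - 4\beta$ against $d(X_i,X_{i+1})+d(X_j,X_{j+1}) \le a_{k'}+c_k+4\beta$, which closes only because $a_{k'} \ge a_t = 8\beta$ with no room to spare. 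So an ``unperturbed gain plus additive error'' argument with an unspecified constant in the $\Omega(\cdot)$ does not suffice; the constants must be tracked per case, which is essentially redoing the paper's analysis. A further unaddressed point is that Chandra et al.'s argument does not apply ``verbatim'': the construction here is truncated at layer $t$, bridged by an extra copy of layer $t$, and carries the padding cluster, so the inventory of edge pairs is different and must be re-enumerated.

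It is also worth noting where the paper's route differs from the one you propose. Rather than bounding the perturbation as an additive error on the gain, the paper proves perturbation-aware geometric lemmas and applies them directly to the perturbed points: Lemma~\ref{lem:horiz} is a container version of the key ``two horizontal segments at sufficient vertical distance'' inequality (with the $4\beta$ and $2\beta$ corrections built into the hypothesis on $v$), and Lemma~\ref{lem:45angle} handles same-layer and same-column pairs by a monotone-direction argument that needs no quantitative slack at all, only that $a_k - 2\beta \ge 2\beta$ preserves left-to-right order. This avoids the zero-margin bookkeeping that your additive-error scheme would face. Your decomposition of the padding cases by how many removed edges lie on $T^p$ is fine and equivalent to the paper's; the ``both on $T^p$'' and ``neither'' cases are handled as you say, but the mixed case again reduces to a finite list of explicit configurations (including the layer-0 edges, which are long enough at length $1/3$ that the ``horizontal distance $\Omega(1)$ dominates'' step needs the actual numbers) that you sketch but do not check. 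In short: right strategy, but the lemma's content is the quantitative case analysis, and it is missing.
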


We first show that this lemma implies Lemma~\ref{lem:prob2opt}.

\begin{proof}[Proof of Lemma~\ref{lem:prob2opt}]
Let $Z \sim \Gauss(0,\sigma^2)$, and let $s>0$ be arbitrary. By $\beta \ge \sigma p/8 = \Omega(\sigma \log(n)/\log\log n) = \omega(\sigma\sqrt{\log n})$, we have $\beta \ge 3\sigma\sqrt{d(s+1)\ln n}$ for sufficiently large~$n$. By definition of the containers, Lemma~\ref{lem:chisquare} yields that for any point $X_j$ and sufficiently large $n$,
\[ \probab(X_j \notin B_j) \le \probab(\norm{Z}\ge \beta) \le \probab(\norm{Z} \ge \sigma 3\sqrt{d (s+1) \ln n}) \le n^{-(s+1)}. \]
By a union bound, we conclude that with probability $1-n^{-s}$, all points are contained in their corresponding containers and hence, by the previous lemma, $T$ is 2-optimal.

Recall that $t$ is the largest odd integer satisfying $p^{2t+1} \le (3\sigma)^{-1}$. Since $\sigma^{-1} = \Omega(\sqrt{n})$, this implies $t \ge \frac{p-1}{2}-1$. Observe that $T$ visits $t = \Omega(p)$ many layers and crosses a horizontal distance of $2/3$ in each of them. Hence, it has a length of at least $\Omega(p) = \Omega(\log n/\log \log n)$.
\end{proof}

In the remainder of this section, we prove Lemma~\ref{lem:Tis2opt}, i.e., show that the constructed tour is 2-optimal, provided all points stay inside their respective containers. Clearly, it suffices to show for any pair of edges $(u,v)$ and $(w,z)$ in the tour, the corresponding \emph{2-change}, i.e., replacing these edges by $(u,w)$ and $(v,z)$ does not reduce the tour length, i.e., $d(u,w)+d(v,z) \ge d(u,v)+d(w,z)$.  We first state the technical lemmas capturing the ideas behind the construction.

The first lemma treats pairs of horizontal edges and establishes how large their vertical distance must be in order to make swapping these edges increase the length of the tour. It is a generalization of a similar lemma of Chandra et al.~\cite{ChandraEA:OldOpt:1999} to a perturbation setting, in which points are placed arbitrarily into small containers.

Note that in what follows, for a point $p\in \real^2$, we let $p_x$ denote its $x$-coordinate and $p_y$ its $y$-coordinate. Furthermore, for any points $p,q\in \real^2$, we let $d_x(p,q):=|p_x-q_x|$ and $d_y(p,q):=|p_y-q_y|$ denote their horizontal and vertical distance, respectively.

\begin{lemma}\label{lem:horiz}
Let $\overline{pq}$ and $\overline{rs}$ be horizontal line segments in the Euclidean plane with $p_x < q_x$ and $r_x < s_x$. Let $B_p$, $B_q$, $B_r$ and $B_s$ be circles of radius $\beta$ with centers $p$, $q$, $r$ and~$s$, respectively. If $d(r,s) \ge d(p,q) + 4\beta$ and the vertical distance $v:=d_y(p,r) = d_y(q,s)$ between $\overline{pq}$ and $\overline{rs}$ is at least
\[ \sqrt{d(p,q)d(r,s) + 4\beta d(r,s)} + 2\beta, \] 
then, for all $\tilde{p}\in B_p, \tilde{q} \in B_q, \tilde{r}\in B_r,\tilde{s}\in B_s$, we have
\[ d(\tilde{p},\tilde{r}) + d(\tilde{q},\tilde{s}) \ge d(\tilde{p},\tilde{q}) + d(\tilde{r},\tilde{s}). \]
\end{lemma}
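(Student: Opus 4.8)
The plan is to reduce the perturbed statement to a clean inequality about the centers by controlling how much a 2-change value can move when each endpoint is displaced by at most $\beta$, and then to verify the center inequality using the fact that $\overline{pq}$ and $\overline{rs}$ are horizontal. First I would observe that replacing the edges $\overline{\tilde p\tilde q}$ and $\overline{\tilde r\tilde s}$ by $\overline{\tilde p\tilde r}$ and $\overline{\tilde q\tilde s}$ changes the tour length by $d(\tilde p,\tilde r)+d(\tilde q,\tilde s)-d(\tilde p,\tilde q)-d(\tilde r,\tilde s)$, so it suffices to show this is $\ge 0$. Since each $\tilde{\cdot}$ is within $\beta$ of its center, the triangle inequality gives $|d(\tilde p,\tilde q)-d(p,q)|\le 2\beta$, $|d(\tilde r,\tilde s)-d(r,s)|\le 2\beta$, $d(\tilde p,\tilde r)\ge d(p,r)-2\beta$, and $d(\tilde q,\tilde s)\ge d(q,s)-2\beta$. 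Hence it is enough to prove
\[
 d(p,r)+d(q,s) \;\ge\; d(p,q)+d(r,s) + 8\beta ,
\]
i.e. the \emph{center} 2-change loses by at least $8\beta$.

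The second step is to establish this center inequality. Here I would use that $\overline{pq}$ and $\overline{rs}$ are horizontal and at vertical distance $v$: writing $h:=d_x(p,r)\ge 0$, the crossing edges satisfy $d(p,r)=\sqrt{h^2+v^2}$ and $d(q,s)=\sqrt{(h+d(r,s)-d(p,q))^2+v^2}$ (or the analogous expression depending on the relative horizontal placement; the key point is that $d(p,r)+d(q,s)$ is minimized, over all admissible horizontal offsets, when the two horizontal segments are "nested"). In the extreme case $d(p,r)+d(q,s)\ge \sqrt{(d(r,s)-d(p,q))^2+(2v)^2}\ge \sqrt{(d(r,s)-d(p,q))^2+4v^2}$ by the triangle inequality applied to the two vertical drops of total height $2v$. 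So it suffices that
\[
 \sqrt{(d(r,s)-d(p,q))^2 + 4v^2} \;\ge\; d(p,q)+d(r,s)+8\beta .
\]
Squaring and simplifying, using $(d(r,s)+d(p,q))^2-(d(r,s)-d(p,q))^2 = 4\,d(p,q)d(r,s)$, reduces this to $4v^2 \ge 4\,d(p,q)d(r,s) + 16\beta(d(p,q)+d(r,s)) + 64\beta^2$; and since $d(p,q)\le d(r,s)$ (from $d(r,s)\ge d(p,q)+4\beta$), the right side is at most $4\,d(p,q)d(r,s)+16\beta d(r,s)+\text{(lower order)}$, which is comfortably dominated by $4v^2$ given the hypothesized bound $v\ge \sqrt{d(p,q)d(r,s)+4\beta d(r,s)}+2\beta$ (the additive $2\beta$ inside the square root, after squaring, produces the cross term $4\beta\sqrt{d(p,q)d(r,s)+4\beta d(r,s)}$ and $4\beta^2$ that absorb the $\beta$-order slack).

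**Main obstacle.**
The one genuinely delicate point is the geometric claim in the second step: that among all ways of placing two parallel horizontal segments of lengths $d(p,q)\le d(r,s)$ at vertical separation $v$, the quantity $d(p,r)+d(q,s)$ is smallest when the shorter segment sits horizontally "inside" the span of the longer one, in which case $d(p,r)+d(q,s)\ge\sqrt{(d(r,s)-d(p,q))^2+4v^2}$. I would handle this by an explicit one-variable analysis: fix the horizontal coordinate of $p$ and $r$ to differ by a parameter that ranges over an interval determined by the segment lengths, write $d(p,r)+d(q,s)$ as a function of that parameter, and check convexity (each summand is a convex function of the horizontal offset) so the minimum is attained at an endpoint of the feasible interval, which is precisely the nested configuration. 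Everything else is the bookkeeping with $\beta$ already sketched above, so I would keep that terse and spend the words on making the convexity/endpoint argument airtight.
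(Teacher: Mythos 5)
Your proof is correct and takes essentially the same route as the paper's: both reduce the crossing-edge sum to its minimum at the symmetric horizontal offset (the paper minimizes $\sqrt{(v-2\beta)^2+L^2}+\sqrt{(v-2\beta)^2+R^2}$ subject to $L+R\ge d(r,s)-d(p,q)-4\beta$ directly on the perturbed points, while you front-load the $\beta$-bookkeeping by passing to the centers at a cost of $8\beta$) and then close with the identity $\bigl(\frac{A-B}{2}\bigr)^2+AB=\bigl(\frac{A+B}{2}\bigr)^2$. Two small things to tidy when writing it up: the term you dismiss as ``lower order'' is really $16\beta\,d(p,q)+64\beta^2$, which is nonetheless covered because $\sqrt{d(p,q)d(r,s)+4\beta d(r,s)}=\sqrt{d(r,s)\,(d(p,q)+4\beta)}\ge d(p,q)+4\beta$ by the hypothesis $d(r,s)\ge d(p,q)+4\beta$; and for your convex one-variable function the minimum is attained at the interior symmetric point (equivalently, by the triangle/Minkowski inequality you also invoke), not at an endpoint of the feasible interval.
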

\begin{proof}
Note that $d_y(\tilde{p},\tilde{r}), d_y(\tilde{q},\tilde{s}) \ge v - 2\beta$. Furthermore, we have that 
\begin{eqnarray*}
d(r,s)-2\beta \le \tilde{s}_x - \tilde{r}_x & = & (\tilde{p}_x - \tilde{r}_x) + (\tilde{q}_x-\tilde{p}_x) + (\tilde{s}_x - \tilde{q}_x) \\
& \le & (\tilde{p}_x - \tilde{r}_x) + d(p,q) + 2\beta + (\tilde{s}_x - \tilde{q}_x),
\end{eqnarray*}
and hence
\[ (\tilde{p}_x - \tilde{r}_x) + (\tilde{s}_x -\tilde{q}_x) \ge   d(r,s) - d(p,q) - 4\beta ,\]
where the right-hand side expression is at least $0$, since $d(r,s)\ge d(p,q) + 4\beta$ by assumption. Let $L := \tilde{p}_x - \tilde{r}_x$ and $R:=\tilde{s}_x -\tilde{q}_x$, then it is straight-forward to verify that the expression
\begin{eqnarray}
d(\tilde{p},\tilde{r}) + d(\tilde{q},\tilde{s}) &  \ge & \sqrt{(v-2\beta)^2 + L^2} + \sqrt{(v-2\beta)^2 + R^2} \label{eq:cross},
\end{eqnarray}
subject to $L+R \ge d(r,s)-d(p,q) - 4\beta$ is minimized when $L=R=\frac{d(r,s)-d(p,q)}{2}-2\beta$.


Hence, we can bound \eqref{eq:cross} by
\begin{eqnarray*}
\lefteqn{d(\tilde{p},\tilde{r}) + d(\tilde{q},\tilde{s})} \\
 & \ge & 2\sqrt{(v-2\beta)^2 + \left(\frac{d(r,s)-d(p,q)}{2}-2\beta\right)^2} \\
& \ge & 2\sqrt{d(p,q)d(r,s) + 4\beta d(r,s) + \left(\frac{d(r,s)-d(p,q)}{2}-2\beta\right)^2} \\
& = & 2\sqrt{\left(\frac{d(r,s)-d(p,q)}{2}\right)^2+d(p,q)d(r,s) + 4\beta d(r,s) - 2\beta(d(r,s)-d(p,q)) + (2\beta)^2} \\
& = & 2\sqrt{\left(\frac{d(r,s)+d(p,q)}{2}\right)^2 + 2\beta (d(r,s) + d(p,q)) + (2\beta)^2} \\
& = & 2 \left(\frac{d(r,s)+d(p,q)}{2} + 2\beta\right) = d(r,s) + d(p,q) + 4\beta \ge d(\tilde{p},\tilde{q}) + d(\tilde{r},\tilde{s}),
\end{eqnarray*}
where the third line follows from our assumption on $v$.
\end{proof}

The following very basic lemma shows that a sequence of edges that share roughly the same direction will always be 2-optimal.

\begin{lemma}\label{lem:45angle}
Let $p_1,p_2,p_3$ and $p_4$ be a sequence of points in $[0,1]^2$ such that all connecting segments $p_{i+1}-p_i$ fulfill $|(p_{i+1} - p_i)_y| \le (p_{i+1} - p_i)_x$. Then,
\[ d(p_1,p_3) + d(p_2,p_4) \ge d(p_1,p_2) + d(p_3, p_4).\]
\end{lemma}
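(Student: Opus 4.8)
The plan is to reduce the statement to the elementary fact that two vectors lying in a common $90^\circ$ cone have nonnegative inner product. First I would introduce the difference vectors $u := p_2 - p_1$, $v := p_3 - p_2$, and $w := p_4 - p_3$. The hypothesis states precisely that each of $u,v,w$ lies in the cone $K := \{(a,b)\in\real^2 : |b| \le a\}$. Rewriting the four relevant distances in terms of these vectors, we have $d(p_1,p_2) = \norm{u}$, $d(p_3,p_4) = \norm{w}$, $d(p_1,p_3) = \norm{u+v}$, and $d(p_2,p_4) = \norm{v+w}$, so the claimed inequality becomes $\norm{u+v} + \norm{v+w} \ge \norm{u} + \norm{w}$.

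The key step is the observation that for any $s=(a_1,b_1)$ and $t=(a_2,b_2)$ in $K$, one has $\langle s,t\rangle = a_1a_2 + b_1b_2 \ge a_1a_2 - |b_1|\,|b_2| \ge 0$, where the final inequality uses $0 \le |b_1| \le a_1$ and $0 \le |b_2| \le a_2$. Applying this to $s=u$, $t=v$ gives $\norm{u+v}^2 = \norm{u}^2 + 2\langle u,v\rangle + \norm{v}^2 \ge \norm{u}^2$, hence $\norm{u+v} \ge \norm{u}$; applying it to $s=v$, $t=w$ gives $\norm{v+w} \ge \norm{w}$ in the same way. Summing these two inequalities yields the claim.

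I do not expect any real obstacle here: the statement is, as remarked in the text, very basic, and the only point requiring a little care is matching the bookkeeping of which edges are removed and which are added by the 2-change, so that the vector sums $u+v$ and $v+w$ land on the correct side of the inequality. An equivalent, slightly more geometric phrasing would note that $K$ is a convex cone, so the whole path from $p_1$ to $p_4$ stays in $K$ and in particular $p_3-p_1, p_4-p_2 \in K$; the monotonicity of the Euclidean norm under adding a further vector of $K$ — which is exactly the inner-product computation above — then finishes the proof.
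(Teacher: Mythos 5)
Your proof is correct. It follows the same overall decomposition as the paper---both arguments establish the two separate monotonicity inequalities $d(p_1,p_3)\ge d(p_1,p_2)$ and $d(p_2,p_4)\ge d(p_3,p_4)$ and add them---but the mechanism behind each inequality is genuinely different. The paper works in coordinates: writing $p_3-p_1=(\Delta_x+\delta_x,\Delta_y+\delta_y)$, it distinguishes the cases $\delta_x\ge\Delta_y$ and $\delta_x<\Delta_y$ and verifies $\sqrt{(\Delta_x+\delta_x)^2+(\Delta_y+\delta_y)^2}\ge\sqrt{\Delta_x^2+\Delta_y^2}$ by explicit expansion in each case. You instead observe that any two vectors of the cone $K=\{(a,b)\in\real^2 : |b|\le a\}$ have nonnegative inner product, so $\norm{u+v}^2=\norm{u}^2+2\langle u,v\rangle+\norm{v}^2\ge\norm{u}^2$; this removes the case distinction entirely, and since it only ever uses $\langle u,v\rangle\ge 0$ and $\langle v,w\rangle\ge 0$ (rather than cone membership per se), it generalizes verbatim to any dimension and to any acute convex cone---in particular it covers the axis-swapped application the paper makes for vertical--vertical edge pairs. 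The one cosmetic piece of slack in your write-up is the closing remark that $p_3-p_1,p_4-p_2\in K$ by convexity: that is true but is not what your computation uses, and it is not needed. Neither proof uses the hypothesis $p_i\in[0,1]^2$.
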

\begin{proof}
For any point $p$, let $C_p$ denote the cone $C_p := \{q \mid |(q-p)_y| \le (q-p)_x\}$. Let $\Delta := p_2 - p_1$, then by assumption, we have $p_2 \in C_{p_1}$ and thus $|\Delta_y| \le \Delta_x$. Let us assume that $0 \le \Delta_y \le \Delta_x$ (the other case is symmetric). Since by assumption, $p_3 \in C_{p_2}$, we have for $\Delta' := p_3 - p_1$ that $\Delta'_x = \Delta_x + \delta_x$ and $\Delta'_y = \Delta'_y + \delta_y$ for some $\delta_x > 0$ and $\delta_y$ with $|\delta_y| < \delta_x$. If $\delta_x \ge \Delta_y$, the claim is immediate from $d(p_1,p_3) \ge \Delta_x + \delta_x \ge \Delta_x+\Delta_y\ge d(p_1,p_2)$. Otherwise, for $\delta_x <\Delta_y$, we obtain
\begin{eqnarray*}
d(p_1, p_3) & = & \sqrt{(\Delta_x+\delta_x)^2 + (\Delta_y+\delta_y)^2} \\
& \ge & \sqrt{(\Delta_x+\delta_x)^2 + (\Delta_y-\delta_x)^2}\\
& \ge & \sqrt{\Delta^2_x + \Delta_y^2 + 2\delta_x(\Delta_x - \Delta_y)} \ge  \sqrt{\Delta^2_x + \Delta_y^2} = d(p_1, p_2).
\end{eqnarray*}
By an analogous computation, $d(p_2,p_4)\ge d(p_3, p_4)$ follows and hence the claim.
\end{proof}


We can now prove Lemma~\ref{lem:Tis2opt}. Assume that all points are contained in their respective containers. We call an edge between $X_i$ and $X_j$ \emph{horizontal} (or \emph{vertical}) if the edge between $x_i$ and $x_j$ is horizontal (or vertical) and neither $x_i$ nor $x_j$ belong to the set of padding points.
In what follows, we will first consider horizontal-horizontal, horizontal-vertical and vertical-vertical edge pairs and then turn to pairs of edges for which at least one edge is adjacent to some padding point. Recall that $\beta$ is chosen such as to satisfy $a_t = 8\beta$. 


\paragraph{Horizontal-horizontal edge pair.} Let $(X_i,X_{i+1})$ and $(X_j,X_{j+1})$ be two horizontal edges. Horizontal edges $(X_i,X_{i+1})$ with $x_i,x_{i+1}\in L_k$ appear only if $k\le t$. We distinguish the following cases.
\begin{enumerate}
\item $x_i,x_{i+1},x_j, x_{j+1} \in L_k$: Both edges are in the same layer. Note that no 2-change swaps neighboring edges. Assume without loss of generality that $(x_i)_x < (x_{i+1})_x < (x_j)_x < (x_{j+1})_x$ (the other case is symmetric). Since $a_k \ge a_t =  8\beta$, we have that
\[d_y(X_i,X_{i+1}) \le 2\beta \le a_k - 2\beta \le d_x(X_i,X_{i+1}).\]
Similarly, $d_y(X_j,X_{j+1}) \le d_x(X_j,X_{j+1})$ and $d_y(X_{i+1},X_j) \le d_x(X_{i+1},X_j)$. This shows that Lemma~\ref{lem:45angle} is applicable to $X_i,X_{i+1},X_{j},X_{j+1}$, which yields that no 2-change can be profitable.
\item $x_i,x_{i+1} \in L_k$, and $x_j,x_{j+1}\in L_{k+1}$. By construction of $T$, the edges have opposite direction. Assume that $(x_i)_x < (x_{i+1})_x$ and hence $(x_j)_x > (x_{j+1})_x$ (the other case is symmetric). By construction $(x_{i+1})_x - (x_i)_x \ge a_k$. We have that $(X_{i+1})_x - (X_i)_x \ge a_k - 2\beta \ge a_t - 2\beta = 6\beta > 0$. The same reasoning shows that $(X_j)_x > (X_{j+1})_x$. Similarly, one can show that $p_y > q_y$ for all $p\in \{X_j, X_{j+1}\}$ and $q\in \{X_i, X_{i+1}\}$.  Hence the 2-change to $(X_i,X_j)$ and $(X_{i+1},X_{j+1})$ has a crossing, which by triangle inequality cannot be profitable.
\item $x_i,x_{i+1} \in L_k$, and $x_j,x_{j+1} \in L_{k+\ell}$ with $\ell \ge 2$ and $k+\ell \le t$. Either both edges have opposite directions, then the previous argument shows that a 2-change is not profitable. Otherwise, note that the first requirement of Lemma~\ref{lem:horiz}, $a_k \ge a_{k+\ell} + 4\beta$, is fulfilled. Also note that $\beta = \frac{a_t}{8} \le \frac{a_k}{8p^{2\ell}}$, since $k\le t-\ell$. We have
\begin{eqnarray*} 
\sqrt{d(x_i,x_{i+1})d(x_j,x_{j+1}) + 4\beta d(x_i,x_{i+1})} + 2\beta & = & \sqrt{a_k a_{k+\ell} + 4\beta a_k} + 2\beta \\
& \le & \sqrt{\frac{a_k^2}{p^{2\ell}}  + \frac{a^2_k}{2p^{2\ell}}} + \frac{a_k}{4p^{2\ell}} \\
& \le & \sqrt{\frac{3}{2}}\cdot \frac{a_k}{p^\ell} + \frac{a_k}{4p^{2\ell}} \\
& \le & \left(\sqrt{\frac{3}{2}}\cdot \frac{1}{p^{\ell-1}} + \frac{1}{4p^{2\ell-1}}\right)\frac{a_k}{p} \\
& \le & c_k \le \sum_{m=0}^{\ell-1} c_{k+m} = d_y(x_i,x_j),
\end{eqnarray*}
since for sufficiently large $p$, we have $\sqrt{3/2}/p^{\ell-1} + 1/(4p^{2\ell-1})\le 1$. Consequently, Lemma~\ref{lem:horiz} applies and shows that the 2-change does not yield an improvement.
\end{enumerate}

\paragraph{Horizontal-vertical edge pair.} Let $(X_i,X_{i+1})$ be a vertical edge and $(X_j, X_{j+1})$ be a horizontal edge. We assume that the vertical edge is in $V_1$, since the case $x_i, x_{i+1} \in V_2$ is symmetric. Exactly one of the following cases occurs.
\begin{enumerate}
\item $x_i \in L_k, x_{i+1} \in L_{k+1}$ and $x_j, x_{j+1} \in L_{k'}$ with $k' \in \{k, k+1\}$. The horizontal edge is in the same layer as one of the end points of the vertical edge. Clearly, $d(X_i, X_{i+1}) \le c_k + 2\beta$ and $d(X_j, X_{j+1}) \le a_{k'} + 2\beta$. Since a 2-change cannot swap neighboring edges, at least one horizontal segment lies between both edges. By construction of the tour, one of the edges $\{x_i, x_j\}$ and $\{x_{i+1}, x_{j+1}\}$ crosses a vertical distance of at least $c_k$ and the other a horizontal distance of at least $2a_{k'}$. Hence 
\[ d(X_i, X_j)+d(X_{i+1}, X_{j+1}) \ge 2a_{k'} + c_k - 4\beta \ge a_{k'} + c_k + 4\beta, \]
since $a_{k'} \ge a_t = 8\beta$.
\item $x_i \in L_k,x_{i+1}\in L_{k+1}$ and $x_j,x_{j+1} \in L_{k'}$ with $k' \notin \{k, k+1\}$. As in the previous case, $d(X_i, X_{i+1}) \le c_k + 2\beta$ and $d(X_j, X_{j+1}) \le a_{k'} + 2\beta$. Consider first the case that $k' < k$, then by construction of the tour, one of the edges $\{x_i,x_j\}$ and $\{x_{i+1}, x_{j+1}\}$ crosses a horizontal distance of at least $a_{k'}$ and the other edge crosses a vertical distance of at least  $c_{k'}$, yielding \[d(X_i,X_j) + d(X_{i+1},X_{j+1}) \ge a_{k'} + c_{k'} - 4\beta \ge a_{k'} + c_{k} + 4\beta, \] since $c_{k' }\ge c_{k-1} \ge c_k + 8\beta$. Otherwise, if $k'> k+1$, the edges ${x_i,x_j}$ crosses a vertical distance of at least $c_{k+1}+c_k$ and hence \[d(X_i,X_j) + d(X_{i+1},X_{j+1}) \ge  c_{k+1} + c_k - 2\beta \ge a_{k'} + c_{k} + 4\beta, \] since $c_{k+1} \ge a_{k+2} + 6\beta \ge a_{k'}  +6\beta$. Thus in both cases, a 2-change is not profitable. 
\end{enumerate}

\paragraph{Vertical-vertical edge pair.} Let $(X_i,X_{i+1})$ and $(X_j, X_{j+1})$ be vertical edges. 

\begin{enumerate}
\item $x_i \in L_k, x_{i+1} \in L_{k+1}$ and $x_j \in L_{k'}, x_{j+1} \in L_{k'+1}$ with $(x_i)_x = (x_j)_x$, i.e., the vertical edges are above each other. By swapping the $x$- and $y$-axis in Lemma~\ref{lem:45angle}, we can show that a 2-change is not profitable, since it is easy to see that $|(p-q)_x| \le (p-q)_y$ for all consecutive pairs $(p,q)$ in $(X_i, X_{i+1}, X_j, X_{j+1})$. 
\item $x_i \in L_k, x_{i+1} \in L_{k+1}$ and $x_j \in L_{k'}, x_{j+1} \in L_{k'+1}$ with $(x_i)_x \ne (x_{i'})_x$.
Clearly, $d(X_i, X_j) \ge a_0 - 2\beta$ and $d(X_{i+1}, X_{j+1}) \ge a_0 - 2\beta$, while $d(X_i, X_{i+1}) \le c_k +2\beta \le c_0 + 2\beta$ and $d(X_j, X_{j+1})\le c_{k'} + 2\beta \le c_0 + 2\beta$. Hence a 2-change is not profitable, since $a_0 \ge 8\beta + c_0$.
\end{enumerate}

\paragraph{Padding points.}

Since we assumed for convenience that the padding points are placed at the central vertex $C$ of Layer $t$ in $V_3$, only the edges with at least one endpoint in $V_3$ are relevant candidates for the treatment of padding points. This is because all other edges have both endpoints at a distance of $1/6$ to the padding points, which can never be accounted for by its edge length, since all edges except in Layer 0 are much shorter than $1/3$. Separately, the Layer-0 edges can be handled easily as well: an edge $\{X_i, X_{i'}\}$ with $x_i = x_{i'}\in \overline{X}^\pad$ is a horizontal edge, hence the pair $(X_i,X_{i'})$ and a Layer-0 edge trigger the corresponding case of horizontal-horizontal edge pairs with even smaller edge length of the edge $(X_i,X_{i'})$ in Layer $t$.

It remains to handle the following cases, where we regard $C$ as a padding point, i.e., $C\in \overline{X}^\pad$, not as a Layer-$t$ point. 

\begin{enumerate}
\item $x_i, x_{i'} \in \overline{X}^\pad$, and $x_j, x_{j'} \in L_t$. Clearly, $d(X_j,X_{j'}) \le a_t + 2\beta$ and $d(X_j, X_{j'}) \le 2\beta$. Furthermore, at least one of $\{x_j, x_{j'}\}$ has a horizontal distance of at least $2a_t$ to $x_i=x_{i'}$. Hence, 
\[d(X_i, X_j) + d(X_{i'}, X_{j'})\ge 2a_t - 2\beta \ge a_t + 4\beta \ge d(X_j, X_{j'}) + d(X_i, X_{i'})   \]
\item $x_i\in \{C_\ell,C_r\}, x_{i'} \in \overline{X}^\pad$ and $x_j, x_{j'} \in L_t$. These edge pairs are exactly as regular pairs of Layer-$t$ edges and the corresponding case of horizontal-horizontal edge pairs applies.
\item $x_i, x_{i'}, x_j, x_{j'} \in \overline{X}^\pad \cup \{C_r, C_\ell\}$. All such edges are 2-optimal by construction, since a \emph{2-optimal path} from $C_\ell$ to $C_r$ passing by all padding points was used.
\end{enumerate}

This concludes the case analysis and thus the proof of \lemref{Tis2opt}.

%

\section{Concluding Remarks}
\label{sec:concl}

\paragraph{Running-time.}
Our approach for Euclidean distances does not work for $d=2$ and $d=3$. However, we can use the bound of Englert et al.~\cite{EnglertEA:2Opt:2014}
for Euclidean distances, which yields a bound polynomial in $n$ and $1/\sigma$ for $d \in \{2,3\}$.

In the same way as Englert et al.~\cite{EnglertEA:2Opt:2014}, we can slightly
improve the smoothed number of iterations by using an insertion heuristic to choose the initial tour. We save
a factor of $n^{1/d}$ for Manhattan and Euclidean distances
and a factor of $n^{2/d}$ for squared Euclidean distances.
The reason is that there always exist tours of length $O(\maxx n^{1 - \frac 1d})$
for $n$ points in $[-\maxx, \maxx]^d$ for Euclidean and Manhattan distances
and of length $O(\maxx^2 n^{1 - \frac 2d})$ for squared Euclidean
distances for $d \geq 2$~\cite{Yukich:ProbEuclidean:1998} (the constants in these upper bounds depend on $d$).
Taking into account also that, because Gaussians have light tails, only few points are far away from the hypercube $[0,1]^d$ after perturbation,
one might get an even better bound. However, we did not take these improvements into account in our analysis to keep the paper concise.

Of course, even our improved bounds do not fully explain the 
linear number of iterations observed in experiments.
However, we believe that new approaches, beyond analyzing the smallest improvement, are needed in order to further improve the smoothed bounds on the running-time.

\paragraph{Approximation ratio.}

We have proved an upper bound of $O(\log 1/\sigma)$ for the smoothed approximation
ratio of 2-Opt. Furthermore, we have proved that the lower bound of Chandra et al.~\cite{ChandraEA:OldOpt:1999}
remains robust even for $\sigma = O(1/\sqrt{n})$. We leave as an open problem to generalize our upper bounds to the one-step model to improve the current
bound of $O(\sqrt[d]{\phi})$~\cite{EnglertEA:2Opt:2014}, but conjecture
that this might be difficult, because of the special structure that Gaussian distributions provide.

While our bound significantly improves the previously known bound for the smoothed approximation ratio of 2-Opt, we readily admit
that it still does not explain the performance observed in practice.
A possible explanation is that when the initial tour is not picked by an adversary or the nearest neighbor heuristic, but using a construction heuristic such as the spanning tree heuristic or an insertion heuristic, an approximation factor of $2$ is guaranteed even before 2-Opt has begun to improve the tour~\cite{RosenkrantzEA:AnalysisHeuristicsTSP:1977}. We chose to compare the worst local optimum to the global optimum in order, as this is arguably the
simplest of all technically difficult possibilities.

However, a smoothed analysis of the approximation ratio of 2-Opt initialized with a good heuristic might be difficult: even in the average case,
it is only known that the length of an optimal TSP is concentrated around $\gamma_d \cdot n^{\frac{d-1}d}$ for some constant $\gamma_d > 0$.
But the precise value of $\gamma_d$ is unknown~\cite{Yukich:ProbEuclidean:1998}. Since experiments suggest that 2-Opt even with good initialization does not
achieve an approximation ratio of $1+o(1)$~\cite{JohnsonMcGeoch:CaseTSP:1997,JohnsonMcGeoch:ExperimentalSTSP:2002}, one has to deal with the precise
constants, which seems challenging.

Finally, we conjecture that many examples for showing lower bounds for the approximation ratio of concrete algorithms
for Euclidean optimization such as the TSP remain stable under perturbation for $\sigma = O(1/\sqrt n)$.
The question remains whether such small values of~$\sigma$, although they often suffice to prove polynomial smoothed running time,
are essential to explain practical approximation ratios or if already slower decreasing $\sigma$ provide a sufficient explanation.


\begin{thebibliography}{10}

\bibitem{Abramowitz:Pocket:1984}
Milton Abramowitz and Irene~A. Stegun, editors.
\newblock {\em Pocketbook of Mathematical Functions}.
\newblock Harri Deutsch, 1984.

\bibitem{Arora:PTASEuclideanTSP:1998}
Sanjeev Arora.
\newblock Polynomial time approximation schemes for {E}uclidean traveling
  salesman and other geometric problems.
\newblock {\em Journal of the ACM}, 45(5):753--782, 1998.

\bibitem{BlaeserEA:Partitioning:2013}
Markus Bl{\"a}ser, Bodo Manthey, and B.~V.~Raghavendra Rao.
\newblock Smoothed analysis of partitioning algorithms for {Euclidean}
  functionals.
\newblock {\em Algorithmica}, 66(2):397--418, 2013.

\bibitem{BringmannEA:RSP:2013}
Karl Bringmann, Christian Engels, Bodo Manthey, and B.~V.~Raghavendra Rao.
\newblock Random shortest paths: Non-euclidean instances for metric
  optimization problems.
\newblock In Krishnendu Chatterjee and Ji{\v{r}}{\'i} Sgall, editors, {\em
  Proc.\ of the 38th Int. Symp. on Mathematical Foundations of Computer Science
  (MFCS)}, volume 8087 of {\em Lecture Notes in Computer Science}, pages
  219--230. Springer, 2013.

\bibitem{BrunschEA:Scheduling:2014}
Tobias Brunsch, Heiko R{\"o}glin, Cyriel Rutten, and Tjark Vredeveld.
\newblock Smoothed performance guarantees for local search.
\newblock {\em Mathematical Programming}, 146(1--2):185--218, 2014.

\bibitem{ChandraEA:OldOpt:1999}
Barun Chandra, Howard Karloff, and Craig Tovey.
\newblock New results on the old $k$-opt algorithm for the traveling salesman
  problem.
\newblock {\em SIAM Journal on Computing}, 28(6):1998--2029, 1999.

\bibitem{CurticapeanK15}
Radu Curticapean and Marvin K{\"u}nnemann.
\newblock A quantization framework for smoothed analysis of {E}uclidean
  optimization problems.
\newblock {\em Algorithmica}, 73(3):1--28, 2015.

\bibitem{DubhashiPanconesi:ConcentrationOfMeasure:2009}
Devdatt~P. Dubhashi and Alessandro Panconesi.
\newblock {\em Concentration of Measure for the Analysis of Randomized
  Algorithms}.
\newblock Cambridge University Press, 2009.

\bibitem{Durrett:Probability:2013}
Richard Durrett.
\newblock {\em Probability: Theory and Examples}.
\newblock Cambridge University Press, 2013.

\bibitem{EngelsManthey:2Opt:2009}
Christian Engels and Bodo Manthey.
\newblock Average-case approximation ratio of the 2-opt algorithm for the
  {TSP}.
\newblock {\em Operations Research Letters}, 37(2):83--84, 2009.

\bibitem{EnglertEA:2Opt:2014}
Matthias Englert, Heiko R{\"o}glin, and Berthold V{\"o}cking.
\newblock Worst case and probabilistic analysis of the {2-Opt} algorithm for
  the {TSP}.
\newblock {\em Algorithmica}, 68(1):190--264, 2014.
\newblock Corrected versions of Lemma 8 and 9 can be found at
  https://arxiv.org/abs/2302.06889.

\bibitem{Etscheid15}
Michael Etscheid.
\newblock Performance guarantees for scheduling algorithms under perturbed
  machine speeds.
\newblock {\em Discrete Applied Mathematics}, 195:84--100, 2015.

\bibitem{EvansEA:StatisticalDist:2000}
Merran Evans, Nicholas Hastings, and Brian Peacock.
\newblock {\em Statistical Distributions}.
\newblock John Wiley \& Sons, 3rd edition, 2000.

\bibitem{FunkeEA:PowerTSP:2011}
Stefan Funke, S{\"o}ren Laue, Zvi Lotker, and Rouven Naujoks.
\newblock Power assignment problems in wireless communication: Covering points
  by disks, reaching few receivers quickly, and energy-efficient travelling
  salesman tours.
\newblock {\em Ad Hoc Networks}, 9(6):1028--1035, 2011.

\bibitem{JohnsonMcGeoch:CaseTSP:1997}
David~S. Johnson and Lyle~A. McGeoch.
\newblock The traveling salesman problem: A case study.
\newblock In Emile Aarts and Jan~Karel Lenstra, editors, {\em Local Search in
  Combinatorial Optimization}, chapter~8. John Wiley \& Sons, 1997.

\bibitem{JohnsonMcGeoch:ExperimentalSTSP:2002}
David~S. Johnson and Lyle~A. McGeoch.
\newblock Experimental analysis of heuristics for the {STSP}.
\newblock In Gregory Gutin and Abraham~P. Punnen, editors, {\em The Traveling
  Salesman Problem and its Variations}, chapter~9. Kluwer Academic Publishers,
  2002.

\bibitem{KargerOnak:SmoothedPacking:2007}
David Karger and Krzysztof Onak.
\newblock Polynomial approximation schemes for smoothed and random instances of
  multidimensional packing problems.
\newblock In {\em Proc.\ of the 18th Ann. ACM-SIAM Symp. on Discrete Algorithms
  (SODA)}, pages 1207--1216. SIAM, 2007.

\bibitem{Kern:Switching:1989}
Walter Kern.
\newblock A probabilistic analysis of the switching algorithm for the {TSP}.
\newblock {\em Mathematical Programming}, 44(2):213--219, 1989.

\bibitem{KuennemannManthey:2Opt:2015}
Marvin K\"unnemann and Bodo Manthey.
\newblock Towards understanding the smoothed approximation ratio of the 2-opt
  heuristic.
\newblock In Magn{\'{u}}s~M. Halld{\'{o}}rsson, Kazuo Iwama, Naoki Kobayashi,
  and Bettina Speckmann, editors, {\em Proc.\ of the 42nd Int. Coll. on
  Automata, Languages and Programming (ICALP)}, volume 9134 of {\em Lecture
  Notes in Computer Science}, pages 859--871. Springer, 2015.

\bibitem{Manthey:ABWCBook}
Bodo Manthey.
\newblock Smoothed analysis of local search.
\newblock In Tim Roughgarden, editor, {\em Beyond the Worst-Case Analysis of
  Algorithms}, pages 285--308. Cambridge University Press, 2020.

\bibitem{MantheyRoeglin:SmoothedSurvey:2011}
Bodo Manthey and Heiko R{\"o}glin.
\newblock Smoothed analysis: Analysis of algorithms beyond worst case.
\newblock {\em it -- Information Technology}, 53(6):280--286, 2011.

\bibitem{MantheyVeenstra:2Opt:2013}
Bodo Manthey and Rianne Veenstra.
\newblock Smoothed analysis of the {2-Opt} heuristic for the {TSP}: Polynomial
  bounds for {Gaussian} noise.
\newblock In Leizhen Cai, Siu-Wing Cheng, and Tak-Wah Lam, editors, {\em Proc.\
  of the 24th Ann. Int. Symp. on Algorithms and Computation (ISAAC)}, volume
  8283 of {\em Lecture Notes in Computer Science}, pages 579--589. Springer,
  2013.

\bibitem{Mitchell:EuclideanPTAS:1999}
Joseph S.~B. Mitchell.
\newblock Guillotine subdivisions approximate polygonal subdivisions: A simple
  polynomial-time approximation scheme for {Geometric} {TSP}, {$k$-MST}, and
  related problems.
\newblock {\em SIAM Journal on Computing}, 28(4):1298--1309, 1999.

\bibitem{Papadimitriou:EuclideanTSP:1977}
Christos~H. Papadimitriou.
\newblock The {Euclidean} traveling salesman problem is {NP}-complete.
\newblock {\em Theoretical Computer Science}, 4(3):237--244, 1977.

\bibitem{RosenkrantzEA:AnalysisHeuristicsTSP:1977}
Daniel~J. Rosenkrantz, Richard~E. Stearns, and Philip~M. {Lewis II}.
\newblock An analysis of several heuristics for the traveling salesman problem.
\newblock {\em SIAM Journal on Computing}, 6(3):563--581, 1977.

\bibitem{SpielmanTeng:SmoothedAnalysisWhy:2004}
Daniel~A. Spielman and Shang-Hua Teng.
\newblock Smoothed analysis of algorithms: Why the simplex algorithm usually
  takes polynomial time.
\newblock {\em Journal of the ACM}, 51(3):385--463, 2004.

\bibitem{SpielmanTeng:CACM:2009}
Daniel~A. Spielman and Shang-Hua Teng.
\newblock Smoothed analysis: An attempt to explain the behavior of algorithms
  in practice.
\newblock {\em Communications of the ACM}, 52(10):76--84, 2009.

\bibitem{NijnattenEA:TSPSquared:2010}
Fred van Nijnatten, Ren{\'e} Sitters, Gerhard~J. Woeginger, Alexander Wolff,
  and Mark de~Berg.
\newblock The traveling salesman problem under squared {E}uclidean distances.
\newblock In Jean-Yves Marion and Thomas Schwentick, editors, {\em Proc.\ of
  the 27th Int. Symp. on Theoretical Aspects of Computer Science (STACS)},
  volume~5 of {\em LIPIcs}, pages 239--250. Schloss Dagstuhl -- Leibniz-Zentrum
  f\"ur Informatik, 2010.

\bibitem{Yukich:ProbEuclidean:1998}
Joseph~E. Yukich.
\newblock {\em Probability Theory of Classical Euclidean Optimization
  Problems}, volume 1675 of {\em Lecture Notes in Mathematics}.
\newblock Springer, 1998.

\end{thebibliography}

\end{document}